\newtheorem*{theorem*}{Theorem}
\newtheorem*{proposition*}{Proposition}
\newtheorem*{definition*}{Definition}
\newtheorem{theorem}{Theorem}[section]
\newtheorem{axioms}[theorem]{Axioms}
\newtheorem{proposition}[theorem]{Proposition}
\newtheorem{lemma}[theorem]{Lemma}
\newtheorem{corollary}[theorem]{Corollary}
\newtheorem{definition}[theorem]{Definition}
\newtheorem{remark}[theorem]{Remark}
\newtheorem{principle}[theorem]{Principle}
\newtheorem{conjecture}[theorem]{Conjecture}
\newtheorem{problem}[theorem]{Problem}
\theoremstyle{remark}
\newtheorem*{digression}{Digression}
\title{Categories of Physical Processes}
\author{\foreignlanguage{polish}{Stanis{\l}aw Szawiel} \\ \small{Institute of Mathematics, University of Warsaw} \\
\small{ul.~Banacha 2, 00-913 Warsaw, Poland}}
\date{\today}
\begin{document}
\maketitle

\begin{abstract}
We study the mathematical foundations of physics. We reconstruct textbook quantum theory from a single symmetric monoidal functor
\begin{displaymath}
GNS : \mathbf{Phys} \longrightarrow \ast\mathbf{Mod},
\end{displaymath}
based on the Gelfand-Naimark-Segal construction and the notion of representability.

We derive the probabilistic interpretation of quantum mechanics, including the Born rule, the Schr{\"o}dinger and Heisenberg pictures, the relation between symmetries and group representations, and a theory of quantum Markov processes, including wave function collapse. Inclusion of the classical limit and deformation quantization is briefly sketched.

Gauge symmetry and extended locality cannot currently be accommodated, due to conceptual difficulties discussed in an appendix.
\end{abstract}
\tableofcontents

\section*{Introduction}
\addcontentsline{toc}{section}{Introduction}
\subsection{A Simple Idea}
Let us attempt to do physics synthetically, and postulate a category of physical processes, $\mathbf{Phys}$. It's objects are to be \emph{states}, denoted $\varphi, \psi, \ldots$, and it's morphisms are to be \emph{physical processes}, such as time evolution $U(t) : \varphi \rightarrow \psi$, or maybe even asymptotic scattering, such as pair production $\gamma + \gamma \rightarrow e^{-} + e^{+}$. Which category are we in?

Every state $\varphi$ should determine its observable quantities, $\mathcal{O}(\varphi)$, which, owing to the nature of numbers, are to form an algebra. We will use real and complex numbers, but the enterprising reader can attempt to follow along our discussion replacing $\mathbb{C}/\mathbb{R}$ with $\mathbb{Z}\lbrack i\rbrack/\mathbb{Z}$, leading to a sort of ``arithmetical physics''. This is technically challenging, and I have been unable to do so.

For the purposes of this introduction we will assume that $\mathcal{O}(\varphi)$ is a unital $C^{\ast}$-algebra, but this is not necessary, and is in fact deeply insufficient. State spaces of gauge theories determine Hopf algebroids (with exotic morphisms) instead of algebras, at least classically. In such theories the characteristic algebraic structure and the notion of observable part ways (we discuss this in depth in appendix \ref{appendixa}). For these, and other reasons I will not commit to deep technical considerations in functional analysis.

Every physical process should be accompanied by a description of what happens to any observable quantity. Thus we postulate that $\mathcal{O}$ is a functor
\begin{displaymath}
\mathcal{O} : \mathbf{Phys} \longrightarrow C^{\ast}\mathbf{Alg}^{op}.
\end{displaymath}
Note the variance. We are effectively treating algebras as noncommutative spaces.

It's well known that the world is not deterministic\footnote{At least not in any sense that is operationalizable in contemporary experiments. Ideas such as superdeterminism cannot really be tested, only pushed back.}, so we do not expect observables to have specific values in a given state, but we do require an average, or expectation value
\begin{displaymath}
\langle - \rangle_{\varphi} : \mathcal{O}(\varphi) \longrightarrow \mathbb{C},
\end{displaymath}
a linear functional, or a measure on the noncommutative phase space. Again, this is not strictly true. Even in ordinary probability theory there are random variables without an expectation value, and the massless 2d quantum scalar field appears to be a noncommutative object of this type (cf.~\cite[$\mathsection$1.5]{witten}), defining a state analogous to the Lebesgue measure -- with no probabilistic interpretation. I do not yet know how to capture such phenomena.
\begin{digression}[The ``$L^{1}$ digression'']
This is also related to the problem multiplying local operators in quantum field theory \cite[Lecture 3]{wittenpert}. If we tentatively write $\mathcal{O}(\varphi) = L^{1}(\varphi)$ -- a noncommutative $L^{1}$ space -- then the problem becomes clear: $L^{1}$ functions do not form an algebra under pointwise multiplication. If $\mathcal{O}(\varphi)$ is to be an algebra, then it must be either a lot bigger or a lot smaller than ``all the observables with an expectation value''. Our choice above means the latter, but one can lead a happy mathematical life with the former choice \cite[$\mathsection$2.5]{tao}.

This suggests that demanding expectation values from all entities in QFT is unfounded. In particular I expect all products of all operators to be definable without trickery. They will be singular objects (beyond distributions) which merely happen to have no expectation value.

The absence of free-form constructions in noncommutative geometry prevents progress in this direction. In set theory $L^{1}(\mu)$ is a discovery, a structure unknown to $\mu$ itself. In the noncommutative setting it's an intrinsic feature of $\mu$, to be given before $\mu$ has a chance to exist.
\end{digression}
When no confusion can arise we will write $\varphi$ for $\langle - \rangle_{\varphi}$.

For a process  $f : \varphi \rightarrow \psi$ we can now construct a diagram
\begin{center}
\begin{tikzpicture}
\node (a) at (0,0) {$\mathcal{O}(\varphi)$};
\node (b) at (-3,0) {$\mathcal{O}(\psi)$};
\node (c) at (-1.5,-1.5) {$\mathbb{C}$};

\path[->] (b) edge node[auto] {$\mathcal{O}(f)$} (a)
		  (a) edge node[auto] {$\varphi$} (c)
		  (b) edge node[auto, swap] {$\psi$} (c);
\end{tikzpicture}
\end{center}
Since we imagine that $\mathcal{O}(f)$ completely explains what happens to the observable quantities -- including their expectation values -- we require this diagram to commute:
\begin{displaymath}
\mathcal{O}(f)^{\ast}\varphi = \psi.
\end{displaymath}

States are not meant to be complete descriptions of the world, even if all concrete constructions (mechanics, field theory, etc.) treat them as such. Physics studies subsystems of the world, and so we need a method to build bigger systems out of smaller ones. We must investigate the idea of physically composing systems and their states. Any cosmological considerations will require further conceptual refinements, in particular making sense of the notion of \emph{self-measurement} (cf.~section \ref{interpretation}).

Unlike abstract logical objects, like terms and propositions, physical entities cannot be duplicated or deleted without effort. They appear to form a \emph{linear type system} \cite{rosetta}. Consequently we postulate that $\mathbf{Phys}$ is a symmetric monoidal category, and that the $\mathcal{O}$ functor is such as well\footnote{Which monoidal structure on $C^{\ast}$-algebras? It suffices for it to functorially extend to completely positive maps. In particular the maximal and minimal structures are both fine.}. We interpret $\varphi \otimes \psi$ as the \emph{noninteracting composite} of $\varphi$ and $\psi$. The states are put in two parallel worlds, which are identical except for the distinctiveness provided by $\varphi$ and $\psi$. This structure is an idealization of carefully putting things side by side, while screening all interactions.
\begin{digression}
Systematically treating $\mathbf{Phys}$ as a type system leads to extremely interesting philosophical considerations, allowing definitions such as ``causality is necessary linear implication'' and ``matter sources are infinitesimally close possible worlds''. Counterfactual conditionals can be given natural meanings, based in physical laws. A mathematical analysis of (in)commensurability is possible. Such ideas will be pursued elsewhere.
\end{digression}
Since $\varphi$ and $\psi$ are independent in $\varphi \otimes \psi$, we postulate a weak noncommutative independence condition
\begin{displaymath}
\langle - \rangle_{\varphi \otimes \psi} = \langle - \rangle_{\varphi} \otimes \langle - \rangle_{\psi}.
\end{displaymath}

We have already postulated plenty, but for reasons mysterious to me physicists require more. The following definition is fundamental.
\begin{definition*}
We say that a linear functional $\varphi : A \rightarrow \mathbb{C}$ is \emph{represented} by a vector $v$ in a Hermitian $A$-module $H$ when
\begin{displaymath}
\varphi(a) = \langle av, v \rangle_{H},
\end{displaymath}
for all $a \in A$, where $\langle -, -\rangle_{H}$ is the Hermitian form on $H$.
\end{definition*}
Physicists like to work with representations. Which representation should we choose for our expectation value $\langle - \rangle_{\varphi}$? Does such a representation even exist? Mathematics comes to the rescue.
\begin{theorem*}[Gelfand-Naimark-Segal, cf.~\ref{positiveuniversality}]
Let $\varphi$ be a state on a $C^{\ast}$-algebra $A$. Then:
\begin{enumerate}
\item The category of representations of $\varphi$ on Hilbert spaces has an initial object.
\item A representation is initial iff it is topologically cyclic over $A$.
\end{enumerate}
\end{theorem*}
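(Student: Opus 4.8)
The plan is to produce the classical GNS representation by hand as the candidate initial object and then verify the universal property directly. On $A$ define the positive semidefinite form $\langle a, b \rangle = \varphi(b^* a)$ (positivity being exactly the content of $\varphi$ being a state), let $N = \{a \in A : \varphi(a^*a) = 0\}$, and check by Cauchy--Schwarz that $N$ is a closed left ideal. Then $\langle -, - \rangle$ induces a genuine inner product on $A/N$, left multiplication descends, and completing yields a Hilbert space $H_\varphi$ carrying an $A$-action $\pi$ together with a cyclic vector $\xi = [1]$ for which $\langle \pi(a)\xi, \xi \rangle = \varphi(1^* a \, 1) = \varphi(a)$. Thus $(H_\varphi, \xi)$ is a representation of $\varphi$, and $A\xi$ is dense in it by construction.

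Next I would check initiality. Given any representation $(H, v)$, define $T : A\xi \to H$ by $T(a\xi) = av$ on the dense subspace $A\xi$. The computation $\|av\|^2 = \langle a^*a\, v, v \rangle = \varphi(a^*a) = \|a\xi\|^2$ shows $T$ is \emph{isometric} --- in particular well defined, since $a\xi = 0$ forces $av = 0$ --- so it extends uniquely to an isometry $H_\varphi \to H$. Its $A$-linearity holds on $A\xi$ and passes to the closure, and $T\xi = v$. Any intertwiner $S$ with $S\xi = v$ agrees with $T$ on $A\xi$ by $A$-linearity, hence everywhere by continuity, giving both existence and uniqueness of the morphism $(H_\varphi, \xi) \to (H, v)$. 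This establishes part (1).

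For part (2), if a representation is initial it is isomorphic to $(H_\varphi, \xi)$ by the usual uniqueness of initial objects, and cyclicity transports across that isomorphism, so it is topologically cyclic. Conversely, if $(H, v)$ is cyclic, then the canonical isometry $T$ of the previous step has image $\overline{T(A\xi)} = \overline{Av} = H$; an isometry onto a dense, hence closed, range is unitary, so $(H, v) \cong (H_\varphi, \xi)$ and is therefore initial.

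I expect the only genuine obstacle --- everything else being formal --- to be the boundedness of the action $\pi$ on the completion $H_\varphi$, which is exactly where the $C^{\ast}$-structure is essential. One needs the estimate $\varphi(b^* a^* a\, b) \le \|a\|^2 \varphi(b^* b)$, obtained by applying the positive functional $\varphi$ to the positive element $b^*(\|a\|^2 1 - a^* a)b$, using that $\|a\|^2 1 - a^* a \ge 0$ in a $C^{\ast}$-algebra. This single inequality simultaneously yields $\|\pi(a)\| \le \|a\|$ and confirms that $N$ is a left ideal, so that all the steps above are justified.
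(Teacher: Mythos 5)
Your proof is correct, and it is the classical analytic GNS argument; the paper reaches the same statement by a genuinely different route, so a comparison is worthwhile. You build the candidate $(H_\varphi,\xi)$ by the usual quotient-and-complete construction and verify initiality with the isometry $T(a\xi)=av$, extended by continuity, with uniqueness by density of $A\xi$; the $C^{\ast}$-estimate $\varphi(b^{*}a^{*}ab)\le\|a\|^{2}\varphi(b^{*}b)$ carries the analytic weight, exactly as you flag. The paper instead isolates a purely algebraic core: theorem \ref{representabilityconditions} shows that any $\ast$-linear functional on an arbitrary $\ast$-algebra has a cyclic representing module $A/A^{\perp}$, \emph{unique up to unique cyclic isometry}, with uniqueness coming not from density but from the correspondence between cyclic modules and annihilator ideals (proposition \ref{annihilatordetermined} and theorem \ref{cyclicmodules}, packaged as corollary \ref{coherencedestroyer}: a cyclic map is determined by where it sends the cyclic vector, with no continuity hypothesis at all). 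Positivity enters only in theorem \ref{positiveuniversality}: given any pre-Hilbert representation $(M,m)$, the subspace $Am$ is automatically a nondegenerate cyclic $\ast$-submodule --- this is where Cauchy--Schwarz appears, via lemma \ref{radicalisnullspace}, playing the role your closed-left-ideal check plays --- so the unique cyclic map $GNS(\varphi)\to Am\hookrightarrow M$ exists. No completion, no boundedness of the action, and no $C^{\ast}$-norm inequality are used; the Hilbert-space statement of the introduction is then obtained by completing, essentially the step you perform explicitly. What each approach buys: yours delivers the literal Hilbert-space theorem directly and makes visible precisely where the $C^{\ast}$-structure is indispensable (boundedness of $\pi$ on the completion); the paper's works over arbitrary $\ast$-algebras without positivity for existence and uniqueness, internalizes in a topos (the stated goal of section \ref{sins}), and yields uniqueness of morphisms from cyclicity alone, whereas your uniqueness argument tacitly requires the competing morphism $S$ to be continuous --- harmless here, since morphisms of $\ast$-modules are isometric by definition, but a genuine hypothesis your density argument consumes and the paper's annihilator argument does not.
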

The category of representations consists of all representations and representing-vector-preserving maps between them. The initial object is given by the well known Gelfand-Naimark-Segal construction. Being minimalists, we choose this smallest representation. In the body of this paper we will not restrict ourselves to representations on Hilbert spaces, keeping in mind anomalous gauge theories, and ``no ghost'' theorems. We aim for the statement ``the ghosts don't decouple'' to have a natural mathematical meaning. Saying ``the construction doesn't work'' is not it.

So far we have an object function $\varphi \mapsto GNS(\varphi)$ mapping a state to the representation of its expectation value over $\mathcal{O}(\varphi)$. Our categorical senses tingle. We consider a process $f : \varphi \rightarrow \psi$, and construct the following diagram:
\begin{center}
\begin{tikzpicture}
\node (m) at (6, 0) {$GNS(\varphi)$};
\node (g) at (0, 2) {$GNS(\mathcal{O}(f)^{\ast}\varphi)$};
\node (fm) at (0,0) {$\mathcal{O}(f)^{\ast}GNS(\varphi)$};

\node (phi) at (6, -2) {$\mathcal{O}(\varphi)$};
\node (psi) at (0, -2) {$\mathcal{O}(\psi)$};

\path[->] (fm) edge (m)
		  (g) edge node[auto,swap] {$\exists!$} (fm)
		  (g) edge[dashed] node[auto] {$GNS(f)$} (m)
		  (psi) edge node[auto] {$\mathcal{O}(f)$} (phi);
\end{tikzpicture}
\end{center}
Recall that $\mathcal{O}(f)^{\ast}\varphi = \psi$, so the top module is a representation of $\psi = \langle - \rangle_{\psi}$ over $\mathcal{O}(\psi)$. Modules can be pulled back by homomorphisms, and $\mathcal{O}(f)^{\ast}GNS(\varphi)$ is simply the pullback of the module $GNS(\varphi)$ along $\mathcal{O}(f)$. It's patently obvious that it represents the pullback state $\psi$. But since $GNS(\psi)$ is initial there is a unique vertical map displayed above. The canonical homomorphism $\mathcal{O}(f)^{\ast}GNS(\varphi) \rightarrow GNS(\varphi)$ is a homomorphism of modules over $\mathcal{O}(f)$. We declare $GNS(f)$ to be the composite, so that the diagram commutes. The following theorem follows exclusively from the further application of universal properties.
\begin{theorem*}[cf.~\ref{gnsconstruction}]
The construction above gives a symmetric monoidal functor
\begin{displaymath}
GNS : \mathbf{Phys}^{op} \longrightarrow \ast \mathbf{Mod},
\end{displaymath}
fibered over $C^{\ast}\mathbf{Alg}$.
\end{theorem*}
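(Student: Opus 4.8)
The plan is to organize the whole argument around the Grothendieck fibration $\pi : \ast\mathbf{Mod} \to C^{\ast}\mathbf{Alg}$ that sends a (pointed) Hermitian module to its algebra and whose cartesian arrows are exactly the scalar-restrictions $\alpha^{\ast}K \to K$ appearing in the construction. The first thing to record is that in such a fibration every arrow factors uniquely as a vertical intertwiner (over a fixed algebra, preserving the distinguished vector) followed by a cartesian lift, and that this is precisely the anatomy of $GNS(f)$: its cartesian part is the canonical $\mathcal{O}(f)^{\ast}GNS(\varphi) \to GNS(\varphi)$ over $\mathcal{O}(f)$, while its vertical part $GNS(\psi) \to \mathcal{O}(f)^{\ast}GNS(\varphi)$ is the unique vector-preserving $\mathcal{O}(\psi)$-linear map out of the initial object $GNS(\psi)$. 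Since the cartesian part covers $\mathcal{O}(f)$ and the vertical part covers an identity, we obtain $\pi \circ GNS = \mathcal{O}^{op}$ on objects and arrows simultaneously; this is exactly the assertion that $GNS$ is fibered over $C^{\ast}\mathbf{Alg}$, and it is immediate from the definition.

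Next I would verify functoriality on $\mathbf{Phys}^{op}$. Identities are clear, since pulling back along $\mathrm{id}_{\mathcal{O}(\varphi)}$ returns $GNS(\varphi)$ and the only endomorphism of an initial object is the identity. For composition of $f : \varphi \to \psi$ and $g : \psi \to \chi$ I would use two standard facts about the fibration: cartesian lifts compose, so the lift of $\mathcal{O}(g \circ f) = \mathcal{O}(f)\circ\mathcal{O}(g)$ is the evident $\mathcal{O}(g)^{\ast}\mathcal{O}(f)^{\ast}GNS(\varphi) \to GNS(\varphi)$, and the pullback of a vertical map is again vertical. Consequently both $GNS(g \circ f)$ and $GNS(f)\circ GNS(g)$ share this cartesian part, and their vertical parts are two vector-preserving $\mathcal{O}(\chi)$-linear maps out of the initial object $GNS(\chi)$; the uniqueness clause of the universal property forces them to agree, giving $GNS(g \circ f) = GNS(f)\circ GNS(g)$. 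None of this needs more than fibration bookkeeping together with initiality.

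The real work is the symmetric monoidal structure. I would equip $\ast\mathbf{Mod}$ with the external Hermitian tensor product $(A,H)\otimes(B,K) = (A\otimes B, H\otimes K)$, for which $\pi$ is strictly monoidal. Using the structure isomorphism $\mathcal{O}(\varphi\otimes\psi)\cong\mathcal{O}(\varphi)\otimes\mathcal{O}(\psi)$ together with the independence postulate $\langle-\rangle_{\varphi\otimes\psi} = \langle-\rangle_{\varphi}\otimes\langle-\rangle_{\psi}$, one checks that $GNS(\varphi)\otimes GNS(\psi)$ with distinguished vector $v_{\varphi}\otimes v_{\psi}$ represents $\varphi\otimes\psi$, whence initiality of $GNS(\varphi\otimes\psi)$ supplies a canonical comparison map $GNS(\varphi\otimes\psi) \to GNS(\varphi)\otimes GNS(\psi)$. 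The hard part is proving this map is an isomorphism. By the characterization initial $\Leftrightarrow$ topologically cyclic it suffices to show that $v_{\varphi}\otimes v_{\psi}$ is topologically cyclic, i.e.\ that the span of the vectors $a v_{\varphi}\otimes b v_{\psi}$ is dense in $GNS(\varphi)\otimes GNS(\psi)$. This density is the single place where something beyond formal manipulation is used; abstractly it is the statement that the positive-universality property defining the GNS object is stable under $\otimes$, and I expect isolating and proving it (as a separate lemma) to be the main obstacle.

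It remains to handle coherence and naturality, and here I would again lean entirely on uniqueness. The associator, unitors and symmetry for $GNS$ are the comparison maps built exactly as above, and each coherence diagram for them sits over the corresponding --- already commuting --- diagram in $C^{\ast}\mathbf{Alg}$; as every edge is a vector-preserving map out of an initial object, each such diagram commutes by the uniqueness clause, with no computation. The monoidal unit of $\mathbf{Phys}$ has observable algebra $\mathbb{C}$ with its unique state, whose GNS object is $\mathbb{C}$ acting on itself, namely the unit of $\ast\mathbf{Mod}$, so the unit comparison is invertible as well; naturality of the comparison in each variable is the same argument carried out fibrewise over $\mathcal{O}$. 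Assembling the three ingredients, $GNS$ is a symmetric monoidal functor $\mathbf{Phys}^{op} \to \ast\mathbf{Mod}$ satisfying $\pi\circ GNS = \mathcal{O}^{op}$, which is the claim.
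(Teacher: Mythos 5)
Your skeleton is essentially the paper's own: the paper also defines $GNS(f)$ as (vertical map from initiality of $GNS(\psi)$) $\circ$ (cartesian lift of $\mathcal{O}(f)$), gets functoriality, fibredness and every coherence diagram from the uniqueness of vector-preserving cyclic maps (theorem \ref{positiveuniversality} plus corollary \ref{coherencedestroyer}), and obtains the monoidal comparison from the fact that $GNS(\varphi)\otimes GNS(\psi)$ represents $\varphi\otimes\psi$. The one place you diverge is where you locate the difficulty: you call invertibility of the comparison $GNS(\varphi\otimes\psi)\to GNS(\varphi)\otimes GNS(\psi)$ ``the hard part,'' reduce it to density of the span of the vectors $a\Omega_{\varphi}\otimes b\Omega_{\psi}$, and leave that as an unproven lemma. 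In the paper's formalization this step is trivial, because no completion is taken: $GNS(\varphi)=A/A^{\perp}$ is \emph{algebraically} cyclic, and proposition \ref{cyclictensorproducts} shows in one line that the tensor product of cyclic modules is cyclic with cyclic vector $\Omega_{\varphi}\otimes\Omega_{\psi}$ (every element of $M\otimes N$ is a finite sum of simple tensors $am\otimes bn$); together with proposition \ref{representationoftensors} and nondegeneracy of the tensor form (lemma \ref{flatnesslemma}), uniqueness in theorem \ref{representabilityconditions}(1) gives corollary \ref{gnsmonoidalproperty} outright. Even in the completed Hilbert-space reading suggested by the introduction, your density statement is routine: the span you need is exactly the algebraic tensor product of the two dense cyclic subspaces, which is dense in the Hilbert space tensor product. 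So your proposal is correct modulo that flagged lemma, and the lemma is a one-liner rather than the main obstacle; note also that the paper's theorem \ref{gnsconstruction} actually covers the larger category of admissible (not necessarily positive) processes, where initiality is unavailable and $GNS(f)$ must instead be given by the explicit formula $[x]\mapsto[f(x)]$, so your initiality-based argument proves the positive-state case $\mathbf{Phys}_{p}$, which is what the introduction's statement concerns.
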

The codomain is the category of $\ast$-modules. These are representations of $C^{\ast}$-algebras with isometric module homomorphisms along algebra homomorphisms. The theorem above includes the well known fact that
\begin{displaymath}
GNS(\varphi \otimes \psi) = GNS(\varphi) \otimes GNS(\psi),
\end{displaymath}
but it should be emphasized that the \emph{entire value of the this construction is that it defines a functor}. Every single statement and application below is completely dependent on it, just to make sense. Without functoriality this whole enterprise would be worthless.

The contravariance of $GNS$ may be concerning -- don't we want a covariant representation? Not really -- it is this functor that has all the crucial properties that we need in our formalization of physics. But the physically natural direction is easy to recover. We just compose with taking an adjoint (leaving objects untouched):
\begin{center}
\begin{tikzpicture}
\node (a) at (-5,0) {$\mathbf{Phys}$};
\node[text height = 1.5ex, text depth = .25ex] (b) at (0,0) {$\ast\mathbf{Mod}^{op}$};
\node (c) at (5,0) {$\ast\mathbf{Mod}_{adj}$};

\path[->] (a) edge node[auto] {$GNS^{op}$} (b)
		  (b) edge node[auto] {adjoint} (c)
		  (a) edge[bend right = 15] node[auto,swap] {$GNS_{c}$} (c);
\end{tikzpicture}
\end{center}
The codomain category is the category of $\ast$-modules and adjoint homomorphisms -- whose definition is an exercise for the reader (cheaters can skip to definition \ref{adjointhomomorphism}). $GNS_{c}$ is called the \emph{covariant representation}, and is symmetric monoidal just like $GNS$.

At this point we abandon our synthetic pretense. For now, we have all the information we need, and $\mathbf{Phys}$ can be defined as the comma category
\begin{displaymath}
\mathbf{Phys} = 1 \downarrow \mathcal{S},
\end{displaymath}
where $\mathcal{S}$ is the state functor on $C^{\ast}$-algebras
\begin{displaymath}
\mathcal{S} : C^{\ast}\mathbf{Alg}^{op} \longrightarrow \mathbf{Set}.
\end{displaymath}
This means that the objects of $\mathbf{Phys}$ are pairs $(A, \varphi)$, with $\varphi$ a state on $A$, and the morphisms $(A, \varphi) \rightarrow (B, \psi)$ are $C^{\ast}$-algebra homomorphisms $f: B \rightarrow A$ such that $f^{\ast}\varphi = \psi$. The functor $\mathcal{O}$ forgets the state, and
\begin{displaymath}
(A, \varphi) \otimes (B, \psi) = (A \otimes B, \varphi \otimes \psi).
\end{displaymath}

It is important to not forget the synthetic pretense -- the main contribution of this paper is the construction \emph{scheme} for $\mathbf{Phys}$, and \emph{not any specific construction}. While this version of $\mathbf{Phys}$ covers quite a lot, it's not close to being the final thing. The gauge theory problem and ``$L^{1}$ digression'' lose none of their confounding power.

Despite these shortcomings, $\mathbf{Phys}$ captures physics in a stunningly beautiful way. We now turn to demonstrate this.
\subsection{Functorial Physics}
\subsubsection*{Symmetries}
Why would a $G$-symmetric state define a unitary representation of $G$? Textbooks present a rather torturous derivation of this fact. I propose using composition:
\begin{center}
\begin{tikzpicture}
\node (a) at (-3.5,0) {$G$};
\node (b) at (0,0) {$\mathbf{Phys}$};
\node (c) at (4.5,0) {$\ast\mathbf{Mod}_{adj}$};

\path[->] (a) edge (b)
		  (b) edge node[auto]{$GNS_{c}$} (c);
\end{tikzpicture}
\end{center}
Pretty easy! Here we treat $G$ as a one object groupoid, and a $G$-equivariant object is just a functor out of $G$. In fact $G$ can be an arbitrary groupoid, such as inhomogeneous time (various other categories of time are discussed in section \ref{categoriesoftime}).

The picture above describes the following situation. The single object of $G$ maps to a state $\varphi$ in $\mathbf{Phys}$. Every morphism $g \in G$ maps to a process
\begin{displaymath}
g : \varphi \longrightarrow \varphi,
\end{displaymath}
compatibly with identities and composition. This in turn gives homomorphisms of observables and unitary maps of representations:
\begin{align*}
\mathcal{O}(g) : \mathcal{O}(\varphi) &\longrightarrow \mathcal{O}(\varphi) \\
GNS_{c}(g) : GNS(\varphi) &\longrightarrow GNS(\varphi).
\end{align*}
The former preserve the expectation value $\langle - \rangle_{\varphi}$, and the latter preserve the vector representing this expectation $\Omega \in GNS(\varphi)$. These two maps are compatible in the sense that we have the following identity of inner products in $GNS(\varphi)$:
\begin{displaymath}
\langle (g\cdot a)v, w \rangle = \langle agv, gw \rangle,
\end{displaymath}
where on the left $g$ acts only on the observable $a$, and on the right $g$ acts only on the vectors $v$ and $w$. This means that the mapping $a \mapsto g \cdot a$ on observables is unitarily implemented by conjugation $g^{\ast}(-)g$ in $GNS(\varphi)$, as it should be.

The more fundamental compatibility, from which the former follows is
\begin{displaymath}
GNS_{c}(g)(av) = \mathcal{O}(g^{-1})(a) GNS_{c}(g)(v),
\end{displaymath}
for all observables $a \in \mathcal{O}(\varphi)$ and vectors $v \in GNS(\varphi)$. This is just what it means to be a morphism in $\ast\mathbf{Mod}_{adj}$ over an isomorphism of algebras.

All this is fully compatible with composite systems. If $\varphi$ is $G$-equivariant and $\psi$ is $G^{\prime}$-equivariant, then $\varphi \otimes \psi$ is \emph{naturally} $(G \times G^{\prime})$-equivariant, again just because of composition.

By the wonders of category theory ($\mathbf{Cat}$ being cartesian closed) passing to the equivariant $GNS$ construction is as trivial as adorning all formulas with a $G$ in the exponent. It's all just composition. We obtain the following symmetric monoidal functors,
\begin{center}
\begin{tikzpicture}
\node(a) at (0,0) {$\mathbf{Phys}^{G}$};
\node(b) at (4,0) {$\ast\mathbf{Mod}_{adj}^{G}$};
\path[->] (a) edge node[auto] {$GNS_{c}^{G}$} (b);
\node(c) at (7,0) {$\mathbf{Rep}(G)$};
\path[->] (b) edge node[auto] {$U$} (c);
\end{tikzpicture}
\end{center}
where $U$ is the forgetful functor from equivariant modules to unitary representations. A major step in the construction of physical theories is investigating the fibers of $U$.
\begin{digression}
In gauge theories the distinction between ``internal'' and ``external'' symmetries -- actual symmetries and gauge equivalences, appears to be unsustainable. Since gauge equivalences do not alter physical states, none of the preceding discussion seems to apply. We refer again to appendix \ref{appendixa}, where tentative ideas on how to proceed are presented.
\end{digression}
\subsubsection*{Probability Theory}
Many a tome has been written on the supposed mysteries of quantum mechanics. Here we will merely present certain mathematical devices, in the hope that they subtract from, rather than add to the mystery.

Let $\mathbf{Prob}$ be the category of compact probability spaces (with Radon measures) and probability preserving continuous maps (measurable maps require $W^{\ast}$-algebras). For such a space $X$ we may perform two constructions. First we can construct the algebra $C(X)$, of continuous complex-valued functions on $X$. There is a natural state on $C(X)$, given by the expectation value
\begin{align*}
&\mathbb{E} : C(X) \longrightarrow \mathbb{C}\\
&\mathbb{E}(f) = \int_{X} f \, d\mathbb{P}.
\end{align*}
Since $\mathbf{Phys}$ is just algebras with states, this defines a symmetric monoidal functor $C: \mathbf{Prob} \rightarrow \mathbf{Phys}$. It's fully faithful by Gelfand duality, and so we will speak of probability spaces in $\mathbf{Phys}$.

The other construction is $L^{2}(X)$. Gathering all the extra structures on $L^{2}$, we see a symmetric monoidal functor $L^{2} : \mathbf{Prob}^{op} \rightarrow \ast\mathbf{Mod}$. These constructions provide the link between quantum theory and probability.
\begin{theorem*}[cf.~\ref{probabilisticinterpretation}]
The following diagram of symmetric monoidal functors commutes:
\begin{center}
\begin{tikzpicture}
\node (a) at (0,0) {$\mathbf{Prob}^{op}$};
\node (b) at (-3,-2) {$\mathbf{Phys}^{op}$};
\node (c) at (3, -2) {$\ast\mathbf{Mod}$};

\path[->] (a) edge node[auto, swap] {$C^{op}$} (b)
		  (a) edge node[auto] {$L^{2}$} (c)
		  (b) edge node[auto] {$GNS$} (c);
\end{tikzpicture}
\end{center}
\end{theorem*}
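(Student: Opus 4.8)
The plan is to route everything through the universal property (initiality) of the GNS construction, so that both the object-level identification and the naturality reduce to bookkeeping, with only a single density fact doing the analytic work. First I would verify the object-level claim $GNS(C(X), \mathbb{E}) \cong L^{2}(X)$. The GNS form on $C(X)$ induced by $\mathbb{E}$ is $\langle f, g \rangle = \mathbb{E}(\bar{g} f) = \int_{X} \bar{g} f \, d\mathbb{P}$, which is exactly the restriction of the $L^{2}(X, \mathbb{P})$ inner product. Taking the constant function $1 \in L^{2}(X)$ as distinguished vector, one checks $\langle f \cdot 1, 1 \rangle_{L^{2}} = \int_{X} f \, d\mathbb{P} = \mathbb{E}(f)$, so $L^{2}(X)$ is a representation of the state $\mathbb{E}$ over $C(X)$. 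The only genuine analysis is that $C(X)$ is dense in $L^{2}(X, \mathbb{P})$, i.e.\ that this representation is topologically cyclic; here compactness of $X$ and regularity of the Radon measure $\mathbb{P}$ are what permit approximation of $L^{2}$ functions by continuous ones. By part (2) of the GNS theorem a topologically cyclic representation is initial, so $L^{2}(X)$ is canonically the GNS representation.

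Next I would establish naturality. A morphism in $\mathbf{Prob}$ is a probability-preserving continuous map $p : X \to Y$, sent by $C$ to $p^{\ast} : C(Y) \to C(X)$, $f \mapsto f \circ p$, which is a morphism $(C(X), \mathbb{E}_{X}) \to (C(Y), \mathbb{E}_{Y})$ in $\mathbf{Phys}$ precisely because $p$ is measure-preserving. Both candidate maps $GNS(C(p))$ and $L^{2}(p)$ then run $L^{2}(Y) \to L^{2}(X)$. I would show that $L^{2}(p)$, namely $f \mapsto f \circ p$, satisfies the defining universal property of $GNS(C(p))$: it is isometric (measure-preservation gives $\|f \circ p\|_{L^{2}(X)} = \|f\|_{L^{2}(Y)}$, so it is a legitimate $\ast$-module morphism), it intertwines the actions over $p^{\ast}$ since $(g f) \circ p = (g \circ p)(f \circ p)$, and it carries the cyclic vector $1_{Y}$ to $1_{X}$. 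Since $GNS(C(p))$ is by construction the unique representation morphism out of the initial object $L^{2}(Y)$ into the pullback of $L^{2}(X)$ along $p^{\ast}$, postcomposed with the canonical map to $L^{2}(X)$, and $L^{2}(p)$ factors through that pullback as exactly such a morphism, uniqueness forces $GNS(C(p)) = L^{2}(p)$. This simultaneously produces the comparison isomorphism and makes its naturality squares commute.

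Finally I would check monoidality. On objects this matches $L^{2}(X \times Y) \cong L^{2}(X) \otimes L^{2}(Y)$ (product measure) against the identity $GNS(\varphi \otimes \psi) = GNS(\varphi) \otimes GNS(\psi)$ from the preceding theorem, via $C(X \times Y) \cong C(X) \otimes C(Y)$ and $\mathbb{E}_{X \times Y} = \mathbb{E}_{X} \otimes \mathbb{E}_{Y}$; since for commutative $C^{\ast}$-algebras the tensor product is unambiguous, no min/max subtlety arises. The unit constraint sends the one-point space to $(\mathbb{C}, \mathrm{id})$ and hence to the trivial module $\mathbb{C}$, and the symmetry is inherited. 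As all three functors are already symmetric monoidal and the comparison is assembled from cyclic-vector-preserving maps, its coherence with the monoidal data again follows from uniqueness in the universal property rather than from separate computation.

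The main obstacle is conceptual rather than technical: phrasing the entire argument through initiality so that a single density statement carries all the analysis. The density of $C(X)$ in $L^{2}(X, \mathbb{P})$ is the one place where the topological hypotheses are indispensable; everything else is the formal manipulation of universal properties, exactly the bookkeeping that the contravariant construction of $GNS$ was designed to trivialize.
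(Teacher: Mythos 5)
Your proposal is correct and is essentially the paper's own argument, which compresses it to: $1 \in L^{2}(X)$ represents the expectation value, $L^{2}(X)$ is (topologically) cyclic over $C(X)$, and the initiality clause of the GNS theorem does the rest. Your expansions --- density of $C(X)$ in $L^{2}(X,\mathbb{P})$ via regularity of the Radon measure, naturality and monoidal coherence forced by uniqueness of representing-vector-preserving maps out of the initial object --- are exactly the details the paper's ``totally trivial'' elides (and which its body version, Theorem \ref{probabilisticinterpretation}, handles via Theorem \ref{representabilityconditions} and Corollary \ref{coherencedestroyer}).
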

\begin{proof}
Totally trivial: $L^{2}(X)$ is cyclic over $C(X)$, and $1 \in L^{2}(X)$ represents the expectation value. By the GNS theorem we are done.
\end{proof}

$L^{2}$ acts by pullback of functions on maps of probability spaces, and taking an adjoint we get a diagram for $L^{2}$ and $GNS_{c}$, where $L^{2}$ acts as ``fiber integration'' or ``density pushforward''.

This theorem provides us with a spectacular application. Let $a \in \mathcal{O}(\varphi)$ be a normal observable. That means that the $C^{\ast}$-algebra generated by $a$, $\langle a \rangle \subseteq \mathcal{O}(\varphi)$ is commutative. Pulling back $\varphi : \mathcal{O}(\varphi) \rightarrow \mathbb{C}$ along this inclusion, we obtain a probability space
\begin{displaymath}
P_{\varphi}(a) = (Spec_{m}(\langle a \rangle), \varphi \lvert_{\langle a \rangle}),
\end{displaymath}
where $Spec_{m}(\langle a \rangle)$ is the Gelfand spectrum of $\langle a \rangle$. The inclusion $\langle a \rangle \subseteq \mathcal{O}(\varphi)$ defines an an \emph{ontological restriction} map
\begin{displaymath}
R: \varphi \longrightarrow P_{\varphi}(a)
\end{displaymath}
in $\mathbf{Phys}$. Is restricting observables really a ``physical process''? Something like this certainly does seem to happen before any measurement. In any case, don't be quick to kick this morphism out of $\mathbf{Phys}$, because the following theorem, and it's proof, are worth keeping around.
\begin{theorem*}[Eigenvalue-Eigenvector Link, cf.~\ref{eelink} and \ref{generalizedeigenvalueeigenvectorlink}]
Let $\lambda \in \mathbb{C}$. The following are equivalent:
\begin{enumerate}
\item $a\Omega = \lambda \Omega$, where $\Omega$ is any vector representing $\varphi$.
\item $a = \lambda$ almost everywhere in $P_{\varphi}(a)$.
\end{enumerate}
\end{theorem*}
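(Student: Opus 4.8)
The plan is to reduce both conditions to the vanishing of a single nonnegative scalar, the variance-type quantity $\varphi\bigl((a-\lambda)^{\ast}(a-\lambda)\bigr)$, and to identify that scalar on each side using representability on the quantum side and the probabilistic interpretation theorem on the classical side.

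First I would unwind the classical side. By Gelfand duality $\langle a \rangle \cong C(X)$ with $X = Spec_{m}(\langle a \rangle)$, and under this isomorphism $a$ becomes the tautological coordinate $\hat{a}$, the character-evaluation function $\hat a(\chi)=\chi(a)$. The restricted state $\varphi\lvert_{\langle a \rangle}$ is exactly the Radon measure $\mathbb{P}$ of $P_{\varphi}(a)$, so that $\varphi(g)=\int_{X}g\,d\mathbb{P}$ for every $g\in\langle a\rangle$. By the probabilistic interpretation theorem just proved, $GNS(P_{\varphi}(a))=L^{2}(X,\mathbb{P})$ with the constant function $1$ representing $\mathbb{P}$. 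Condition (2) says precisely that $\hat{a}=\lambda$ holds $\mathbb{P}$-almost everywhere, i.e. $\lVert \hat{a}-\lambda\rVert_{L^{2}}^{2}=\int_{X}\lvert\hat{a}-\lambda\rvert^{2}\,d\mathbb{P}=0$; transporting this integral back through Gelfand duality it equals $\varphi\bigl((a-\lambda)^{\ast}(a-\lambda)\bigr)$, where $(a-\lambda)^{\ast}(a-\lambda)$ genuinely lives in the commutative algebra $\langle a\rangle$ because $a$ is normal.

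Next the quantum side. Let $\Omega$ be any vector representing $\varphi$ in a Hermitian module $H$. Using the defining relation $\varphi(b)=\langle b\Omega,\Omega\rangle_{H}$ together with the $\ast$-compatibility of the module action, I compute
\[
\lVert (a-\lambda)\Omega\rVert_{H}^{2}=\langle (a-\lambda)^{\ast}(a-\lambda)\Omega,\Omega\rangle_{H}=\varphi\bigl((a-\lambda)^{\ast}(a-\lambda)\bigr).
\]
The right-hand side mentions neither $\Omega$ nor $H$, which is exactly what licenses the phrase ``any vector representing $\varphi$'': condition (1), namely $(a-\lambda)\Omega=0$, is equivalent to the vanishing of this number irrespective of the chosen representation. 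Since both conditions have now been seen to be equivalent to $\varphi\bigl((a-\lambda)^{\ast}(a-\lambda)\bigr)=0$, they are equivalent to one another, and the proof is complete.

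A more structural rendering, more in keeping with the functorial philosophy, invokes the ontological restriction $R:\varphi\to P_{\varphi}(a)$ directly: applying $GNS$ yields an isometric module homomorphism $GNS(R):L^{2}(X)\to GNS(\varphi)$ over the inclusion $\langle a\rangle\hookrightarrow\mathcal{O}(\varphi)$, carrying $1\mapsto\Omega$ and hence $\hat{a}-\lambda\mapsto(a-\lambda)\Omega$; the equivalence then follows from a single use of isometry, $\lVert(a-\lambda)\Omega\rVert=\lVert\hat{a}-\lambda\rVert_{L^{2}}$. I do not expect a genuine obstacle here: the only points demanding care are the bookkeeping identifications — that $\hat a$ is the coordinate function, that the cyclic vector $1$ is carried to $\Omega$, and that normality of $a$ keeps $(a-\lambda)^{\ast}(a-\lambda)$ inside the commutative algebra so the integral computation is legitimate. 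The same isometry, applied to the indicator of a closed set rather than to $\hat a-\lambda$, yields the generalized link of \ref{generalizedeigenvalueeigenvectorlink}; the entire content of the theorem is absorbed into the isometric $L^{2}$ embedding furnished by the previous theorem.
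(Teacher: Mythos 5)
Your proposal is correct, and your primary argument takes a genuinely different route from the paper's. The paper proves this functorially: it computes $GNS(R)$ via the commuting triangle of theorem \ref{probabilisticinterpretation}, obtaining an isometric morphism of representations $L^{2}(\varphi\lvert_{\langle a \rangle}) \rightarrow GNS(\varphi)$ over the inclusion $\langle a \rangle \subseteq \mathcal{O}(\varphi)$, carrying $1 \mapsto \Omega$, and transports the equation $a \cdot 1 = \lambda \cdot 1$ across this map (injective, the forms being positive definite) --- which is exactly your ``structural rendering,'' so your closing paragraph recovers the paper's own proof. Your primary argument instead collapses both conditions to the vanishing of the single representation-independent scalar $\varphi\bigl((a-\lambda)^{\ast}(a-\lambda)\bigr)$. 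What that buys: the clause ``any vector representing $\varphi$'' becomes automatic, since the scalar mentions neither $\Omega$ nor $H$, whereas the functorial route needs initiality of the GNS representation (theorem \ref{positiveuniversality}) to pass from the cyclic vector to an arbitrary representing one. What it costs: it leans on positive definiteness twice --- $\lVert (a-\lambda)\Omega \rVert = 0 \Rightarrow (a-\lambda)\Omega = 0$ on the quantum side, and $\int_{X} \lvert \hat{a}-\lambda \rvert^{2}\, d\mathbb{P} = 0 \Rightarrow \hat{a} = \lambda$ a.e.~on the classical side --- so it is confined to the positive/Hilbert setting of this statement (where it is perfectly legitimate, states on $C^{\ast}$-algebras being positive), and it bypasses the functorial machinery the theorem is meant to showcase. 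One side remark in your proposal is off: the generalized link of theorem \ref{generalizedeigenvalueeigenvectorlink} does \emph{not} follow by ``applying the same isometry to the indicator of a closed set'' --- indicators of closed sets lie neither in $C(X)$ nor in $\mathbb{C}[a, a^{\ast}]$, and the generalized statement covers possibly indefinite representable states, where norms are unavailable; the paper proves it scheme-theoretically, via Dirac deltas, ring homomorphisms, and admissibility. For the theorem as stated, however, your argument is complete.
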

\begin{proof}
Just compute $GNS(R)$ using the previous theorem:
\begin{displaymath}
GNS(R) : L^{2}(\varphi \lvert_{\langle a \rangle}) \longrightarrow GNS(\varphi).
\end{displaymath}
This is a morphism of representations of $\varphi$ over the inclusion map $\langle a \rangle \subseteq \mathcal{O}(\varphi)$. So
\begin{displaymath}
a\Omega = \lambda\Omega \textnormal{ iff } a\cdot 1 = \lambda \cdot 1 \textnormal{ in } L^{2} \textnormal{ iff } a = \lambda \textnormal{ a.e.}
\end{displaymath}
\end{proof}
Beyond this argument, $P_{\varphi}(a)$ simply \emph{is} a probability space, with $\langle - \rangle_{\varphi}$ identified as the expectation value on that space. By Gelfand duality $a$ defines a random variable $P_{\varphi}(a) \rightarrow \mathbb{C}$. As a mathematical structure, the Born rule emerges automatically from our formalism. One mystery is reduced to another -- the other being the phenomenological connection between probability theory and reality. This connection is a much more fundamental, and unduly neglected mystery. Still, philosophers have taken note and spilled plenty of ink over it \cite{probabilityphilosophy}.

\subsubsection*{Quantum Markov Processes}
Is pair production really a process in $\mathbf{Phys}$? Not exactly, but it can easily be accommodated\footnote{As long as you believe that QED has an actual scattering matrix.}. First we recall classical Markov processes.

Let $X$ be a compact Hausdorff space. Then the Radon probability measures on $X$, $M(X)$ also form a compact Hausdorff space. A Markov process from $X$ to $Y$ is just a continuous map
\begin{displaymath}
X \longrightarrow M(Y).
\end{displaymath}
The points of $X$ don't map to specific points in $Y$, but rather to probability measures on $Y$ giving distributions of ``where they could have gone''.

Probability measures can be pushed forward, multiplied, and their families integrated against other measures. All this structure amounts to saying that $M$ is a lax monoidal monad
\begin{displaymath}
M : \mathbf{CptHaus} \longrightarrow \mathbf{CptHaus}.
\end{displaymath}
The category of Markov processes is the Kleisli category of this monad $\mathbf{CptHaus}_{M}$, which is monoidal for obvious, and formal category-theoretic reasons \cite{zawadowski}.

Recently, computer scientists (!) have discovered the following amazing theorem.
\begin{theorem*}[Generalized Gelfand Duality, theorem 5.1 in \cite{stochasticgelfand}]
Gelfand duality extends to a contravariant monoidal equivalence between $\mathbf{CptHaus}_{M}$ and the category of completely positive unital maps between commutative $C^{\ast}$-algebras.
\end{theorem*}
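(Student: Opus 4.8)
The plan is to show that the classical Gelfand functor $X \mapsto C(X)$ upgrades to the two enriched categories by verifying three things — a bijection on hom-sets, functoriality for the two (oppositely directed) notions of composition, and compatibility with the monoidal products — while essential surjectivity is inherited verbatim from ordinary Gelfand duality. First I would pin down the morphism correspondence. A Kleisli morphism $X \to Y$ is a continuous map $k : X \to M(Y)$, i.e.\ a weak-$\ast$ continuous assignment $x \mapsto k_{x}$ of Radon probability measures. To it I associate $T_{k} : C(Y) \to C(X)$ by $(T_{k}f)(x) = \int_{Y} f \, dk_{x}$; the function $T_{k}f$ is continuous precisely because $k$ is weak-$\ast$ continuous, and $T_{k}$ is linear, unital (each $k_{x}$ is a probability measure), and positive. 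Conversely, given a positive unital $T : C(Y) \to C(X)$, each composite $\mathrm{ev}_{x} \circ T$ is a state on $C(Y)$, which the Riesz--Markov--Kakutani theorem represents by a unique Radon probability measure $k_{x}$; the family $x \mapsto k_{x}$ is weak-$\ast$ continuous exactly because each $x \mapsto (Tf)(x)$ is. These assignments are mutually inverse.

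The word ``completely'' costs nothing here: since $C(X)$ and $C(Y)$ are commutative, every positive unital map between them is automatically completely positive, so the cpu category coincides with the category of positive unital maps, and I would use this to identify the target. Functoriality then reverses the variance. The Kleisli identity at $X$ is $\eta_{X} : x \mapsto \delta_{x}$, and $T_{\eta_{X}}f = \int f \, d\delta_{x} = f$ is the identity map. For composition, if $k : X \to M(Y)$ and $l : Y \to M(Z)$, their Kleisli composite integrates the measures $l_{y}$ against $k_{x}$, and a one-line Fubini computation gives $(T_{l\circ k}f)(x) = \int_{Y}\big(\int_{Z} f \, dl_{y}\big)\, dk_{x}(y) = (T_{k} \circ T_{l})(f)(x)$, exactly the reversed cpu composite. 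Combined with the hom-bijection and the classical fact that every commutative unital $C^{\ast}$-algebra is some $C(X)$, this makes the functor a contravariant equivalence.

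Finally, monoidality. The product on $\mathbf{CptHaus}$ is the cartesian product, and $M$ carries a symmetric lax monoidal structure with coherence map $M(X) \times M(Y) \to M(X \times Y)$, $(\mu,\nu) \mapsto \mu \otimes \nu$, which is what makes $\mathbf{CptHaus}_{M}$ monoidal; on the algebra side the product is the (unique, since commutative) $C^{\ast}$-tensor product, under $C(X \times Y) \cong C(X) \otimes C(Y)$. I would verify that the bijection intertwines these by testing on elementary tensors, $\int_{X \times Y}(f \otimes g)\, d(\mu \otimes \nu) = \big(\int f \, d\mu\big)\big(\int g \, d\nu\big)$, and then checking that the associativity, unit, and symmetry coherence diagrams commute.

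I expect the main obstacle to be this last step rather than the hom-bijection: one must show that the Kleisli monoidal structure coming from the \emph{commutative} monoidal monad $M$ is genuinely matched with the $C^{\ast}$-tensor product, symmetry included, and the density argument needed to pass from elementary tensors to the completed tensor product requires some care. By contrast, the automatic complete positivity — conceptually the reason ``cpu maps'' is the correct target — is a cheap invocation of a standard theorem.
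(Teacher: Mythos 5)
Your proposal is correct and follows essentially the same route as the paper: your $k_{x} = \mathrm{ev}_{x} \circ T$ obtained from Riesz--Markov is exactly the paper's $\Phi^{\ast}(\delta_{x})$, your Kleisli/Fubini composition check is the paper's computation with the monad multiplication $m_{Z}$, and your elementary-tensor verification of monoidality matches the paper's computation on $\delta_{(x,y)} = \delta_{x} \otimes \delta_{y}$. The only point you make explicit that the paper leaves tacit is the automatic complete positivity of positive unital maps between commutative $C^{\ast}$-algebras, a welcome clarification rather than a divergence.
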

This extension is easy to explain using ordinary Gelfand duality. To a completely positive unital map $\Phi : C(Y) \rightarrow C(X)$ we assign the Markov process
\begin{displaymath}
x \mapsto \Phi^{\ast}\delta_{x},
\end{displaymath}
where $\delta_{x}$ is the Dirac delta at $x \in X$, and $\Phi^{\ast}\delta_{x} \in M(Y)$ is its pullback, with $\delta_{x}$ considered as a linear functional on $C(X)$.

This allows us to generalize the entire construction to Markov processes -- simply construct $\mathbf{Phys}$ using completely positive maps instead of algebra homomorphisms. Call the result $\mathbf{Phys}_{M}$. The previously introduced category $\mathbf{Prob}$ can be defined as $1 \downarrow M$ -- the elements of $M$, and probability spaces with Markov maps between them can be defined as
\begin{displaymath}
\mathbf{Prob}_{M} = 1/\mathbf{CptHaus}_{M}.
\end{displaymath}

The entire construction extends and complete probabilistic compatibility is maintained.
\begin{theorem*}[Non-Unitary GNS Representation, cf.~\ref{nonunitarygns}]
There is a commuting prism of symmetric monoidal functors:
\begin{center}
\begin{tikzpicture}
\node (a) at (-1,0) {$\mathbf{Prob}^{op}$};
\node (b) at (-3,-1.5) {$\mathbf{Phys}^{op}$};
\node (c) at (3, -1.5) {$\ast \mathbf{Mod}$};

\node (d) at (-1,-3) {$\mathbf{Prob}_{M}^{op}$};
\node (e) at (-3,-4.5) {$\mathbf{Phys}_{M}^{op}$};
\node (f) at (3, -4.5) {$\mathbf{Hilb}$};

\path[->] (a) edge node[auto, swap] {$C^{op}$} (b)
		  (a) edge node[auto] {$L^{2}$} (c)
		  (a) edge[loosely dashed] (d)
		  (b) edge node[auto] {$GNS$} (c);

\path[->] (d) edge node[auto, swap] {$C^{op}$} (e)
		  (d) edge node[auto] {$L^{2}$} (f)
		  (c) edge node[auto] {$U$} (f)
		  (b) edge (e)
		  (e) edge node[auto] {$GNS_{M}$} (f);
\end{tikzpicture}
\end{center}
\end{theorem*}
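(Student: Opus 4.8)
The plan is to reduce the whole prism to a single new construction---the extension of $GNS$ to completely positive morphisms---and then to harvest the three square faces and the bottom triangle from universal properties and the previous theorem. On objects nothing is new: $GNS_M$ agrees with $U \circ GNS$, since a state $\varphi$ has the same representing space $GNS(\varphi)$ with cyclic vector $\Omega_\varphi$ regardless of which morphisms we allow. All the content is in defining $GNS_M$ on a morphism $f : \varphi \to \psi$ of $\mathbf{Phys}_M$, i.e.\ on a unital completely positive map $\Phi = \mathcal{O}(f) : \mathcal{O}(\psi) \to \mathcal{O}(\varphi)$ with $\varphi \circ \Phi = \psi$. The pullback-of-modules argument used for $GNS$ is unavailable, because $\Phi$ is not a homomorphism and modules do not pull back along it. Instead I would define $GNS_M(f)$ directly on the dense subspace $\mathcal{O}(\psi)\Omega_\psi \subseteq GNS(\psi)$ by the formula
\begin{displaymath}
b\Omega_\psi \longmapsto \Phi(b)\Omega_\varphi,
\end{displaymath}
and show it extends to a bounded operator $GNS(\psi) \to GNS(\varphi)$.

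The engine of the construction is the Kadison--Schwarz inequality: for a unital completely positive (indeed merely $2$-positive) map one has $\Phi(b)^{\ast}\Phi(b) \le \Phi(b^{\ast}b)$. Applying $\varphi$ and using $\varphi \circ \Phi = \psi$ gives
\begin{displaymath}
\lVert \Phi(b)\Omega_\varphi \rVert^{2} = \varphi\big(\Phi(b)^{\ast}\Phi(b)\big) \le \varphi\big(\Phi(b^{\ast}b)\big) = \psi(b^{\ast}b) = \lVert b\Omega_\psi \rVert^{2}.
\end{displaymath}
This single estimate does all the work: it shows the map respects the kernel (so it is well defined), that it is a contraction (so it extends continuously to $GNS(\psi)$), and, since $\Phi$ is unital, that it sends $\Omega_\psi$ to $\Omega_\varphi$. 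Crucially the operator is \emph{not} isometric when $\Phi$ is not multiplicative---this is exactly why the codomain must be relaxed from $\ast\mathbf{Mod}$ to $\mathbf{Hilb}$, and why the theorem is advertised as non-unitary. A conceptually cleaner but more laborious alternative would route the same map through a Stinespring dilation of $\Phi$, reducing to the homomorphism case on the dilated space together with a contraction; I would keep the direct formula as the working definition.

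Functoriality and symmetric monoidality are then routine, since both hold on the dense subspaces $\mathcal{O}(-)\Omega$ and pass to the completion by continuity: for composable $f, g$ the operators $GNS_M(g \circ f)$ and $GNS_M(f) \circ GNS_M(g)$ both send a generator $c\Omega$ to $\mathcal{O}(f)\big(\mathcal{O}(g)(c)\big)\Omega$, and $GNS_M$ of a tensor of completely positive maps splits because $\Omega_{\varphi \otimes \psi} = \Omega_\varphi \otimes \Omega_\psi$. The three vertical faces of the prism then commute for structural reasons. For the back face a $\ast$-homomorphism makes Kadison--Schwarz an \emph{equality}, so the contraction above becomes the isometry produced by the original universal property; hence $GNS_M$ restricted along $\mathbf{Phys}^{op} \hookrightarrow \mathbf{Phys}_M^{op}$ is exactly $U \circ GNS$. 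The two $\mathbf{Prob}$-side faces commute because $C^{op}$ carries a probability-preserving continuous map to its (multiplicative, hence completely positive) algebra morphism, and because $U$ merely forgets isometry, so $U \circ L^{2}$ coincides with $L^{2}$ followed by the inclusion.

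It remains to close the bottom triangle, $L^{2} = GNS_M \circ C^{op}$ on $\mathbf{Prob}_M$, and this is where I expect the real labor. On objects it is the previous theorem ($L^{2}(X) = GNS(\mathbb{E})$, with $1$ as cyclic vector). On morphisms I would use the generalized Gelfand duality of \cite{stochasticgelfand} to present a Markov process $X \to M(Y)$ by its dual unital completely positive map $\Phi : C(Y) \to C(X)$, $(\Phi f)(x) = \int_Y f \, d\mu_x$, and then verify that the abstract contraction $GNS_M(\Phi) : f \mapsto \Phi f$ is precisely the fiber-integration operator that $L^{2}$ assigns to the process. The main obstacle is exactly this reconciliation: the contraction is produced by pure operator theory, whereas $L^{2}$ on Markov maps is defined measure-theoretically (density pushforward against the disintegration $\mathbb{P}_Y = \int_X \mu_x \, d\mathbb{P}_X$), and matching the two requires unwinding the duality and the probability-preserving condition carefully. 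Once they are identified on the dense subalgebra $C(Y) \subseteq L^{2}(Y)$, continuity finishes the proof.
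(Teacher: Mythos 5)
Your proposal is correct for unital completely positive maps in the $C^{\ast}$-setting, but it reaches the theorem by a genuinely different route than the paper. You let analysis do the work: the Kadison--Schwarz estimate delivers well-definedness, contractivity, and extension by continuity in one stroke, with isometry (hence the top/back compatibility with $U \circ GNS$) recovered as the case of equality. The paper never makes a norm estimate: it defines $GNS_{M}(\Phi)([x]) = [\Phi(x)]$ directly on the algebraic quotients $A/A^{\perp}$, isolates well-definedness as the \emph{admissibility} condition $A^{\perp} \subseteq \Phi^{-1}(B^{\perp})$ (definition \ref{linearadmissibilitydefinition}), and proves that completely positive maps between positive states are admissible (proposition \ref{linearadmissibilityproperties}) via the algebraic Stinespring factorization (corollary \ref{stinespringcorollary}) together with lemma \ref{radicalisnullspace} --- precisely the dilation route you mention and set aside. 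What the paper's approach buys is scope: its $\mathbf{Phys}_{M}$ consists of arbitrary admissible $\ast$-linear maps, for which no boundedness is available, so the body version of the prism (theorem \ref{completeprobabilisticcompatibility}, with $GNS_{M}$ as in theorem \ref{nonunitarygns}) lands in $\mathbf{Herm}$ with \emph{all} linear maps, and $\mathbf{Hilb}$/boundedness is deferred to the subcategory $\mathbf{Phys}_{M,pb} = GNS_{M}^{-1}(\textnormal{pre}\mathbf{Hilb})$ and the covariant functor; what your approach buys is sharper quantitative control in the Hilbert-space reading the introduction's statement suggests. Two caveats. First, your standing unitality hypothesis is too strong even for the paper's flagship example: conditioning $\Phi(a) = PaP$ is completely positive but not unital, so your estimate as written excludes it (the repair is easy --- $\Phi(1) = P \leq 1$, or the Schwarz inequality weighted by $\lVert \Phi(1) \rVert$ --- but it must be said, and correspondingly $GNS_{M}(\Phi)$ fixes the cyclic vector only in the unital case, cf.~proposition \ref{overf}). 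Second, the ``real labor'' you anticipate in the bottom triangle evaporates in the paper, because $CL^{2}(F)$ is \emph{defined} by $CL^{2}(F)(f)(x) = \int_{Y} f \, dF(\delta_{x})$, i.e., as the Gelfand-dual completely positive map followed by projection to the GNS space, so the triangle commutes by comparing formulas with $GNS_{M}$; the independent measure-theoretic description via disintegration that you plan to reconcile is never invoked, and theorem \ref{probabilisticinterpretation} closes the remaining faces exactly as you predict.
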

On top we see the usual $GNS$ representation, and its relation to probability spaces. The unlabeled vertical arrows are inclusions, and $U$ is the forgetful functor to Hilbert spaces. On the bottom we see the stochastic extension of $GNS$, $GNS_{M}$. Its values are no longer homomorphisms of $\ast$-modules, but merely bounded linear maps. The $L^{2}$ functor also extends in a natural manner.

As before, we define the covariant representation, $GNS_{M,c}$ as the adjoint of $GNS_{M}$:
\begin{displaymath}
GNS_{M,c} = GNS_{M}^{\ast}.
\end{displaymath}

The construction of $GNS_{M}$ is no longer completely trivial. A version for measurable maps between probability spaces would require extending generalized Gelfand duality to von Neumann algebras, and more importantly their morphisms. Rather than focus on the details, let's look at two examples, covered in detail in section \ref{wavefunctioncollapsesection}.

\paragraph{State Vector Collapse.}
Let $\varphi: A \rightarrow \mathbb{C}$ be a state, $P \in A$ a self-adjoint projection, and let
\begin{align*}
\Phi : A &\longrightarrow A\\
a &\mapsto PaP.
\end{align*}
This completely positive map is a noncommutative version of probabilistic conditioning (imagine that $P$ is the indicator function of some event in a probability space). Its $GNS_{M}$ representation can be computed as follows.
\begin{proposition*}[State Vector Collapse]\hspace{1em}
\begin{enumerate}
\item If $\varphi$ is represented by $\Omega$ then $\Phi^{\ast}\varphi$ is represented by $P\Omega$.
\item $GNS_{M}(\Phi)$ is the composite
\begin{center}
\begin{tikzpicture}
\node (a) at (-2,0) {$GNS(\Phi^{\ast}\varphi)$};
\node (b) at (3,0) {$GNS(\varphi)$};
\node (c) at (6,0) {$GNS(\varphi)$};
\path[{Hooks[right]}->] (a) edge node[auto] {inclusion} (b);
\path[->] (b) edge node[auto] {$P$} (c); 
\end{tikzpicture}
\end{center}
\item Consequently, $GNS_{M,c}(\Phi)$ is cyclic (maps $\Omega$ to $P\Omega$), and is the composite
\begin{center}
\begin{tikzpicture}
\node (a) at (9,0) {$GNS(\Phi^{\ast}\varphi)$};
\node (b) at (0,0) {$GNS(\varphi)$};
\node (c) at (3,0) {$GNS(\varphi)$};
\path[->] (c) edge node[auto] {orthogonal projection} (a);
\path[->] (b) edge node[auto] {$P$} (c); 
\end{tikzpicture}
\end{center}
\end{enumerate}
\end{proposition*}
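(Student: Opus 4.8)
The plan is to reduce everything to part (2), since (1) is an immediate representability computation and (3) is the adjoint of (2). First I would prove (1) directly. Writing $\varphi(a) = \langle a\Omega, \Omega\rangle$ and using $P = P^{\ast} = P^{2}$, one computes
\[
(\Phi^{\ast}\varphi)(a) = \varphi(PaP) = \langle PaP\,\Omega, \Omega\rangle = \langle a\,(P\Omega), P\Omega\rangle,
\]
so by the definition of representability the vector $P\Omega$ represents $\Phi^{\ast}\varphi$. The GNS theorem (initiality, together with the characterization of initial objects as the topologically cyclic ones) then lets me identify the cyclic submodule $\overline{A\cdot P\Omega} \subseteq GNS(\varphi)$ canonically with $GNS(\Phi^{\ast}\varphi)$, via the unique isometry $\Omega_{\Phi^{\ast}\varphi}\mapsto P\Omega$; this is exactly the inclusion appearing in (2).

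For (2) I would recall the general value of $GNS_{M}$ on a completely positive morphism (the Stinespring-type extension furnished by the Non-Unitary GNS theorem, \ref{nonunitarygns}), which on the dense subspace is $[a]_{\Phi^{\ast}\varphi}\mapsto \Phi(a)\,\Omega$. The two things to verify are that this is well defined and contractive, and both follow from the Kadison--Schwarz inequality for the $2$-positive map $\Phi$. Here it is completely explicit: since $P\le 1$ we have $c^{\ast}Pc\le c^{\ast}c$, and conjugating by $P$ gives
\[
\Phi(c)^{\ast}\Phi(c) = Pc^{\ast}PcP \le Pc^{\ast}cP = \Phi(c^{\ast}c)
\]
for every $c\in A$. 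Applying $\varphi$ yields $\|\Phi(a)\Omega\|^{2}\le (\Phi^{\ast}\varphi)(a^{\ast}a) = \|[a]_{\Phi^{\ast}\varphi}\|^{2}$, so the map is a well-defined contraction. Finally I factor $\Phi(a)\Omega = PaP\Omega = P\cdot(aP\Omega)$, which is precisely the isometric inclusion $[a]_{\Phi^{\ast}\varphi}\mapsto aP\Omega$ of the first paragraph followed by the module action of $P$ on $GNS(\varphi)$ — the asserted composite.

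Part (3) is then purely formal. Since $GNS_{M,c}(\Phi) = GNS_{M}(\Phi)^{\ast}$ and the composite in (2) is $P\circ\iota$ with $\iota$ isometric and $P$ self-adjoint as an operator, taking adjoints gives $GNS_{M,c}(\Phi) = \iota^{\ast}\circ P$, where $\iota^{\ast}$ is the orthogonal projection of $GNS(\varphi)$ onto $\overline{A\cdot P\Omega} = GNS(\Phi^{\ast}\varphi)$. Cyclicity is the trace of $\Omega$: it goes $\Omega \mapsto P\Omega \mapsto P\Omega$, the last step trivial because $P\Omega$ already lies in the subspace, and $P\Omega = \Omega_{\Phi^{\ast}\varphi}$ under the identification of (1).

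I expect the only genuine obstacle to be the first half of (2): confirming that the candidate contraction $[a]\mapsto\Phi(a)\Omega$ really is the value of the functor $GNS_{M}$ — that it agrees with the Stinespring / generalized-Gelfand-duality construction and fits into the commuting prism of \ref{nonunitarygns} — rather than merely being a natural-looking bounded map. Once that definition is unwound this becomes a matter of matching universal properties; the inequalities above are routine and made trivial by $P\le 1$.
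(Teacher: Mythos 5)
Your proof is correct, and its skeleton --- representability of $\Phi^{\ast}\varphi$ by $P\Omega$ via $\varphi(PaP) = \langle a P\Omega, P\Omega\rangle$, the factorization $\Phi(a)\Omega = P\cdot(aP\Omega)$ giving part (2), and taking adjoints of $P\circ\iota$ for part (3) --- coincides with the paper's own computation. The one point of divergence is the ``genuine obstacle'' you flag at the end, and it dissolves in the paper's framework: $[a]\mapsto\Phi(a)\Omega$ is not a candidate map to be matched against a Stinespring-type construction; it is the \emph{definition} of $GNS_{M}(\Phi)$ on admissible $\ast$-linear maps, with proposition \ref{overf} recording exactly the identity $GNS_{M}(\Phi)(a\Omega_{\psi}) = \Phi(a)\Omega_{\varphi}$, and well-definedness already guaranteed by proposition \ref{linearadmissibilityproperties}(3), whose proof is where Stinespring enters. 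Your explicit Schwarz-type inequality $\Phi(c)^{\ast}\Phi(c)\leq\Phi(c^{\ast}c)$ replaces that general machinery with a self-contained verification of admissibility in this special case --- and it does survive the paper's purely algebraic notion of positivity, since $c^{\ast}c - c^{\ast}Pc = ((1-P)c)^{\ast}((1-P)c)$ is positive in the form-$b^{\ast}b$ sense, after which lemma \ref{radicalisnullspace} converts $\psi(a^{\ast}a)=0$ into $a\in A^{\perp}$, which is precisely definition \ref{linearadmissibilitydefinition}. What this buys you is independence from the Stinespring factorization, plus a contractivity estimate that certifies membership in $\mathbf{Phys}_{M,pb}$, which is what legitimizes passing to $GNS_{M,c}$ in (3). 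Two cosmetic mismatches with the paper: the body works algebraically, so $GNS(\Phi^{\ast}\varphi) = AP\Omega \subseteq GNS(\varphi)$ with no closure, by theorem \ref{positiveuniversality} (your closure $\overline{A\cdot P\Omega}$ is the correct reading only after Hilbert completion, where the orthogonal projection of (3) lives anyway), and ``topologically cyclic'' belongs to the introduction's $C^{\ast}$-algebraic statement rather than to theorem \ref{representabilityconditions}, which is the uniqueness result you actually need. Neither affects correctness.
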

If $A = End(H)$ and $\Omega \in H$, then the inclusions and projections are identities (unless $P\Omega = 0$), and we are left with just the action of $P$ on $H$.

\paragraph{Scattering Theory.}
Let $S : \mathcal{F}(H) \rightarrow \mathcal{F}(H)$ be a unitary scattering operator on the Fock space of some Hilbert space $H$. Let $H_{\alpha}, H_{\beta} \subseteq \mathcal{F}(H)$ be subspaces of states of particles of type $\alpha$ and $\beta$, respectively. We can decompose this scattering matrix in to its ``matrix elements'' $S_{\alpha\beta} :\alpha \rightarrow \beta$, which are quantum Markov processes.
\begin{proposition*}[Matrix Element Decomposition]
There is a process $S_{\alpha\beta} : \alpha \longrightarrow \beta$ in $\mathbf{Phys}_{M}$ such that $GNS_{M,c}(S_{\alpha\beta})$ is the composite
\begin{center}
\begin{tikzpicture}
\node(a) at (0,0) {$H_{\alpha}$};
\node(b) at (3,0) {$\mathcal{F}(H)$};
\node(c) at (6,0) {$\mathcal{F}(H)$};
\node(d) at (9,0) {$H_{\beta}$};
\path[->] (b) edge node[auto]{$S$} (c)
		  (c) edge node[auto] {projection} (d);
\path[{Hooks[right]}->] (a) edge node[auto] {inclusion} (b);
\end{tikzpicture}
\end{center}
\end{proposition*}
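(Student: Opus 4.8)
The plan is to exhibit $S_{\alpha\beta}$ as a single morphism of $\mathbf{Phys}_M$ and then read off its covariant representation by factoring it through the Fock space, so that every factor is governed by a result already established. Write $T = \pi_\beta \circ S \circ \iota_\alpha : H_\alpha \to H_\beta$ for the composite drawn in the statement, where $\iota_\alpha : H_\alpha \hookrightarrow \mathcal{F}(H)$ is the isometric inclusion and $\pi_\beta : \mathcal{F}(H) \to H_\beta$ the orthogonal projection. I would model $\alpha$ as the state on $\mathrm{End}(H_\alpha)$ represented by a unit vector $\Omega_\alpha$, and $\beta$ as the (generally sub-normalized) state on $\mathrm{End}(H_\beta)$ represented by $\Omega_\beta = T\Omega_\alpha$. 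Since full endomorphism algebras act irreducibly, every nonzero vector is cyclic, so $GNS(\alpha) = H_\alpha$ and -- provided the amplitude $\Omega_\beta$ is nonzero -- $GNS(\beta) = H_\beta$, which are exactly the endpoints demanded by the picture. The process $S_{\alpha\beta}$ is then the morphism $\alpha \to \beta$ corresponding to the completely positive map
\[
\Psi : \mathrm{End}(H_\beta) \to \mathrm{End}(H_\alpha), \qquad \Psi(b) = T^{\ast} b\, T,
\]
and the one-line computation $\Psi^{\ast}\alpha(b) = \langle b\,\Omega_\beta, \Omega_\beta\rangle = \beta(b)$ shows $S_{\alpha\beta}$ is a genuine arrow of $\mathbf{Phys}_M$.

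Next I would factor $\Psi$, and hence $S_{\alpha\beta}$, into three standard pieces and invoke functoriality of the covariant $GNS_M$. One checks directly that $\Psi = c_\alpha \circ \sigma_S \circ k_\beta$, where $k_\beta(b) = \pi_\beta^{\ast} b\, \pi_\beta$ is conditioning by the coisometry $\pi_\beta$, $\sigma_S(a) = S^{\ast} a\, S$ is conjugation by the unitary $S$, and $c_\alpha(a) = \iota_\alpha^{\ast} a\, \iota_\alpha$ is compression by the isometry $\iota_\alpha$. Correspondingly $S_{\alpha\beta}$ factors in $\mathbf{Phys}_M$ as $\alpha \to \gamma \to \gamma' \to \beta$ through the two Fock-space states $\gamma, \gamma'$ represented by $\iota_\alpha\Omega_\alpha$ and $S\iota_\alpha\Omega_\alpha$, both with $GNS$ space $\mathcal{F}(H)$. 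Each factor's covariant representation is already known: conjugation by $S$ is an honest symmetry, so by the discussion of symmetries its covariant $GNS$ is $S$ itself; the compression $c_\alpha$ and the conditioning $k_\beta$ each carry a single Kraus operator (the isometry $\iota_\alpha$, respectively the coisometry $\pi_\beta$), and the State Vector Collapse proposition -- whose computation uses only that there is a single Kraus operator, not that it is a projection -- identifies their covariant representations as $\iota_\alpha$ and $\pi_\beta$, the projections onto the cyclic subspaces being identities precisely because the endomorphism algebras act with full cyclic subspace. Composing along the functor then gives $GNS_{M,c}(S_{\alpha\beta}) = \pi_\beta \circ S \circ \iota_\alpha = T$, the asserted composite.

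The algebra here is routine; the care goes into two bookkeeping points, the second of which I expect to be the only real friction. First, matching the abstract $GNS$ spaces with the concrete subspaces $H_\alpha, \mathcal{F}(H), H_\beta$ rests entirely on cyclicity, which for full endomorphism algebras reduces to nonvanishing of the representing vectors; the sole genuine hypothesis is that the amplitude $\Omega_\beta = T\Omega_\alpha \neq 0$, and when it vanishes the process simply does not occur and $\beta$ degenerates to the zero sub-state -- physically the correct verdict. Second, the normalization: $\Psi$ is sub-unital, with $\Psi(1) = T^{\ast}T \le 1$ and defect exactly the Born probability $\|T\Omega_\alpha\|^{2} = \langle \Psi(1)\Omega_\alpha, \Omega_\alpha\rangle$ that the transition $\alpha \to \beta$ actually happens. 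The clean form of the proposition therefore lives in the sub-normalized convention already forced on us by the State Vector Collapse example; insisting on normalized states would rescale $\Omega_\beta$ and decorate the covariant representation with a scalar, spoiling the equality $GNS_{M,c}(S_{\alpha\beta}) = \pi_\beta \circ S \circ \iota_\alpha$. Confirming that $\mathbf{Phys}_M$, built from the generalized Gelfand duality and fitting into the non-unitary prism, is set up to carry these sub-unital maps and sub-states is the one thing I would pin down before declaring the computation complete.
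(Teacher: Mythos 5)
Your proposal is correct and is essentially the paper's own argument (section \ref{wavefunctioncollapsesection}): the paper likewise defines $S_{\alpha\beta}$ over the completely positive map $(p_{\beta}Si_{\alpha})^{\ast}(-)\,p_{\beta}Si_{\alpha}$ and obtains $GNS_{M,c}(S_{\alpha\beta})$ by factoring through Fock space into an isometric inclusion, a unitary, and a coisometric projection, each already covered by the State Vector Collapse computations and corollary \ref{schrodingerpicturecorollary}, then composing by functoriality. Your closing concern is already settled by the paper's definitions: morphisms of $\mathbf{Phys}_{M}$ need only be admissible $\ast$-linear maps (no unitality or normalization is imposed), and completely positive maps between positive states are admissible by proposition \ref{linearadmissibilityproperties}, so sub-unital maps and sub-normalized states are carried by the formalism exactly as you need.
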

This proposition allows giving the informal expression $\gamma + \gamma \rightarrow e^{+} + e^{-}$ its \emph{intended mathematical meaning}. I view the accurate reproduction of physical discourse as a critical indicator of success.

\subsubsection*{Classical Physics and Differential Geometry}
We must unfortunately shift gears and redo everything in a topos. This is briefly outlined in section \ref{sins}, and will be fully fleshed out in a forthcoming paper. The reader is issued a \emph{stack warning} at this point -- proficiency with stacks is assumed past this point.

Let $E$ be a ringed topos, with ring $\mathbf{R}$. Examples to keep in mind are models of synthetic differential geometry \cite{sdg}, especially the Cahiers topos, which contains the convenient vector spaces \cite{kockreyes}. It is unfortunately not obvious whether they occur in valid examples.

Let $\mathbb{E}$ be the stack of objects over $E$ (i.e.~the codomain fibration), and let $\mathbb{E}_{lc}$ be the substack generated by the global sections. It's the stack of ``locally constant'' objects of $E$, which are obtainable by gluing a cocycle of trivial families. In contrast, $\mathbb{E}$ contains all families, with fibers glued ``completely arbitrarily''. The difference between $\mathbb{E}$ and $\mathbb{E}_{lc}$ is like the difference between all bundles and the locally trivial ones. The inclusion $\mathbb{E}_{lc} \subseteq \mathbb{E}$ is fully faithful. The construction of $\mathbf{Phys}$, $\ast\mathbf{Mod}$ as a stacks, and $GNS$ as a stack morphism uses $\mathbb{E}_{lc}$ as a ``universe of sets'' to ensure expected behavior (physics can go wild without the ``$lc$'' in $\mathbb{E}_{lc}$ cf.~remark \ref{wildphysics}).

The easiest way to perform the construction is to invoke stack semantics \cite{shulmansemantics} on an appropriate formula defining $GNS$, substituting $\mathbb{E}_{lc}$ whenever the category of sets is mentioned. The result is that $GNS$ becomes a morphism of monoidal stacks over $E$.

\paragraph{Infinitesimal Symmetries.}
Assume the Kock-Lawvere axiom, and let $D$ be the first order infinitesimals (defined internally as $\{x \in \mathbf{R} : x^2 = 0\}$). Next let $G$ be a group object in $E$, considered as a one object groupoid (more generally, we allow a prestack of groupoids over $E$). The $G$-equivariant states are prestack morphisms
\begin{displaymath}
G \longrightarrow \mathbf{Phys},
\end{displaymath}
analogously to before\footnote{The stackification of $G$ is the stack of $G$-torsors \cite{bunge}, so it's convenient to keep prestacks around.}. Differentiating this amounts to the evaluation of this prestack morphism at $D$. By the Kock-Lawvere axiom this results in an antihomomorphism of Lie algebras
\begin{displaymath}
Lie(G) \longrightarrow \ast Der(\mathcal{O}(\varphi)),
\end{displaymath}
from the Lie algebra of $G$ to the $\ast$-derivations on the observables of $\varphi$. If these derivations have generators, then we can ask about their compatibility with the $GNS$ representation.
\begin{theorem*}
Let $X \in Lie(G)$ act as the inner derivation $[Q, -]$ on $\mathcal{O}(\varphi)$, for some $Q \in \mathcal{O}(\varphi)$. Then $GNS(X)$ acts on $GNS(\varphi)$, and
\begin{displaymath}
GNS(X) = Q \textnormal{ iff } Q\Omega = 0
\end{displaymath}
\end{theorem*}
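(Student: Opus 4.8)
The plan is to compute $GNS(X)$ explicitly by differentiating the equivariant covariant representation and then to reduce the claimed equivalence to a statement about the commutant of the $GNS$ representation together with cyclicity of $\Omega$. Concretely, the $G$-equivariance of $\varphi$ produces, via the functor $GNS_c^{G}$, a unitary flow $GNS_c(g)$ on $GNS(\varphi)$ that implements the automorphisms $\mathcal{O}(g)$ and fixes the representing vector $\Omega$. Evaluating this prestack morphism at the first-order infinitesimals $D$ and invoking the Kock--Lawvere axiom differentiates the flow \emph{algebraically}, with no analytic limit, and this is what produces the generator $GNS(X)$ as a skew-adjoint operator on $GNS(\varphi)$; in particular the assertion that $GNS(X)$ acts at all is part of the same differentiation (the derivation is inner, so $Q$ itself already implements it).

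From this I would extract two properties. First, since each $GNS_c(g)$ fixes $\Omega$, differentiation gives $GNS(X)\Omega = 0$; this is immediate from the covariant representation preserving the representing vector. Second, differentiating the covariance relation $GNS_c(g)\,a\,GNS_c(g)^{-1} = \mathcal{O}(g^{-1})(a)$ at the identity turns conjugation by the flow into a commutator, so that $GNS(X)$ implements the derivation: $[GNS(X), a] = [Q, a]$ for every $a \in \mathcal{O}(\varphi)$. Here the two sign reversals, one from the antihomomorphism $Lie(G) \to \ast Der(\mathcal{O}(\varphi))$ and one from the contravariance of $GNS$, cancel to give exactly this relation rather than its negative.

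With these in hand the equivalence is essentially formal. Set $C = GNS(X) - Q$. The second property says that $C$ commutes with all observables, i.e.~$C$ lies in the commutant of $\mathcal{O}(\varphi)$, while the first gives $C\Omega = -Q\Omega$. The forward direction is then trivial: if $GNS(X) = Q$ then $Q\Omega = GNS(X)\Omega = 0$. For the converse, assume $Q\Omega = 0$, so $C\Omega = 0$; since $C$ commutes with $\mathcal{O}(\varphi)$ we get $C\,a\Omega = a\,C\Omega = 0$ for all $a$, and cyclicity of $\Omega$ makes $\mathcal{O}(\varphi)\Omega$ dense, forcing $C = 0$ and hence $GNS(X) = Q$. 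This is the familiar physical content that the canonical charge $GNS(X)$ is the ``vacuum-subtracted'' version of $Q$, the two agreeing precisely when the state is neutral.

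The main obstacle is analytic rather than conceptual: $GNS(X)$ and $Q$ need not be bounded, so the differentiation in the first step, the passage from conjugation to commutator in the second, and the density argument in the converse all require care with domains, closability, and essential skew-adjointness on the core $\mathcal{O}(\varphi)\Omega$. The synthetic setting helps considerably here, since Kock--Lawvere differentiation is purely algebraic and sidesteps one-parameter-group limits entirely; but one must still check that the commutant-plus-cyclicity uniqueness survives for an unbounded central $C$, and verify once and for all that the composite of the two conventional sign flips yields $[GNS(X),a]=[Q,a]$ rather than $[GNS(X)+Q,a]=0$, since it is this sign that makes the stated equivalence (as opposed to a triviality) come out correctly.
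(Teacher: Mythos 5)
Your proof is correct and follows essentially the same route the paper intends: the paper gives no explicit proof, presenting the theorem as a direct consequence of composing the equivariant structure with $GNS$ and evaluating at the infinitesimals $D$ via Kock--Lawvere, which yields precisely your two facts --- $GNS(X)\Omega = 0$ (since each $GNS(g)$ is cyclic, hence fixes $\Omega$) and $[GNS(X), a] = [Q, a]$ (differentiating the intertwining relation over $\mathcal{O}(g)$) --- after which your commutant-plus-cyclicity argument with $C = GNS(X) - Q$ is the evident completion. One simplification you could claim: in the paper's purely algebraic setting cyclicity is exact rather than topological ($GNS(\varphi) = \mathcal{O}(\varphi)\Omega$ on the nose, being $A/A^{\perp}$), and the theorem lives in the internal setting where only the contravariant $GNS$ exists (the paper's remark after the theorem notes $GNS_{c}$ is unavailable there), so your closing analytic caveats about density, closability, and domains of unbounded operators are vacuous here --- the entire argument is algebra.
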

Thus infinitesimal generators coincide in the Heisenberg and Schr{\"o}dinger pictures only if the representing vector is invariant under the \emph{chosen} generator. This invariance can always be sabotaged, since the center, $Z(A)$, always includes the scalars. Recall that the center is just Hochschild cohomology $HH^{0}(A)$. The theorem above suggests that keeping around choices for generators is a good idea, which in turn suggests lifting the entire formalism to higher (e.g.~derived) categories.

\paragraph{The Classical Limit.}
How do Poisson brackets appear in this setting? $R$ defines the affine line $\mathbb{A}^{1}$ in $E$, and $\hbar$-dependent families of states are simply maps $\mathbb{A}^{1} \rightarrow \mathbf{Phys}$, with $\mathbb{A}^{1}$ seen as categorically discrete. The classical limit is just the restriction to infinitesimal $\hbar$.

$D$ is is an amazingly tiny object in the sense of Lawvere \cite[Appendix 4]{sdg}, and so, restricting such a $\hbar$-family to the first order infinitesimals one studies the construction of $\mathbf{Phys}$ on the trivial families in $E/D$. The Kock-Lawvere axiom shows that these are just the first order deformations of linear functionals on an algebra equipped with a $\ast$-Hochschild cocycle. The antisymmetric part of this cocyle determines a Poisson structure, and the symmetric part controls the deformations of any singularities (principal connections with isotropy groups and spacetimes with nontrivial isometries are examples of singular points in their respective stacks). The $\ast$-part of the cocycle is traditionally taken to be trivial. Working with $\mathbb{E}_{lc}$ protects us from considering any nontrivial families in $E/D$, which are plentiful.

The monoidal structure on $\mathbf{Phys}$ restricts to a product operation on $\ast$-Hochschild cocycles, which generalizes the usual product of Poisson structures. In this sense, classical and quantum composition are fully compatible.

In particular we obtain ``classical Hilbert spaces of states'', which for pure states $x$ on a Poisson manifold $X$ amount to the ``walking $L^{2}$ spaces''
\begin{displaymath}
GNS(x) = L^{2}(\delta_{x}),
\end{displaymath}
considered as modules over $C^{\infty}(X)$. Any nontrivial dynamical flow on $X$ completely changes the entire spaces (outside of fixed points), making them relatively useless. Being one-dimensional is also a drawback. However, there is a ``classical Schr{\"o}dinger equation'' -- it's just a deformation of $L^{2}(\delta_{x})$ as a $C^{\infty}(X)$-module, in some tangent direction in $T_{x}X$. Flow-invariant probability measures $\mu$ support a ``Schr{\"o}dinger equation'' on $L^{2}(\mu)$, with the Poisson bracket interpreted as a differential operator.

This discussion shows, contrary to certain claims in the literature, that the degrees of linearity or non-linearity of quantum and classical theories are exactly the same. The only difference is that classical states don't like sharing their sectors.

\subsection{Current Limitations and Perplexities}
\subsubsection*{Classical Thermodynamics}
We have traded classical thermodynamics, in which entropy is a postulated observable, for statistical mechanics, where there is a formula for entropy. The latter is included in our formalism, under the guise of Markov processes, and the former excluded, as a matter of form. The derivation of thermodynamics from the more modern, probability-based statistical mechanics requires making sense of the ``coarse-graining'' operation, even in a classical setting. This, in turn, requires measure theory in potentially infinite dimensions. This problem in mathematical analysis will have to patiently wait for a proper solution. Physicists should also consider solving problem of actually specifying the measures on the ignored degrees of freedom. This is a serious issue -- no decisive discussion of the thermodynamics of computation can take place before this (for a rare point of clarity on this see \cite{ladyman}). Classical thermodynamics is not essential to the program outlined below. The other omissions are more serious, and will be the focus of future work.

\subsubsection*{Gauge Symmetry, Gravity, and Extended Locality}
Gauge theories are, by my own standards, not included in this formalism. My current understanding of this problem is presented in appendix \ref{appendixa}. The moral of that story is that higher categories are essential for the proper treatment of theories with gauge equivalences, and that the conceptual structures underpinning gauge theories are not clear at all. The notion of symmetry may have to be revised.

Next in line is the general notion of locality. I have specifically taken care to avoid saying ``spacetime'' in any part of this work. String theory looms large, and the door to emergent spacetime must be kept open, even if nothing passes through. But, independently of ideology, locality -- especially extended locality -- is conceptually confusing. General Relativity is a theory \emph{of} spacetime, not \emph{in} spacetime. The idea of locality in gravitationally coupled theories is extremely unclear, and will be investigated in forthcoming work \cite{stackygr}.

The common ground between extended locality and this work is inaccessible due to the following perplexing questions:
\begin{enumerate}
\item Does $\lambda\varphi^{4}$ define an extended field theory?
\item Does Yang-Mills theory define an extended field theory?
\end{enumerate}
How far do these theories extend? In which dimensions? Why would Dp-brane excitations define a p-category, and not the usual Hilbert space (0-category!) found in textbooks? Are defects with prescribed support inherently perturbative, non-dynamical objects? After all, D-brane modes can induce physical motion. Does this imply that defect cobordisms describe off-shell processes? None of these issues are clear to me.

There is one hint available: the $\lambda \varphi^{4}$ lagrangian does not appear to define an extended lagrangian (cf.~\cite[Appendix]{freedquant}), and the scalar field does not have any interesting boundary conditions in higher codimensions. This suggests that scalar field theory does not extend, and that there is a hierarchy of $n$-extendible theories, with $n \in \mathbb{N}$. Structures like $\mathbf{Phys}$ would then describe its bottom floor.

The last and greatest omission is string theory. The standard perturbative formalism does define an object in $\mathbf{SymMonCat}/\mathbf{Phys}$, the 2-category of ``generalized physical theories'', but this construction does not properly capture any dualities. The central idea of string theory still seems to be missing. At a more technical level, string theory contains higher gauge fields, leading us back to the problem of integrating gauge symmetry with the construction of $\mathbf{Phys}$.

\subsubsection*{Conceptual Limitations of $C^{\ast}$-algebras}
Despite the disavowal of $C^{\ast}$-algebras in the introduction, some concept of completeness providing a supply of modules isomorphic to their duals seems necessary to give the internal constructions of section \ref{sins} realistic examples.

Nevertheless, there is a long list of reasons, beyond the ``$L^{1}$ digression'' in the introduction, for abandoning $C^{\ast}$-algebras, particularly the ``$C$'' part of $C^{\ast}$, and their topological kin, as the nexus of formalization of quantum theory:
\begin{enumerate}
\item Let $\mathcal{F}$ be the space of classical fields of some field theory. As is evident in \cite{freedfield}, any serious development of classical field theory requires the consideration of the de Rham bicomplex $\Omega^{\ast}(\mathcal{F} \times M) = \Omega^{\ast}(\mathcal{F}) \otimes \Omega^{\ast}(M)$, where $M$ is spacetime. The algebra $C^{\infty}(\mathcal{F})$ is simply not enough, as it does not determine the required bicomplex.
\item The incorporation of fermions requires working with superalgebras, even classically. Otherwise deformation quantization can never yield anticommutation relations. This is no problem on its own, but:
\item Fermionic fields are odd points of superfunction spaces. To preserve them one must work with ringed sites over these function spaces.

For example, the space of sections of a superbundle $E \rightarrow X$, $\Gamma(E)$, defined naturally as a subobject of the sheaf exponential $E^{X}$, gives rise to the site $Y \downarrow \Gamma(E)$, where $Y : \mathbf{Sm} \rightarrow Sh(\mathbf{Sm})$ is the Yoneda embedding of supermanifolds into the category of sheaves over itself. The natural algebra of observables to consider in this case is the sheaf of superalgebras
\begin{displaymath}
(U \longrightarrow \Gamma(E)) \longmapsto C^{\infty}(U).
\end{displaymath}
The global sections $1 \rightarrow \Gamma(E)$ consist of purely even fields, and so considering only them is insufficient. Doing so would result in a complete absence of fermionic observables, and consequently no possibility of anticommutation relations in quantum field theory.
\item The incorporation of gauge invariance complicates the picture even more. We refer again to appendix \ref{appendixa}. Even a naive incorporation of the BV-BRST formalism would require complexes of objects.
\end{enumerate}
Naively adding these points together, we are faced with sheaves of differential graded super-$C^{\ast}$-algebras, as the \emph{bare minimum} for expressing the standard model. Always true to form, gravity demands much more:
\begin{enumerate}
\item[5.] General Relativity is properly thought of as a dynamical \emph{theory of spacetime}, rather than a theory of the gravitational \emph{field in spacetime}. This means that gravity is prior to other fields, and requires the consideration of the ``space of all spacetimes'', i.e.~the stack of Lorentzian manifolds. This stack will be analyzed in detail in forthcoming work \cite{stackygr}. The unfortunate result of this analysis is that we must internalize everything into the category of sheaves on that stack.
\end{enumerate}
So a \emph{minimal} incorporation of fermions, gauge fields, and gravity necessitates a consideration of internal sheaves of differential graded super-$C^{\ast}$-algebras.

We cannot simply ignore these foundational structural issues. The rift between formal mathematics and physics cannot be allowed to grow any larger than it is right now. And despite the advent of ``physical mathematics'' \cite{moore}, of perhaps because of it, the rift has been growing.

\subsubsection*{The Problem of Wilsonian Ice Cubes}
If the project of section \ref{sins} can be successfully populated with interesting examples, then the Wilsonian picture of renormalization, and in particular of critical phenomena, will become available. The very essence of considering families of theories is turning the $GNS$ functor into a morphism of stacks.

However \emph{localized phase transitions} will still be a mystery. Consider the process of making ice cubes. Since the thermodynamical temperature is an external parameter, and not a localizable dynamical quantity, the act of making our cubes destroys the stars and makes the intergalactic medium boil. I would like to think that the production of ice cubes does not require traversing a family of parallel realities, each with its own distinct physics.

Despite the tongue-in-cheek narration, the problem is serious. It's not just that mixed phases must be far from equilibrium. It's what mixed phases actually are, as a mathematical structure. What does it mean to have ice \emph{here} and not \emph{there}? The crucial point is that the Wilsonian picture is metatheoretical -- we deal with the space of all theories. These theories describe only parts of the world, but they think otherwise. The ``logical signature'' of an effective field theory looks just like any other QFT. As a matter of formal structure they no different from fundamental theories.

There must be a dynamical theory of localized phase transitions. How are distinct effective descriptions patched together in spacetime? The statistical ensembles cannot form a sheaf on spacetime (or any similar structure), since the ``rest of the world'' is almost never a reservoir of the appropriate type. Despite this, thermometers work even when there is no well-defined temperature. What is the meaning of the numbers they produce?

The problem of reconciling effective theories with their spatiotemporal domains of validity is a critical conceptual component of mathematical physics. Doubly so when we realize that our experiments are localized in spacetime.
\subsection{Motivation}
My aim is to take the language of the physicists at face value -- path integrals and all, to the greatest possible extent allowed by the law. Give it mathematical semantics, and ultimately express (much less prove) conjectures like ``Witten's theorem'' -- that spaces of vacua in certain Yang-Mills theories have trivial dependence on $\hbar$ (cf.~conjecture \ref{wittenstheorem}). Without being castrated by premature mathematical formalization, this language has proven to give its practitioners powerful vision, and insight into the mathematical world, not to mention a basket of Nobel prizes and a Fields medal. Edward Witten, in particular, has sight where mathematicians are blind. But we cannot allow mediators or middlemen to guide us to the truth. Nature is a good approximation to mathematics, but it's not the real thing.

We must abandon the fear of not making it back to the mathematical mainland -- that we can never get the stories right the way they're told, take an intellectual swim, and listen to what physicists actually have to say. Doing this, one sees that the arguments used by Witten \cite{witten} are compelling, in the sense that they can be expressed in a fully typed formal language, whose expected semantics take values in the complete mathematical theory of quantum fields\footnote{A similar statement about string theory would be false, at least today.}. Language and its meaning -- these two objects are separable, and the former \emph{dictates the form} of the latter. This is a severe restriction, and invaluable tool that we have the bad habit of discarding, mangling the types of objects physicists discuss beyond recognition. The content of this work is \emph{uniquely determined just by trying not to do it}.

Most mathematicians and physicists confuse an understanding of this language -- and its source, ``physical intuition'' -- with the construction of mathematical approximations to the expected full semantics. This makes listening difficult, since it requires disentangling the intended statements from their faulty mathematical cloak. Fortunately there are physicists who speak clearly. Weinberg, \emph{after explicitly distancing himself from ``rigor''}, managed to convey QFT with conceptual clarity that is unmatched by other texts \cite{weinberg}. Among these texts I include the entire literature on constructive quantum field theory.

Another effect of this confusion is that an eminently reasonable question, such as
\begin{quote}
Is a D-brane \emph{actually} a tachyonic condensate \cite{stringfieldtheory}, or \emph{actually} a boundary condition \cite[8.7]{polchinski}?
\end{quote}
can be ineffectually answered by
\begin{quote}
\emph{Actually}, a D-brane is, \emph{by definition}, a certain $KK$-class \cite{brodzki}.
\end{quote}
By definition! None of these D-brane notions can coincide -- that would be a type error\footnote{as in programming and computer science}. The best we can hope for is that a single object of a different kind determines, in appropriate circumstances, the members of these three diverse categories. Giving a premature definition makes this not only formally impossible, but also steers thinking away from these crucial foundational issues. Type errors cannot be corrected by cleverness or computation, since \emph{types reflect intent}. The only way to deal with them is to change one's mind.

As stated, my interest in physics is the construction of this language, and its semantics. The purpose is to allow the import of physical intuition, developed over the past century, into mathematics. Since this intuition greatly exceeds our mathematical understanding (e.g.~\cite{wildramification}), this should allow great progress, not just in stating theorems (as has been happening in the past decades), but in the technique of proof. Rather than receive toys from physicists, I scheme to steal the toy factory. A grand heist.

The present work is the first step in this program. Here I begin outlining the form of a mathematical structure in which the entirety of physics has a common meeting ground. The language developed here has \emph{a chance} to faithfully express the stories that physicists tell. It is incomplete in its current form, but more complete, by far, than anything I have found in the literature.

\subsection{Detailed Organization}
In section \ref{preliminaries} we establish definitions, conventions, and recall elementary algebraic results in their most useful forms (for our purposes, at least). This section was written with topoi in mind, so we work in considerable generality, in excess of what is actually needed outside of section \ref{sins}. We work with arbitrary $\ast$-algebras and nondegenerate Hermitian $\ast$-modules over them.

As seen in the introduction, the lack of topology is not a technical limitation. The reader can effortlessly redo the entire paper for $C^{\ast}$-algebras, and likely (with some effort) even for von Neumann algebras\footnote{All the work is in the morphisms, since $W^{\ast}$-algebras ``are'' $C^{\ast}$-algebras.}. We will not do any of this, for reasons stated previously.

Section \ref{basicconstruction} introduces the basic construction scheme. We begin by studying the notion of representability of a state, without any normalization or positivity conditions. We characterize representability in theorem \ref{representabilityconditions}, and provide the proper generalization of the GNS construction for such objects. We show that representable states are convex in all linear functionals, and establish that the state functor is symmetric lax monoidal.

Next we study positivity. No topology is required. In theorem \ref{positiveuniversality} we show that the GNS representation of a positive state is initial among all pre-Hilbert representations, and proceed to link our variant of the GNS construction to the traditional one. We define complete positivity for $\ast$-algebra maps and derive a variant of the Stinespring factorization theorem -- theorem \ref{stinepring}. It is used in section \ref{nonunitarygnssection} to show that quantum Markov processes have GNS representations.

Next we work to define all the variants of the category of physical processes. They include taking everything, just the positive states, or just the admissible morphisms. Admissibility is required for turning the GNS construction into a functor, which is then automatically strong symmetric monoidal. All processes between positive states are admissible. Finally we give two versions of the covariant representation, depending on how much topology is allowed.

Section \ref{examplesection} is devoted to sample computations and examples. We compute the action of the $GNS$ functor in relation to the functor of pulling back states. We show how to incorporate antilinear processes into $\mathbf{Phys}$, with theorem \ref{conjugationtheorem} protecting us from boundless confusion. We tackle the problem of non-normalized states, providing a functorial normalization procedure. Finally we discuss the classic examples of the GNS construction, the $L^{2}$ space and pre-Hilbert spaces over their endomorphism algebras.

In section \ref{physics1} we begin the formal reconstruction of textbook physics from our formalism. Theorem \ref{schrodingerpicture} and corollary \ref{schrodingerpicturecorollary} serve as an example factory, showing how to lift Schr{\"o}dinger picture unitary operators to maps in $\mathbf{Phys}$ while preserving their intended representations.

Next we tackle the probabilistic interpretation of quantum mechanics, starting with the fundamental relation between the $GNS$ functor and the $L^{2}$ functor, given by theorem \ref{probabilisticinterpretation}. This allows us to derive a canonical random variable from any normal observable, giving the eigenvalue-eigenvector link and the Born rule. The link persists in much greater generality, and we rederive it in such in theorem \ref{generalizedeigenvalueeigenvectorlink}. Since our algebras include the $C^{\ast}$- and $W^{\ast}$-categories, and we deny ourselves the use of spectral theory, the presentation is not as elegant as in the introduction.

Next we discuss symmetries and group representations. The functorial nature of GNS makes this essentially trivial. We show how to deal with time reversal and inhomogeneous, irreversible time evolution.

Finally we characterize the monoidal structure on $\mathbf{Phys}$ for normalized states in terms of axioms describing system composition.

In section \ref{statisticalphysics} we generalize our constructions to noncommutative Markov processes. We extend the notion of admissibility to $\ast$-linear maps which are not necessarily homomorphisms, and show that all completely positive maps are admissible for positive states. We extend the $GNS$ functor to admissible $\ast$-linear maps and show that this extension is maintains complete probabilistic compatibility, as given by theorem \ref{probabilisticinterpretation}, in theorem \ref{completeprobabilisticcompatibility}. To state that theorem we extend Gelfand duality, following \cite{stochasticgelfand}, to Markov processes between compact Hausdorff spaces. To illustrate this extension we show how arbitrary orthogonal projections can be seen as representations of noncommutative conditioning maps.

In the final subsections we propose investigating the relation of the non-unitary GNS representation to bordism representations, information theory.

Section \ref{sins} is provided for interested readers, and sketches the internalization of the GNS representation into models of synthetic differential geometry. The formalisms of infinitesimal symmetries, the classical limit and deformation quantization can all be seen to have a place there. The intended application of this construction is discussed in appendix \ref{appendixb}.

\section{Algebraic Preliminaries}\label{preliminaries}
\subsubsection*{Conventions}
We assume that algebras have units, and that homomorphisms preserve them. We do not assume commutativity. By ``module'' we mean left module, likewise for ideals. Unlabeled tensor products are taken over $\mathbb{C}$, except in section \ref{cyclicmodulessection}, where they are over $\mathbb{Z}$.
\subsection{$\ast$-Algebras}
\begin{definition}
A $\ast$-algebra is a $\mathbb{C}$-algebra $A$, together with a conjugate-linear, involutive anti-homomorphism $\ast : A \rightarrow A$.
\end{definition}
A map of $\ast$-algebras is a $\mathbb{C}$-algebra homomorphism which preserves the $\ast$ operation. In this way $\ast$-algebras organize into a category, which we will denote by $\ast \mathbf{Alg}$.

In the commutative case, the role of the $\ast$ operation can be understood completely through Galois descent.
\begin{lemma}
Let $A$ be a commutative $\mathbb{C}$-algebra. Then $\ast$ operations on $A$ correspond to semilinear $Gal(\mathbb{C}/\mathbb{R})$-actions on $A$.
\end{lemma}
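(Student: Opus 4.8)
The plan is to unwind both definitions and observe that, under commutativity, they describe literally the same maps. First I would spell out what a semilinear $Gal(\mathbb{C}/\mathbb{R})$-action is in concrete terms. Since $Gal(\mathbb{C}/\mathbb{R}) = \{\mathrm{id}, \sigma\}$ is the two-element group with $\sigma$ complex conjugation, such an action is determined by the image $\alpha$ of $\sigma$, and to say the action is semilinear (over the Galois action on the scalars $\mathbb{C}$) is exactly to say that $\alpha : A \rightarrow A$ is a ring automorphism satisfying $\alpha(\lambda a) = \bar{\lambda}\, \alpha(a)$ for $\lambda \in \mathbb{C}$ — i.e.~$\alpha$ is conjugate-linear — together with $\alpha \circ \alpha = \mathrm{id}$, forced by $\sigma^2 = \mathrm{id}$. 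So a semilinear $Gal(\mathbb{C}/\mathbb{R})$-action is the same data as a conjugate-linear, involutive algebra automorphism of $A$.

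Next I would compare this with the data of a $\ast$-operation: a conjugate-linear, involutive anti-homomorphism. The key observation is that commutativity collapses the distinction between homomorphism and anti-homomorphism: for $a, b \in A$ we have $(ab)^{\ast} = b^{\ast} a^{\ast} = a^{\ast} b^{\ast}$, so $\ast$ is multiplicative, hence a homomorphism; conversely a homomorphism of a commutative algebra trivially satisfies the anti-homomorphism identity. Thus both correspondences are \emph{the identity on the underlying map}: send a $\ast$-operation to the group homomorphism $\mathbb{Z}/2 \rightarrow \mathrm{Aut}(A)$ carrying $\sigma$ to $\ast$ (well-defined since $\ast$ is involutive), and send an action to the map $\alpha$, reread as a $\ast$-operation. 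These assignments are patently mutually inverse, giving the claimed bijection.

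Two small verifications remain, and would be where I expect to spend most of the (minimal) care. First, that an involutive homomorphism is automatically bijective, hence an automorphism: this is immediate since it is its own two-sided inverse. Second, that a $\ast$-operation preserves the unit, so that it genuinely is a unital automorphism: from the anti-homomorphism property $a^{\ast} = (1 \cdot a)^{\ast} = a^{\ast}\, 1^{\ast}$ for all $a$, and since $\ast$ is a bijection $a^{\ast}$ ranges over all of $A$, forcing $1^{\ast}$ to be a (two-sided, by commutativity) unit, hence $1^{\ast} = 1$. I expect the only genuine subtlety to be pinning down the precise meaning of ``semilinear $Gal(\mathbb{C}/\mathbb{R})$-action''; once that is fixed as above there is no real obstacle, and the lemma is essentially a tautology recording that commutativity merges the ``anti'' of anti-homomorphism into an ordinary Galois descent datum.
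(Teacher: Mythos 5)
Your proof is correct and takes essentially the same route as the paper, which disposes of the lemma in one line by noting it is immediate from the definition of a semilinear action together with the fact that $Gal(\mathbb{C}/\mathbb{R})$ is generated by conjugation. You have simply made explicit the routine verifications the paper leaves implicit (commutativity merging homomorphism with anti-homomorphism, involutivity forcing bijectivity, and unit preservation).
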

\begin{proof}
This is immediate from the definition of a semilinear action, and the fact that $Gal(\mathbb{C}/\mathbb{R})$ is generated by conjugation.
\end{proof}
By Galois descent we obtain the following corollary.
\begin{corollary}\label{commutativestaralgebras}
The category of commutative $\ast$-algebras is equivalent to the category of $\mathbb{C}$-algebras with chosen real form.
\end{corollary}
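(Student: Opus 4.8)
The plan is to combine the preceding Lemma with faithfully flat descent along the Galois extension $\mathbb{R} \hookrightarrow \mathbb{C}$. By the Lemma a $\ast$-structure on a commutative $\mathbb{C}$-algebra $A$ is the same data as a semilinear action of $G = Gal(\mathbb{C}/\mathbb{R})$, and since $G = \{1, \sigma\}$ with $\sigma$ complex conjugation, such an action is nothing but a single conjugate-linear involution $a \mapsto a^{\ast}$. Here commutativity is essential: it turns the anti-homomorphism $\ast$ into an honest algebra homomorphism, so that the $G$-action is by algebra \emph{automorphisms} and descent applies on the nose (in the noncommutative case one would be forced to bring in the opposite algebra). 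First I would define the functor sending $(A, \ast)$ to the pair $(A, A_{\mathbb{R}})$, where $A_{\mathbb{R}} = \{a \in A : a^{\ast} = a\}$ is the fixed subalgebra of self-adjoint elements; these form an $\mathbb{R}$-subalgebra precisely because $\ast$ is a conjugate-linear involution respecting multiplication.

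The heart of the matter is the descent statement that the multiplication map $A_{\mathbb{R}} \otimes_{\mathbb{R}} \mathbb{C} \xrightarrow{\sim} A$ is an isomorphism, exhibiting $A_{\mathbb{R}}$ as a real form of $A$. This is exactly Galois descent for the two-element group $G$: the idempotents $\tfrac{1}{2}(1 + \sigma)$ and $\tfrac{1}{2}(1 - \sigma)$ split $A$, as an $\mathbb{R}$-module, into the $+1$ and $-1$ eigenspaces of $\sigma$, namely $A_{\mathbb{R}}$ and $i A_{\mathbb{R}}$ (an anti-self-adjoint element $b$ is written as $b = i(-ib)$ with $-ib$ self-adjoint, since $(-ib)^{\ast} = \overline{-i}\,b^{\ast} = -ib$), whence $A = A_{\mathbb{R}} \oplus i A_{\mathbb{R}}$. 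Conversely I would define the inverse functor sending a $\mathbb{C}$-algebra with chosen real form $(A, A_0)$ — that is, an $\mathbb{R}$-subalgebra with $A_0 \otimes_{\mathbb{R}} \mathbb{C} \xrightarrow{\sim} A$ — to the $\ast$-algebra whose involution is the unique conjugate-linear extension of the identity on $A_0$, concretely $(a \otimes \lambda)^{\ast} = a \otimes \bar{\lambda}$ under the identification with $A_0 \otimes_{\mathbb{R}} \mathbb{C}$.

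It then remains to verify that these assignments are mutually inverse up to natural isomorphism and that they match up morphisms. A morphism of commutative $\ast$-algebras is a $\mathbb{C}$-algebra homomorphism commuting with $\ast$, equivalently a $G$-equivariant homomorphism; descent identifies these bijectively with $\mathbb{R}$-algebra homomorphisms of the fixed subalgebras, equivalently with $\mathbb{C}$-algebra homomorphisms carrying one chosen real form into the other. Both round trips are canonically the identity: starting from $(A, \ast)$ one recovers $A_{\mathbb{R}} \otimes_{\mathbb{R}} \mathbb{C} \cong A$ with the reconstructed involution agreeing with the original on $A_{\mathbb{R}}$ and hence everywhere, while starting from $(A, A_0)$ the self-adjoint elements of the reconstructed involution are exactly $A_0$. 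I expect the only genuine work — the main obstacle, such as it is — to be the bookkeeping that these natural isomorphisms are compatible with morphisms; the descent isomorphism itself is entirely standard, so once commutativity has been used to reduce $\ast$ to a bona fide $G$-action, the corollary is essentially formal.
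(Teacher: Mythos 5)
Your proposal is correct and follows exactly the paper's route: the preceding lemma converts the $\ast$-operation (an honest automorphism thanks to commutativity) into a semilinear $Gal(\mathbb{C}/\mathbb{R})$-action, and the equivalence is then Galois descent, with $A \mapsto A_{sa}$ and the inverse given by extending scalars and conjugating the second factor. The paper simply cites descent where you spell out the eigenspace decomposition $A = A_{\mathbb{R}} \oplus iA_{\mathbb{R}}$ and the round-trip checks, so your write-up is a fleshed-out version of the same argument.
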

The equivalence maps $A$ to its $\mathbb{R}$-subalgebra of self-adjoint elements, traditionally denoted by $A_{sa}$.

In the noncommutative case, the $\ast$-operation is not a semilinear Galois action, and its physical significance remains mysterious to me.

\subsection{Bilinear Forms}
We must recall some facts about bilinear forms and their radicals. Let $R$ be a commutative ring.
\begin{definition}
Let $M$ be an $R$ module. If $M$ is equipped with and $R$-bilinear form
\begin{displaymath}
\langle -, - \rangle_{M} : M \otimes_{R} M \longrightarrow R,
\end{displaymath}
we will call it a \emph{bilinear module} over $R$. The bilinear form determines its left and right \emph{radicals}:
\begin{align*}
M^{\perp} & = \{m \in M : \langle m , - \rangle_{M} = 0 \} \\
\leftidx{^{\perp}}M & = \{m \in M : \langle -, m \rangle_{M} = 0 \}.
\end{align*}
Elements of these radicals are called left (right) \emph{degenerate}, respectively, and modules with vanishing left (right) radicals are called left (right) \emph{nondegenerate}.
\end{definition}
For symmetric and Hermitian forms both radicals obviously coincide, and there is a unique notion of nondegeneracy.
\begin{remark}\label{isometryremark}
We will call all maps preserving given bilinear forms \emph{isometries}, even if the forms have no geometric significance.
\end{remark}
\begin{definition}
Let $M$ and $N$ be bilinear modules. An morphism $f : M \rightarrow N$ is called \emph{right adjointable}, if there exists a map $f^{\ast} : N \rightarrow M$ such that 
\begin{displaymath}
\langle f(m), n \rangle_{N} = \langle m, f^{\ast}(n) \rangle_{M},
\end{displaymath}
for all $m \in M$ and $n \in N$. We will call this map a \emph{right adjoint} to $f$. Left adjointable maps are defined analogously.
\end{definition}
If $M$ is right nondegenerate, then $f^{\ast}$ is unique, and adjointness implies the linearity of $f^{\ast}$ (which we require anyway, but is not always necessary). Without additional assumptions adjoints may fail to exist. For symmetric and Hermitian forms there is a unique notion of adjoint.

The following lemma is extremely useful in the various constructions we will undertake. It controls the behavior of degenerate vectors under adjointable maps.
\begin{lemma}\label{isotropiclemma}
Let $M$ and $N$ be bilinear modules. If $f : M \rightarrow N$ has a left adjoint $f^{\ast}$, then
\begin{align*}
f(M^{\perp}) &\subseteq N^{\perp} \\
f^{\ast}(\leftidx{^{\perp}}N) &\subseteq \leftidx{^{\perp}}M.
\end{align*}
\end{lemma}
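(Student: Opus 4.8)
The plan is to unwind the definitions of the two radicals and feed in the adjunction identity; the statement is a direct computation with no genuine obstacle, only a bookkeeping caution. I would write the defining relation witnessing the adjoint as $\langle f(m), n\rangle_N = \langle m, f^{\ast}(n)\rangle_M$, valid for all $m \in M$ and $n \in N$. The two displayed inclusions are then the two ``sides'' of this single identity, obtained by exploiting degeneracy in the left slot and in the right slot respectively.

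For the first inclusion I would take $m \in M^{\perp}$, so that $\langle m, - \rangle_M = 0$ by definition. Then for an arbitrary $n \in N$ the adjunction identity gives $\langle f(m), n\rangle_N = \langle m, f^{\ast}(n)\rangle_M$, and the right-hand side vanishes because $f^{\ast}(n)$ is merely some element of $M$ while $m$ annihilates every element of $M$ from the left. Since $n$ was arbitrary, $f(m) \in N^{\perp}$, which is exactly $f(M^{\perp}) \subseteq N^{\perp}$.

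For the second inclusion the roles of the two arguments are swapped. I would take $n \in \leftidx{^{\perp}}N$, so that $\langle -, n\rangle_N = 0$. Then for arbitrary $m \in M$ the same identity, read in the other direction, gives $\langle m, f^{\ast}(n)\rangle_M = \langle f(m), n\rangle_N = 0$, since $f(m)$ is some element of $N$ and $n$ is annihilated by everything from the right. As $m$ was arbitrary, $f^{\ast}(n) \in \leftidx{^{\perp}}M$, giving $f^{\ast}(\leftidx{^{\perp}}N) \subseteq \leftidx{^{\perp}}M$.

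The only point requiring care --- and the closest thing to an obstacle --- is precisely that the form is not assumed symmetric, so the left and right radicals genuinely differ and one must apply each degeneracy hypothesis in the matching argument of the form; pairing a left-degenerate vector into the right slot, or conversely, would prove nothing. For Hermitian or symmetric forms the two radicals coincide, as already noted, and the two inclusions collapse to one. No positivity, nondegeneracy, or linearity of $f^{\ast}$ beyond what the adjunction identity itself supplies is needed.
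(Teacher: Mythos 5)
Your proof is correct and is essentially the paper's own argument: both rest on the single adjunction identity $\langle f(m), n\rangle_{N} = \langle m, f^{\ast}(n)\rangle_{M}$, read in one direction to show $f(m)$ is left-degenerate when $m$ is, and in the other to show $f^{\ast}(n)$ is right-degenerate when $n$ is. Your added caution about matching each degeneracy hypothesis to the correct slot of the (possibly non-symmetric) form is exactly the bookkeeping the paper's terser ``iff'' formulation leaves implicit.
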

\begin{proof}
$\langle f(m) , n \rangle_{N} = 0$ iff $\langle m, f^{\ast}(n) \rangle_{M} = 0$. So $f(m)$ is left-degenerate if $m$ is, and $f^{\ast}(n)$ is right-degenerate if $n$ is. 
\end{proof}
In other words, left adjoint maps preserve left radicals and right adjoint maps preserve right radicals. The non-uniqueness of the adjoint is irrelevant, and the linearity of $f$ and $f^{\ast}$ is not required above.

Bilinear modules can be added and multiplied. The orthogonal direct sum $M \oplus N$ has carries the bilinear form
\begin{displaymath}
\langle (m,n) , (m^{\prime}, n^{\prime}) \rangle_{M \oplus N} = \langle m, m^{\prime}\rangle_{M} + \langle n, n^{\prime} \rangle_{N}.
\end{displaymath}
The radicals of a direct sum are easily computed.
\begin{proposition}\label{nondegeneratesums}
Let $M, N$ be bilinear modules, with $M \oplus N$ their orthogonal direct sum. Then their left radicals satisfy
\begin{displaymath}
(M \oplus N)^{\perp} = M^{\perp} \oplus N^{\perp},
\end{displaymath}
with an analogous formula for right radicals.
\end{proposition}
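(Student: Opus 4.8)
The plan is to unwind the definition of the left radical directly, exploiting the fact that the orthogonal direct sum form carries no cross terms. First I would take an arbitrary element $(m,n) \in (M \oplus N)^{\perp}$, which by definition means
\[
\langle m, m' \rangle_{M} + \langle n, n' \rangle_{N} = 0
\]
for all $m' \in M$ and $n' \in N$. The decisive move is that $m'$ and $n'$ range over their respective factors \emph{independently}: specializing the test vector to $(m', 0)$ forces $\langle m, m' \rangle_{M} = 0$ for every $m'$, hence $m \in M^{\perp}$, and symmetrically specializing to $(0, n')$ yields $n \in N^{\perp}$. This establishes the inclusion $(M \oplus N)^{\perp} \subseteq M^{\perp} \oplus N^{\perp}$.

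For the reverse inclusion I would simply observe that if $m \in M^{\perp}$ and $n \in N^{\perp}$, then each pairing $\langle m, m' \rangle_{M} + \langle n, n' \rangle_{N}$ is a sum of two zeros, so $(m,n)$ is left-degenerate. Combining the two inclusions gives the claimed equality, under the evident identification of $M^{\perp} \oplus N^{\perp}$ with the submodule $M^{\perp} \times N^{\perp}$ of $M \oplus N$. The statement for the right radicals then follows by the identical argument with the two slots of the form exchanged (equivalently, by applying the left-radical case to the transposed forms).

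As for the main obstacle: there is essentially none of substance — this is a one-line computation. The only point that requires any care is the independent variation of the two components of the test vector, and that is precisely where the orthogonality built into the definition of $M \oplus N$ (the vanishing of the cross terms) is used. Were the form to contain cross terms coupling $M$ and $N$, this decoupling would fail and the radical would no longer split as a direct sum, so it is worth flagging that the entire content of the proposition is exactly this cross-term vanishing.
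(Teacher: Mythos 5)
Your proof is correct and follows exactly the paper's argument: test $(m,n)\in(M\oplus N)^{\perp}$ against vectors of the form $(m',0)$ and $(0,n')$ to get the nontrivial inclusion, with the reverse inclusion immediate from the absence of cross terms. Your closing remark about the cross-term vanishing being the entire content of the statement is a fair observation, though the paper leaves it implicit.
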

\begin{proof}
Clearly we have $M^{\perp} \oplus N^{\perp} \subseteq (M \oplus N)^{\perp}$. To show the other inclusion suppose that $\langle (m,n), - \rangle_{M \oplus N} = 0$. Evaluating this on $(m^{\prime},0)$ we see that $m \in M^{\perp}$. Evaluating on $(0, n^{\prime})$ we see that $n \in N^{\perp}$.
\end{proof}

The tensor product of bilinear modules $M \otimes_{R} N$ is also bilinear, through the formula
\begin{displaymath}
\langle m \otimes n, m^{\prime} \otimes n^{\prime} \rangle_{M \otimes N} = \langle m, m^{\prime} \rangle_{M} \langle n , n^{\prime} \rangle_{N}.
\end{displaymath}

Without additional assumptions the radicals can misbehave under tensor products. Any bilinear module $M$ determines an exact sequence
\begin{equation}\label{exactsequence}
0 \longrightarrow M^{\perp} \longrightarrow M \longrightarrow Hom_{R}(M, R),
\end{equation}
with the last arrow being $m \mapsto \langle m, - \rangle_{M}$. Tensoring such sequences results in any number of homological situations. Here we will simply assume that nothing can go wrong.
\begin{lemma}\label{flatnesslemma}
Let $M$ and $N$ be bilinear vector spaces over a field $k$. Then their left radicals satisfy
\begin{displaymath}
(M \otimes_{k} N)^{\perp} = M^{\perp} \otimes_{k} N + M \otimes_{k} N^{\perp},
\end{displaymath}
and analogously for the right radicals. In particular, if $M$ and $N$ are nondegenerate, then so is $M \otimes_{k} N$.
\end{lemma}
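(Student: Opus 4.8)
The plan is to reinterpret the radical as the kernel of the canonical map into the dual, and then to reduce the whole statement to a purely homological fact about kernels of tensor products, exploiting that \emph{every} vector space is flat over the field $k$.

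First I would recall from the exact sequence \eqref{exactsequence} that the left radical is precisely the kernel of the map $\phi_M : M \to Hom_k(M,k) = M^{*}$, $m \mapsto \langle m, - \rangle_M$, and likewise $N^{\perp} = \ker \phi_N$. The product form on $M \otimes_k N$ induces a map $M \otimes_k N \to (M \otimes_k N)^{*}$ whose kernel is by definition $(M \otimes_k N)^{\perp}$. Checking on pure tensors, this map factors as
\begin{displaymath}
M \otimes_k N \xrightarrow{\ \phi_M \otimes \phi_N\ } M^{*} \otimes_k N^{*} \hookrightarrow (M \otimes_k N)^{*},
\end{displaymath}
where the second arrow is the canonical map sending $\xi \otimes \eta$ to the functional $m \otimes n \mapsto \xi(m)\eta(n)$. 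Over a field this canonical map is injective (a standard basis argument), and therefore $(M \otimes_k N)^{\perp} = \ker(\phi_M \otimes \phi_N)$. This is the step that replaces the ``any number of homological situations'' with a single clean kernel.

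Everything then reduces to the general claim that for $k$-linear maps $f : V \to V'$ and $g : W \to W'$ one has $\ker(f \otimes g) = \ker f \otimes_k W + V \otimes_k \ker g$. The inclusion $\supseteq$ is immediate, since $f \otimes g$ kills each summand. For the reverse inclusion I would factor $f$ and $g$ through their images and use that tensoring a subspace inclusion stays injective over a field, which identifies $\ker(f \otimes g)$ with $\ker(\bar f \otimes \bar g)$ for the induced surjections $\bar f : V \twoheadrightarrow V/\ker f$ and $\bar g : W \twoheadrightarrow W/\ker g$. The kernel of a tensor product of surjections is then read off from right-exactness of $\otimes$ via the canonical isomorphism
\begin{displaymath}
(V/\ker f) \otimes_k (W/\ker g) \;\cong\; (V \otimes_k W)\big/\bigl(\ker f \otimes_k W + V \otimes_k \ker g\bigr).
\end{displaymath}
Taking $f = \phi_M$, $g = \phi_N$ delivers exactly $M^{\perp} \otimes_k N + M \otimes_k N^{\perp}$. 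The right-radical formula is verbatim the same with the two slots of the forms interchanged, and if $M$ and $N$ are nondegenerate both summands vanish, giving the final nondegeneracy claim.

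The hard part — and the precise reason the hypothesis ``over a field'' cannot be dropped — is the exactness machinery hidden in the last two displays. Over a general commutative ring the map $M^{*} \otimes N^{*} \to (M \otimes N)^{*}$ need not be injective, modules need not be flat, and the kernel formula $\ker(f \otimes g) = \ker f \otimes W + V \otimes \ker g$ can genuinely fail; this is exactly the pathology the text flags just before the lemma. Working over a field makes every module flat and collapses all of these obstructions at once, so the argument is essentially an exercise in assembling standard flatness facts in the correct order rather than in any delicate estimate.
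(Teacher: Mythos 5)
Your proposal is correct and takes essentially the same route as the paper: both identify $(M \otimes_{k} N)^{\perp}$ as the kernel of $\phi_{M} \otimes \phi_{N}$ via the injectivity of the canonical map $M^{\ast} \otimes_{k} N^{\ast} \hookrightarrow (M \otimes_{k} N)^{\ast}$, and then compute that kernel from flatness over the field $k$. The only difference is one of explicitness — you prove the kernel formula $\ker(f \otimes g) = \ker f \otimes_{k} W + V \otimes_{k} \ker g$ in detail, whereas the paper compresses exactly this step into the phrase ``the result follows by taking the tensor product of the sequences (\ref{exactsequence})''.
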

\begin{proof}
The radical $(M \otimes_{k} N)^{\perp}$ is clearly the kernel of the map
\begin{displaymath}
M \otimes_{k} N \longrightarrow (M \otimes_{k} N)^{\ast},
\end{displaymath}
where $m\otimes n$ maps to $\langle m, - \rangle_{M} \langle n, - \rangle_{N}$. This map is arises as the composite
\begin{displaymath}
M \otimes_{k} N \longrightarrow M^{\ast} \otimes_{k} N^{\ast} \longrightarrow (M \otimes_{k} N)^{\ast},
\end{displaymath}
where the last arrow is the natural one (arising from $\otimes_{k}$ begin a functor), and the first is
\begin{displaymath}
m\otimes n \mapsto \langle m, - \rangle_{M} \otimes \langle n, - \rangle_{N}.
\end{displaymath}
Since the natural map $M^{\ast} \otimes_{k} N^{\ast} \rightarrow (M \otimes_{k} N)^{\ast}$ is injective, the result follows by taking the tensor product of the sequences (\ref{exactsequence}) for $M$ and $N$.
\end{proof}
\begin{remark}\label{derivedremark}\hspace{1em}
\begin{itemize}
\item This lemma is useless in topoi, effectively limiting the supply of examples in section \ref{sins}.
\item The discussion here is a first indicator that derived categories are warranted in a more complete development of $\mathbf{Phys}$. A bilinear module $M$ should be replaced by the complex $\mathbf{M}$ given by
\begin{displaymath}
0 \longrightarrow M^{\perp} \longrightarrow M,
\end{displaymath}
with the nondegenerate form recovered as the cohomology $H^{0}(\mathbf{M})$.
\item The real property required of bilinear modules $M$ in our constructions is that the functor $M \otimes_{R}(-)$ preserves nondegenerate bilinear forms.
\end{itemize}
\end{remark}
The final lemma will serve to define the tensor product of $\ast$-modules, once they have been defined.
\begin{lemma}\label{adjointlemma}
Let $f : M \rightarrow N$ and $g : S \rightarrow T$ be right adjointable maps of bilinear $R$-modules. Then $f \otimes_{R} g : M \otimes_{R} S \rightarrow N \otimes_{R} T$ is also right adjointable.
\end{lemma}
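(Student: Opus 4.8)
The plan is to guess the adjoint and verify it directly. Since $f$ and $g$ are right adjointable, I would fix right adjoints $f^{\ast} : N \rightarrow M$ and $g^{\ast} : T \rightarrow S$, satisfying $\langle f(m), n \rangle_{N} = \langle m, f^{\ast}(n) \rangle_{M}$ and $\langle g(s), t \rangle_{T} = \langle s, g^{\ast}(t) \rangle_{S}$ for all relevant elements. The natural candidate for a right adjoint of $f \otimes_{R} g$ is the map $f^{\ast} \otimes_{R} g^{\ast} : N \otimes_{R} T \rightarrow M \otimes_{R} S$. First I would check that this tensor product of maps is even well-defined: the adjoints are $R$-linear (this is part of what it means to be a morphism of bilinear modules, and in the right-nondegenerate case is forced by the adjoint relation, as noted after the definition), so $(n,t) \mapsto f^{\ast}(n) \otimes g^{\ast}(t)$ is $R$-bilinear and factors through $N \otimes_{R} T$.

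Next I would verify the adjoint relation on pure tensors, which is a one-line computation using the definition of the tensor-product form together with the two given adjunctions:
\begin{align*}
\langle (f \otimes g)(m \otimes s), n \otimes t \rangle_{N \otimes T} &= \langle f(m), n \rangle_{N} \, \langle g(s), t \rangle_{T} \\
&= \langle m, f^{\ast}(n) \rangle_{M} \, \langle s, g^{\ast}(t) \rangle_{S} \\
&= \langle m \otimes s, (f^{\ast} \otimes g^{\ast})(n \otimes t) \rangle_{M \otimes S}.
\end{align*}
This is exactly the defining identity for $f^{\ast} \otimes g^{\ast}$ to be a right adjoint, but so far only checked on generators.

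Finally I would promote the identity from pure tensors to arbitrary elements. For a fixed pure tensor $y = n \otimes t$, both $x \mapsto \langle (f \otimes g)(x), y \rangle_{N \otimes T}$ and $x \mapsto \langle x, (f^{\ast} \otimes g^{\ast})(y) \rangle_{M \otimes S}$ are $R$-linear functionals on $M \otimes_{R} S$ (composites of $R$-linear maps with a linear argument of the form), and they agree on the pure tensors $m \otimes s$, which generate $M \otimes_{R} S$; hence they agree for all $x$. Running the symmetric argument in the second variable — or simply observing that the displayed computation is $R$-bilinear in $(x,y)$ and that pure tensors generate both factors — extends the identity to all $x$ and all $y$. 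This shows $f^{\ast} \otimes_{R} g^{\ast}$ is a right adjoint to $f \otimes_{R} g$, so the latter is right adjointable.

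I do not expect any genuine obstacle here; the lemma is essentially a naturality statement for the adjunction under $\otimes_{R}$. The only point demanding care is the well-definedness of $f^{\ast} \otimes_{R} g^{\ast}$, which rests on the linearity of the chosen adjoints: with the bare existence-of-a-set-map reading of adjointability a non-linear adjoint could not be tensored, so the argument implicitly leans on the paper's standing convention that adjoints are taken to be $R$-linear (automatic whenever the forms are nondegenerate). Everything else is a routine bilinearity-and-generators verification.
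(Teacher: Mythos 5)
Your proposal is correct and matches the paper's own proof: the same candidate adjoint $f^{\ast} \otimes_{R} g^{\ast}$, the same one-line verification on pure tensors, and the same extension by multilinearity. Your extra care about well-definedness of the tensored adjoint is sound but already covered by the paper's standing convention that adjoints are required to be linear (cf.~the remark following the definition of right adjointable maps).
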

\begin{proof}
The adjoint is obviously $f^{\ast} \otimes g^{\ast}$, for any two right adjoints $f^{\ast}$, $g^{\ast}$ of $f$ and $g$, respectively, since
\begin{multline*}
\langle f\otimes g (m \otimes s) , n \otimes t \rangle = \langle f(m), n \rangle \langle g(s), t \rangle = \\ \langle m, f^{\ast}(n) \rangle \langle s, g^{\ast}(t) \rangle = \langle m \otimes s, f^{\ast} \otimes g^{\ast} (n \otimes t) \rangle.
\end{multline*}
One can then extend by multilinearity to all tensors, or interpret the above as a diagrammatic computation. Either way, the possible degeneracy poses no problems.
\end{proof}
\subsection{$\ast$-Modules}
Let $M$ be a nondegenerate Hermitian complex vector space. By $\underline{End}(M)$ we denote the space of adjointable endomorphisms of $V$. We record the obvious fact that it is a $\ast$-algebra.
\begin{proposition}
$\underline{End}(M)$ is a $\ast$-algebra, with $\ast$ mapping each endomorphism $f$ to its associated $f^{\ast}$.
\end{proposition}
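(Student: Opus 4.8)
The plan is to verify the two clauses of the definition of a $\ast$-algebra in turn: first that $\underline{End}(M)$ is a unital $\mathbb{C}$-subalgebra of the full endomorphism algebra $End(M)$, and then that the assignment $f \mapsto f^{\ast}$ is a conjugate-linear, involutive anti-homomorphism. The single engine behind every step is the defining adjunction identity $\langle f(m), n \rangle_{M} = \langle m, f^{\ast}(n) \rangle_{M}$, used together with nondegeneracy. As recorded after the definition of adjointable maps, nondegeneracy forces each $f^{\ast}$ to be unique and automatically linear; uniqueness is precisely what makes the formulas below well-posed, since to prove an equation of the form $g = f^{\ast}$ it suffices to exhibit a map $g$ satisfying the adjunction identity for $f$.

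First I would establish closure. Given adjointable $f, g$ with adjoints $f^{\ast}, g^{\ast}$, the chain $\langle (fg)(m), n \rangle = \langle g(m), f^{\ast}(n) \rangle = \langle m, (g^{\ast}f^{\ast})(n) \rangle$ exhibits $g^{\ast}f^{\ast}$ as an adjoint of $fg$, so the composite is adjointable; splitting $\langle (f+g)(m), n \rangle$ and $\langle (\lambda f)(m), n \rangle$ similarly shows that sums and scalar multiples are adjointable. The identity is manifestly self-adjoint. Hence $\underline{End}(M)$ is a subspace closed under composition and containing $\mathrm{id}$, i.e.\ a unital $\mathbb{C}$-subalgebra of $End(M)$.

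The very same computations, now read as assertions about $\ast$ rather than about closure, supply the algebraic properties. The displayed identity for $fg$ yields the anti-homomorphism law $(fg)^{\ast} = g^{\ast}f^{\ast}$; additivity yields $(f+g)^{\ast} = f^{\ast} + g^{\ast}$; and extracting the scalar through the conjugate-linear second slot yields $(\lambda f)^{\ast} = \bar{\lambda}\,f^{\ast}$, i.e.\ conjugate-linearity. For involutivity I would invoke Hermitian symmetry: from $\langle f^{\ast}(n), m \rangle = \overline{\langle m, f^{\ast}(n) \rangle} = \overline{\langle f(m), n \rangle} = \langle n, f(m) \rangle$ one reads off that $f^{\ast}$ is itself adjointable with adjoint $f$. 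This one step does double duty, simultaneously confirming that $f^{\ast} \in \underline{End}(M)$ (so that $\ast$ really maps $\underline{End}(M)$ into itself) and that $(f^{\ast})^{\ast} = f$ (so that $\ast$ is an involution).

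There is no genuine obstacle here; once uniqueness of adjoints is in hand the proposition is pure bookkeeping. The only point that genuinely demands care is the slot convention for the Hermitian form: the scalar $\lambda$ must be pushed into the conjugate-linear argument, where it acquires a bar, so that $\ast$ emerges conjugate-linear rather than linear, exactly as the definition of a $\ast$-algebra demands. Since the form is Hermitian, left and right adjointability coincide, so ``adjointable'' is unambiguous and all of the above holds verbatim.
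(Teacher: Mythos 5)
Your proof is correct, and it fills in exactly the routine verification the paper leaves implicit: the paper states this proposition as an ``obvious fact'' and records it without proof. Every step you take --- uniqueness and automatic linearity of adjoints from nondegeneracy, the adjoint identities $(fg)^{\ast} = g^{\ast}f^{\ast}$, $(\lambda f)^{\ast} = \bar{\lambda} f^{\ast}$, and the use of Hermitian symmetry to get involutivity and closure of $\ast$ in one stroke --- is the intended bookkeeping, carried out with the correct slot convention.
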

\begin{remark}\label{boundednessremark}
Adjointable maps between Hilbert spaces are exactly the bounded ones. This follows from the uniform boundedness principle.
\end{remark}
\begin{definition}
Let $A$ be a $\ast$-algebra. A $\ast$-module over $A$ is a nondegenerate Hermitian vector space $M$, together with a map of $\ast$-algebras $A \rightarrow \underline{End}(M)$.
\end{definition}
\begin{remark}\label{submoduletrouble}
The intersection of nondegenerate subspaces of a quadratic space may be degenerate, and hence the ``$\ast$-submodule generated by $X$'' need not exist without additional assumptions, such as positivity of the Hermitian form. One must be extremely careful to prove that any expected submodules actually exist.
\end{remark}
Because of this, in the sequel $\ast$-modules will always be named such, and will be strictly distinguished from ordinary modules, which will appear in the course of our constructions.

We will need a notion of homomorphism between $\ast$-modules over different algebras. Let $f : A \rightarrow B$ be a map of $\ast$-algebras, and let $M$ be a $\ast$-module over $A$, and $N$ be a $\ast$-module over $B$.
\begin{definition}
A map of $\ast$-modules $h : M \rightarrow N$ over $f$ is an isometric $\mathbb{C}$-linear map (cf.~remark \ref{isometryremark}), such that $h(am) = f(a) h(m)$ for all $a \in A$ and $m \in M$.
\end{definition}
An ordinary map is simply a map over the identity of the underlying algebra. Our work will also require a slightly more exotic notion of homomorphism.
\begin{definition}\label{adjointhomomorphism}
A linear map $h : N \rightarrow M$ of $\ast$-modules is an adjoint homomorphism over $f$ if it is a coisometry (adjoint of an isometry) of the underlying Hermitian forms and $a h(n) = h(f(a) n)$ for all $a \in A$ and $n \in N$.
\end{definition}
Note the direction. The name comes from the following obvious proposition.
\begin{proposition}\hspace{1em}
\begin{enumerate}
\item Let $h: M \rightarrow N$ have an adjoint $h^{\ast} : N \rightarrow M$. Then $h$ is a homomorphism over $f$ iff $h^{\ast}$ is an adjoint homomorphism over $f$.
\item Adjoint homomorphisms over an invertible map $f$ are exactly the homomorphisms over $f^{-1}$.
\end{enumerate}
\end{proposition}

\subsection{The Fibration of $\ast$-Modules}
\begin{definition}\hspace{1em}
\begin{itemize}
\item The category $\ast\mathbf{Mod}$, of $\ast$-modules and their homomorphisms, has as objects pairs $(A,M)$, where $A$ is an $\ast$-algebra, and $M$ is a $\ast$-module over $A$.

The morphisms are pairs $(f,h): (A, M) \rightarrow (B, N)$, where $f : A \rightarrow B$ is a morphism of $\ast$-algebras, and $h : M \rightarrow N$ is a morphism of $\ast$-modules over $f$.
\item The category $\ast\mathbf{Mod}_{adj}$ is defined analogously, but with maps $(f,h): (A,M) \rightarrow (B,N)$, where $f: B \rightarrow A$ is a map of $\ast$-algebras, and $h$ is an adjoint homomorphism over $f$.
\end{itemize}
\end{definition}
There is an obvious projection functor $\pi : \ast\mathbf{Mod} \rightarrow \ast\mathbf{Alg}$, which forgets the modules. This map is a fibration (in the sense of Grothendieck, cf.~\cite{streicher} or \cite{vistoli}).
\begin{theorem}
$\pi$ is a Grothendieck fibration.
\end{theorem}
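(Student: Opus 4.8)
The plan is to exhibit cartesian lifts directly, via restriction of scalars. Recall that to show $\pi$ is a fibration it suffices to produce, for every $\ast$-algebra map $f : A \rightarrow B$ and every object $(B, N)$ lying over $B$, a cartesian morphism over $f$ with codomain $(B, N)$. So fix such $f$ and $N$.

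The natural candidate for the domain is the restricted module $f^{*}N$: its underlying object is the Hermitian space $N$ itself, with $A$ acting through the composite $A \xrightarrow{f} B \rightarrow \underline{End}(N)$. Since $f$ is a map of $\ast$-algebras and the $B$-action on $N$ was one as well, this composite is again a map of $\ast$-algebras, so $f^{*}N$ is a $\ast$-module over $A$. The one place where the pathology of remark \ref{submoduletrouble} could in principle intervene is nondegeneracy; but here it is automatic, precisely because restriction of scalars retains the entire space $N$ together with its original Hermitian form, never passing to a submodule. The pair $(f, \mathrm{id}_{N}) : (A, f^{*}N) \rightarrow (B, N)$ is then a morphism of $\ast\mathbf{Mod}$ over $f$: the identity is trivially isometric, and $\mathrm{id}_{N}(a \cdot n) = f(a)n = f(a)\,\mathrm{id}_{N}(n)$ verifies the module condition over $f$.

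It remains to check that $(f, \mathrm{id}_{N})$ is cartesian. Given any morphism $(g, k) : (C, P) \rightarrow (B, N)$ and any factorization $g = f \circ e$ in $\ast\mathbf{Alg}$, I would first observe that any lift $(e, \ell)$ satisfying $(f, \mathrm{id}_{N}) \circ (e, \ell) = (g, k)$ forces $\ell = k$ on underlying linear maps, since the module component of the composite is just $\mathrm{id}_{N} \circ \ell = \ell$. Conversely, setting $\ell = k$ does define a morphism $(C, P) \rightarrow (A, f^{*}N)$ over $e$: it is isometric because $k$ was and the form on $f^{*}N$ is unchanged, and it is linear over $e$ because $k(c \cdot p) = g(c)k(p) = f(e(c))k(p)$, which is exactly the action of $e(c)$ in $f^{*}N$. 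Existence and uniqueness together yield the universal property, so the lift is cartesian and $\pi$ is a fibration.

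The argument is essentially formal, and I expect no serious obstacle. The only points demanding genuine attention are the nondegeneracy of $f^{*}N$ and the isometry of the induced factoring map — both guaranteed by the fact that restriction of scalars alters only the action, never the underlying Hermitian form. This is exactly what keeps the construction well-defined despite the general failure of submodules to be nondegenerate, and it is worth flagging in the write-up so that the reader sees why the fibration does not run afoul of remark \ref{submoduletrouble}.
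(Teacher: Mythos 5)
Your proof is correct and takes essentially the same route as the paper: the paper also constructs the cartesian lift as the restricted module $f^{\ast}N$ (the same Hermitian space with $A$-action through $A \xrightarrow{f} B \rightarrow \underline{End}(N)$), takes the lift to be the identity on underlying sets, and notes that morphisms over $f$ factor uniquely through it. Your explicit checks of nondegeneracy, isometry, and the forced factorization $\ell = k$ merely spell out details the paper leaves implicit.
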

\begin{proof}
Let $f : A \rightarrow B$ be a morphism of $\ast$-algebras, and let $N$ be a $\ast$-module over $B$. The cartesian (sometimes called prone) lifting of $f$ can be constructed as follows.

The domain $f^{\ast}N$ is just $N$ as a Hermitian vector space, with module structure given by the composite
\begin{displaymath}
A \xrightarrow{f} B \longrightarrow \underline{End}(N),
\end{displaymath}
where the last arrow is the $\ast$-module structure of $N$.

The homomorphism $f^{\ast}N \rightarrow N$ is just the identity, as a function of sets.

Clearly, such maps are closed under composition, and the morphisms $M \rightarrow N$ over $f$ factor uniquely through the lift $f^{\ast}N \rightarrow N$ to module morphisms over $A$ (i.e.~over the identity on $A$).
\end{proof}

\subsection{Tensor Products}
\subsubsection{Tensor Products of $\ast$-Algebras}
Recall the universal property of the tensor product of rings.
\begin{theorem}[Universal Property of the Tensor Product]
Let $R, S$ be unital rings. Their tensor product $R \otimes_{\mathbb{Z}} S$ is initial among the rings $T$ with ring homomorphisms
\begin{align*}
f : R &\longrightarrow T \\
g : S &\longrightarrow T,
\end{align*}
such that the images of $f$ and $g$ commute in $T$.
\end{theorem}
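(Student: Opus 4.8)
The plan is to verify that $R \otimes_{\mathbb{Z}} S$ carries a natural ring structure together with canonical inclusions, and that this data satisfies the stated universal property. First I would equip the abelian group $R \otimes_{\mathbb{Z}} S$ with a multiplication. The naive formula $(r \otimes s)(r' \otimes s') = (rr') \otimes (ss')$ must be shown to be well-defined, and this is the only genuinely technical point. I would construct it cleanly by invoking the universal property of the tensor product of abelian groups, composing the canonical isomorphism $(R \otimes_{\mathbb{Z}} S) \otimes_{\mathbb{Z}} (R \otimes_{\mathbb{Z}} S) \cong (R \otimes_{\mathbb{Z}} R) \otimes_{\mathbb{Z}} (S \otimes_{\mathbb{Z}} S)$, built from the associativity and symmetry of $\otimes_{\mathbb{Z}}$, with $m_R \otimes m_S$, where $m_R$ and $m_S$ are the two multiplication maps. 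Associativity and the unit axiom, with unit $1_R \otimes 1_S$, then follow by checking them on simple tensors, since simple tensors generate additively.

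Next I would introduce the canonical ring homomorphisms $\iota_R : R \to R \otimes_{\mathbb{Z}} S$, $r \mapsto r \otimes 1_S$, and $\iota_S : S \to R \otimes_{\mathbb{Z}} S$, $s \mapsto 1_R \otimes s$. Checking that these are unital ring maps is immediate from the multiplication formula, and their images commute because both $(r \otimes 1_S)(1_R \otimes s)$ and $(1_R \otimes s)(r \otimes 1_S)$ equal $r \otimes s$. Thus $(R \otimes_{\mathbb{Z}} S, \iota_R, \iota_S)$ is an object of the relevant category of rings-under-$R$-and-$S$-with-commuting-images.

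For the universal property, given any unital ring $T$ with homomorphisms $f : R \to T$ and $g : S \to T$ whose images commute, I would define $h : R \otimes_{\mathbb{Z}} S \to T$ on simple tensors by $h(r \otimes s) = f(r) g(s)$. Well-definedness and additivity follow from the $\mathbb{Z}$-bilinearity of $(r, s) \mapsto f(r) g(s)$ via the universal property of $\otimes_{\mathbb{Z}}$, and unitality is clear. Multiplicativity is \emph{exactly} where the commuting-images hypothesis is used: expanding $h\bigl((r \otimes s)(r' \otimes s')\bigr) = f(r) f(r') g(s) g(s')$ and comparing with $h(r \otimes s)\, h(r' \otimes s') = f(r) g(s) f(r') g(s')$, the two agree precisely because $g(s) f(r') = f(r') g(s)$. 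Finally, uniqueness is forced, since any admissible $h$ must satisfy $h(r \otimes s) = h\bigl((r \otimes 1_S)(1_R \otimes s)\bigr) = f(r) g(s)$, determining $h$ on generators and hence everywhere by additivity.

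The only real obstacle is the well-definedness of the ring multiplication on $R \otimes_{\mathbb{Z}} S$; everything else is a routine diagram chase on simple tensors. I would emphasize that the commuting-images condition is not merely sufficient but necessary — it is forced by the relation $\iota_R(r)\, \iota_S(s) = \iota_S(s)\, \iota_R(r)$ already present in the target $R \otimes_{\mathbb{Z}} S$ — which explains why it appears in the statement.
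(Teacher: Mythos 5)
Your proposal is correct and follows essentially the same route as the paper: factor the $\mathbb{Z}$-bilinear map $(r,s) \mapsto f(r)g(s)$ through $R \otimes_{\mathbb{Z}} S$, use the commuting-images hypothesis exactly for multiplicativity, and force uniqueness from the values on simple tensors $r \otimes s = \iota_R(r)\,\iota_S(s)$. You are merely more explicit than the paper on two points it asserts without detail — the well-definedness of the ring structure on $R \otimes_{\mathbb{Z}} S$ and the necessity of the commutation condition — which is fine.
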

\begin{proof}
The tensor product certainly is such a ring, with $f$ and $g$ given by
\begin{align*}
r &\longmapsto r \otimes 1\\
s &\longmapsto 1 \otimes s.
\end{align*}
Now consider $T$ and arbitrary maps $f$ and $g$, as in the statement of the theorem. The map
\begin{align*}
R \times S & \longrightarrow T \\
(r,s) & \longmapsto f(r) g(s)
\end{align*}
is clearly bilinear, and hence factors through $R \otimes_{\mathbb{Z}} S$. Since the images of $f$ and $g$ commute, it's a homomorphism of rings. Finally, the composites
\begin{align*}
R &\rightarrow R \times S \longrightarrow T \\
r &\mapsto (r,1)  \mapsto f(r)g(1) \\ \\
S &\rightarrow R \times S \longrightarrow T \\
s &\mapsto (1,s) \mapsto f(1)g(s),
\end{align*}
are $f$ and $g$, respectively, showing that the factorization through $R \otimes_{\mathbb{Z}} S$ recovers $f$ and $g$, and that the factorization is unique.
\end{proof}
\begin{remark}
Let $R \ast S$ be the coproduct of $R$ and $S$ in the category of rings. Then there is an obvious map
\begin{displaymath}
R \ast S \longrightarrow R \otimes_{\mathbb{Z}} S,
\end{displaymath}
which is easily seen to be surjective, by the fact that the images of $R$ and $S$ generate both rings. This gives a different construction of $R \otimes_{\mathbb{Z}} S$, and shows that the identity is a symmetric monoidal functor
\begin{displaymath}
(\mathbf{Rng}, \otimes_{\mathbb{Z}}) \longrightarrow (\mathbf{Rng}, \ast).
\end{displaymath}
Taking opposite categories, we see that this relates the ``naive'' product of noncommutative spaces to their traditional ``product''.
\end{remark}
Now let $A$ and $B$ be $\ast$-algebras. Then $A\otimes B$ is an $\ast$-algebra, with $\ast$ given by
\begin{displaymath}
(a \otimes b)^{\ast} = a^{\ast} \otimes b^{\ast}.
\end{displaymath} 
This is well-defined, since $A$ and $B$ commute in $A \otimes B$. The universal property of the preceding theorem persists.
\begin{theorem}\label{tensorproductuniversalproperty}
$A \otimes B$ is initial among the $\ast$-algebras $C$ with $\ast$-homomorphisms from $A$ and $B$ whose images commute.
\end{theorem}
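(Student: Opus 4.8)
The plan is to reduce everything to the plain-algebra universal property just established, treating compatibility with $\ast$ as the only genuinely new ingredient. First I would observe that $A \otimes B$ is a legitimate object of the category in question: it is a $\ast$-algebra under the involution $(a \otimes b)^\ast = a^\ast \otimes b^\ast$ (already shown well-defined, using that $A$ and $B$ commute in $A \otimes B$), and the two structure maps $a \mapsto a \otimes 1$ and $b \mapsto 1 \otimes b$ are $\ast$-homomorphisms with commuting images.

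Now, given any $\ast$-algebra $C$ with $\ast$-homomorphisms $f : A \to C$ and $g : B \to C$ whose images commute, I would simply forget the involutions: $f$ and $g$ are then ordinary algebra homomorphisms with commuting images, so the preceding universal property — in its evident $\mathbb{C}$-linear incarnation — supplies a unique algebra homomorphism $h : A \otimes B \to C$ with $h(a \otimes 1) = f(a)$ and $h(1 \otimes b) = g(b)$, necessarily given on simple tensors by $h(a \otimes b) = f(a) g(b)$.

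The one remaining point is that this $h$ respects $\ast$, and this is where the whole content of the theorem sits. On a simple tensor I would compute $h((a \otimes b)^\ast) = h(a^\ast \otimes b^\ast) = f(a^\ast) g(b^\ast) = f(a)^\ast g(b)^\ast$, using that $f$ and $g$ preserve $\ast$; on the other side $h(a \otimes b)^\ast = (f(a) g(b))^\ast = g(b)^\ast f(a)^\ast$. These agree \emph{precisely because} the images of $f$ and $g$ commute in $C$ — this is the sole place the commuting-images hypothesis enters the $\ast$-check, and it is the heart of the matter. Since both $x \mapsto h(x^\ast)$ and $x \mapsto h(x)^\ast$ are conjugate-linear, and they agree on the simple tensors, which span $A \otimes B$, they agree everywhere; hence $h$ is a $\ast$-homomorphism.

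Uniqueness is then inherited for free: any $\ast$-homomorphism $A \otimes B \to C$ restricting to $f$ and $g$ is in particular an algebra homomorphism doing so, and there is exactly one such by the plain universal property. The argument is thus essentially formal, and I do not expect a serious obstacle; the only real work is the short computation above, together with the mild care of extending from simple tensors via conjugate-linearity rather than linearity.
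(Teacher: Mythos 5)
Your proof is correct and takes essentially the same approach as the paper, which likewise reduces the homomorphism part to the preceding universal property for plain tensor products and then asserts that the $\ast$-structure is ``obviously'' respected. Your explicit check --- $h((a\otimes b)^{\ast}) = f(a)^{\ast}g(b)^{\ast} = g(b)^{\ast}f(a)^{\ast} = h(a\otimes b)^{\ast}$ via the commuting-images hypothesis, extended from simple tensors by conjugate-linearity --- is exactly the detail the paper leaves implicit.
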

\begin{proof}
The same proof as before applies to the homomorphism part. It's obvious that the $\ast$-structure is respected.
\end{proof}
\subsubsection{Tensor Products of $\ast$-Modules}
Recall that if $M$ is an $R$-module and $N$ is an $S$-module, then $M \otimes_{\mathbb{Z}} N$ is and $R \otimes_{\mathbb{Z}} S$-module, with $r \otimes s$ acting as
\begin{displaymath}
m \otimes n \longmapsto rm \otimes sn.
\end{displaymath}
The same thing happens with $\ast$-modules, but we must be careful about nondegeneracy and adjointability.
\begin{lemma}
If $M$ is a $\ast$-module over $A$ and $N$ is a $\ast$-module over $B$, then $M \otimes N$ is naturally a $\ast$-module over $A \otimes B$.
\end{lemma}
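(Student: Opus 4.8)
The plan is to check the two pieces of data that a $\ast$-module over $A \otimes B$ demands: a nondegenerate Hermitian form on $M \otimes N$, and a $\ast$-homomorphism $A \otimes B \to \underline{End}(M \otimes N)$. The form is the tensor product form already introduced for bilinear modules, $\langle m \otimes n, m' \otimes n' \rangle = \langle m, m' \rangle_{M} \langle n, n' \rangle_{N}$. That it is Hermitian is immediate from conjugate-symmetry of the two factor forms, since a product of conjugates is the conjugate of the product. Nondegeneracy is exactly Lemma \ref{flatnesslemma}: with $M^{\perp} = 0$ and $N^{\perp} = 0$ the radical of $M \otimes_{\mathbb{C}} N$ vanishes. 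Establishing nondegeneracy first is essential, since only then is $\underline{End}(M \otimes N)$ a bona fide $\ast$-algebra with unique adjoints; conceptually this is precisely the property flagged in remark \ref{derivedremark}, that $M \otimes_{\mathbb{C}}(-)$ must preserve nondegeneracy, which over a field is guaranteed.

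For the module structure I would invoke the universal property of Theorem \ref{tensorproductuniversalproperty}: to produce a $\ast$-homomorphism out of $A \otimes B$ it suffices to give $\ast$-homomorphisms from $A$ and from $B$ into $\underline{End}(M \otimes N)$ whose images commute. Writing $\rho_{M} : A \to \underline{End}(M)$ and $\rho_{N} : B \to \underline{End}(N)$ for the two module structures, I set $a \mapsto \rho_{M}(a) \otimes \mathrm{id}_{N}$ and $b \mapsto \mathrm{id}_{M} \otimes \rho_{N}(b)$. The crucial point is that these land in $\underline{End}(M \otimes N)$ rather than in arbitrary endomorphisms: by Lemma \ref{adjointlemma} the operator $\rho_{M}(a) \otimes \mathrm{id}_{N}$ is adjointable, with adjoint $\rho_{M}(a)^{\ast} \otimes \mathrm{id}_{N} = \rho_{M}(a^{\ast}) \otimes \mathrm{id}_{N}$, the last equality because $\rho_{M}$ is itself a $\ast$-homomorphism. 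The same computation shows $\ast$ is preserved; functoriality of $\otimes$ shows each map is a unital algebra homomorphism; and $(\rho_{M}(a) \otimes \mathrm{id})(\mathrm{id} \otimes \rho_{N}(b)) = \rho_{M}(a) \otimes \rho_{N}(b) = (\mathrm{id} \otimes \rho_{N}(b))(\rho_{M}(a) \otimes \mathrm{id})$ shows the images commute. The induced map sends $a \otimes b$ to the operator $m \otimes n \mapsto am \otimes bn$, as expected.

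For the \emph{naturally} clause I would observe that the assignment is visibly functorial in $M$ and $N$ along module homomorphisms over $A$ and $B$, which is what will later be needed to promote $\otimes$ to the monoidal structure on $\ast\mathbf{Mod}$. I do not expect any single step to be genuinely hard; the real work, and the only place an error could hide, is the degeneracy and adjointability bookkeeping. One must be certain that forming the tensor product destroys neither nondegeneracy (secured by Lemma \ref{flatnesslemma}, whose hypotheses truly use that $\mathbb{C}$ is a field) nor adjointability (secured by Lemma \ref{adjointlemma}), and that the Hermitian-versus-bilinear distinction introduces no gap. It does not: both cited lemmas are insensitive to the sesquilinearity once one runs the identical argument with the antidual of conjugate-linear functionals in place of the ordinary dual, so that the injectivity of $\bar{M}^{\ast} \otimes \bar{N}^{\ast} \to \overline{(M \otimes N)^{\ast}}$ still drives the nondegeneracy computation.
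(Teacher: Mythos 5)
Your proof is correct and takes essentially the same route as the paper's: the paper likewise accepts the standard $A \otimes B$-module structure (which you rebuild explicitly via the universal property of Theorem \ref{tensorproductuniversalproperty}), gets nondegeneracy of the product form from Lemma \ref{flatnesslemma}, and invokes Lemma \ref{adjointlemma} to see that the structure map lands in the adjointable operators $\underline{End}(M \otimes N)$. Your closing observation that both cited lemmas survive the passage from bilinear to sesquilinear forms makes explicit a point the paper applies silently.
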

\begin{proof}
The module structure on $M \otimes N$ is not in question. The bilinear form on $M\otimes N$ is nondegenerate by lemma \ref{flatnesslemma}. To see that the $\ast$-structure is preserved, note that by lemma \ref{adjointlemma} the structure map
\begin{displaymath}
A \otimes B \longrightarrow \underline{End}(M) \otimes \underline{End}(N) \longrightarrow End(M \otimes N)
\end{displaymath}
actually lands in the adjointable maps $\underline{End}(M \otimes N) \subseteq End(M \otimes N)$.
\end{proof}
This allows us to prove the following important theorem.
\begin{theorem}
The fibration of $\ast$-modules $\pi : \ast\mathbf{Mod} \rightarrow \ast \mathbf{Alg}$ is a strong symmetric monoidal functor.
\end{theorem}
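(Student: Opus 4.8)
The plan is to first endow $\ast\mathbf{Mod}$ with a symmetric monoidal structure, and then observe that $\pi$ transports it strictly onto the already-monoidal $\ast\mathbf{Alg}$ (whose symmetric monoidal structure descends from that of $\mathbb{C}$-vector spaces, with $\ast$ preserved by $(a\otimes b)^{\ast}=a^{\ast}\otimes b^{\ast}$ as in theorem \ref{tensorproductuniversalproperty}). On objects the tensor is the assignment $(A,M)\otimes(B,N)=(A\otimes B, M\otimes N)$ supplied by the preceding lemma, with unit $(\mathbb{C},\mathbb{C})$, the one-dimensional Hermitian space carrying $\langle z,w\rangle=z\bar{w}$. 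The first task is to promote this to a bifunctor: given morphisms $(f,h)$ and $(g,k)$, I would set their tensor to be $(f\otimes g, h\otimes k)$ and check that $h\otimes k$ is again a morphism of $\ast$-modules over $f\otimes g$. Isometry is immediate from the product formula for the form on a tensor product, $\langle h(m)\otimes k(n), h(m')\otimes k(n')\rangle=\langle h(m),h(m')\rangle\langle k(n),k(n')\rangle=\langle m,m'\rangle\langle n,n'\rangle$, using that $h,k$ are isometries; the module-map identity is the equally short computation $h\otimes k\bigl((a\otimes b)(m\otimes n)\bigr)=f(a)h(m)\otimes g(b)k(n)$. Preservation of composition and identities is inherited from the underlying tensor of linear maps.

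Next I would produce the coherence data — associator, left and right unitors, and braiding — by taking the standard $\mathbb{C}$-linear structure isomorphisms of the underlying vector spaces and checking that each is a legitimate morphism of $\ast\mathbf{Mod}$, i.e.\ isometric and a module map over the corresponding structure isomorphism of $\ast\mathbf{Alg}$. For the associator this amounts to reassociating the scalar product $\langle m,m'\rangle\langle n,n'\rangle\langle p,p'\rangle$; for the braiding $m\otimes n\mapsto n\otimes m$ isometry follows from commutativity of multiplication in $\mathbb{C}$; the unitors use $\langle 1,1\rangle_{\mathbb{C}}=1$. In each case the map lies over the analogous algebra isomorphism, and nondegeneracy of source and target — the one place where something could genuinely fail — is guaranteed by lemma \ref{flatnesslemma}, so all these objects really do live in $\ast\mathbf{Mod}$.

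The coherence axioms (pentagon, triangle, and the two hexagons) are then automatic: each is an identity between composites of $\mathbb{C}$-linear maps that already holds in the symmetric monoidal category of $\mathbb{C}$-vector spaces, and since the underlying-space functor is faithful, the identities lift verbatim to $\ast\mathbf{Mod}$. This establishes that $\ast\mathbf{Mod}$ is symmetric monoidal. Finally, for $\pi$ itself one has $\pi\bigl((A,M)\otimes(B,N)\bigr)=A\otimes B=\pi(A,M)\otimes\pi(B,N)$ and $\pi(\mathbb{C},\mathbb{C})=\mathbb{C}$, and by construction $\pi$ carries the associator, unitors and braiding of $\ast\mathbf{Mod}$ exactly to those of $\ast\mathbf{Alg}$; hence the comparison morphisms are identities and $\pi$ is strict, in particular strong, symmetric monoidal.

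I expect no serious obstacle here. The only genuine mathematical input is the nondegeneracy and adjointability of tensored objects, already secured by lemmas \ref{flatnesslemma} and \ref{adjointlemma} together with the preceding lemma; everything else is the routine — if lengthy — verification that the vector-space coherence data survives the added $\ast$-module structure. If anything is delicate, it is purely the bookkeeping of confirming that each inherited structure isomorphism simultaneously respects the Hermitian form and intertwines the two algebra actions, which I would organize by treating all of it as diagrammatic identities over the corresponding isomorphisms in $\ast\mathbf{Alg}$.
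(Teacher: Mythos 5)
Your proposal is correct and matches the paper's own argument: the paper likewise observes that once the preceding lemma puts a $\ast$-module structure on $M \otimes N$ over $A \otimes B$, the symmetric monoidal structure on $\ast\mathbf{Mod}$ is simply inherited from the underlying modules, with the associator, unitors, and braiding ``clearly isometric,'' whence $\pi$ is strict (hence strong) symmetric monoidal. Your write-up just spells out the bookkeeping -- bifunctoriality of the morphism tensor, nondegeneracy via lemma \ref{flatnesslemma}, coherence lifted along the faithful underlying-space functor -- that the paper compresses into one line.
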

\begin{proof}
Properly speaking, this is obvious once we know the domain is symmetric monoidal. But this is obvious: the usual structure on modules extends to $\ast$-modules, since the structure maps
\begin{align*}
M \otimes(N \otimes O) & \xrightarrow{\alpha} (M \otimes N) \otimes O \\
I \otimes M & \xrightarrow{\lambda} M \\
M \otimes I & \xrightarrow{\rho} M \\
M \otimes N & \xrightarrow{\sigma} N \otimes N
\end{align*}
are clearly isometric.
\end{proof}

\subsection{Cyclic Modules}\label{cyclicmodulessection}
Let $R$ be a ring.
\begin{definition}
A cyclic $R$-module is an $R$-module $M$ together with an element $m \in M$ such that $Rm = M$. The distinguished element $m$ is called the cyclic vector.
\end{definition}
We will introduce cyclic modules as pairs $(M, m)$. If no confusion can arise, the cyclic vector will subsequently be omitted.
\begin{definition}
Let $(M, m)$ and $(N, n)$ be cyclic modules. A cyclic morphism $M \rightarrow N$ is a module morphism $M \rightarrow N$ which maps $m$ to $n$.
\end{definition}
The resulting category of cyclic modules over $R$ will be denoted by $Cyc(R)$.

Let $Ideals(R)$ be the partial order of ideals (submodules) of $R$, considered as a category. The following theorem follows immediately from the lattice isomorphism theorem for modules.
\begin{theorem}\label{cyclicmodules}
The functors $Ideals(R) \rightleftarrows Cyc(R)$ given by
\begin{align*}
I \subset R &\mapsto (R/I, [1]) \\
(M, m) &\mapsto Ann_{R}(m)
\end{align*}
constitute an equivalence of categories.
\end{theorem}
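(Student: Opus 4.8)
The plan is to show that the two assignments are mutually quasi-inverse functors, the whole argument resting on the observation that both categories are \emph{thin}. First I would check functoriality. The assignment $I \mapsto (R/I, [1])$ sends an inclusion $I \subseteq J$ to the canonical quotient $R/I \twoheadrightarrow R/J$, which carries $[1]$ to $[1]$ and so is a cyclic morphism, manifestly compatible with composition. Conversely, if $h \colon (M,m) \to (N,n)$ is a cyclic morphism and $rm = 0$, then $rn = r\,h(m) = h(rm) = 0$, whence $Ann_{R}(m) \subseteq Ann_{R}(n)$, giving a morphism $Ann_{R}(m) \to Ann_{R}(n)$ in $Ideals(R)$; functoriality is again immediate. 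The single structural point that makes everything collapse is thinness: $Ideals(R)$ is a poset by definition, and in $Cyc(R)$ a cyclic morphism $(M,m) \to (N,n)$ is forced by $R$-linearity and $M = Rm$ to send $rm$ to $rn$, so there is at most one morphism between any two objects.

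Next I would compute the two composites. For $G \circ F$ one has $Ann_{R}([1]) = \{\, r \in R : [r] = 0 \text{ in } R/I \,\} = I$, so $G \circ F$ is the identity on $Ideals(R)$ on the nose (and, both categories being thin, on morphisms as well). For $F \circ G$ I would invoke the first isomorphism theorem applied to the $R$-linear evaluation map $R \to M$, $r \mapsto rm$: its kernel is exactly $Ann_{R}(m)$ and its image is $Rm = M$ by cyclicity, so it descends to an isomorphism $R/Ann_{R}(m) \xrightarrow{\sim} M$ sending $[1] \mapsto m$. This is precisely an isomorphism of cyclic modules $F(G(M,m)) \cong (M,m)$.

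To conclude an equivalence I need these object-level isomorphisms to assemble into a natural transformation, but here the thinness of $Cyc(R)$ does the work: every square in a thin category commutes, so the family $R/Ann_{R}(m) \cong M$ is automatically natural in $(M,m)$. Together with $G \circ F = \mathrm{id}$ this exhibits $F$ and $G$ as an equivalence. Equivalently, one may note that $F$ and $G$ induce mutually inverse order-isomorphisms between the poset $Ideals(R)$ and the poset of isomorphism classes of $Cyc(R)$, the matching of orders being exactly the content of the two displayed annihilator computations.

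I do not expect a genuine obstacle; the theorem is, as advertised, a repackaging of the first isomorphism theorem (the ``lattice isomorphism theorem'' of the statement). The only point deserving care is the bookkeeping that a morphism $(M,m) \to (N,n)$ exists if and only if $Ann_{R}(m) \subseteq Ann_{R}(n)$, which is what makes $F$ and $G$ full and faithful; once thinness is noted, fullness, faithfulness, and naturality are all automatic, and no residual calculation remains.
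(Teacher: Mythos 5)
Your proof is correct and takes essentially the same route the paper intends: the paper disposes of this theorem in one line by citing the isomorphism-theorem package for modules, and your argument is exactly the standard expansion of that citation --- $Ann_{R}([1]) = I$, the first isomorphism theorem giving $R/Ann_{R}(m) \xrightarrow{\sim} M$ via $r \mapsto rm$, with thinness of both categories making uniqueness, functoriality, and naturality automatic. The thinness observation you place at the center is precisely what the paper records separately as corollary \ref{coherencedestroyer} and then exploits throughout, so nothing is missing.
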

Here $Ann_{R}$ stands for the annihilator ideal over the ring $R$, and $[1] \in R/I$ is the class of the unit.
\begin{corollary}\label{coherencedestroyer}
Let $f : R \rightarrow S$ be a homomorphism of rings, and let $(M, m) \in Cyc(R)$ and $(N, n)$ be an $S$-module with chosen element $n \in N$. Then there is at most one homomorphism $M \rightarrow N$ over $f$ which maps $m$ to $n$.
\end{corollary}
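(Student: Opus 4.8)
The plan is to exploit cyclicity directly. Since $(M,m)$ is cyclic we have $M = Rm$, so the value of any homomorphism over $f$ is pinned down on all of $M$ by its value on the single generator $m$; once we demand $h(m) = n$, there is no remaining freedom. This is a pure uniqueness statement, so I do not need to construct anything or verify well-definedness — I only need to show that two candidate maps must coincide.

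Concretely, suppose $h_{1}, h_{2} : M \rightarrow N$ are both homomorphisms over $f$ with $h_{1}(m) = h_{2}(m) = n$. First I would take an arbitrary $x \in M$ and, using $M = Rm$, write $x = rm$ for some $r \in R$. The defining compatibility of a map over $f$ (namely $h(r x') = f(r) h(x')$) then forces $h_{i}(x) = h_{i}(rm) = f(r) h_{i}(m) = f(r) n$ for $i \in \{1,2\}$. Since the right-hand side does not depend on $i$, we obtain $h_{1}(x) = h_{2}(x)$; as $x$ was arbitrary, $h_{1} = h_{2}$.

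The one point to keep straight — and it is precisely what makes the argument painless rather than an obstacle — is that the element $r$ with $x = rm$ is generally far from unique, yet this causes no trouble here. Because we are asserting only uniqueness of $h$ and not its existence, there is never any need to check that $x \mapsto f(r) n$ is well defined as a function of $x$: for one fixed representation $x = rm$ both $h_{1}$ and $h_{2}$ are evaluated against the same $r$ and hence agree, and that already forces equality. Correspondingly, no hypothesis on $(N,n)$ is required — $N$ is an arbitrary $S$-module and $n$ an arbitrary element — since the determination propagates only forward from the generator. Alternatively, one could deduce the claim from the equivalence $Cyc(R) \simeq Ideals(R)$ of Theorem \ref{cyclicmodules} by pulling $(N,n)$ back along $f$ to an $R$-module equipped with a chosen element and invoking that a cyclic-module morphism is determined by the image of its cyclic vector; but the direct computation above is shorter and more transparent.
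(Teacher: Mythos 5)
Your proof is correct, and its primary route is genuinely more elementary than the paper's. You argue pointwise: writing an arbitrary $x \in M$ as $x = rm$ and using the compatibility $h_{i}(rm) = f(r)h_{i}(m) = f(r)n$, any two candidates agree; your remark that the non-uniqueness of $r$ is harmless -- because a pure uniqueness claim never requires checking that $x \mapsto f(r)n$ is well defined -- is precisely the right subtlety to flag, and your observation that no hypothesis on $(N,n)$ is needed matches the statement. The paper instead routes everything through its fibration machinery: homomorphisms $M \rightarrow N$ over $f$ correspond to $R$-module maps $M \rightarrow f^{\ast}N$ (the cartesian lift $f^{\ast}N \rightarrow N$ being the identity on underlying sets), any such map sending $m$ to $n$ factors through the cyclic submodule $Rn \subseteq f^{\ast}N$, and the equivalence $Cyc(R) \simeq Ideals(R)$ of theorem \ref{cyclicmodules} -- whose codomain is a poset, hence has at most one morphism between any two objects -- yields uniqueness. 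You sketch this second route yourself as an alternative, so you have both arguments in hand. What the paper's phrasing buys is portability: corollary \ref{coherencedestroyer} is invoked repeatedly to kill coherence conditions for the $GNS$ functor and is meant to survive internalization into a topos (cf.~section \ref{sins}), and an argument written in terms of cartesian lifts and the categorical equivalence transports to such settings verbatim, whereas your element chase, while perfectly valid over $\mathbf{Set}$ and adaptable via generalized elements, is the less structural formulation. What your version buys is brevity, transparency, and independence from theorem \ref{cyclicmodules}.
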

\begin{proof}
The maps $M \rightarrow N$ over $f$ correspond to $R$-module maps $M \rightarrow f^{\ast}N$. The element $n \in f^{\ast}N$ is part of a cyclic submodule $Rn$. Since the canonical map over $f$, $f^{\ast}N \rightarrow N$ is (as a function of sets) the identity, the claim follows from theorem \ref{cyclicmodules}.
\end{proof}
%\begin{remark}
%If $m$ is not $\mathbb{Z}$-torsion, then $Ann_{R}(m) = Ann_{R}(m) \subset R$. This is true in particular if $R$ is a $\mathbb{Q}$-algebra, and so holds for all $\ast$-algebras.
%\end{remark}
The (external) tensor product of modules restricts to the category of cyclic modules.
\begin{proposition}\label{cyclictensorproducts}
Let $(M, m)$ be a cyclic $R$-module, and $(N, n)$ be cyclic $S$-module. Then $M \otimes N$ is cyclic over $R \otimes S$ with cyclic vector $m \otimes n$.
\end{proposition}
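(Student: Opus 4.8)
The plan is to exhibit the cyclic submodule of $M \otimes N$ generated by $m \otimes n$ and verify directly that it is everything; no appeal to exactness of tensor products is needed, since we only manipulate generators. Recall first that cyclicity of $(M,m)$ means precisely $Rm = M$ as sets --- because $R$ is unital, $Rm = \{rm : r \in R\}$ is already a submodule containing $m$, so $M = Rm$ says every element of $M$ has the form $rm$, and likewise $N = Sn$. Recall also that $R \otimes S$ acts on $M \otimes N$ by the rule under which $r \otimes s$ sends $m' \otimes n'$ to $rm' \otimes sn'$ (the convention that tensor products here are over $\mathbb{Z}$ is in force throughout).

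First I would consider the $R \otimes S$-linear ``orbit map''
\begin{displaymath}
R \otimes S \longrightarrow M \otimes N, \qquad x \longmapsto x \cdot (m \otimes n),
\end{displaymath}
which is simply the action on the regular module restricted to the generator $m \otimes n$, hence a homomorphism of $R \otimes S$-modules. Its image is by definition the cyclic submodule $(R \otimes S)(m \otimes n)$, so it suffices to prove this map is surjective. On an elementary tensor $r \otimes s$ it evaluates to $(r \otimes s)(m \otimes n) = rm \otimes sn$. As $r$ ranges over $R$ the element $rm$ ranges over all of $M$, and as $s$ ranges over $S$ the element $sn$ ranges over all of $N$, by cyclicity. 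Hence the image contains every elementary tensor $m' \otimes n'$ of $M \otimes N$.

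To finish, I would invoke that $M \otimes N$ is generated as an abelian group by its elementary tensors, and that the image of the orbit map is a subgroup; containing all generators, it must be all of $M \otimes N$. Concretely, any $\sum_i m'_i \otimes n'_i$ with $m'_i = r_i m$ and $n'_i = s_i n$ equals $\bigl(\sum_i r_i \otimes s_i\bigr)(m \otimes n)$, displaying it as an element of the cyclic submodule. This establishes $(R \otimes S)(m \otimes n) = M \otimes N$, as claimed.

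There is no genuine obstacle here; the only point deserving care is the one just used --- that a finite sum of individually scaled elementary tensors is realized by the action of the single element $\sum_i r_i \otimes s_i \in R \otimes S$, which is exactly the statement that elementary tensors generate $R \otimes S$. It is worth contrasting this with the $\ast$-module setting of Remark \ref{submoduletrouble}: here $M$ and $N$ are ordinary modules with no bilinear form, so the submodule generated by an element always exists and coincides with the image of the orbit map, and neither nondegeneracy nor positivity plays any role. Equally, no flatness hypothesis of the kind flagged in Lemma \ref{flatnesslemma} and Remark \ref{derivedremark} is invoked, since surjectivity onto generators never sees the kernel of the orbit map.
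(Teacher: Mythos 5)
Your proof is correct and is essentially the paper's own argument: the cyclic submodule $(R \otimes S)(m \otimes n)$ contains all simple tensors $rm \otimes sn$ by cyclicity of $(M,m)$ and $(N,n)$, and since simple tensors generate $M \otimes N$, it is everything. Your version merely spells out the details (the orbit map, the identity $\sum_i r_i m \otimes s_i n = \bigl(\sum_i r_i \otimes s_i\bigr)(m \otimes n)$) that the paper's two-line proof leaves implicit, so there is nothing to change.
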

\begin{proof}
$R \otimes S (m \otimes n) \subseteq M \otimes N$ is a submodule containing all the simple tensors. Hence it is equal to $M \otimes N$.
\end{proof}
Here is a plentiful source of cyclic modules.
\begin{proposition}\label{hilbertproposition}
Let $V$ be a pre-Hilbert space. Then $V$ is a cyclic module for $\underline{End}(V)$, and any nonzero vector is a cyclic vector.
\end{proposition}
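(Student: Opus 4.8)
The plan is to show directly that every nonzero $v \in V$ generates $V$ under $\underline{End}(V)$, by producing, for each target $w \in V$, an adjointable endomorphism carrying $v$ to $w$. The natural candidate is the rank-one ``outer product'' operator
\[
T_w(x) = \frac{\langle x, v\rangle}{\langle v, v\rangle}\, w,
\]
which is well defined precisely because $V$ is pre-Hilbert: positive-definiteness forces $\langle v,v\rangle > 0$ for $v \neq 0$, so the denominator is a genuine nonzero scalar. One then checks immediately that $T_w(v) = w$, so if $T_w$ really lands in $\underline{End}(V)$ we will have hit an arbitrary target starting from $v$.

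The actual content of the statement is adjointability — membership in $\underline{End}(V)$ rather than in the full endomorphism algebra — so the key step I would carry out is to exhibit an explicit adjoint. I claim it is
\[
T_w^{\ast}(y) = \frac{\langle y, w\rangle}{\langle v, v\rangle}\, v,
\]
and the verification of $\langle T_w x, y\rangle = \langle x, T_w^{\ast} y\rangle$ is a one-line computation using Hermitian symmetry and the reality of $\langle v,v\rangle$. With adjointability in hand, $T_w \in \underline{End}(V)$, and since $w$ was arbitrary we get $\underline{End}(V)\, v = V$; as $v$ itself was an arbitrary nonzero vector, every nonzero vector is cyclic, which is exactly the claim.

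The only subtlety — and the place where the pre-Hilbert hypothesis is genuinely used — is precisely this adjointability step, together with the nonvanishing of $\langle v,v\rangle$. Remark \ref{submoduletrouble} already flags that in a merely nondegenerate Hermitian space one cannot be cavalier: $\langle v,v\rangle$ could vanish on a nonzero vector, and rank-one operators need not admit adjoints. Positive-definiteness removes both obstructions at once, so I expect no further difficulty. In effect the proof is just the well-known fact that the finite-rank adjointable operators already act transitively on the nonzero vectors of a pre-Hilbert space.
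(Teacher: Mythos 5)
Your proof is correct, and it takes a genuinely different route from the paper's. The paper avoids writing down any operator explicitly: it sets $W = Span(v,w)$, uses positive-definiteness to split $V = W \oplus W^{\perp}$ (the orthogonal complement exists because $W$ is a finite-dimensional Hilbert space), invokes cyclicity of a finite-dimensional Hilbert space over its endomorphism algebra to get $f \in \underline{End}(W)$ with $f(v) = w$, and takes $f \oplus 1_{W^{\perp}}$, with adjoint $f^{\ast} \oplus 1_{W^{\perp}}$. You instead exhibit the rank-one operator $T_{w}(x) = \langle x, v\rangle w / \langle v,v\rangle$ together with its explicit adjoint $T_{w}^{\ast}(y) = \langle y, w\rangle v / \langle v,v\rangle$, and the verification is indeed a one-line use of Hermitian symmetry and the reality of $\langle v,v\rangle$. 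Your argument is more elementary and self-contained, and it isolates the precise hypothesis used: you only need the single vector $v$ to be non-isotropic, $\langle v,v\rangle \neq 0$, so your proof actually shows that in any nondegenerate Hermitian $\ast$-module over $\underline{End}(V)$, every non-isotropic vector is cyclic — a sharper statement than the paper records, and one consonant with the caution of remark \ref{submoduletrouble} about isotropic vectors in the indefinite case. The paper's softer decomposition argument buys a reduction to the finite-dimensional case (where adjointability is automatic) without any formula, and produces a map that is the identity off a two-dimensional subspace, but for the proposition as stated your rank-one construction does everything required.
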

\begin{proof}
Let $v,w \in V$ be nonzero. We will show an adjointable map $V \rightarrow V$ mapping $v$ to $w$. Let $W = Span(v,w) \subseteq V$ be the subspace spanned by $v$ and $w$, and let $W^{\perp}$ be its orthogonal complement (which exists, since $W$ is finite dimensional). Then $W$ is a finite dimensional Hilbert space, and hence cyclic for $\underline{End}(W)$. Let $f \in \underline{End}(W)$ map $v$ to $w$. The map we are looking for is $f \oplus 1_{W^{\perp}}$. Its adjoint is $f^{\ast} \oplus 1_{W^{\perp}}$.
\end{proof}

\section{Construction of the GNS Representation Functor}\label{basicconstruction}
\subsection{Representable States}
Let $A$ be a $\ast$-algebra.

\begin{definition}
A linear map $\varphi : A \rightarrow \mathbb{C}$ is called a \emph{representable state} if there exists a $\ast$-module $M$ over $A$, with an element $m \in M$ such that
\begin{displaymath}
\varphi(a) = \langle am, m \rangle,
\end{displaymath}
for all $a \in A$.
\end{definition}
We will say that $M$ (or $m$) represents $\varphi$, or that $\varphi$ is a representable state on $A$. We require neither $\varphi$ nor $m$ to be normalized.

The annihilator of any cyclic vector representing $\varphi$ is determined by $\varphi$ itself. The specific formula for $Ann_{A}(m)$ given below is not important. What matters is that it is given in terms of $\varphi$ and not $M$.
\begin{proposition}\label{annihilatordetermined}
Let $(M, m)$ be a cyclic module representing $\varphi$. Then
\begin{displaymath}
Ann_{A}(m)  = \ker \beta,
\end{displaymath}
where $\beta : A \rightarrow Hom_{\bar{\mathbb{C}}}(A, \mathbb{C})$ is given by $a \mapsto \varphi((-)^{\ast}a) : A \rightarrow \mathbb{C}$, and $Hom_{\bar{\mathbb{C}}}$ is the functor of conjugate-linear maps.
\end{proposition}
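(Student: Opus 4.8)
The plan is to unwind both sides of the claimed equality and show they describe the same set of elements $a \in A$, using the defining properties of a $\ast$-module — namely nondegeneracy and the adjointability encoded by the structure map $A \to \underline{End}(M)$ being a $\ast$-homomorphism — together with the cyclicity of $m$.

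First I would rewrite $\ker\beta$. By definition $\beta(a)$ is the conjugate-linear functional $b \mapsto \varphi(b^{\ast}a)$, so $a \in \ker\beta$ iff $\varphi(b^{\ast}a) = 0$ for every $b \in A$. Using representability, $\varphi(b^{\ast}a) = \langle (b^{\ast}a)m, m\rangle = \langle b^{\ast}(am), m\rangle$. Since $M$ is a $\ast$-module, the action of $b^{\ast}$ is the adjoint of the action of $b$, and hence $\langle b^{\ast}(am), m\rangle = \langle am, bm\rangle$. Thus $a \in \ker\beta$ iff $\langle am, bm\rangle = 0$ for all $b \in A$.

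Next I would invoke cyclicity: because $Am = M$, the set $\{bm : b \in A\}$ exhausts $M$, so the condition $\langle am, bm\rangle = 0$ for all $b$ is precisely the condition $am \in M^{\perp}$. Finally, since a $\ast$-module is by definition nondegenerate, $M^{\perp} = 0$, so $am \in M^{\perp}$ is equivalent to $am = 0$, i.e.\ $a \in Ann_{A}(m)$. Chaining these equivalences yields $\ker\beta = Ann_{A}(m)$, which also confirms the proposition's point that the annihilator depends only on $\varphi$ and not on the chosen representing module.

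There is no serious obstacle here; the argument is a direct computation. The only points requiring care are bookkeeping ones: applying the Hermitian and conjugate-linear conventions correctly so that the $\ast$-homomorphism property delivers $\langle b^{\ast}(am), m\rangle = \langle am, bm\rangle$ (where one uses the involution identity $(b^{\ast})^{\ast} = b$), and being careful to use \emph{both} cyclicity — to pass from ``for all $b$'' to ``for all of $M$'' — and nondegeneracy — to pass from ``$am$ is degenerate'' to ``$am = 0$''. Dropping either hypothesis would break the chain of equivalences.
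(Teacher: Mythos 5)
Your proof is correct, but it takes a genuinely different route from the paper's. You give the direct element-wise computation: unwinding $\ker\beta$ via representability and the adjointness $\langle b^{\ast}(am), m\rangle = \langle am, bm\rangle$, then using cyclicity to identify ``$\langle am, bm\rangle = 0$ for all $b$'' with $am \in M^{\perp}$, and nondegeneracy to conclude $am = 0$. Each step checks out, and your closing remark correctly flags that both cyclicity and nondegeneracy are load-bearing. The paper, by contrast, opens its proof by conceding this argument is ``obvious'' and instead runs a diagram chase: it forms the exact sequence $0 \rightarrow Ann_{A}(m) \rightarrow A \rightarrow M \rightarrow 0$, applies $Hom_{\bar{\mathbb{C}}}(-,\mathbb{C})$, assembles a commuting ladder whose rows are exact (exactness of the row through $\alpha$ is where nondegeneracy enters, playing the role your last step plays), and invokes the four lemma to show the comparison map $\gamma : Ann_{A}(m) \rightarrow \ker\beta$ is an isomorphism. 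What the paper's heavier machinery buys is stated explicitly there: the homological formulation ``easily adapts to any topos,'' which matters for the internalization program of section \ref{sins}. Your argument quantifies over elements of $A$ and $M$, which is the natural proof in $\mathbf{Set}$ but would need to be recast in terms of generalized elements or the internal logic to serve that purpose; for the proposition as stated over $\mathbb{C}$, your version is the cleaner one.
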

\begin{proof}
This is obvious, but the following diagram chase easily adapts to any topos.

The module $M$ is cyclic, so we have an exact sequence
\begin{displaymath}
0 \longrightarrow Ann_{A}(m) \longrightarrow A \longrightarrow M \longrightarrow 0,
\end{displaymath}
with the projection $p : A \rightarrow M$ mapping $1$ to $m$. Apply $Hom_{\bar{\mathbb{C}}}(-, \mathbb{C})$ to that sequence, and construct the following diagram,
\begin{center}
\begin{tikzpicture}
\matrix (m) [matrix of math nodes, nodes in empty cells, row sep = 1.5 cm, column sep = 1.5 cm, text height = 1.5ex, text depth = .25ex]{
& & 0 & Hom_{\bar{\mathbb{C}}}(Ann_{A}(m), \mathbb{C}) \\
0 & \ker \beta & A & Hom_{\bar{\mathbb{C}}}(A, \mathbb{C}) \\
0 & Ann_{A}(m) & A & Hom_{\bar{\mathbb{C}}}(M, \mathbb{C}) \\
& 0 & 0 & 0 \\
};

\path[->] (m-2-1) edge (m-2-2)
		  (m-3-1) edge (m-3-2)
		  (m-4-2) edge (m-3-2)
		  (m-4-3) edge (m-3-3)
		  (m-4-4) edge (m-3-4)
		  (m-2-2) edge (m-2-3)
		  (m-2-3) edge node[auto] {$\beta$} (m-2-4)
		  (m-3-3) edge node[auto] {$\alpha$} (m-3-4)
		  (m-3-2) edge (m-3-3)
		  (m-3-4) edge node[auto] {$p^{\ast}$} (m-2-4)
		  (m-2-4) edge (m-1-4)
		  (m-3-1) edge (m-2-1)
		  (m-3-3) edge node[auto] {$id$} (m-2-3)
		  (m-3-2) edge node[auto] {$\gamma$} (m-2-2)
		  (m-2-3) edge (m-1-3);
\end{tikzpicture}
\end{center}
where $p^{\ast} = Hom_{\bar{\mathbb{C}}}(p, \mathbb{C})$, $\alpha$ is given by $a \mapsto \langle am, (-) \rangle_{M}$, and $\gamma$ results from the universal property of $\ker \beta$. The diagram commutes because
\begin{displaymath}
\beta(a) = \varphi((-)^{\ast}a) = \langle am , (-)m \rangle_{M} = p^{\ast}\alpha(a),
\end{displaymath}
by the definition of representability.

The rows and columns are exact. For the row including $\alpha$ this follows since the Hermitian form on $M$ is nondegenerate. For the column of $\gamma$ this follows because its composite with the inclusion $\ker \beta \rightarrow A$ is a monomorphism. The other cases are obvious.

By the four lemma of homological algebra, applied to the two middle rows, $\gamma$ is an isomorphism.
\end{proof}
\begin{remark}\label{reasonforadmissibility}
Note that the above proposition does not apply to non-cyclic modules. In fact if $(M, m)$ represents $\varphi$ then the cyclic module generated by $m$ may not be a $\ast$-module, because the Hermitian form on $Am$ inherited from $M$ may be degenerate.
\end{remark}

Representability has several useful characterizations.

\begin{theorem}\label{representabilityconditions}
Let $\varphi : A \rightarrow \mathbb{C}$ be a linear map. The following are equivalent:
\begin{enumerate}
\item There exists a cyclic $\ast$-module over $A$ which represents $\varphi$. This module is unique up to a unique cyclic isometry.
\item $\varphi$ is a representable state.
\item $\varphi$ is $\ast$-linear: $\varphi(a^{\ast}) = \overline{\varphi(a)}$.
\end{enumerate}
\end{theorem}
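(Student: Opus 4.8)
The plan is to prove the cycle of implications $(1) \Rightarrow (2) \Rightarrow (3) \Rightarrow (1)$, with the last implication carrying essentially all of the content. The first implication is immediate, since a cyclic $\ast$-module representing $\varphi$ is in particular a $\ast$-module representing $\varphi$. For $(2) \Rightarrow (3)$ I would take any $\ast$-module $M$ with representing vector $m$ and compute directly: because the structure map $A \to \underline{End}(M)$ is a $\ast$-homomorphism, the action of $a^{\ast}$ is the adjoint of the action of $a$, so $\varphi(a^{\ast}) = \langle a^{\ast}m, m\rangle = \langle m, am\rangle = \overline{\langle am, m\rangle} = \overline{\varphi(a)}$, where the middle equalities are the adjoint relation and Hermitian symmetry of the form.

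The heart of the matter is $(3) \Rightarrow (1)$, which is the positivity-free analogue of the GNS construction. Given that $\varphi$ is $\ast$-linear, I would equip $A$ itself with the sesquilinear form $\langle a, b\rangle := \varphi(b^{\ast}a)$ together with the left-regular action. Three checks make this a candidate $\ast$-module structure: the form is Hermitian precisely because $\overline{\varphi(b^{\ast}a)} = \varphi(a^{\ast}b)$ by $\ast$-linearity, so this is exactly where hypothesis $(3)$ is consumed; left multiplication by $c$ is adjointable with adjoint $c^{\ast}$, since $\langle ca, b\rangle = \varphi(b^{\ast}ca) = \langle a, c^{\ast}b\rangle$; and $\varphi(a) = \langle a\cdot 1, 1\rangle$, so the unit would serve as the representing vector. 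The obstruction is that this form on $A$ is typically degenerate, so $A$ is not yet a $\ast$-module.

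The fix is to pass to the quotient $A/A^{\perp}$ by the radical $A^{\perp} = \{a : \langle a, -\rangle = 0\}$, which by the adjointability identity above equals $\ker\beta$ in the notation of Proposition \ref{annihilatordetermined}. For the quotient to be a module I must verify that $A^{\perp}$ is a left ideal, which follows from $\langle ca, b\rangle = \langle a, c^{\ast}b\rangle$: if $a$ is degenerate then so is $ca$. The Hermitian form then descends to $A/A^{\perp}$ and is nondegenerate there by construction, the action descends with adjoints intact, and $[1]$ is a cyclic vector representing $\varphi$. For the uniqueness clause I would observe that in any cyclic $\ast$-module $(N, n)$ representing $\varphi$ the entire form is pinned down by $\varphi$ via $\langle an, bn\rangle = \varphi(b^{\ast}a)$, while Proposition \ref{annihilatordetermined} forces the annihilator of $n$ to be $\ker\beta$; Theorem \ref{cyclicmodules} together with Corollary \ref{coherencedestroyer} then supply a unique cyclic morphism matching cyclic vectors, which is automatically an isometry because both forms are the same function of $\varphi$.

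I expect the main obstacle to be conceptual rather than computational: without positivity the form is indefinite, so the familiar ``quotient by null vectors'' must be recast as a quotient by a radical that happens to be a submodule, and one must resist passing to the cyclic submodule $Am$ inside a given representing module, since that submodule may be degenerate (cf.\ remarks \ref{submoduletrouble} and \ref{reasonforadmissibility}). Constructing the module fresh as $A/A^{\perp}$ sidesteps exactly this trap, because the quotient is nondegenerate by fiat.
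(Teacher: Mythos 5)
Your proposal is correct and takes essentially the same route as the paper's own proof: the same implication cycle $1 \Rightarrow 2 \Rightarrow 3 \Rightarrow 1$, the same form $\langle a, b \rangle_{\varphi} = \varphi(b^{\ast}a)$ on $A$ with quotient by the radical $A/A^{\perp}$ (your direct verification that the radical is a left submodule is precisely the paper's appeal to lemma \ref{isotropiclemma}), and the same uniqueness argument via proposition \ref{annihilatordetermined} and theorem \ref{cyclicmodules}, with isometry forced by cyclicity since both forms are the same function of $\varphi$. Your closing remark about avoiding the possibly degenerate cyclic submodule $Am$ matches the paper's remarks \ref{submoduletrouble} and \ref{reasonforadmissibility} exactly.
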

\begin{proof}
Clearly $1 \implies 2 \implies 3$. We prove $3 \implies 1$.

Uniqueness follows from proposition \ref{annihilatordetermined} and theorem \ref{cyclicmodules}: any two representing modules are uniquely isomorphic. These isomorphisms are unitary by cyclicity. In the diagram,
\begin{center}
\begin{tikzpicture}
\node (a) at (0,0) {$A$};
\node (b) at (-1.5, -1.5) {$M$};
\node (c) at (1.5, -1.5) {$M^{\prime}$};

\path[->] (a) edge (b)
		  (a) edge (c)
		  (b) edge (c);
\end{tikzpicture}
\end{center}
displaying the canonical cyclic isomorphism between two cyclic representations, the maps from $A$ (mapping $1$ to the cyclic vectors) are epimorphisms, and induce the same Hermitian form on $A$, namely
\begin{equation}\label{representationequation}
\langle a, b \rangle_{\varphi} = \varphi(b^{\ast}a),
\end{equation}
showing the horizontal map $M \rightarrow M^{\prime}$ must be isometric.

Existence follows from a variant of the GNS construction. Reconsider the bilinear form on the $A$-module $A$ given by equation \ref{representationequation}. It is Hermitian by $3$, but may be degenerate. To obtain nondegeneracy we divide $A$ by $A^{\perp}$, the radical of the Hermitian form $\langle -, - \rangle_{\varphi}$.

Left multiplication by $a$ on $A$ is adjointable with respect to $\langle -, - \rangle_{\varphi}$, with the adjoint being left multiplication by $a^{\ast}$. By lemma \ref{isotropiclemma} $A^{\perp}$ is a submodule of $A$.

Thus $A/A^{\perp}$ is an $A$-module. By construction, the Hermitian form $\langle -, - \rangle_{\varphi}$ factors through $A/A^{\perp}$:
\begin{displaymath}
\langle -, - \rangle_{\varphi} : A/A^{\perp} \times A/A^{\perp} \longrightarrow \mathbb{C},
\end{displaymath}
and is nondegenerate on $A/A^{\perp}$. Thus it gives $A/A^{\perp}$ the structure of a $\ast$-module, which clearly represents $\varphi$ through the cyclic vector $[1]$.
\end{proof}
\begin{remark}
Note that the norm of the cyclic vector satisfies $\| m \|^{2} = \varphi(1)$, so $\varphi$ must be defined, or at least uniquely definable, on a unital algebra, if we are to have any hope for uniqueness.
\end{remark}

\begin{definition}
The unique cyclic module representing $\varphi$ is called the GNS space associated to $\varphi$, and will be denoted by $GNS(\varphi)$. The cyclic vector representing $\varphi$ in $GNS(\varphi)$ will be denoted by $\Omega$, or $\Omega_{\varphi}$, if several different states are under consideration.
\end{definition}

The behavior of representable states under tensor products is predictable.

\begin{proposition}\label{representationoftensors}
Let $\varphi$ be a representable state on $A$ and $\psi$ a representable state on $B$. Then $\varphi \otimes \psi : A \otimes B \rightarrow \mathbb{C}$ is representable, and represented by $(M \otimes N, m \otimes n)$, for any representations $(M, m)$ and $(N, n)$ of $\varphi$ and $\psi$, respectively.
\end{proposition}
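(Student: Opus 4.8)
The plan is to secure the ambient structure first and then reduce everything to a one-line computation on simple tensors. First I would invoke the lemma proved immediately above, namely that $M \otimes N$ is naturally a $\ast$-module over $A \otimes B$. This is the only genuinely structural input: it rests on lemma \ref{flatnesslemma}, which guarantees that the product Hermitian form on $M \otimes N$ remains nondegenerate (so that $M \otimes N$ qualifies as a $\ast$-module at all), and on lemma \ref{adjointlemma}, which guarantees that the structure map $A \otimes B \to \underline{End}(M) \otimes \underline{End}(N)$ factors through the adjointable endomorphisms $\underline{End}(M \otimes N)$. With this in place, $(M \otimes N, m \otimes n)$ is a legitimate candidate representation.

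Next I would verify the representing identity on simple tensors. Recalling that $a \otimes b$ acts by $m \otimes n \mapsto am \otimes bn$, and that the Hermitian form on $M \otimes N$ is the product form, one computes directly
\begin{displaymath}
\langle (a \otimes b)(m \otimes n), m \otimes n \rangle_{M \otimes N} = \langle am, m \rangle_{M} \langle bn, n \rangle_{N} = \varphi(a)\psi(b) = (\varphi \otimes \psi)(a \otimes b),
\end{displaymath}
where the penultimate equality uses that $(M,m)$ represents $\varphi$ and $(N,n)$ represents $\psi$, and the last is the definition of the functional $\varphi \otimes \psi$.

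Finally I would extend by linearity. Both sides of the representing identity are linear functionals of the argument in $A \otimes B$: the right-hand side manifestly, and the left-hand side because the action $x \mapsto x(m \otimes n)$ is linear and the form is linear in its first slot. Since the simple tensors $a \otimes b$ span $A \otimes B$, agreement on them forces agreement on all of $A \otimes B$, which establishes that $(M \otimes N, m \otimes n)$ represents $\varphi \otimes \psi$. The only step requiring any care is the first --- confirming that the tensor form stays nondegenerate so that $M \otimes N$ is genuinely a $\ast$-module --- and this is exactly the content of lemma \ref{flatnesslemma}; everything downstream is elementary linear algebra. Note that the argument makes no use of cyclicity, so it applies verbatim to the arbitrary representations $(M,m)$ and $(N,n)$ named in the statement.
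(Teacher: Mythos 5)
Your proposal is correct and takes essentially the same route as the paper, whose proof simply states that the claim is immediate from the definition of the Hermitian form on $M \otimes N$ and the definition of representability; your version merely spells out the simple-tensor computation and the extension by linearity that the paper leaves implicit, with the structural input (the preceding lemma resting on lemmas \ref{flatnesslemma} and \ref{adjointlemma}) invoked in the same way. Your closing observation that cyclicity plays no role is also consonant with the paper, which handles the cyclic case separately via proposition \ref{cyclictensorproducts} in corollary \ref{gnsmonoidalproperty}.
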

\begin{proof}
This follows immediately from the definition of the Hermitian form on $M \otimes N$, and the definition of representability.
\end{proof}

\begin{corollary}\label{gnsmonoidalproperty}
\begin{displaymath}
GNS(\varphi \otimes \psi) = GNS(\varphi) \otimes GNS(\psi)
\end{displaymath}
\end{corollary}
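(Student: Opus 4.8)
The plan is to invoke the uniqueness clause of Theorem \ref{representabilityconditions} together with the explicit tensor representation supplied by Proposition \ref{representationoftensors}. Concretely, I would set $(M,m) = GNS(\varphi)$ and $(N,n) = GNS(\psi)$, so that $M$ and $N$ are the distinguished cyclic $\ast$-modules representing $\varphi$ and $\psi$ through their cyclic vectors $\Omega_{\varphi}$ and $\Omega_{\psi}$, and then show that $GNS(\varphi) \otimes GNS(\psi)$ is itself \emph{a} cyclic $\ast$-module representing $\varphi \otimes \psi$.

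First I would check that $GNS(\varphi) \otimes GNS(\psi)$ really is a $\ast$-module over $A \otimes B$. Its module structure is the external tensor action, and the only delicate point — nondegeneracy of the Hermitian form — is precisely the content of the lemma preceding this corollary, which in turn rests on Lemma \ref{flatnesslemma} (nondegeneracy is preserved under tensor products over the field $\mathbb{C}$). Cyclicity, with cyclic vector $\Omega_{\varphi} \otimes \Omega_{\psi}$, is Proposition \ref{cyclictensorproducts}, and the fact that this module represents $\varphi \otimes \psi$ is exactly Proposition \ref{representationoftensors}.

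Having exhibited $GNS(\varphi) \otimes GNS(\psi)$ as a cyclic $\ast$-module representing $\varphi \otimes \psi$, I would appeal to the uniqueness half of Theorem \ref{representabilityconditions}(1): any two cyclic $\ast$-modules representing the same state are related by a unique cyclic isometry. Since $GNS(\varphi \otimes \psi)$ is \emph{by definition} the cyclic $\ast$-module representing $\varphi \otimes \psi$, this produces the canonical identification asserted by the corollary, matching cyclic vectors $\Omega_{\varphi \otimes \psi}$ and $\Omega_{\varphi} \otimes \Omega_{\psi}$.

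The one point demanding care — and the reason this is a corollary rather than a tautology — is the meaning of the equality sign. The two sides are not the literally same quotient: $GNS(\varphi \otimes \psi)$ is built as $(A \otimes B)/(A \otimes B)^{\perp}$, whereas $GNS(\varphi) \otimes GNS(\psi)$ is $(A/A^{\perp}) \otimes (B/B^{\perp})$. The substance of the statement is that the unique cyclic isometry identifies these, so the equality is to be read as this canonical isomorphism. I expect no genuine obstacle here; all the real work has been front-loaded into Lemma \ref{flatnesslemma} (guaranteeing the tensor is a bona fide $\ast$-module) and the uniqueness statement of Theorem \ref{representabilityconditions}.
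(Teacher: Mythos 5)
Your proof is correct and takes essentially the same route as the paper, whose proof reads ``Immediate by propositions \ref{cyclictensorproducts} and \ref{representationoftensors}'' and implicitly invokes the uniqueness clause of Theorem \ref{representabilityconditions}(1) exactly as you do. Your extra care about nondegeneracy of the tensor form (via Lemma \ref{flatnesslemma}) and about reading the equality as the canonical cyclic isometry matching $\Omega_{\varphi\otimes\psi}$ with $\Omega_{\varphi}\otimes\Omega_{\psi}$ just makes explicit what the paper leaves implicit.
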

\begin{proof}
Immediate by propositions \ref{cyclictensorproducts} and \ref{representationoftensors}.
\end{proof}

We denote the set of representable states on $A$ by $\mathcal{S}_{r}(A)$. The following theorem establishes the functorial properties of $\mathcal{S}_{r}$, and the notion of pure and mixed states in our setting. By $\mathbf{Conv}_{\mathbb{C}}$ we denote the category of convex subsets of complex vector spaces and $\mathbb{C}$-affine maps between them.

\begin{theorem}
The construction $A \mapsto \mathcal{S}_{r}(A)$ is part of a functor $\mathcal{S}_{r} : \ast \mathbf{Alg}^{op} \rightarrow \mathbf{Conv}_{\mathbb{C}}$.
\end{theorem}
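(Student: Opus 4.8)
The plan is to realize $\mathcal{S}_{r}(A)$ as a concrete subset of the complex vector space $Hom_{\mathbb{C}}(A, \mathbb{C})$ of all linear functionals, and to read off both the object and the morphism parts of the functor from theorem \ref{representabilityconditions}. That theorem identifies the representable states with the $\ast$-linear functionals, i.e.\ those $\varphi$ with $\varphi(a^{\ast}) = \overline{\varphi(a)}$, so first I would check that this condition cuts out a convex subset. In fact it cuts out a real-linear subspace: if $\varphi, \psi$ are $\ast$-linear and $s, t \in \mathbb{R}$, then $(s\varphi + t\psi)(a^{\ast}) = s\overline{\varphi(a)} + t\overline{\psi(a)} = \overline{(s\varphi + t\psi)(a)}$, precisely because $s, t$ are real. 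Hence $\mathcal{S}_{r}(A)$ is a real subspace of $Hom_{\mathbb{C}}(A,\mathbb{C})$, in particular a convex set, so it is a legitimate object of $\mathbf{Conv}_{\mathbb{C}}$. This also recovers the earlier claim that representable states are convex in all linear functionals.

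Next I would define the action on morphisms by pullback, reflecting the contravariance. For a $\ast$-homomorphism $f : A \rightarrow B$ set $\mathcal{S}_{r}(f) = f^{\ast} : \psi \mapsto \psi \circ f$. Two things need checking. First, $f^{\ast}$ sends representable states to representable states: since $f$ is a $\ast$-homomorphism, $(\psi \circ f)(a^{\ast}) = \psi(f(a)^{\ast}) = \overline{\psi(f(a))}$, so $\psi \circ f$ is again $\ast$-linear. Equivalently, one can pull the representing $\ast$-module back along the cartesian lift $f^{\ast}N$ supplied by the fibration of $\ast$-modules, keeping the same representing vector, which directly exhibits a representation of $\psi \circ f$. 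Second, $f^{\ast}$ is the restriction to $\mathcal{S}_{r}(B)$ of the precomposition map $Hom_{\mathbb{C}}(B,\mathbb{C}) \rightarrow Hom_{\mathbb{C}}(A,\mathbb{C})$, which is $\mathbb{C}$-linear, hence $\mathbb{C}$-affine; restricting a $\mathbb{C}$-affine map to convex subsets, when its image lands in the target, yields a morphism of $\mathbf{Conv}_{\mathbb{C}}$.

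Functoriality is then formal: $(g \circ f)^{\ast} = f^{\ast} \circ g^{\ast}$ and $id^{\ast} = id$ are the standard identities for precomposition, so nothing beyond bookkeeping is required. The one point demanding genuine care is the interplay between convexity and the complex structure: $\mathcal{S}_{r}(A)$ is closed only under \emph{real} affine combinations -- complex multiples of a $\ast$-linear functional are generally not $\ast$-linear -- so it is merely a real subspace of a complex vector space, not a complex one. What makes the codomain the \emph{complex}-affine category $\mathbf{Conv}_{\mathbb{C}}$, rather than its real cousin, is that the transition maps $f^{\ast}$ are nonetheless fully $\mathbb{C}$-linear. I expect this mismatch -- real objects, complex morphisms -- to be the one place where the definitions of $\mathbf{Conv}_{\mathbb{C}}$ and of a $\mathbb{C}$-affine map must be pinned down precisely for the statement to typecheck at all; everything else is a direct consequence of theorem \ref{representabilityconditions}.
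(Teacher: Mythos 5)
Your proposal is correct, and it coincides with the paper's proof everywhere except at the convexity step, where you take a genuinely different route. The paper likewise builds $\mathcal{S}_{r}$ as a subfunctor of the dual-space functor $(-)^{\ast}$, and for the morphism part gives precisely your two arguments --- pulling the representing module $(M,m)$ back along the cartesian lift to get $(f^{\ast}M, m)$, or checking directly that precomposition with a $\ast$-homomorphism preserves $\ast$-linearity --- so functoriality is inherited from $(-)^{\ast}$ just as in your write-up. For convexity, however, the paper never invokes theorem \ref{representabilityconditions}: it exhibits an explicit representation of $t\varphi + (1-t)\psi$ on the orthogonal direct sum $(M \oplus N, \sqrt{t}\,m + \sqrt{1-t}\,n)$, citing proposition \ref{nondegeneratesums} for nondegeneracy of the summed form. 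Your argument instead reduces everything to the characterization of representable states as $\ast$-linear functionals and observes that these form a \emph{real}-linear subspace of $Hom_{\mathbb{C}}(A,\mathbb{C})$; this is shorter, yields the stronger closure under arbitrary real combinations rather than just convex ones, and --- since the equivalence in theorem \ref{representabilityconditions} is itself proved via the GNS quotient construction --- loses no constructive content in the end. What the paper's version buys in exchange is the explicit module representing a mixture (the weighted-cyclic-vector-on-a-direct-sum picture familiar from mixed states), which your route bypasses. Your closing caveat is also apt: $\mathcal{S}_{r}(A)$ is only a real subspace, since a complex multiple of a $\ast$-linear functional is $\ast$-linear only for real scalars, while the transition maps $f^{\ast}$ are fully $\mathbb{C}$-linear; the paper's $\mathbf{Conv}_{\mathbb{C}}$, convex subsets of complex vector spaces with $\mathbb{C}$-affine maps, is set up exactly so that this mismatch typechecks.
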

\begin{proof}
The dual space construction $A \mapsto A^{\ast}$ is a functor of the type we are looking for, and $\mathcal{S}_{r}(A) \subseteq A^{\ast}$, so we will construct our functor as a subfunctor of $(-)^{\ast}$.

We must check if this is well-defined, that is, if $\varphi \in \mathcal{S}_{r}(A)$, and $f : B \rightarrow A$ is a $\ast$-algebra map, then $f^{\ast}\varphi \in \mathcal{S}_{r}(B)$. But this is easy: if $(M, m)$ represents $\varphi$, then $(f^{\ast}M, m)$ represents $f^{\ast} \varphi$. Alternatively, it is trivial to check that $f^{\ast} \varphi$ is $\ast$-linear if $\varphi$ is.

What remains is to see that $\mathcal{S}_{r}(A)$ is a convex subset of $A^{\ast}$. So let $\varphi, \psi \in \mathcal{S}_{r}(A)$ be represented by $(M, m)$ and $(N, n)$ respectively. The state $t \varphi + (1-t) \psi$, for $t \in [0;1]$, is represented by
\begin{displaymath}
(M \oplus N, \sqrt{t}m + \sqrt{1-t} n),
\end{displaymath}
where $M \oplus N$ is the orthogonal direct sum of $M$ and $N$ (which is nondegenerate by proposition \ref{nondegeneratesums}).
\end{proof}

The category $\mathbf{Conv}_{\mathbb{C}}$ has finite products, and is therefore symmetric monoidal. We have also seen that $\ast \mathbf{Alg}$ is monoidal under the usual tensor product. The following natural transformations give $\mathcal{S}_{r}$ the structure of a lax monoidal functor $\ast \mathbf{Alg}^{op} \rightarrow \mathbf{Conv}_{\mathbb{C}}$.
\begin{align*}
\mathcal{S}_{r}(A) \times \mathcal{S}_{r}(B) & \longrightarrow \mathcal{S}_{r}(A \otimes B) \\
(\varphi, \psi) & \longmapsto \varphi \otimes \psi \\ \\
1 & \longrightarrow \mathcal{S}_{r}(\mathbb{C}) \\
\ast & \longmapsto id : \mathbb{C} \longrightarrow \mathbb{C}.
\end{align*}
Note that this structure is inherited from the natural structure on the dual space functor $(-)^{\ast}$. The verification of the following theorem is thus routine, and is omitted. 
\begin{theorem}\label{srismonoidal}
The above definitions make $\mathcal{S}_{r}$ into a symmetric lax monoidal functor.
\end{theorem}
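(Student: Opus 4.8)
The plan is to realize $\mathcal{S}_r$ as a lax symmetric monoidal \emph{subfunctor} of the dual-space functor $(-)^{\ast} \colon \ast\mathbf{Alg}^{op} \to \mathbf{Conv}_{\mathbb{C}}$, transporting the whole structure down from the latter; this matches the remark that everything is ``inherited from the natural structure on the dual space functor''. The lax structure on $(-)^{\ast}$ is induced by the pairing $(\varphi, \psi) \mapsto \varphi \otimes \psi$, where $\varphi \otimes \psi$ is the functional $a \otimes b \mapsto \varphi(a)\psi(b)$, together with the unit $\epsilon$ selecting $\mathrm{id}_{\mathbb{C}} \in \mathbb{C}^{\ast}$. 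The proof then reduces to four checks: that these structure maps land in the subfunctor $\mathcal{S}_r$, that they are natural, that they satisfy the associativity and unit coherence diagrams, and that they respect the symmetry.

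The first check is already available: Proposition \ref{representationoftensors} shows $\varphi \otimes \psi$ is representable whenever $\varphi, \psi$ are, so the pairing restricts to a map with codomain $\mathcal{S}_r(A \otimes B)$, and $\mathrm{id}_{\mathbb{C}}$ is represented by $\mathbb{C}$ with cyclic vector $1$, so $\epsilon$ factors through $\mathcal{S}_r(\mathbb{C})$. Naturality is a one-line pointwise computation: for $\ast$-algebra maps $f \colon A_1 \to A_2$ and $g \colon B_1 \to B_2$ one evaluates on a simple tensor to find $(f \otimes g)^{\ast}(\varphi \otimes \psi) = (f^{\ast}\varphi) \otimes (g^{\ast}\psi)$, since both sides send $a \otimes b$ to $\varphi(f(a))\psi(g(b))$. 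Because $\mathcal{S}_r$ is a subfunctor whose morphisms $f^{\ast}$ are literally the restrictions of the dual maps, the naturality square for $\mathcal{S}_r$ is just the restriction of the commuting naturality square for $(-)^{\ast}$.

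For coherence and symmetry I would argue that the relevant diagrams already commute in the ambient duals $(A \otimes B \otimes C)^{\ast}$ and unit objects, and that the elements produced by the $\mathcal{S}_r$-structure maps are the corresponding restrictions, so commutativity descends automatically. Concretely, the associator and unitors of $\otimes$ on $\ast\mathbf{Alg}$ act on functionals through the strictly associative and unital rule $(\varphi \otimes \psi)(a \otimes b) = \varphi(a)\psi(b)$, which collapses the pentagon and the unit triangles to identities of functionals; and the braiding $A \otimes B \cong B \otimes A$ dualizes precisely to the swap $\varphi \otimes \psi \leftrightarrow \psi \otimes \varphi$, matching the symmetry of the target. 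Hence $\mathcal{S}_r$ is symmetric lax monoidal.

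The one point that is \emph{not} purely formal, and which I expect to be the main obstacle, is pinning down the correct symmetric monoidal structure on $\mathbf{Conv}_{\mathbb{C}}$. The assignment $(\varphi, \psi) \mapsto \varphi \otimes \psi$ is only \emph{separately} affine in each argument; it is not affine on the cartesian product $\mathcal{S}_r(A) \times \mathcal{S}_r(B)$ (already at $t = \tfrac{1}{2}$ the cross terms $\varphi_1 \otimes \psi_2$ and $\varphi_2 \otimes \psi_1$ survive), so it is \emph{not} a morphism out of the categorical product. The structure map becomes a genuine morphism only for the \emph{tensor product of convex spaces} — the symmetric monoidal structure on $\mathbf{Conv}_{\mathbb{C}}$ whose morphisms out of $X \otimes Y$ are the bi-affine maps on $X \times Y$. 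The pairing $(\varphi,\psi) \mapsto \varphi \otimes \psi$ is bi-affine, by the $\mathbb{C}$-linearity of $\otimes$ in each slot, and so does define such a morphism, and $(-)^{\ast}$ is lax monoidal for this structure. Once the monoidal structure on $\mathbf{Conv}_{\mathbb{C}}$ is fixed in this way, the remaining verifications are exactly the routine diagram chases indicated above, inherited verbatim from the dual-space functor.
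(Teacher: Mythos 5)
Your proof is correct, and up to its final paragraph it is exactly the argument the paper has in mind: the paper realizes $\mathcal{S}_{r}$ as a subfunctor of the dual space functor $(-)^{\ast}$, equips it with the pairing $(\varphi, \psi) \mapsto \varphi \otimes \psi$ and the unit selecting $id_{\mathbb{C}}$, remarks that the structure ``is inherited from the natural structure on the dual space functor'', and then omits the verification as routine. Your checks --- landing in the subfunctor via proposition \ref{representationoftensors}, naturality computed on simple tensors, coherence descending from the strictly associative pairing of functionals in the ambient duals --- are precisely that omitted verification. Where you genuinely depart from the paper is your last paragraph, and there you are right and the paper is loose: the paper explicitly takes the symmetric monoidal structure on $\mathbf{Conv}_{\mathbb{C}}$ to be the finite product, and your cross-term computation at $t = \tfrac{1}{2}$ shows the pairing is only separately affine, not affine on the cartesian product, so with the paper's stated structure the would-be lax structure map is not even a morphism of $\mathbf{Conv}_{\mathbb{C}}$. (Physically this is the difference between mixing two composites and composing two mixtures, so the failure is not an artifact.) Replacing the product with the tensor product of convex sets, characterized by representing bi-affine maps, is the correct repair; the only point you leave implicit is that this tensor product actually exists inside $\mathbf{Conv}_{\mathbb{C}}$ as the paper defines it, i.e.~among convex \emph{subsets of vector spaces}. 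It does: embed $X \subseteq V$ at height $1$ in $V \oplus \mathbb{C}$ via $x \mapsto (x,1)$, likewise $Y \subseteq W$, and take $X \otimes Y$ to be the convex hull of the simple tensors $(x,1) \otimes (y,1)$ in $(V \oplus \mathbb{C}) \otimes (W \oplus \mathbb{C})$; affineness of a bi-affine map in each slot gives bilinearity on the generated cones, whence the universal property for maps into any convex subset of a vector space. With that sentence added, your argument is complete, and it in fact establishes a sharper claim than the paper's, since theorem \ref{srismonoidal} read literally with the cartesian structure is false, while your version with the convex tensor structure is the statement the inherited maps actually satisfy.
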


\subsection{Positivity}
Let $A$ be a $\ast$-algebra.
\begin{definition}\hspace{1em}
\begin{itemize}
\item An element $b \in A$ is called positive if it is of the form $b = a^{\ast}a$ for some $a \in A$.
\item A linear map $A \rightarrow B$ of $\ast$-algebras is called positive if it maps positive elements to positive elements.
\item A linear map $A \rightarrow \mathbb{C}$ is called a positive state if it is positive and representable.
\end{itemize}
\end{definition}
Positive maps compose, and thus result in a category. Clearly $\ast$-homomorphisms are positive. Further examples will be given below.
\begin{proposition}
A state $\varphi$ is positive iff its GNS space is a pre-Hilbert space.
\end{proposition}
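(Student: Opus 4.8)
The plan is to reduce both implications to the single identity $\langle a\Omega, a\Omega\rangle_{\varphi} = \varphi(a^{\ast}a)$, which holds because $GNS(\varphi)$ is $A/A^{\perp}$ with cyclic vector $\Omega = [1]$ and Hermitian form $\langle a,b\rangle_{\varphi} = \varphi(b^{\ast}a)$ from equation \eqref{representationequation}, so that $a\Omega = [a]$ (see theorem \ref{representabilityconditions}). Since the positive elements of $\mathbb{C}$ are precisely the nonnegative reals (as $w^{\ast}w = |w|^{2}$), positivity of $\varphi$ is by definition the condition $\varphi(a^{\ast}a) \geq 0$ for all $a \in A$; the task is to match this with positive-definiteness of the inner product on $GNS(\varphi)$.

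First I would dispatch the easy direction. If $GNS(\varphi)$ is pre-Hilbert, its form is positive definite, so evaluating on the vector $a\Omega$ gives $\varphi(a^{\ast}a) = \langle a\Omega, a\Omega\rangle_{\varphi} \geq 0$ for every $a$; hence every positive element $a^{\ast}a$ is sent to a nonnegative real, and $\varphi$ is positive.

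For the converse, suppose $\varphi$ is positive. Then $\langle a, a\rangle_{\varphi} = \varphi(a^{\ast}a) \geq 0$ for all $a$, so the Hermitian form on $A$ given by \eqref{representationequation} is positive semidefinite. The remaining point is to upgrade ``semidefinite on $A$'' to ``definite on the quotient'' $A/A^{\perp} = GNS(\varphi)$. This is where I expect the only genuine content to sit: I would invoke the Cauchy--Schwarz inequality for positive semidefinite Hermitian forms, $|\langle a,b\rangle_{\varphi}|^{2} \leq \langle a,a\rangle_{\varphi}\,\langle b,b\rangle_{\varphi}$, to conclude that any null vector (one with $\langle a,a\rangle_{\varphi} = 0$) is orthogonal to everything, i.e.\ lies in the radical $A^{\perp}$. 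Therefore the null cone coincides with $A^{\perp}$, and the induced form on $A/A^{\perp}$ is not merely nondegenerate (as already guaranteed by the GNS construction) but genuinely positive definite, making $GNS(\varphi)$ an inner product space, i.e.\ a pre-Hilbert space.

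The main obstacle --- really the only step beyond unwinding definitions --- is this Cauchy--Schwarz argument identifying the null cone with the radical; everything else is the bookkeeping of the identity $\langle a\Omega, a\Omega\rangle_{\varphi} = \varphi(a^{\ast}a)$. One should check that Cauchy--Schwarz is available for a merely semidefinite (not yet definite) form, which it is: its standard proof uses only the positivity of $\langle a - \lambda b,\, a - \lambda b\rangle_{\varphi}$ for suitable $\lambda \in \mathbb{C}$, and never invokes nondegeneracy.
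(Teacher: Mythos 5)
Your proof is correct and takes essentially the same route as the paper: both reduce everything to the identity $\langle a\Omega, a\Omega \rangle_{\varphi} = \varphi(a^{\ast}a)$ together with cyclicity of $GNS(\varphi)$, so that positivity of $\varphi$ is read off from positive (semi)definiteness of the form. The Cauchy--Schwarz step you isolate---showing the null cone equals the radical $A^{\perp}$, so that the nondegenerate form on $A/A^{\perp}$ is genuinely positive definite---is left implicit in the paper's one-line proof but is exactly the content of its lemma \ref{radicalisnullspace}, stated immediately afterwards, so you have merely made explicit what the paper takes for granted.
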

\begin{proof}
\begin{displaymath}
\langle a, a \rangle \geq 0 \iff \varphi(a^{\ast} a) \geq 0,
\end{displaymath}
since the left hand sides are equal. The result follows since the module under consideration is cyclic.
\end{proof}

\begin{corollary}\label{tensorproductsarepositive}
If $\varphi$ and $\psi$ are positive, then so is $\varphi \otimes \psi$.
\end{corollary}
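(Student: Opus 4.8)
The plan is to reduce the statement to a single linear-algebraic fact about Hermitian forms, exploiting the characterization established just above: a representable state is positive if and only if its GNS space is a pre-Hilbert space. Since $\varphi$ and $\psi$ are positive, $GNS(\varphi)$ and $GNS(\psi)$ carry positive-definite Hermitian forms. By corollary \ref{gnsmonoidalproperty} we have $GNS(\varphi \otimes \psi) = GNS(\varphi) \otimes GNS(\psi)$ as Hermitian modules, with the tensor form built from the two factor forms, so it suffices to show that this tensor form is again positive definite; the cited proposition then immediately yields that $\varphi \otimes \psi$ is positive. Representability of $\varphi \otimes \psi$ is already guaranteed by proposition \ref{representationoftensors}, so positivity is the only thing left to verify.

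First I would isolate the key lemma: if $M$ and $N$ are pre-Hilbert spaces over $\mathbb{C}$, then $M \otimes_{\mathbb{C}} N$ with the form $\langle m \otimes n, m' \otimes n'\rangle = \langle m, m'\rangle_{M}\, \langle n, n'\rangle_{N}$ is a pre-Hilbert space. One cannot pick global orthonormal bases (the spaces are neither complete nor assumed finite-dimensional), so I would argue one element at a time. Given a nonzero $\xi \in M \otimes N$, write it as a finite sum $\xi = \sum_{i=1}^{k} m_i \otimes n_i$ and apply Gram-Schmidt to the finitely many vectors $m_1, \ldots, m_k$ inside $M$, rewriting $\xi = \sum_i e_i \otimes n_i'$ with the $e_i$ orthonormal. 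Then
\begin{displaymath}
\langle \xi, \xi \rangle = \sum_{i,j} \langle e_i, e_j \rangle_{M}\, \langle n_i', n_j' \rangle_{N} = \sum_i \langle n_i', n_i' \rangle_{N} \geq 0,
\end{displaymath}
with equality forcing every $n_i' = 0$, hence $\xi = 0$. This establishes positive definiteness and completes the proof.

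The main obstacle is precisely this positive-definiteness of the tensor form; everything else is bookkeeping with the monoidal and universal-property machinery already in place. The subtlety to watch is that the argument must be finitary: remark \ref{submoduletrouble} warns that degeneracy can creep in when forms are merely Hermitian, so the whole point of invoking positivity is that finite orthogonalization becomes available, which fails in the purely indefinite setting. I would also record the alternative, fully algebraic route that bypasses the GNS language: expanding $c = \sum_i a_i \otimes b_i$ gives $(\varphi \otimes \psi)(c^{\ast}c) = \sum_{i,j} \varphi(a_i^{\ast} a_j)\, \psi(b_i^{\ast} b_j)$, which equals $\mathbf{1}^{\ast}(P \circ Q)\mathbf{1}$ for the positive-semidefinite Gram matrices $P_{ij} = \varphi(a_i^{\ast} a_j)$ and $Q_{ij} = \psi(b_i^{\ast} b_j)$ (Hermitian by $\ast$-linearity of $\varphi, \psi$, and positive-semidefinite by positivity), so that nonnegativity follows from the Schur product theorem. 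The two approaches are two faces of the same fact, and either suffices; I would present the GNS reduction as the primary argument, since it keeps the proof within the functorial framework of the paper.
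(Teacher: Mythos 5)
Your proposal is correct and takes essentially the same route as the paper, whose entire proof is the one-liner that, by lemma \ref{flatnesslemma}, the tensor product of pre-Hilbert spaces is a pre-Hilbert space, applied through the preceding characterization (a state is positive iff its GNS space is pre-Hilbert) together with corollary \ref{gnsmonoidalproperty}. If anything, your write-up is more complete: the cited lemma only controls radicals, i.e.~nondegeneracy, and your finitary Gram--Schmidt computation (or the Schur-product alternative) supplies exactly the positive-semidefiniteness half that the paper leaves implicit.
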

\begin{proof}
By lemma \ref{flatnesslemma} the tensor product of pre-Hilbert spaces is a pre-Hilbert space.
\end{proof}

\begin{theorem}[Universality of the GNS Construction]\label{positiveuniversality}
Let $\varphi$ be a positive state. Then $GNS(\varphi)$ is initial among the pre-Hilbert $\ast$-modules representing $\varphi$.
\end{theorem}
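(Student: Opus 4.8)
The plan is to exhibit $GNS(\varphi)$ as initial by producing, for an arbitrary pre-Hilbert $\ast$-module $(N, n)$ representing $\varphi$, a unique representing-vector-preserving morphism $GNS(\varphi) \rightarrow N$. Recall that $GNS(\varphi) = A/A^{\perp}$ is cyclic with cyclic vector $\Omega = [1]$, and that it is pre-Hilbert by the preceding proposition (since $\varphi$ is positive), so it genuinely lives in the category under consideration. Everything hinges on a single evaluation map and on the precise place where positivity is used.

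First I would construct the candidate morphism. Consider the $A$-linear evaluation map
\begin{align*}
\mathrm{ev}_{n} : A &\longrightarrow N \\
a &\longmapsto an.
\end{align*}
Using that the $A$-action on $N$ is adjointable, with the adjoint of $b$ being $b^{\ast}$, together with the defining property $\varphi(c) = \langle cn, n \rangle_{N}$, one computes
\begin{displaymath}
\langle an, bn \rangle_{N} = \langle (b^{\ast}a)n, n \rangle_{N} = \varphi(b^{\ast}a) = \langle a, b \rangle_{\varphi},
\end{displaymath}
where $\langle -, - \rangle_{\varphi}$ is the form from equation \ref{representationequation}. Thus $\mathrm{ev}_{n}$ is isometric from $(A, \langle -, - \rangle_{\varphi})$ into $N$.

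The key step is the factorization through $A^{\perp}$, and this is exactly where positivity of $N$ is indispensable. If $a \in A^{\perp}$ then $\langle a, a \rangle_{\varphi} = 0$, so the computation above gives $\langle an, an \rangle_{N} = 0$; since $N$ is pre-Hilbert its form is positive definite, forcing $an = 0$. Hence $A^{\perp} \subseteq \ker \mathrm{ev}_{n}$, and $\mathrm{ev}_{n}$ descends to an $A$-linear map $GNS(\varphi) = A/A^{\perp} \rightarrow N$, $[a] \mapsto an$. It is isometric, because the form on $A/A^{\perp}$ is the one induced from $\langle -, - \rangle_{\varphi}$, and it sends $\Omega = [1]$ to $n$; it is therefore a morphism of the required kind, giving existence. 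Uniqueness is immediate: any such morphism is determined by the image of the cyclic vector $\Omega$, so corollary \ref{coherencedestroyer} (applied with $f = \mathrm{id}_{A}$) leaves no freedom at all.

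The main obstacle — indeed essentially the entire content — is the inclusion $A^{\perp} \subseteq \ker \mathrm{ev}_{n}$. For a merely nondegenerate but indefinite target one cannot pass from $\langle an, an \rangle_{N} = 0$ to $an = 0$, so $\mathrm{ev}_{n}$ need not descend, and initiality can genuinely fail. Positive-definiteness of the representing module is thus precisely what promotes $GNS(\varphi)$ from the ``unique cyclic representation'' of theorem \ref{representabilityconditions} to an honest initial object, and this hypothesis cannot be dropped.
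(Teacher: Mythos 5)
Your proof is correct and is essentially the paper's own argument: the paper observes that positivity makes the cyclic submodule $An \subseteq N$ a genuine $\ast$-submodule representing $\varphi$ and then invokes the uniqueness machinery for cyclic representations (theorem \ref{cyclicmodules} and proposition \ref{annihilatordetermined}), and your explicit evaluation map $[a] \mapsto an$ is exactly the resulting canonical morphism, with positivity used at the identical juncture --- your inclusion $A^{\perp} \subseteq \ker \mathrm{ev}_{n}$ is precisely the statement that the restricted form on $An$ is nondegenerate, which is what the paper flags against remark \ref{submoduletrouble}. Your uniqueness step via corollary \ref{coherencedestroyer} likewise matches the paper's.
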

\begin{proof}
Let $(M, m)$ be a representation of $\varphi$. Then $Am \subset M$ also represents $\varphi$, since it is obviously a $\ast$-submodule of $M$ (unlike in the indefinite case, cf.~remark \ref{submoduletrouble}), and is cyclic. Thus by theorem \ref{cyclicmodules} and proposition \ref{annihilatordetermined} there is a unique map
\begin{displaymath}
GNS(\varphi) \longrightarrow Am \hookrightarrow M
\end{displaymath} 
mapping $\Omega$ to $m$.
\end{proof}
In light of this theorem the classical GNS result can be restated as ``positive linear functionals on a $C^{\ast}$-algebra are representable''.

Let $\mathcal{S}_{p}(A)$ be the set of positive states on $A$.
\begin{theorem}\label{spismonoidal}
$\mathcal{S}_{p} \subseteq \mathcal{S}_{r}$ is a symmetric monoidal subfunctor.
\end{theorem}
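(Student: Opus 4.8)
The plan is to show that $\mathcal{S}_{p}$ inherits all of its structure from $\mathcal{S}_{r}$ (Theorem \ref{srismonoidal}), checking at each stage only that positivity is preserved. Concretely there are three things to verify: that $\mathcal{S}_{p}$ is closed under the pullback action of $\mathcal{S}_{r}$ (hence is a subfunctor), that each $\mathcal{S}_{p}(A)$ is a genuine object of $\mathbf{Conv}_{\mathbb{C}}$, and that the lax monoidal structure maps of $\mathcal{S}_{r}$ restrict to $\mathcal{S}_{p}$. Since $\mathcal{S}_{p}(A) \subseteq \mathcal{S}_{r}(A)$ by definition (positive states are representable), the inclusion is pointwise a monomorphism, so everything reduces to these closure properties.

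First I would check closure under pullback. Given a $\ast$-homomorphism $f : B \rightarrow A$ and a positive state $\varphi$ on $A$, functoriality of $\mathcal{S}_{r}$ already guarantees that $f^{\ast}\varphi$ is representable, so it suffices to see it is positive. As noted above, $\ast$-homomorphisms are positive maps, sending $c^{\ast}c$ to $f(c)^{\ast}f(c)$; hence for any positive $b \in B$ we get $f^{\ast}\varphi(b) = \varphi(f(b)) \geq 0$. Thus $f^{\ast}\varphi \in \mathcal{S}_{p}(B)$, and $\mathcal{S}_{p} \hookrightarrow \mathcal{S}_{r}$ is a subfunctor of $\ast\mathbf{Alg}^{op} \rightarrow \mathbf{Conv}_{\mathbb{C}}$. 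For the target category I would observe that for positive $\varphi, \psi$ and $t \in [0;1]$, the combination $t\varphi + (1-t)\psi$ sends each positive element to a nonnegative combination of nonnegative reals, so is positive, and is representable by the very direct-sum argument already used for $\mathcal{S}_{r}$. Hence each $\mathcal{S}_{p}(A)$ is a convex subset of $A^{\ast}$, and the restriction maps $f^{\ast}$ remain $\mathbb{C}$-affine, being restrictions of the $\mathbb{C}$-linear dual maps.

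Finally, the monoidal structure. The lax structure map $(\varphi,\psi) \mapsto \varphi \otimes \psi$ lands in positive states by Corollary \ref{tensorproductsarepositive}, and the unit $\ast \mapsto \mathrm{id}_{\mathbb{C}}$ is positive, since $\mathrm{id}_{\mathbb{C}}(\bar{z}z) = |z|^{2} \geq 0$. Because these are literally the same structure maps as those of $\mathcal{S}_{r}$, every lax monoidal coherence and symmetry axiom holds automatically: each is an equality of functions already verified in Theorem \ref{srismonoidal}, and restriction to a subfunctor cannot disturb it. Therefore the inclusion is a symmetric monoidal natural transformation and $\mathcal{S}_{p}$ is a symmetric monoidal subfunctor.

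I expect no substantial obstacle, as the entire content sits in two facts already established: positivity of tensor products (Corollary \ref{tensorproductsarepositive}) and positivity of pullbacks along positive maps. The only point requiring a moment's care is conceptual rather than computational, namely reconciling positivity (a purely real condition, not stable under complex-affine combinations) with the $\mathbb{C}$-affine morphisms of $\mathbf{Conv}_{\mathbb{C}}$. This causes no trouble, because those morphisms act by restriction of $\mathbb{C}$-linear maps and never require forming complex-affine combinations \emph{within} a fiber $\mathcal{S}_{p}(A)$; the convexity used to define the objects is the ordinary real one with $t \in [0;1]$.
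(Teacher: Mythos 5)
Your proof is correct and follows essentially the same route as the paper's: closure under pullback because $\ast$-homomorphisms are positive maps, convexity of $\mathcal{S}_{p}(A)$ from the nonnegativity of real convex combinations (with representability via the direct-sum construction already used for $\mathcal{S}_{r}$), and inheritance of the lax monoidal structure from Corollary \ref{tensorproductsarepositive} together with positivity of $\mathrm{id}_{\mathbb{C}}$. The paper's proof is simply a terser statement of exactly these three checks, so your additional detail (including the remark on real versus $\mathbb{C}$-affine structure) is elaboration rather than a different argument.
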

\begin{proof}
The pullback of a positive state is positive, since maps of $\ast$-algebras are positive. The set $\mathcal{S}_{p}(A)$ is also obviously convex, since $\mathbb{R}_{\geq 0} \subseteq \mathbb{R}$ is convex. By corollary \ref{tensorproductsarepositive}, and the obvious fact that $id : \mathbb{C} \rightarrow \mathbb{C}$ is a positive state, the monoidal structure can be inherited from $\mathcal{S}_{r}$.
\end{proof}

The following lemma connects us to the more traditional versions of the GNS construction, and is needed for representing maps of positive states.
\begin{lemma}\label{radicalisnullspace}
Let $\varphi : A \rightarrow \mathbb{C}$ be a positive state. Then for the induced Hermitian form on $A$, we have $A^{\perp} = \{a \in A : \langle a, a \rangle = 0\}$.
\end{lemma}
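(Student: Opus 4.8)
The plan is to leverage positivity through a Cauchy--Schwarz argument; note at the outset that the identity is \emph{false} for indefinite forms (null vectors in a Minkowski-type space lie outside the radical), so positivity must enter in an essential way. This is the same phenomenon underlying the warning in remark \ref{submoduletrouble}.

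First I would record the two structural facts about the induced form $\langle a, b \rangle = \varphi(b^{\ast}a)$ of equation \ref{representationequation}. It is positive \emph{semidefinite}: since $\varphi$ is a positive state and $a^{\ast}a$ is a positive element, $\langle a, a \rangle = \varphi(a^{\ast}a) \geq 0$ for all $a$. It is Hermitian: $\varphi$ is $\ast$-linear by theorem \ref{representabilityconditions}, so $\langle b, a \rangle = \varphi(a^{\ast}b) = \overline{\varphi(b^{\ast}a)} = \overline{\langle a, b \rangle}$.

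The inclusion $A^{\perp} \subseteq \{a : \langle a, a \rangle = 0\}$ is immediate: an element of $A^{\perp}$ is orthogonal to everything, in particular to itself. For the reverse inclusion I would fix $a$ with $\langle a, a \rangle = 0$ and an arbitrary $b \in A$, and expand, for each $t \in \mathbb{C}$,
\[
0 \leq \langle a + tb, a + tb \rangle = t\langle b, a \rangle + \bar{t}\langle a, b \rangle + |t|^{2}\langle b, b \rangle = 2\operatorname{Re}\bigl(\bar{t}\langle a, b \rangle\bigr) + |t|^{2}\langle b, b \rangle,
\]
where the diagonal term vanished by hypothesis and I used the Hermitian relation. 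Writing $\langle a, b \rangle = re^{i\theta}$ and choosing $t = -\epsilon e^{i\theta}$ with $\epsilon > 0$ gives $0 \leq -2\epsilon r + \epsilon^{2}\langle b, b \rangle$; dividing by $\epsilon$ and letting $\epsilon \to 0^{+}$ forces $r \leq 0$, so $\langle a, b \rangle = 0$. As $b$ was arbitrary, $a \in A^{\perp}$.

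I do not expect a genuine obstacle, but the one point of care is that the textbook Cauchy--Schwarz proof degenerates precisely when $\langle b, b \rangle = 0$; rather than invoke Cauchy--Schwarz as a black box I would run the single-variable minimization above, which treats every $b$ uniformly and uses positivity only through the inequality $\langle a + tb, a + tb \rangle \geq 0$.
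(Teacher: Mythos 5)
Your proposal is correct and matches the paper's own argument: the paper simply invokes the general Cauchy--Schwarz inequality, $|\langle a, b \rangle|^{2} \leq \langle a, a \rangle \langle b, b \rangle$, explicitly noting ``or its proof'' since semidefiniteness is all that is needed, and your single-variable minimization is exactly that proof written out, with the degenerate case $\langle b, b \rangle = 0$ handled uniformly. No gap; the extra care you take about degeneracy is sound but already implicit in the paper's parenthetical.
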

\begin{proof}
Clearly $A^{\perp} \subseteq \{a \in A : \langle a, a \rangle = 0\}$. To see the other inclusion recall the general Cauchy-Schwartz inequality (or its proof), which is still valid in our setting: $|\langle a, b \rangle|^{2} \leq \langle a, a \rangle \langle b, b \rangle$, for any $a, b \in A$. Thus if $\langle a, a \rangle = 0$, then $\langle a, b \rangle = 0$ for any $b \in A$.
\end{proof}
\subsubsection{Complete Positivity}
In this section we recover a variant of the Stinespring factorization theorem.
\begin{definition}
$M_{n}(-) = (-) \otimes M_{n}(\mathbb{C}) : \ast\mathbf{Alg} \rightarrow \ast\mathbf{Alg}$.
\end{definition}
\begin{definition}
A linear map $\Phi: A \rightarrow B$ between $\ast$-algebras is completely positive if it is $\ast$-linear and $M_{n}(\Phi)$ is positive for all $n \in \mathbb{N}$.
\end{definition}
In the setting of $C^{\ast}$-algebras $\ast$-linearity is a consequence of ordinary positivity. In our case we list it as a separate requirement. Clearly, completely positive maps form a category which includes the $\ast$-homomorphisms.

Now let $\Phi : A \rightarrow B$ be completely positive, and let $\varphi : B \rightarrow \mathbb{C}$ be a positive state on $B$. Set $H = A \otimes B$, and let
\begin{align*}
V : B & \longrightarrow H &\textnormal{be given by} &b \longmapsto 1_{A} \otimes b \\
V^{\ast} : H & \longrightarrow B &\textnormal{be given by} &a\otimes b \longmapsto \Phi(a)b \\
\pi(a) : H & \longrightarrow H &\textnormal{be given by} &a^{\prime}\otimes b \longmapsto aa^{\prime} \otimes b.
\end{align*}
Declare $\pi(a)^{\ast} = \pi(a^{\ast})$, and finally define a bilinear form on $H$ by
\begin{displaymath}
\langle a_{1} \otimes b_{1}, a_{2} \otimes b_{2} \rangle_{H} = \langle \Phi(a_{2}^{\ast} a_{1})b_{1}, b_{2}\rangle_{\varphi}.
\end{displaymath}
By inspection, $\pi(a)$ and $\pi(a)^{\ast}$ are adjoint with respect to the Hermitian form on $H$ (which may be degenerate), and $\pi$ defines an $A$-module structure on $H$. By the $\ast$-linearity of $\Phi$, $V$ and $V^{\ast}$ are also adjoint, with $B$ endowed with the form $\langle a, b\rangle_{\varphi} = \varphi(b^{\ast}a)$. By construction we have
\begin{displaymath}
\Phi(a) = V^{\ast} \pi(a) V(1_{B}).
\end{displaymath}
The form $\langle -, - \rangle_{H}$ is positive semi-definite by the complete positivity of $\Phi$ and the positivity of $\varphi$. Indeed, for any $a_{1}, \ldots a_{n} \in A$ we have $[a_{i}^{\ast} a_{j}] \in M_{n}(A)$, a positive element, equal to $X^{\ast} X$, where $X \in M_{n}(A)$ is the matrix with first row $(a_{i})$, and the rest $0$. This means that $M_{n}(\Phi)([a_{i}^{\ast}a_{j}])$ is positive, hence -- by our definition of positivity -- of the form $L^{\ast} L$, for some $L \in M_{n}(B)$, and so
\begin{displaymath}
\langle \sum_{j} a_{j}\otimes x_{j} , \sum_{i} a_{i} \otimes x_{i} \rangle_{H} = \langle M_{n}(\Phi)([a_{i}^{\ast}a_{j}]) x, x\rangle_{B^{n}} = \langle Lx, Lx \rangle_{B^{n}} \geq 0,
\end{displaymath}
where $x = (x_{1}, \ldots, x_{n}) \in B^{n}$, and $B^{n}$ is the $n$-fold orthogonal sum of $(B, \langle -, - \rangle_{\varphi})$.

We are now ready to state the factorization theorem. Let $\Phi : A \rightarrow B$ be completely positive, and let $\varphi : B \rightarrow \mathbb{C}$ be a positive state, and let $i : B \rightarrow \underline{End}(GNS(\varphi))$ be its GNS representation.
\begin{theorem}[Stinespring Factorization Theorem]\label{stinepring}
There exists a pre-Hilbert $A$-module $H$ and an adjointable linear map $V : GNS(\varphi) \rightarrow H$ such that $V^{\ast} \pi V = i \Phi$, where $\pi$ is the representation of $A$ on $H$.
\end{theorem}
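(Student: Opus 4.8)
The plan is to observe that the construction preceding the statement has already manufactured every ingredient on the \emph{possibly degenerate} space $A \otimes B$, so that the only work left is to pass to a nondegenerate quotient and check that all the structure descends. Write $\tilde{H} = A \otimes B$ for the space carrying the positive semi-definite Hermitian form $\langle a_1 \otimes b_1, a_2 \otimes b_2\rangle = \langle \Phi(a_2^{\ast} a_1) b_1, b_2\rangle_{\varphi}$, the $A$-module structure $\pi$ with $\pi(a)^{\ast} = \pi(a^{\ast})$, and the mutually adjoint maps $V : B \to \tilde{H}$ and $V^{\ast} : \tilde{H} \to B$. The pre-Hilbert module $H$ promised in the theorem will be the quotient $\tilde{H}/\tilde{H}^{\perp}$.

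First I would form $H = \tilde{H}/\tilde{H}^{\perp}$. Because the form on $\tilde{H}$ is positive semi-definite, the Cauchy-Schwarz argument of lemma \ref{radicalisnullspace} shows that its radical coincides with the null space $\{x : \langle x, x\rangle = 0\}$; hence the induced form on $H$ is positive definite and $H$ is a genuine pre-Hilbert space. Next, since each $\pi(a)$ is adjointable with adjoint $\pi(a^{\ast})$, lemma \ref{isotropiclemma} guarantees $\pi(a)(\tilde{H}^{\perp}) \subseteq \tilde{H}^{\perp}$, so $\pi$ descends to an $A$-action $\bar{\pi}$ on $H$; the relation $\pi(a)^{\ast} = \pi(a^{\ast})$ descends with it, making $\bar{\pi} : A \to \underline{End}(H)$ a $\ast$-homomorphism, so that $H$ is a pre-Hilbert $A$-module.

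Second, I would descend $V$. Since $GNS(\varphi) = B/B^{\perp}$ and $V$ is adjointable with adjoint $V^{\ast}$, lemma \ref{isotropiclemma} gives $V(B^{\perp}) \subseteq \tilde{H}^{\perp}$, so $V$ factors through a map $\bar{V} : GNS(\varphi) \to H$, and symmetrically $V^{\ast}$ descends to $\bar{V}^{\ast} : H \to GNS(\varphi)$. The mutual adjointness of $V$ and $V^{\ast}$ with respect to the degenerate forms passes verbatim to the quotients, so $\bar{V}$ is adjointable with adjoint $\bar{V}^{\ast}$. The factorization $\bar{V}^{\ast} \bar{\pi}(a) \bar{V} = i\Phi(a)$ is then a one-line verification: on a class $[b] \in GNS(\varphi)$ one computes $\bar{V}[b] = [1_A \otimes b]$, then $\bar{\pi}(a)[1_A \otimes b] = [a \otimes b]$, and finally $\bar{V}^{\ast}[a \otimes b] = [\Phi(a) b] = i(\Phi(a))[b]$, which is exactly the operator extension of the preliminary identity $\Phi(a) = V^{\ast} \pi(a) V(1_B)$ from the unit to an arbitrary $b$.

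The genuinely hard part has already been dispatched before the statement, namely proving that $\langle -, -\rangle_{\tilde{H}}$ is positive semi-definite, which is where complete positivity of $\Phi$ (via the $M_n$-positivity and the $L^{\ast} L$ factorization) is essential; everything afterward is formal descent. The only subtlety to guard against, flagged repeatedly in the paper (remarks \ref{submoduletrouble} and \ref{reasonforadmissibility}), is that quotienting by a radical and descending the module and adjoint structure is legitimate here precisely because positivity forces the radical to equal the null space and adjointability lets lemma \ref{isotropiclemma} control how the maps move radicals. No spectral theory or topology enters at any point.
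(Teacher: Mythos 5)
Your proposal is correct and takes essentially the same route as the paper: the paper's entire proof is the instruction to ``factor out all the degeneracy in the above formulas,'' with lemma \ref{isotropiclemma} ensuring everything stays well-defined, and your write-up is exactly that argument carried out in detail (quotient $A \otimes B$ by its radical, descend $\pi$, $V$, $V^{\ast}$ via lemma \ref{isotropiclemma} and the Cauchy--Schwarz identification of radical and null space, then verify $V^{\ast}\pi V = i\Phi$ on classes). You also correctly place the real content --- positive semi-definiteness of the form via complete positivity --- in the construction preceding the statement, just as the paper does.
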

\begin{proof}
Factor out all the degeneracy in the above formulas. Lemma \ref{isotropiclemma} ensures everything remains well-defined.
\end{proof}
\begin{remark}
Conversely, one can easily compute that all maps of the form $V^{\ast}\pi V$, where $V$ is any adjointable map between pre-Hilbert spaces, are completely positive according to our definition.
\end{remark}
One can replace $B$ with an arbitrary pre-Hilbert $\ast$-module $L$ over $B$, but no real generality is gained.
\begin{corollary}\label{stinespringcorollary}
Let $L$ be a pre-Hilbert $\ast$-module over $B$, and let $\Phi : A \rightarrow B$ be completely positive. Then there exists a pre-Hilbert $\ast$-module $H$ over $A$, and an adjointable linear map $V : L \rightarrow H$ such that $V^{\ast} \pi V = i \Phi$, where $\pi$ is the representation of $A$ on $H$, and $i$ is the representation of $B$ on $L$.
\end{corollary}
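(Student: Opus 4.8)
The plan is to replay, essentially verbatim, the explicit construction carried out just before Theorem~\ref{stinepring}, with the module $L$ substituted everywhere for the space $B$ equipped with its $\varphi$-form. Concretely I would set $H = A \otimes L$ (over $\mathbb{C}$) and define
\begin{align*}
V : L &\longrightarrow H, & \ell &\longmapsto 1_{A} \otimes \ell, \\
V^{\ast} : H &\longrightarrow L, & a \otimes \ell &\longmapsto \Phi(a)\ell, \\
\pi(a) : H &\longrightarrow H, & a^{\prime} \otimes \ell &\longmapsto a a^{\prime} \otimes \ell,
\end{align*}
declaring $\pi(a)^{\ast} = \pi(a^{\ast})$, and equipping $H$ with the (a priori degenerate) Hermitian form $\langle a_{1} \otimes \ell_{1}, a_{2} \otimes \ell_{2} \rangle_{H} = \langle \Phi(a_{2}^{\ast}a_{1})\ell_{1}, \ell_{2} \rangle_{L}$, where $\Phi(a_2^{\ast}a_1) \in B$ acts through the representation $i : B \to \underline{End}(L)$.

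Most of the verification is pure bookkeeping and I expect it to transfer without change: the form is Hermitian because $\Phi$ is $\ast$-linear and $i$ respects the form on $L$; the pair $\pi(a), \pi(a^{\ast})$ is mutually adjoint by inspection; $V$ and $V^{\ast}$ are adjoint again by $\ast$-linearity of $\Phi$; and the factorization identity reduces to the one-line check $V^{\ast}\pi(a)V(\ell) = V^{\ast}(a \otimes \ell) = \Phi(a)\ell = i(\Phi(a))(\ell)$, so that $V^{\ast}\pi V = i\Phi$ already holds at this stage. The single substantive point---and the step I would flag as the main obstacle---is positive semidefiniteness of $\langle -,- \rangle_{H}$, now that $L$ is an arbitrary pre-Hilbert module rather than $B$ itself. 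I would handle this exactly as in the scalar case, but reading the computation inside the $n$-fold orthogonal sum $L^{n}$: given $\sum_j a_j \otimes \ell_j$, the matrix $[a_i^{\ast}a_j] = X^{\ast}X \in M_n(A)$ is positive, complete positivity gives $M_n(\Phi)([a_i^{\ast}a_j]) = Y^{\ast}Y$ for some $Y \in M_n(B)$, and since $i$ is a $\ast$-homomorphism the induced action $M_n(B) \to \underline{End}(L^{n})$ is one too, whence $\langle \sum_j a_j \otimes \ell_j, \sum_i a_i \otimes \ell_i \rangle_{H} = \langle M_n(\Phi)([a_i^{\ast}a_j])\,\ell, \ell\rangle_{L^{n}} = \langle Y\ell, Y\ell\rangle_{L^{n}} \geq 0$ with $\ell = (\ell_1,\dots,\ell_n)$. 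The only new ingredient relative to Theorem~\ref{stinepring} is that $L^{n}$ is pre-Hilbert because $L$ is, so the final inequality is genuine positivity.

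To finish I would pass to the nondegenerate quotient $\bar{H} = H/H^{\perp}$ and invoke Lemma~\ref{isotropiclemma} to descend all the structure: since $\pi(a)$ has adjoint $\pi(a^{\ast})$ we get $\pi(a)(H^{\perp}) \subseteq H^{\perp}$, so $\pi$ acts on $\bar{H}$; since $L$ is nondegenerate, $V^{\ast}(H^{\perp}) \subseteq L^{\perp} = 0$, so $V^{\ast}$ factors through $\bar{H}$; and composing $V$ with the projection yields the adjointable $\bar{V} : L \to \bar{H}$. Because every identity established above is an equality of maps into or out of $L$, it is untouched by the quotient, so $V^{\ast}\pi V = i\Phi$ persists on the nondegenerate pre-Hilbert $A$-module $\bar{H}$, giving the required $H$, $\pi$, and $V$. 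In short, the corollary is the same argument with $L^{n}$ in place of $B^{n}$, and the lone place where the generalization does any work is the positivity estimate.
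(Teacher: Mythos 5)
Your proof is correct, but it takes a genuinely different route from the paper's. The paper proves the corollary by pure reduction rather than by re-running the construction: it notes that $i\Phi : A \rightarrow \underline{End}(L)$ is again completely positive (composites of completely positive maps with $\ast$-homomorphisms are completely positive), picks any nonzero $v \in L$ and forms the vectorial state $\varphi(f) = \langle f(v), v \rangle_{L}$ on $\underline{End}(L)$; by proposition \ref{hilbertproposition} the pair $(L, v)$ is cyclic over $\underline{End}(L)$, so by the uniqueness clause of theorem \ref{representabilityconditions} one has $L = GNS(\varphi)$ with its tautological GNS representation, and theorem \ref{stinepring} applied to $i\Phi$ and $\varphi$ yields the dilation at once. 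That route buys brevity and exercises the recognition principle running through the whole paper (identify a given module as a GNS space, then quote universal facts), at the mild cost of routing the dilation through a quotient of $A \otimes \underline{End}(L)$ and of needing $L \neq 0$ to choose $v$. Your route --- substituting $L$ for $(B, \langle -,-\rangle_{\varphi})$ throughout the explicit construction preceding theorem \ref{stinepring}, reading the positivity estimate in $L^{n}$ instead of $B^{n}$, and descending to $H/H^{\perp}$ via lemma \ref{isotropiclemma} --- is exactly the direct generalization the paper itself alludes to in the remark that ``one can replace $B$ with an arbitrary pre-Hilbert $\ast$-module $L$''; it is self-contained, produces a leaner dilation space (a quotient of $A \otimes L$), and correctly isolates the one genuinely new ingredient, namely that $M_{n}(\Phi)([a_{i}^{\ast}a_{j}]) = Y^{\ast}Y$ together with $M_{n}(i)$ being a $\ast$-homomorphism into $\underline{End}(L^{n})$ gives $\langle Y\ell, Y\ell \rangle_{L^{n}} \geq 0$. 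All of your individual verifications (Hermitian symmetry from $\ast$-linearity of $\Phi$, mutual adjointness of $V$ and $V^{\ast}$, the one-line factorization identity, and the persistence of all identities under the quotient because $V^{\ast}(H^{\perp}) \subseteq L^{\perp} = 0$) check out.
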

\begin{proof}
Apply the previous theorem to the composite $i\Phi$, and note that by proposition \ref{hilbertproposition} and theorem \ref{representabilityconditions} we have $L = GNS(\varphi)$, for $\varphi : \underline{End}(L) \rightarrow \mathbb{C}$ given by $\varphi(f) = \langle f(v) , v \rangle_{L}$, for any choice of nonzero $v \in L$.
\end{proof}
\subsection{Categories of Physical Processes}
Let $1$ be the terminal category, and $1 \rightarrow \mathbf{Conv}_{\mathbb{C}}$ the functor which picks out the affine point.
\begin{definition}\hspace{1em}
\begin{itemize}
\item The unrestricted category of physical processes is the comma category $1 \downarrow \mathcal{S}_{r}$. It will be denoted by $\mathbf{Phys}_{r}$.
\item The category of positive physical processes is $1 \downarrow \mathcal{S}_{p}$. It will be called $\mathbf{Phys}_{p}$.
\item The category of physical processes (just so), $\mathbf{Phys}$, will be constructed below in definition \ref{definitionofphysa}, after the introduction of admissible morphisms.
\end{itemize}
\end{definition}
$\mathbf{Phys}_{r}$ is strong symmetric monoidal by theorem \ref{srismonoidal}, and purely formal properties of forming comma categories. The others are monoidal subcategories, with $\mathbf{Phys}_{p}$ being such by theorem \ref{spismonoidal}. For convenience, we will spell out the details of $\mathbf{Phys}_{r}$.

The objects of $\mathbf{Phys}_{r}$ are pairs $(A, \varphi)$, with $A$ a $\ast$-algebra, and $\varphi : A \rightarrow \mathbb{C}$ a representable state on $A$. A morphism
\begin{displaymath}
(A, \varphi) \longrightarrow (B, \psi)
\end{displaymath}
in $\mathbf{Phys}_{r}$ is a $\ast$-algebra homomorphism $f : B \rightarrow A$ such that $\psi = f^{\ast} \varphi = \varphi \circ f$.

As in the introduction, we will write $f : \varphi \rightarrow \psi$ for morphisms in $\mathbf{Phys}_{r}$, omitting the algebras. They can be recovered by applying the observables functor
\begin{displaymath}
\mathcal{O} : \mathbf{Phys}_{r} \longrightarrow \ast\mathbf{Alg}^{op},
\end{displaymath}
which is simply forgetting the state: $(A, \varphi) \mapsto A$.

The monoidal structure is defined by
\begin{displaymath}
(A, \varphi) \otimes (B, \psi) = (A \otimes B, \varphi \otimes \psi),
\end{displaymath}
with the obvious formula for morphisms.

Examples of physical processes abound. A vast supply of objects and morphisms will be constructed in theorem \ref{schrodingerpicture}, where it is shown how to lift Schr{\"o}dinger picture operators, observables, and states to $\mathbf{Phys}_{p}$. Using that theorem all $W^{\ast}$- or $C^{\ast}$-dynamical systems (with invertible dynamics) can be lifted into our formalism.

\subsection{Representations of Physical Processes}\label{representablemaps}
\subsubsection{Construction for Positive States}\label{constructionforpositivestates}
We will now construct a symmetric monoidal functor
\begin{displaymath}
\mathbf{Phys}_{p}^{op} \longrightarrow \ast \mathbf{Mod},
\end{displaymath}
whose object function is given by $\varphi \mapsto GNS(\varphi)$. It will serve as a foundation for our formalization of physics.

The construction, outlined in the introduction, follows immediately from theorem \ref{positiveuniversality}. Let $f : \varphi \rightarrow \psi$ be a morphism in $\mathbf{Phys}_{p}$. Then $\mathcal{O}(f)^{\ast} GNS(\varphi)$ represents $\psi$ and so we have a map
\begin{displaymath}
GNS(\psi) \longrightarrow \mathcal{O}(f)^{\ast} GNS(\varphi).
\end{displaymath}
We define $GNS(f)$ to be the composite of this map with the cartesian lift of $\mathcal{O}(f)$:
\begin{displaymath}
GNS(\psi) \longrightarrow \mathcal{O}(f)^{\ast} GNS(\varphi) \longrightarrow GNS(\varphi).
\end{displaymath}
Note that $GNS(f)$ lies over $f$, making $GNS$ fibered over $\ast\mathbf{Alg}$.

The fact that this construction defines a functor, which is furthermore strong symmetric monoidal in a natural way, follows from theorem \ref{positiveuniversality} and corollary \ref{coherencedestroyer}, applied repeatedly to every condition we have to check. The structures we must exhibit are uniquely specified by appeals to theorem \ref{positiveuniversality}, and any coherence laws are satisfied by corollary \ref{coherencedestroyer}. Since we will perform the construction in more generality, we leave the details to the reader.

Without positivity we have no analog of universality, $\mathcal{O}(f)^{\ast}GNS(\varphi)$ does not need to contain a cyclic module representing $\psi$, and so we must restrict the maps we can represent. This leads to the notion of admissibility.
\subsubsection{Construction in General}\label{representationofadmissiblemaps}
Now we consider a map $f : \varphi \rightarrow \psi$ in $\mathbf{Phys}_{r}$, with $\varphi$ not necessarily positive. To ease notation, write $\mathcal{O}(f) = f : A \rightarrow B$, with $A = \mathcal{O}(\psi)$ and $B = \mathcal{O}(\varphi)$. We also abbreviate $f^{\ast} = \mathcal{O}(f)^{\ast}$.

Recall that $GNS(\varphi) = B/B^{\perp}$ and $GNS(\psi) = A/A^{\perp}$, with $(-)^{\perp}$ denoting the radical of the induced Hermitian form. We wish to define a map $GNS(f) : GNS(\psi) \rightarrow GNS(\varphi)$, but so far we only have the following diagram.
\begin{center}
\begin{tikzpicture}
\matrix (m)[matrix of math nodes, row sep = 2cm, column sep = 2cm, text height = 1.5ex, text depth = .25ex]{
A/f^{-1}(B^{\perp}) & GNS(\psi) \\
GNS(\varphi)\\
};

\path[->] (m-1-1) edge node[auto] {$\pi$} (m-1-2)
		  (m-1-1) edge node[auto,swap] {$[f]$} (m-2-1)
		  (m-1-2) edge[densely dashed] node[auto] {?} (m-2-1);
\end{tikzpicture}
\end{center}
The map $[f]$ is a morphism of cyclic modules over $f$, and is given by $[x] \mapsto [f(x)]$. The horizontal map $\pi$ is a quotient projection (since we clearly have $f^{-1}(B^{\perp}) \subseteq A^{\perp}$, by the definition of $\psi$). To fill in the dashed map, we simply assume that $\pi$ is an isomorphism, leading to the following definition.
\begin{definition}\label{admissibilitydefinition}
The map $f : A \rightarrow B$ is called admissible for $\varphi$ if $A^{\perp} \subseteq f^{-1}(B^{\perp})$.
\end{definition}
\begin{proposition}\label{admissibilityproperties}
The following are equivalent:
\begin{enumerate}
\item $f$ is admissible for $\varphi$
\item $\pi$ is an isomorphism
\item The Hermitian form defined by $\psi$ on $A/f^{-1}(B^{\perp})$ is nondegenerate
\item $A/f^{-1}(B^{\perp})$ represents $\psi$
\item $f^{\ast}GNS(\varphi)$ contains a cyclic module representing $\psi$.
\end{enumerate}
\end{proposition}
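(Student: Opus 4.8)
The plan is to fix notation exactly as in the construction preceding the proposition: $f:A\to B$ is a $\ast$-homomorphism, $\varphi$ is representable on $B$, $\psi=f^{\ast}\varphi$, so that $GNS(\varphi)=B/B^{\perp}$ and $GNS(\psi)=A/A^{\perp}$, where $B^{\perp}$, $A^{\perp}$ are the radicals of the Hermitian forms $\langle a,b\rangle_{\varphi}=\varphi(b^{\ast}a)$ and $\langle a,b\rangle_{\psi}=\psi(b^{\ast}a)$. First I would record the inclusion that always holds, $f^{-1}(B^{\perp})\subseteq A^{\perp}$: if $f(a)\in B^{\perp}$ then for every $b\in A$ one has $\langle a,b\rangle_{\psi}=\psi(b^{\ast}a)=\varphi(f(b)^{\ast}f(a))=\langle f(a),f(b)\rangle_{\varphi}=0$. (And $f^{-1}(B^{\perp})$ is a left ideal, being the preimage of the left ideal $B^{\perp}$ under a homomorphism.) Hence admissibility, $A^{\perp}\subseteq f^{-1}(B^{\perp})$, is exactly the \emph{reverse} inclusion, i.e.\ the equality $A^{\perp}=f^{-1}(B^{\perp})$. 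The single object that controls all five conditions is the form induced by $\psi$ on $A/f^{-1}(B^{\perp})$: it is well defined because $f^{-1}(B^{\perp})\subseteq A^{\perp}$, it is Hermitian since $\psi$ is $\ast$-linear, and its radical is precisely $A^{\perp}/f^{-1}(B^{\perp})$.

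The equivalences $1\Leftrightarrow 2\Leftrightarrow 3\Leftrightarrow 4$ then all reduce to the vanishing of this radical. The map $\pi$ is the surjection $A/f^{-1}(B^{\perp})\to A/A^{\perp}$ with kernel $A^{\perp}/f^{-1}(B^{\perp})$, so $\pi$ is an isomorphism iff that kernel is zero iff $A^{\perp}=f^{-1}(B^{\perp})$, which is $1\Leftrightarrow 2$. Condition $3$ is literally the statement that $A^{\perp}/f^{-1}(B^{\perp})$ vanishes, giving $1\Leftrightarrow 3$. For $3\Leftrightarrow 4$: the quotient $A/f^{-1}(B^{\perp})$ is an $A$-module with cyclic vector $[1]$ satisfying $\langle a[1],[1]\rangle=\psi(a)$, so it is a $\ast$-module representing $\psi$ exactly when its form is nondegenerate, i.e.\ exactly under condition $3$. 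All of this is routine radical bookkeeping.

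Finally, condition $5$. The reading forced by the construction is that the cyclic module in $f^{\ast}GNS(\varphi)$ is the one generated by the pulled-back cyclic vector $\Omega_{\varphi}$ — which represents $\psi$, since $\langle a\cdot\Omega_{\varphi},\Omega_{\varphi}\rangle=\langle f(a)\Omega_{\varphi},\Omega_{\varphi}\rangle_{\varphi}=\varphi(f(a))=\psi(a)$ — equivalently the image of the canonical map $[f]:A/f^{-1}(B^{\perp})\to f^{\ast}GNS(\varphi)$, $[a]\mapsto f(a)\Omega_{\varphi}$. I would observe that $[f]$ is always injective, its kernel being $\{[a]:f(a)\in B^{\perp}\}=0$, and always an isometry, since $\langle f(a)\Omega_{\varphi},f(b)\Omega_{\varphi}\rangle_{\varphi}=\varphi(f(b^{\ast}a))=\psi(b^{\ast}a)$. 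Thus $[f]$ is an injective isometry onto the cyclic submodule $A\Omega_{\varphi}$, so the latter is a nondegenerate $\ast$-module representing $\psi$ iff $A/f^{-1}(B^{\perp})$ is nondegenerate, which is condition $3/4$ once more, closing the loop.

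The main obstacle is pinning down condition $5$ correctly rather than the calculations. The delicate point is that an \emph{arbitrary} cyclic $\ast$-submodule of $f^{\ast}GNS(\varphi)$ representing $\psi$ can exist even when $f$ fails to be admissible: for non-positive $\varphi$ the module $GNS(\psi)$ may embed isometrically into $f^{\ast}GNS(\varphi)$ ``off-center'', through a vector other than $\Omega_{\varphi}$, without $\Omega_{\varphi}$ itself generating a nondegenerate module. What genuinely characterizes admissibility is that the \emph{canonical} vector $\Omega_{\varphi}$ generate a nondegenerate cyclic module, and the only real content of the argument is to isolate this canonical module and note that the comparison map $[f]$ is an injective isometry, so that nondegeneracy transports faithfully between $A/f^{-1}(B^{\perp})$ and $A\Omega_{\varphi}$. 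With that understood, every implication is immediate from the radical computation above.
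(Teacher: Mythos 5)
Your proof is correct, and for the equivalence of conditions 1--4 it is the same routine radical bookkeeping the paper dismisses as trivial. Where you genuinely diverge is condition 5. The paper proves $4 \Rightarrow 5$ exactly as you do (the image of $[f]$ is the sought-after module), but proves $5 \Rightarrow 1$ by a different device: it invokes the uniqueness clause of theorem \ref{representabilityconditions}(1) to get a cyclic map $GNS(\psi) \rightarrow f^{\ast}GNS(\varphi)$, composes with the cartesian lift, asserts that the result ``must be a cyclic map $A/A^{\perp} \rightarrow B/B^{\perp}$ over $f$'', and reads off $x \in A^{\perp} \Rightarrow f(x) \in B^{\perp}$ from cyclicity. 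You instead observe that $[f] : A/f^{-1}(B^{\perp}) \rightarrow f^{\ast}GNS(\varphi)$ is always an injective isometry onto $A\Omega_{\varphi}$, so nondegeneracy transports between the quotient and the canonical submodule, closing the loop at condition 3 without any appeal to the uniqueness theorem. Your route is slightly more economical and, more importantly, it makes explicit which submodule condition 5 is about.

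The caveat you raise is real, and it lands exactly where the paper's argument is thin. The composite map in the paper's proof sends $\Omega_{\psi}$ to the representing vector $m$ of the contained module, and nothing in condition 5 as literally stated forces $m = \Omega_{\varphi}$, which is what the assertion ``must be a cyclic map $A/A^{\perp} \rightarrow B/B^{\perp}$'' requires. Under the literal reading the implication $5 \Rightarrow 1$ can actually fail, even more cheaply than by an off-center embedding: take $A = \mathbb{C}$, $B = \mathbb{C}[t]/(t^{2})$ with $t^{\ast} = t$, $\varphi(1) = 0$, $\varphi(t) = 1$, and $f$ the unit map. Then $B^{\perp} = 0$, $\psi = f^{\ast}\varphi = 0$, and $A^{\perp} = \mathbb{C} \not\subseteq f^{-1}(B^{\perp}) = 0$, so $f$ is not admissible; yet the zero submodule of $f^{\ast}GNS(\varphi)$ is cyclic and represents the isotropic state $\psi$. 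Your repair --- reading 5 through the canonical vector $\Omega_{\varphi}$, equivalently through the image of $[f]$ --- is the reading the paper itself relies on downstream (remark \ref{uniquenessremark}, and the tensor-product argument in proposition \ref{admissibilitypropertiestwo}, where the cyclic submodules invoked are generated by the canonical vectors), so with that reading your proof is complete and, on this point, more careful than the paper's own.
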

\begin{proof}
The implications $1 \implies 2 \implies 3 \implies 4$ are trivial. We have $4 \implies 5$ since the image of $[f]$ in $f^{\ast}GNS(\varphi)$ is the sought after module.

Finally, we show $5 \implies 1$ as follows. By 5 and theorem \ref{representabilityconditions}(1), there is a cyclic map $GNS(\psi) \rightarrow f^{\ast}GNS(\varphi)$, and hence a cyclic map $GNS(\psi) \rightarrow GNS(\varphi)$ over $f$. But, by the construction of $GNS$ spaces, this must be a cyclic map $A/A^{\perp} \rightarrow B/B^{\perp}$ over $f$. Thus, by cyclicity, $x \in A^{\perp}$ implies $f(x) \in B^{\perp}$, which is 1.
\end{proof}

\begin{remark}\label{uniquenessremark}
Note that any cyclic map completing the triangle above will make it commute (by proposition \ref{coherencedestroyer}). This is implicit in the proof of the last implication above. Consequently, the content of proposition \ref{admissibilityproperties} is that there is only one reasonable formula for $GNS(f)$, i.e.~$[f]$, and it gives a well-defined map iff $f$ is admissible.
\end{remark}
\begin{definition}
Let $f : \varphi \rightarrow \psi$ be a morphism in $\mathbf{Phys}_{r}$, such that $\mathcal{O}(f)$ is admissible for $\varphi$. Then the GNS representation of $f$ is defined to be
\begin{align*}
GNS(f) : GNS(\psi) &\longrightarrow GNS(\varphi) \\
GNS(f)([x]) &= [f(x)].
\end{align*}
\end{definition}

Admissible homomorphisms have all the categorical properties we require.
\begin{proposition}\label{admissibilitypropertiestwo}\hspace{1em}
\begin{enumerate}
\item Composites of admissible maps are admissible
\item The tensor product of admissible maps is admissible
\item All maps between positive states are admissible
\end{enumerate}
\end{proposition}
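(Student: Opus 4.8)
The plan is to reduce all three claims to a single reformulation of Definition \ref{admissibilitydefinition}: a $\ast$-algebra map $f : A \to B$ is admissible for $\varphi$ precisely when $f(A^{\perp}) \subseteq B^{\perp}$, where $A$ and $B$ carry the (possibly degenerate) Hermitian forms $\langle x,y\rangle_{\alpha} = \alpha(y^{\ast}x)$ induced by $\alpha = f^{\ast}\varphi$ and by $\varphi$ respectively. This is literally the inclusion $A^{\perp}\subseteq f^{-1}(B^{\perp})$, and since the reverse inclusion $f^{-1}(B^{\perp})\subseteq A^{\perp}$ always holds (the form on $A$ is the pullback of the form on $B$ along $f$), admissibility says exactly that $f$ carries radical vectors to radical vectors. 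In this language the three parts become statements about images of radicals, the radical of a tensor form, and the radical of a positive form.

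For part (1), I would take $g : A'' \to A$ admissible for $\alpha = f^{\ast}\varphi$ and $f : A \to B$ admissible for $\varphi$, and compose the image inclusions: $(fg)(A''^{\perp}) = f(g(A''^{\perp})) \subseteq f(A^{\perp}) \subseteq B^{\perp}$, the two containments being precisely the admissibility of $g$ and of $f$. The only bookkeeping is to confirm that the state on $A$ against which $g$ is admissible is indeed $f^{\ast}\varphi$, which is forced by $\alpha = f^{\ast}\varphi$ together with $(fg)^{\ast}\varphi = g^{\ast}\alpha$. This is a one-line chase needing nothing beyond the definition.

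For part (2), the first step is to identify the Hermitian form on $A\otimes A'$ induced by $(f\otimes g)^{\ast}(\varphi\otimes\psi) = (f^{\ast}\varphi)\otimes(g^{\ast}\psi)$ as the \emph{tensor product} of the forms on $A$ and $A'$; this is a direct computation from $(x\otimes x')^{\ast} = x^{\ast}\otimes x'^{\ast}$ and the definition of $\varphi\otimes\psi$. Lemma \ref{flatnesslemma} then computes the source radical as $(A\otimes A')^{\perp} = A^{\perp}\otimes A' + A\otimes(A')^{\perp}$, and I would apply $f\otimes g$ term by term: $f(A^{\perp})\otimes g(A') \subseteq B^{\perp}\otimes B'$ and $f(A)\otimes g((A')^{\perp})\subseteq B\otimes(B')^{\perp}$ by the admissibility of $f$ and $g$, with both landing in $(B\otimes B')^{\perp}$ by the \emph{easy} inclusion of Lemma \ref{flatnesslemma} for $B\otimes B'$. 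I expect this to be the main obstacle, not because the argument is long but because it is the only part that genuinely consumes the hard (flatness) direction of Lemma \ref{flatnesslemma}, and hence the field hypothesis: over a general base the map $M^{\ast}\otimes N^{\ast}\to (M\otimes N)^{\ast}$ need not be injective, the radical of the tensor form can be strictly larger than $A^{\perp}\otimes A' + A\otimes(A')^{\perp}$, and a radical vector may fail to decompose, breaking the term-by-term estimate. This is exactly the failure flagged in remark \ref{derivedremark}, so I would present tensor-admissibility as essentially dependent on working over a field.

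For part (3), I would invoke Lemma \ref{radicalisnullspace}: when $\varphi$ is positive so is $\alpha = f^{\ast}\varphi$ (pullback along a $\ast$-homomorphism preserves positivity), and for positive states the radical coincides with the null cone $\{x : \langle x,x\rangle = 0\}$. Given $a\in A^{\perp}$ one computes $\langle f(a),f(a)\rangle_{\varphi} = \varphi(f(a^{\ast}a)) = \alpha(a^{\ast}a) = \langle a,a\rangle_{\alpha} = 0$, whence $f(a)\in B^{\perp}$, giving admissibility. Equivalently, and more in the spirit of section \ref{constructionforpositivestates}, one can verify condition (5) of Proposition \ref{admissibilityproperties} directly: $f^{\ast}GNS(\varphi)$ has the same underlying positive-definite space as $GNS(\varphi)$, so its $A$-submodule $A\cdot\Omega_{\varphi}$ inherits a nondegenerate form and is a cyclic $\ast$-module representing $\alpha$. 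Either route is immediate, so this part carries no real difficulty beyond Lemma \ref{radicalisnullspace}.
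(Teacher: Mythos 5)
Your proposal is correct. Parts (1) and (3) coincide with the paper's own treatment: the paper declares (1) obvious by direct computation (your one-line radical chase is that computation) and settles (3) by precisely your appeal to lemma \ref{radicalisnullspace}; your alternative route through condition (5) of proposition \ref{admissibilityproperties} and the submodule $A\cdot\Omega_{\varphi}$ is the argument of theorem \ref{positiveuniversality}. For part (2), however, you take a genuinely different route from the paper's proof of this proposition. The paper does not chase radicals here: it computes $(f\otimes f')^{\ast}GNS(\varphi\otimes\varphi') = f^{\ast}GNS(\varphi)\otimes (f')^{\ast}GNS(\varphi')$ via the monoidal property (corollary \ref{gnsmonoidalproperty}), notes that each factor contains a cyclic module representing the relevant pullback state by criterion \ref{admissibilityproperties}(5), and concludes via the tensor product of cyclic modules (proposition \ref{cyclictensorproducts}) that the product of these submodules represents $\psi\otimes\psi'$, giving admissibility again by criterion (5). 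Your direct computation -- identify the induced form on $A\otimes A'$ as the tensor form, decompose the source radical by the hard direction of lemma \ref{flatnesslemma}, push forward term by term, and finish with the easy inclusion on the target -- is instead essentially verbatim the argument the paper deploys later for the more general proposition \ref{linearadmissibilityproperties}(2): since you nowhere use multiplicativity of $f$ and $g$, only $\ast$-linearity and the factorization of the pullback state, your proof extends at no extra cost to admissible $\ast$-linear maps, which the criterion-(5) argument cannot, as proposition \ref{admissibilityproperties} is established only for homomorphisms. What the paper's route buys is economy at this stage, reusing already-built module-theoretic machinery. Your diagnosis that the field hypothesis enters through the injectivity of $M^{\ast}\otimes N^{\ast}\to (M\otimes N)^{\ast}$ agrees with the paper's own remarks (remark \ref{derivedremark}, and the explicit comment on flatness in the proof of proposition \ref{linearadmissibilityproperties}); note only that the paper's proof of (2) does not evade this dependence either -- it is hidden in corollary \ref{gnsmonoidalproperty}, whose nondegeneracy claim for tensor products of $\ast$-modules rests on the same lemma.
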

\begin{proof}
1. is obvious by direct computation. 3. is obvious by proposition \ref{radicalisnullspace}.

To see 2. note that, since all modules over $\mathbb{C}$ are flat (being free), the tensor product of nondegenerate forms is nondegenerate. Consider a tensor of admissible maps $f \otimes f^{\prime} : \varphi \otimes \varphi^{\prime} \rightarrow \psi \otimes \psi^{\prime}$, over homomorphisms $f, f'$ of $\ast$-algebras, and compute $(f \otimes f^{\prime})^{\ast} GNS(\varphi \otimes \varphi^{\prime}) = f^{\ast}GNS(\varphi) \otimes (f^{\prime})^{\ast} GNS(\varphi^{\prime})$, using proposition \ref{gnsmonoidalproperty}. Both factors of the product contain cyclic modules representing $\psi$ and $\psi^{\prime}$, respectively, by proposition \ref{admissibilityproperties}(5). Therefore their tensor product -- a cyclic submodule of $(f \otimes f^{\prime})^{\ast} GNS(\varphi \otimes \varphi^{\prime})$ represents $\psi \otimes \psi^{\prime}$. So $f \otimes f^{\prime}$ is admissible by proposition \ref{admissibilityproperties}(5).
\end{proof}
\begin{remark}
If $f : \varphi \rightarrow \psi$ is a map in $\mathbf{Phys}_{r}$, and $\varphi$ is positive, then $\psi$ is as well. This makes proposition \ref{admissibilitypropertiestwo}(3) easier to apply.
\end{remark}
\begin{definition}\label{definitionofphysa}
We denote by $\mathbf{Phys} = \mathbf{Phys}_{a}$ the symmetric monoidal subcategory of $\mathbf{Phys}_{r}$ spanned by the admissible morphisms.
\end{definition}
$\mathbf{Phys}$ is well-defined by proposition \ref{admissibilitypropertiestwo}. Note that $\mathbf{Phys}_{p} \subseteq \mathbf{Phys}$ by proposition \ref{admissibilitypropertiestwo}(3).

\begin{theorem}\label{gnsconstruction}
The constructions
\begin{align*}
\varphi &\longmapsto GNS(\varphi)\\
f &\longmapsto GNS(f),
\end{align*}
for objects $\varphi \in \mathbf{Phys}$, and morphisms $f : \varphi \rightarrow \psi$ in $\mathbf{Phys}$, are part of a strong symmetric monoidal functor
\begin{displaymath}
GNS : \mathbf{Phys} \longrightarrow \ast \mathbf{Mod},
\end{displaymath}
fibered over $\ast\mathbf{Alg}$.
\end{theorem}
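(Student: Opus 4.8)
The plan is to take the object assignment $\varphi \mapsto GNS(\varphi) = \mathcal{O}(\varphi)/\mathcal{O}(\varphi)^{\perp}$ and the morphism assignment $GNS(f)([x]) = [\mathcal{O}(f)(x)]$ as given, and to check in turn: that $GNS(f)$ is a genuine morphism of $\ast$-modules over $\mathcal{O}(f)$; that $GNS$ respects identities and composition; that it lies over $\mathcal{O}$; and that it carries the monoidal data of $\mathbf{Phys}$ to that of $\ast\mathbf{Mod}$ coherently. The guiding principle, exactly as in the positive case of \S\ref{constructionforpositivestates}, is that \emph{cyclic modules are rigid}: by theorem \ref{representabilityconditions}(1) and corollary \ref{coherencedestroyer} there is at most one cyclic morphism between two cyclic modules over a fixed algebra map, so every equation between the maps I construct is automatic once both sides are exhibited as cyclic morphisms over the same homomorphism sending cyclic vector to cyclic vector. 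This collapses functoriality, naturality, and all coherence diagrams into a single observation, leaving only the honest verification that $GNS(f)$ is well defined and isometric.

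For a morphism $f : \varphi \to \psi$ in $\mathbf{Phys}$ write $\mathcal{O}(f) : A \to B$ with $A = \mathcal{O}(\psi)$, $B = \mathcal{O}(\varphi)$, so that $GNS(f) : A/A^{\perp} \to B/B^{\perp}$, $[x] \mapsto [\mathcal{O}(f)(x)]$. First I would record that this is well defined precisely because $f$ is admissible: $[x] = [x']$ means $x - x' \in A^{\perp} \subseteq \mathcal{O}(f)^{-1}(B^{\perp})$ by definition \ref{admissibilitydefinition}, so $\mathcal{O}(f)(x-x') \in B^{\perp}$. That $GNS(f)$ is isometric and $\mathcal{O}(f)$-linear then follows from $\mathcal{O}(f)$ being a unital $\ast$-homomorphism together with $\psi = f^{\ast}\varphi$: one computes $\langle [\mathcal{O}(f)x],[\mathcal{O}(f)y]\rangle_{\varphi} = \varphi(\mathcal{O}(f)(y^{\ast}x)) = \psi(y^{\ast}x) = \langle[x],[y]\rangle_{\psi}$ and $GNS(f)(a[x]) = [\mathcal{O}(f)(ax)] = \mathcal{O}(f)(a)\,GNS(f)([x])$. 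Hence $(\mathcal{O}(f), GNS(f))$ is a morphism of $\ast\mathbf{Mod}$, and it lies over $\mathcal{O}(f)$ by construction, giving $\pi \circ GNS = \mathcal{O}$ and the fiberedness claim. Preservation of identities is immediate from the formula; for composable admissible $f : \varphi \to \psi$ and $g : \psi \to \chi$ (admissible again by proposition \ref{admissibilitypropertiestwo}(1)) I note that $GNS(g\circ f)$ and $GNS(f)\circ GNS(g)$ are both cyclic morphisms $GNS(\chi) \to GNS(\varphi)$ over $\mathcal{O}(f)\circ\mathcal{O}(g)$ sending $\Omega_{\chi}$ to $\Omega_{\varphi}$, so they agree by corollary \ref{coherencedestroyer} (or directly from $[x]\mapsto[\mathcal{O}(f)\mathcal{O}(g)(x)]$).

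For the monoidal structure I would take the comparison to be the canonical identification $GNS(\varphi\otimes\psi) = GNS(\varphi)\otimes GNS(\psi)$ of corollary \ref{gnsmonoidalproperty}, under which the cyclic vector $\Omega_{\varphi\otimes\psi}$ corresponds to $\Omega_{\varphi}\otimes\Omega_{\psi}$, with the unit handled by $GNS(\mathrm{id}_{\mathbb{C}}) = \mathbb{C}/\mathbb{C}^{\perp} = \mathbb{C}$; this comparison is an isomorphism, so $GNS$ is strong. Naturality of the comparison in a pair of admissible morphisms $f : \varphi \to \varphi'$, $g : \psi \to \psi'$ amounts to the equality of $GNS(f\otimes g)$ and $GNS(f)\otimes GNS(g)$ (admissible by proposition \ref{admissibilitypropertiestwo}(2)): both are cyclic isometries over $\mathcal{O}(f)\otimes\mathcal{O}(g) = \mathcal{O}(f\otimes g)$ carrying $\Omega_{\varphi'}\otimes\Omega_{\psi'}$ to $\Omega_{\varphi}\otimes\Omega_{\psi}$, hence equal by corollary \ref{coherencedestroyer}. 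The associativity, unit, and symmetry coherence squares are likewise equalities of cyclic morphisms over one and the same algebra isomorphism fixing the cyclic vectors, so they follow from the same uniqueness.

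The main obstacle is, pleasantly, not where one might fear. No universality theorem like \ref{positiveuniversality} is available in the indefinite setting, so the construction cannot lean on initiality; the single genuine point is the well-definedness of $GNS(f)$, and this is exactly what admissibility (definition \ref{admissibilitydefinition}, proposition \ref{admissibilityproperties}) was arranged to supply, while closure of admissible maps under composition and tensor (proposition \ref{admissibilitypropertiestwo}) is what makes $\mathbf{Phys}$ a monoidal subcategory of $\mathbf{Phys}_{r}$ in the first place. Once these are in hand, the rigidity of cyclic modules does all the remaining work, so I expect the written proof to consist mostly of naming the unique cyclic morphisms involved and invoking corollary \ref{coherencedestroyer}, with no delicate estimates or analytic input required.
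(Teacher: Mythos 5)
Your proposal matches the paper's proof in both strategy and ingredients: the paper likewise takes well-definedness of $GNS(f)$ as settled by admissibility (proposition \ref{admissibilityproperties} and remark \ref{uniquenessremark}), gets strong monoidality from corollary \ref{gnsmonoidalproperty}, derives composition and every coherence diagram from the rigidity of cyclic morphisms via corollary \ref{coherencedestroyer}, and reads fiberedness off the explicit construction. The only difference is presentational --- you spell out the isometry and $\mathcal{O}(f)$-linearity computations that the paper leaves implicit in its construction of $GNS(f)$ --- so this is the same argument, correctly executed.
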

\begin{proof}
$GNS(f)$ is always cyclic, and hence preserves composition by corollary \ref{coherencedestroyer}. It is strong symmetric monoidal by corollary \ref{gnsmonoidalproperty}. All coherence diagrams commute by corollary \ref{coherencedestroyer}, since all the morphisms involved in these diagrams are obviously cyclic.

$GNS(f)$ is fibered over $\ast\mathbf{Alg}$ by its explicit construction.
\end{proof}

\subsubsection{The Covariant Representation}
Let $\ast \mathbf{Mod}_{p}$ denote the category pre-Hilbert $\ast$-modules, and $\ast\mathbf{Mod}_{adj}$ the category of Hilbert $\ast$-modules, with algebras acting by closable maps.
\begin{definition}
The covariant $GNS$ construction is the composite
\begin{center}
\begin{tikzpicture}
\node (a) at (0,0) {$\mathbf{Phys}_{p}$};
\node (b) at (4,0) {$\ast\mathbf{Mod}_{p}^{op}$};
\node (c) at (11,0) {$\ast\mathbf{Mod}_{adj}$};

\path[->] (a) edge node[auto] {$GNS^{op}$} (b)
		  (b) edge node[auto] {completion + adjoint} (c);
\end{tikzpicture}
\end{center}
It will be denoted by $GNS_{c}$.
\end{definition}
Thus $GNS_{c}(f)$ acts as the adjoint of $GNS(f)$ on the completion of the appropriate pre-Hilbert spaces. The existence of adjoints requires completeness, so we use it out of necessity. Topology does not internalize well, so this construction cannot reasonably be repeated in a topos (unlike its contravariant cousin, see section \ref{sins}). Despite this, it is the ``correct'' version for physical applications, as is evident in theorems \ref{dinaturality}, \ref{schrodingerpicture}, \ref{unitaryrepresentations} and section \ref{wavefunctioncollapsesection}.
\begin{theorem}\label{covariantgnsproperty}
$GNS_{c}$ is a symmetric monoidal functor $\mathbf{Phys}_{p} \rightarrow \ast\mathbf{Mod}_{adj}$ 
\end{theorem}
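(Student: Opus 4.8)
The plan is to present $GNS_c$ as a composite of symmetric monoidal functors, so that its symmetric monoidality becomes formal, and to concentrate the real work on the factor called ``completion $+$ adjoint''. The left factor $GNS^{op}$ is symmetric monoidal because it is the strong symmetric monoidal $GNS$ of Theorem \ref{gnsconstruction}, restricted to $\mathbf{Phys}_p$ and reread into $\ast\mathbf{Mod}_p^{op}$: since a state is positive exactly when its GNS space is pre-Hilbert, $GNS$ indeed lands in $\ast\mathbf{Mod}_p$, and Corollary \ref{gnsmonoidalproperty} supplies the coherent structure isomorphisms $GNS(\varphi\otimes\psi)\cong GNS(\varphi)\otimes GNS(\psi)$. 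Because a composite of symmetric monoidal functors is again symmetric monoidal, it suffices to prove that ``completion $+$ adjoint'' is a symmetric monoidal functor $\ast\mathbf{Mod}_p^{op}\to\ast\mathbf{Mod}_{adj}$.

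First I would treat completion as a functor $C$ from pre-Hilbert $\ast$-modules to Hilbert $\ast$-modules, $(A,M)\mapsto(A,\bar M)$. The one point needing care on objects is the module structure: each $a\in A$ acts on the dense subspace $M$ as an adjointable operator with adjoint the action of $a^*$, hence is closable on $\bar M$, which is precisely the kind of action permitted in $\ast\mathbf{Mod}_{adj}$. On morphisms, an isometric module homomorphism is bounded and extends uniquely by density to an isometry of completions, compatibly with identities and composition. Completion is moreover strong symmetric monoidal, since $M\otimes N$ sits densely in the Hilbert tensor product $\bar M\mathbin{\widehat{\otimes}}\bar N$ with matching inner products, yielding natural isomorphisms $\overline{M\otimes N}\cong\bar M\mathbin{\widehat{\otimes}}\bar N$ compatible with the associator, unitors and symmetry.

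Next I would handle the adjoint. Because the targets are complete and the extended maps are isometries, hence bounded, each $C(h)$ admits a genuine Hilbert-space adjoint $C(h)^*$ --- this is the only place completeness is essential, and the reason the construction does not internalize. By the proposition immediately following Definition \ref{adjointhomomorphism}, the adjoint of a homomorphism over $f$ is an adjoint homomorphism over $f$, and the adjoint of an isometry is a coisometry, so $C(h)^*$ is a bona fide morphism of $\ast\mathbf{Mod}_{adj}$. The relations $\mathrm{id}^*=\mathrm{id}$ and $(h_2h_1)^*=h_1^*h_2^*$ make $(-)^*$ an arrow-reversing functor, and the completed form of Lemma \ref{adjointlemma}, $(h_1\otimes h_2)^*=h_1^*\otimes h_2^*$, together with unitarity of the symmetry, makes it symmetric monoidal. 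Composing $C$ with $(-)^*$ gives the required symmetric monoidal functor $\ast\mathbf{Mod}_p^{op}\to\ast\mathbf{Mod}_{adj}$, and precomposing with $GNS^{op}$ finishes the argument; explicitly $GNS_c(f)=\overline{GNS(f)}^*$ points in the physical direction because the variance reversal built into $GNS^{op}$ is undone by taking the adjoint.

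I expect the single genuine obstacle to be analytic and confined to the completion step: checking that $C$ is strong monoidal as a functor into modules with closable actions, i.e.\ that the closable $A\otimes B$-action on $\overline{M\otimes N}$ agrees with the action induced on $\bar M\mathbin{\widehat{\otimes}}\bar N$, and that adjoints of tensored isometries are computed factorwise on the completed tensor product. Once these two Hilbert-space facts are secured, all the categorical bookkeeping --- functoriality, coherence and symmetry --- follows automatically from the composite structure.
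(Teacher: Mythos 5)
Your proposal follows exactly the paper's route: the paper defines $GNS_{c}$ as the composite of the symmetric monoidal $GNS^{op}$ with ``completion $+$ adjoint'' and proves the theorem in one line (``By definition, $GNS_{c}$ is a composite of such''), which is precisely your decomposition. The only difference is that you explicitly verify what the paper leaves implicit --- closability of the completed action, density of $M\otimes N$ in the completed tensor product, and the factorwise behavior of adjoints via Lemma \ref{adjointlemma} --- and these verifications are all correct.
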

\begin{proof}
By definition, $GNS_{c}$ is a composite of such.
\end{proof}
There is a more topological variant of this definition, the details of which we leave to the reader. Define a monoidal subfunctor $\mathcal{S}_{b} \subseteq \mathcal{S}_{p}$ consisting of those states $\varphi \in \mathcal{S}_{p}(A)$, for which $A$ acts by bounded operators on $GNS(\varphi)$. This can be expressed using only $\varphi$. Then the covariant representation can be defined on $\mathbf{Phys}_{b} = 1 \downarrow \mathcal{S}_{b}$, with codomain the ordinary Hilbert modules -- with algebras acting by bounded, not just closable maps. Note that, since $GNS(f)$ is always isometric, $GNS_{c}(f)$ will always be coisometric, and hence bounded.

\section{Computations and Examples}\label{examplesection}
\subsection{Dinaturality}
Let $f : \varphi \rightarrow \psi$ be a morphism in $\mathbf{Phys}$. We wish to gain a preliminary understanding of the map 
\begin{displaymath}
GNS(f) : GNS(\psi) \longrightarrow GNS(\varphi).
\end{displaymath}
To facilitate this comparison, we will make use of the natural map, which maps vectors in the GNS space to the obvious states which they represent:
\begin{align*}
GNS(\varphi) & \longrightarrow \mathcal{S}_{r}(\mathcal{O}(\varphi)) \\
v & \longmapsto s_{\varphi}(v) \\
s_{\varphi}(v) = a &\longmapsto \langle a v, v \rangle_{GNS(\varphi)}.
\end{align*}
The representability of $s_{\varphi}(v)$ is guaranteed by theorem \ref{representabilityconditions}(3). The maps $s_{\varphi}$ constitute a dinatural transformation \cite[IX.4]{maclane}.
\begin{theorem}\label{dinaturality}
Let $U : \ast \mathbf{Mod} \rightarrow \mathbf{Set}$ map each module to its underlying set of elements, and let $\mathcal{S}_{r} : \ast\mathbf{Alg}^{op} \rightarrow \mathbf{Set}$ map every algebra to its set of representable states. Then $s : U \circ GNS \rightarrow \mathcal{S}_{r} \circ \mathcal{O}$ is a dinatural transformation, meaning the following diagram commutes:
\begin{center}
\begin{tikzpicture}
\matrix (m) [matrix of math nodes, row sep = 1cm, column sep = 3cm, text height = 1.5ex, text depth = .25ex]{
GNS(\psi) & GNS(\varphi) \\
\mathcal{S}_{r}(\mathcal{O}(\varphi)) & \mathcal{S}_{r}(\mathcal{O}(\psi)), \\
};

\path[->] (m-1-1) edge node[auto] {$GNS(f)$} (m-1-2)
		  (m-1-1) edge node[auto, swap] {$s_{\psi}$} (m-2-1)
		  (m-2-2) edge node[auto, swap] {$\mathcal{S}_{r}(\mathcal{O}(f))$} (m-2-1)
		  (m-1-2) edge node[auto] {$s_{\varphi}$} (m-2-2);
\end{tikzpicture}
\end{center}
for every morphism $f : \varphi \rightarrow \psi$ in $\mathbf{Phys}$.
\end{theorem}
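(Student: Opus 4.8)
The plan is to unwind the statement into a single pointwise identity between states and then invoke the two defining features of $GNS(f)$. First I would pin down the variances. Since $GNS(f) : GNS(\psi) \to GNS(\varphi)$ and $\mathcal{S}_{r}(\mathcal{O}(f))$ is pullback of states along $\mathcal{O}(f) : \mathcal{O}(\psi) \to \mathcal{O}(\varphi)$, hence runs $\mathcal{S}_{r}(\mathcal{O}(\varphi)) \to \mathcal{S}_{r}(\mathcal{O}(\psi))$, the bifunctor $U \circ GNS$ on $\mathbf{Phys}^{op} \times \mathbf{Phys}$ is constant in one variable while $\mathcal{S}_{r} \circ \mathcal{O}$ is constant in the other. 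Consequently the dinaturality hexagon of \cite[IX.4]{maclane} degenerates: two of its six edges become identities, and the wedge condition for $f : \varphi \to \psi$ collapses to the single equation $\mathcal{S}_{r}(\mathcal{O}(f)) \circ s_{\varphi} \circ GNS(f) = s_{\psi}$, an identity of maps $GNS(\psi) \to \mathcal{S}_{r}(\mathcal{O}(\psi))$. I would state this explicitly as the genuine content of the square, both composites running from the top-left to the diagonally opposite corner.

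Next I would evaluate both sides on an arbitrary vector $v \in GNS(\psi)$ and then on an arbitrary observable $b \in \mathcal{O}(\psi)$. The right-hand side is $s_{\psi}(v)(b) = \langle bv, v\rangle_{GNS(\psi)}$ by definition of $s_{\psi}$. On the left, $s_{\varphi}(GNS(f)(v))$ is the state $a \mapsto \langle a\,GNS(f)(v), GNS(f)(v)\rangle_{GNS(\varphi)}$, and its pullback along $\mathcal{O}(f)$ sends $b$ to $\langle \mathcal{O}(f)(b)\,GNS(f)(v), GNS(f)(v)\rangle_{GNS(\varphi)}$. The whole claim therefore reduces to the identity $\langle \mathcal{O}(f)(b)\,GNS(f)(v), GNS(f)(v)\rangle_{GNS(\varphi)} = \langle bv, v\rangle_{GNS(\psi)}$, required for all $b$ and $v$.

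This identity I would settle using the two properties packaged into theorem \ref{gnsconstruction}: that $GNS(f)$ is a homomorphism of $\ast$-modules over $\mathcal{O}(f)$, i.e.\ $\mathcal{O}(f)(b)\,GNS(f)(v) = GNS(f)(bv)$, and that it is isometric, being a morphism in $\ast\mathbf{Mod}$. Substituting the module-homomorphism relation turns the left side into $\langle GNS(f)(bv), GNS(f)(v)\rangle_{GNS(\varphi)}$, and isometry rewrites this as $\langle bv, v\rangle_{GNS(\psi)}$, which is precisely the right side. Alternatively one can argue directly from the explicit formula $GNS(f)([x]) = [\mathcal{O}(f)(x)]$ together with $\langle [x], [y]\rangle_{\varphi} = \varphi(y^{\ast}x)$, using $\varphi \circ \mathcal{O}(f) = \psi$, but invoking functoriality is cleaner. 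I would also remark that representability of the intermediate state $s_{\varphi}(GNS(f)(v))$ and of its pullback is automatic from theorem \ref{representabilityconditions}(3) and the well-definedness of $\mathcal{S}_{r}$, so every term lives in the intended set.

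I do not anticipate a real obstacle: the result is essentially immediate once functoriality of $GNS$ is in hand. The only point requiring care is the bookkeeping of variances, in particular recognizing that the hexagon degenerates to a bigon, so that what must be checked is one equation rather than the commutativity of a genuine square; the computational core is then the two-line manipulation above, resting entirely on $GNS(f)$ being an isometric module homomorphism over $\mathcal{O}(f)$.
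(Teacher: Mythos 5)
Your proof is correct, and it takes a genuinely different route from the paper's, if a closely related one. You reduce the square to the single pointwise identity $\langle \mathcal{O}(f)(b)\,GNS(f)(v), GNS(f)(v)\rangle_{GNS(\varphi)} = \langle bv, v\rangle_{GNS(\psi)}$ and discharge it purely formally, using only that $GNS(f)$ is a morphism in $\ast\mathbf{Mod}$ over $\mathcal{O}(f)$ --- equivariance $GNS(f)(bv) = \mathcal{O}(f)(b)\,GNS(f)(v)$ plus isometry, both packaged in theorem \ref{gnsconstruction}. The paper instead argues concretely: it picks $v \in GNS(\psi)$, uses cyclicity to write $v = x\Omega$, reads off $GNS(f)(v) = [f(x)]$ from the explicit quotient construction, and computes both states as $a \mapsto \psi(x^{\ast}ax) = \varphi(f(x)^{\ast}f(a)f(x))$, so that the pullback along $f$ is visibly $s_{\psi}(v)$ --- essentially the ``alternative'' you mention in passing and set aside. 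Your version buys generality and robustness: it shows dinaturality of $s$ is automatic for \emph{any} transformation valued in isometric module homomorphisms over the underlying algebra maps, independently of how $GNS$ is constructed, and it needs no cyclicity. The paper's computation buys something else: by tracking the presentation $x\Omega \mapsto f(x)\Omega$ it exhibits how presentations of states transform, which is precisely what underwrites the remark following the theorem that $GNS(f)$ acts as $\mathcal{S}_{r}\mathcal{O}(f)^{-1}$ on suitably presented states; it is also the computational style that survives into section \ref{nonunitarygnssection}, where $GNS_{M}$ of a Markov map is neither isometric nor a module homomorphism and only the cyclic-vector identity of proposition \ref{overf} remains. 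Finally, your variance bookkeeping --- collapsing the dinaturality hexagon to the one equation $\mathcal{S}_{r}(\mathcal{O}(f)) \circ s_{\varphi} \circ GNS(f) = s_{\psi}$ --- matches the intended reading of the square (and quietly corrects the transposed bottom-row labels in the displayed diagram), so every term typechecks as you claim.
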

\begin{proof}
Let $v \in GNS(\psi)$. Since the GNS space is cyclic, there is an element $x \in A$ such that $x \Omega = v$. Thus $v$ represents the state $s_{\psi}(v)$ given by
\begin{displaymath}
a \mapsto \langle av, v \rangle_{GNS(\psi)} =  \psi(x^{\ast} a x) = \varphi(f(x)^{\ast} f(a) f(x)).
\end{displaymath}
To calculate $GNS(f)(v)$ we look at its explicit construction and find that
\begin{displaymath}
GNS(f)(v) = [f(x)] \in GNS(\varphi),
\end{displaymath}
and so $w = GNS(f)(v)$ represents the state $s_{\varphi}(w)$ given by
\begin{displaymath}
b \mapsto \langle b w, w \rangle_{GNS(\varphi)} = \varphi(f(x)^{\ast} b f(x)),
\end{displaymath}
whose pullback by $f$ is clearly $s_{\psi}(v)$.
\end{proof}
Thus $GNS(f)$ acts essentially as $\mathcal{S}_{r}\mathcal{O}(f)^{-1}$ on \emph{presentations} of states, which are \emph{presented} in such a way as to make this operation well-defined. Recklessly abusing notation, writing $f_{\ast} = GNS(f)$ and $f^{\ast} = \mathcal{S}_{r}\mathcal{O}(f)$, we can say
\begin{displaymath}
v = f^{\ast} f_{\ast} v.
\end{displaymath}
\begin{corollary}\label{dinaturalitycorollary}
Let $\mathcal{C} \subset \mathbf{Phys}$ be the category of those $f$ for which $GNS(f)$ is unitary. Then $s : U \circ GNS_{c} \rightarrow S_{r} \circ \mathcal{O}$ is a natural transformation of functors on $\mathcal{C}$.
\end{corollary}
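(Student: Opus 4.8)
The plan is to read the corollary off the dinaturality of Theorem \ref{dinaturality}, using the single extra fact that on $\mathcal{C}$ the covariant representation is a two-sided inverse of the contravariant one. The conceptual point is that $U \circ GNS$ is contravariant in $\mathbf{Phys}$ while $\mathcal{S}_{r} \circ \mathcal{O}$ is covariant, so $s$ can only be \emph{dinatural} (an extranatural wedge), and its defining square is exactly
\begin{displaymath}
s_{\psi} = \mathcal{S}_{r}(\mathcal{O}(f)) \circ s_{\varphi} \circ GNS(f)
\end{displaymath}
for every $f : \varphi \to \psi$. Replacing $GNS$ by $GNS_{c}$ flips the variance of the left-hand functor, so both functors become covariant and a genuine naturality square makes sense. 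Restricting to $\mathcal{C}$ is precisely what lets us identify $GNS_{c}(f)$ with the inverse of $GNS(f)$, which is the twist needed to upgrade ``dinatural'' to ``natural''.

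First I would record that $\mathcal{C}$ really is a subcategory: identities map to identities (which are unitary), and since $GNS$ is a functor by Theorem \ref{gnsconstruction} and composites of unitaries are unitary, $\mathcal{C}$ is closed under composition. Next, on $\mathcal{C}$ the map $GNS(f)$ is a surjective isometry, hence already invertible as a map of the underlying Hermitian (in the positive case, pre-Hilbert) spaces, so no completion is required and the ``completion $+$ adjoint'' recipe defining $GNS_{c}$ collapses to
\begin{displaymath}
GNS_{c}(f) = GNS(f)^{\ast} = GNS(f)^{-1} : GNS(\varphi) \longrightarrow GNS(\psi).
\end{displaymath}
In particular $GNS_{c}$ is a well-defined covariant functor on $\mathcal{C}$ whose objects may be taken to be the pre-Hilbert GNS spaces themselves, so that $U(GNS_{c}(\varphi)) = GNS(\varphi)$ is exactly the domain on which each $s_{\varphi}$ is already defined.

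The naturality square to verify, for $f : \varphi \to \psi$ in $\mathcal{C}$, is the identity of maps $GNS(\varphi) \to \mathcal{S}_{r}(\mathcal{O}(\psi))$ given by $s_{\psi} \circ GNS_{c}(f) = \mathcal{S}_{r}(\mathcal{O}(f)) \circ s_{\varphi}$. I would obtain it by composing the dinaturality identity above on the right with $GNS(f)^{-1} = GNS_{c}(f)$:
\begin{displaymath}
s_{\psi} \circ GNS_{c}(f) = \mathcal{S}_{r}(\mathcal{O}(f)) \circ s_{\varphi} \circ GNS(f) \circ GNS(f)^{-1} = \mathcal{S}_{r}(\mathcal{O}(f)) \circ s_{\varphi}.
\end{displaymath}
This is precisely the required equation, so $s$ is natural on $\mathcal{C}$.

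The only genuine subtlety — the step I expect to need the most care — is the identification $GNS_{c}(f) = GNS(f)^{-1}$ together with the bookkeeping of underlying sets. For a general morphism the covariant representation lives on completions, and $U \circ GNS_{c}$ would not obviously restrict to the domain of $s_{\varphi}$, since unbounded (merely closable) algebra actions obstruct extending $s_{\varphi}$ past the pre-Hilbert core. The hypothesis that $GNS(f)$ is unitary is exactly what permits staying inside the uncompleted spaces, making $GNS(f)^{-1}$ a literal inverse there and turning the dinatural wedge into a strict natural transformation. Everything else is formal unwinding of Theorem \ref{dinaturality}.
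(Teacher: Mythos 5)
Your argument is exactly the paper's proof: substitute $GNS_{c}(f) = GNS(f)^{-1}$, valid since unitarity means invertibility, into the dinaturality square of Theorem \ref{dinaturality} to turn it into a naturality square. Your additional remarks on closure of $\mathcal{C}$ under composition and on avoiding completions are correct elaborations of what the paper leaves implicit.
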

\begin{proof}
$GNS(f)$ is unitary iff it's invertible, and then $GNS_{c}(f) = GNS(f)^{-1}$. We can substitute this inverse into the dinaturality square above, obtaining a naturality square.
\end{proof}

\subsection{Antiunitary Processes}\label{antiunitaryprocesses}
Let $\mathbf{Vect}_{\mathbb{C}}$ be the category of complex vector spaces and linear maps between them. To accommodate antiunitary processes, such as time reversal \cite{roberts}, we will require the following device.
\begin{definition}
Let $V$ be a complex vector space. Its conjugate, $\overline{V}$, is defined by the universal property
\begin{displaymath}
\mathbf{Vect}_{\mathbb{C}}(\overline{V}, W) = \{\textnormal{Conjugate-linear maps } V \longrightarrow W\},
\end{displaymath}
for any complex vector space $W$.
\end{definition}
One easily proves that $\overline{V}$ exists, by direct construction. The sets underlying $V$ and $\overline{V}$ can be taken to coincide, and we will do so.
\begin{remark}\label{generalconjugationremark}
One can play this game for any endomorphism of any ring extension, not just complex conjugation on $\mathbb{C}/\mathbb{R}$.
\end{remark}
Since conjugation is the only automorphism of $\mathbb{C}$ over $\mathbb{R}$, we will abbreviate conjugate-linear to antilinear. The formal properties of vector space conjugation assemble into the following theorem
\begin{theorem}\label{conjugationtheorem}
Conjugation defines a symmetric monoidal, conjugate-closed, $\mathbf{Vect}_{\mathbb{C}}$-enriched involution
\begin{displaymath}
\overline{(-)} : \overline{\mathbf{Vect}_{\mathbb{C}}} \longrightarrow \mathbf{Vect}_{\mathbb{C}}.
\end{displaymath}
\end{theorem}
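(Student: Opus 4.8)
The plan is to exploit the one observation behind the construction of $\overline{V}$: the underlying sets of $V$ and $\overline{V}$ coincide, so every structure map in sight can be taken to be the identity on underlying sets, which collapses every coherence check to a triviality. Write $\kappa_{V} : V \to \overline{V}$ for the canonical conjugate-linear bijection (the identity on sets). Given a linear $f : V \to W$, the composite $\kappa_{W} \circ f$ is conjugate-linear, so the universal property of $\overline{V}$ factors it uniquely as $\overline{f} \circ \kappa_{V}$ with $\overline{f} : \overline{V} \to \overline{W}$ linear and $\overline{f} = f$ as a function. A one-line scalar computation shows $f \mapsto \overline{f}$ is conjugate-linear in $f$, so it becomes $\mathbb{C}$-linear precisely after conjugating the source hom-space — this is exactly what forces the domain to be $\overline{\mathbf{Vect}_{\mathbb{C}}}$ rather than $\mathbf{Vect}_{\mathbb{C}}$. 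Functoriality ($\overline{g \circ f} = \overline{g} \circ \overline{f}$, $\overline{\mathrm{id}} = \mathrm{id}$) and the $\mathbf{Vect}_{\mathbb{C}}$-enrichment then hold because all the maps involved are identities on underlying sets.

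Next I would exhibit the monoidal and closed constraints. For the tensor constraint, I would check that $(v,w) \mapsto \kappa_{V \otimes W}(v \otimes w)$ is $\mathbb{C}$-bilinear for the conjugate scalar actions on $\overline{V}$ and $\overline{W}$ — the conjugations on source and target cancel — yielding a natural isomorphism $\overline{V} \otimes \overline{W} \xrightarrow{\sim} \overline{V \otimes W}$; the unit constraint is ordinary conjugation $z \mapsto \overline{z}$, an isomorphism $\mathbb{C} \xrightarrow{\sim} \overline{\mathbb{C}}$. For the conjugate-closed structure I would use that $f \mapsto \overline{f}$ is a conjugate-linear bijection $[V,W] \to [\overline{V}, \overline{W}]$, equivalently a linear isomorphism $\overline{[V,W]} \xrightarrow{\sim} [\overline{V}, \overline{W}]$. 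The associativity, unit, symmetry, and closed-functor axioms are then diagrams of maps that are identities on underlying sets, hence commute automatically.

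For the involution, double conjugation restores the original scalar action strictly, so $\overline{\overline{V}} = V$ and $\overline{\overline{f}} = f$ on the nose; I would then verify that the monoidal and closed constraints compose to identities under double conjugation, exhibiting $\overline{(-)}$ as a strict involution. The conceptual source of all of this — and the cleanest way to organize the bookkeeping — is to recognize $\overline{(-)}$ as restriction of scalars along the involutive automorphism generating $Gal(\mathbb{C}/\mathbb{R})$, the same Galois action that appears in corollary \ref{commutativestaralgebras}: base change along a strong symmetric monoidal closed automorphism of the base is automatically a strong symmetric monoidal closed functor, and an involutive automorphism produces an involution. I expect the only genuine obstacle to be bookkeeping — correctly tracking the variance that forces the conjugated source $\overline{\mathbf{Vect}_{\mathbb{C}}}$ and pinning down precisely what ``conjugate-closed enriched involution'' asserts — rather than any substantive computation.
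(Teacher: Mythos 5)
Your proof is correct, and its substance agrees with the paper's: both arguments amount to unpacking the universal property of $\overline{V}$ until every check becomes trivial. But you organize it genuinely differently in three places. First, for the monoidal constraint the paper argues by representability: $\mathbb{R}$-bilinear forms antilinear in both variables are represented by both $\overline{V \otimes W}$ and $\overline{V} \otimes \overline{W}$, whence the isomorphism; you instead verify bilinearity of $(v,w) \mapsto \kappa_{V \otimes W}(v \otimes w)$ by hand. Your route is more elementary; the paper's is the one that transports painlessly to other settings (e.g.~a topos, where ``identity on underlying sets'' is not available). Relatedly, you correctly identify the unit constraint as $z \mapsto \overline{z} : \mathbb{C} \to \overline{\mathbb{C}}$ (the identity would be antilinear), a point the paper leaves implicit. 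Second, you promote to the backbone of the proof what the paper relegates to the remark following it (cf.~remark \ref{generalconjugationremark} and corollary \ref{commutativestaralgebras}): conjugation is restriction of scalars along the generator of $Gal(\mathbb{C}/\mathbb{R})$, and base change along a symmetric monoidal closed automorphism is automatically symmetric monoidal closed. This is the more general and arguably cleaner organization, and it is exactly how the paper itself explains the composability of conjugate-enriched functors via $\sigma$- and $\rho$-linearity. Third, you claim a \emph{strict} involution ($\overline{\overline{V}} = V$, $\overline{\overline{f}} = f$, constraints composing to identities), where the paper claims involutivity only up to coherent natural isomorphism. Under the paper's convention that $V$ and $\overline{V}$ share an underlying set, your strictness is defensible at the level of objects and morphisms, but be slightly careful at the level of monoidal constraints: $\overline{V} \otimes \overline{W}$ and $\overline{V \otimes W}$ are distinct quotient constructions whose underlying sets coincide only after the canonical identification, so ``the coherence maps are identities on underlying sets'' is mildly glib there, though the checks remain trivial. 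None of this is a gap; it is a trade of the paper's portability for your concreteness.
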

\begin{proof}
This is all trivial, as long as the terms are understood. We merely explain their meaning.

Since $\overline{V}$ is defined by a universal property, its existence automatically defines a functor
\begin{displaymath}
\mathbf{Vect}_{\mathbb{C}} \longrightarrow \mathbf{Vect}_{\mathbb{C}},
\end{displaymath}
with object function $V \mapsto \overline{V}$.

$\mathbb{R}$-bilinear forms can be antilinear (in both variables), and such forms are clearly represented by both $\overline{V \otimes W}$ and $\overline{V} \otimes \overline{W}$. Thus we have $\overline{V} \otimes \overline{W} = \overline{V \otimes W}$, making conjugation into a strong symmetric monoidal functor.

Since $\mathbf{Vect}_{\mathbb{C}}$ is monoidal closed, we can ask if conjugation is a closed functor. It's not, but the natural maps
\begin{align*}
\mathbf{Vect}_{\mathbb{C}}(V, W) &\longrightarrow \mathbf{Vect}_{\mathbb{C}}(\overline{V}, \overline{W}) \\
f &\longmapsto \overline{f}
\end{align*}
are antilinear, thus defining isomorphisms
\begin{displaymath}
\overline{\mathbf{Vect}_{\mathbb{C}}(V, W)} \rightarrow \mathbf{Vect}_{\mathbb{C}}(\overline{V}, \overline{W}).
\end{displaymath}
This is the meaning of conjugate-closed.

Since $\mathbf{Vect}_{\mathbb{C}}$ is symmetric monoidal closed, it is self-enriched, and since conjugation is symmetric monoidal we can extend the action of conjugation to $\mathbf{Vect}_{\mathbb{C}}$-enriched categories, such as $\mathbf{Vect}_{\mathbb{C}}$ itself, resulting in $\overline{\mathbf{Vect}_{\mathbb{C}}}$. Then conjugation is an enriched functor, as displayed in the statement of the theorem.

Such functors are rightfully called conjugate-enriched, and can be composed, just like contravariant functors. Conjugation thus understood is involutive (up to coherent natural isomorphism), since
\begin{displaymath}
\overline{\overline{V}} = V,
\end{displaymath}
which follows from the fact that an anti-antilinear map is just linear, since conjugation (on $\mathbb{C}$) is an involution.
\end{proof}
\begin{remark}
The last step of the proof shows the usefulness of the general perspective of remark \ref{generalconjugationremark}, utilizing the composition of $\sigma$- and $\rho$-linearity to $(\sigma \circ \rho)$-linearity.
\end{remark}
\begin{remark}\label{involutivityremark}
Due to the involutivity, $\overline{V}$ also represents antilinear maps into $V$.
\end{remark}

Theorem \ref{conjugationtheorem} allows us to conjugate essentially anything, in particular $\ast$-algebras and their modules. Extreme care must be taken, however, to distinguish conjugation of vector spaces and their maps and the function of complex conjugation on $\mathbb{C}$. Failure to do so will result in catastrophic error -- object types will stop matching.

As an example, let us conjugate a Hermitian form. The conjugate of
\begin{displaymath}
H \otimes \overline{H} \longrightarrow \mathbb{C}
\end{displaymath}
is
\begin{displaymath}
\overline{H} \otimes H \longrightarrow \overline{\mathbb{C}},
\end{displaymath}
which is not a Hermitian form, because $\overline{\mathbb{C}} \neq \mathbb{C}$ (even though they are canonically isomorphic). We correct this by composing with complex conjugation (the \emph{linear} function):
\begin{displaymath}
\overline{H} \otimes H \longrightarrow \overline{\mathbb{C}} \xrightarrow{\sigma} \mathbb{C}.
\end{displaymath}
The end result of this operation can be given by the following explicit formula:
\begin{displaymath}
\langle v, w \rangle_{\overline{H}} = \langle w, v \rangle_{H}.
\end{displaymath}

Conjugation of $\ast$-algebras presents no difficulties. Note that $\ast$ remains antilinear, by remark \ref{involutivityremark}. Moving on to $\ast$-modules consider a $\ast$-representation of $A$ on $H$, given by a $\ast$-homomorphism
\begin{displaymath}
A \longrightarrow \underline{End}(H),
\end{displaymath}
to the adjointable maps on $H$. We compute the conjugate of this representation:
\begin{displaymath}
\overline{A \longrightarrow \underline{End}(H)}
\end{displaymath}
as
\begin{displaymath}
\overline{A} \longrightarrow \underline{End}(\overline{H}),
\end{displaymath}
and note that conjugation maps adjointable maps in $H$ to adjointable maps in $\overline{H}$ (with respect to the conjugate form constructed above). This allows us to state the following proposition.
\begin{proposition}
Let $(H,v)$ represent $\varphi : A \rightarrow \mathbb{C}$. Then $(\overline{H}, v)$ represents $\overline{\varphi} : \overline{A} \rightarrow \mathbb{C}$
\end{proposition}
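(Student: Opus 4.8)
The plan is to prove this by unwinding the definitions of the conjugate module, the conjugate action, and the conjugate form, reducing everything to a single line of computation on the common underlying sets. First I would pin down what $\overline{\varphi}$ actually is as a concrete map $\overline{A} \to \mathbb{C}$. By Theorem \ref{conjugationtheorem} the functor $\overline{(-)}$ produces $\overline{\varphi} : \overline{A} \to \overline{\mathbb{C}}$, which agrees with $\varphi$ as a function of underlying sets; composing with the linear isomorphism $\sigma : \overline{\mathbb{C}} \to \mathbb{C}$ (complex conjugation, viewed as a \emph{linear} map) gives the map named in the statement, so that concretely $\overline{\varphi}(a) = \overline{\varphi(a)}$. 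I would also record that $(\overline{H}, v)$ is a bona fide $\ast$-module over $\overline{A}$: this is exactly the content of the paragraph preceding the proposition, where conjugation is seen to send $\underline{End}(H)$ into $\underline{End}(\overline{H})$, and the conjugate form is nondegenerate because it shares an underlying set with the original.

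The key step is then the computation. I would use two facts established above: the conjugate action satisfies $a \cdot_{\overline{H}} v = a \cdot_{H} v$ on underlying sets (the representation $\overline{A} \to \underline{End}(\overline{H})$ being $\overline{(-)}$ of the original, hence the same function), and the explicit formula $\langle x, y \rangle_{\overline{H}} = \langle y, x \rangle_{H}$. With these,
\begin{displaymath}
\langle a v, v \rangle_{\overline{H}} = \langle v, a v \rangle_{H} = \overline{\langle a v, v \rangle_{H}} = \overline{\varphi(a)} = \overline{\varphi}(a),
\end{displaymath}
where the middle equality is Hermitian symmetry of $\langle -, - \rangle_{H}$ and the last is the identification above. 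This is precisely the assertion that $(\overline{H}, v)$ represents $\overline{\varphi}$.

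Structurally, I regard this as an instance of the fact that $\overline{(-)}$ is a symmetric monoidal involution (Theorem \ref{conjugationtheorem}) carrying $\ast$-modules to $\ast$-modules: since representability is defined purely in terms of the module action and the Hermitian form, applying conjugation to the defining data of $(H, v)$ automatically yields representing data for the conjugate functional. The one-line computation above is just the bookkeeping that confirms this and fixes the $\sigma$-correction.

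I do not expect a genuine obstacle here; the only real hazard --- flagged by the paper itself --- is type-mismatch bookkeeping: conflating the linear map $\sigma : \overline{\mathbb{C}} \to \mathbb{C}$ with the antilinear operation of complex conjugation on scalars, or losing track of whether a given vector or scalar is being viewed in $H$ versus $\overline{H}$, and in $\mathbb{C}$ versus $\overline{\mathbb{C}}$. Keeping the underlying-set identifications explicit throughout, and applying $\sigma$ only at the very end, is what makes the computation honest rather than a notational sleight of hand.
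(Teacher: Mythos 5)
Your proof is correct and is essentially the paper's own argument: the paper's proof is exactly the chain $\langle av, v \rangle_{\overline{H}} = \langle v, av \rangle_{H} = \overline{\varphi(a)} = \overline{\varphi}(a)$, with the same implicit post-composition by $\sigma$ and the same identification of underlying sets. Your extra care in separating the linear map $\sigma : \overline{\mathbb{C}} \to \mathbb{C}$ from scalar conjugation, and in making the Hermitian-symmetry step explicit, only spells out bookkeeping the paper leaves tacit.
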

\begin{proof}
Calculate carefully. Note that $\overline{\varphi}$ is implicitly post-composed with conjugation, to make it a state on $\overline{A}$. For $a \in \overline{A}$ we have:
\begin{displaymath}
\langle av, v \rangle_{\overline{H}} = \langle v, av \rangle_{H} = \overline{\varphi(a)} = \overline{\varphi}(a).
\end{displaymath}
\end{proof}
\begin{remark}
Formally, one should write $\overline{a}$ for the action of $a \in \overline{A}$ on $\overline{H}$.
\end{remark}
\begin{corollary}\label{conjugaterepresentability}
$GNS(\overline{\varphi}) = \overline{GNS(\varphi)}$
\end{corollary}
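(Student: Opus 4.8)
The plan is to invoke the uniqueness clause of Theorem \ref{representabilityconditions}(1): since $GNS(\overline{\varphi})$ is characterized as the unique cyclic $\ast$-module representing $\overline{\varphi}$, up to unique cyclic isometry, it suffices to exhibit $\overline{GNS(\varphi)}$ as \emph{some} cyclic $\ast$-module representing $\overline{\varphi}$, and then let uniqueness supply the identification. The preceding proposition already does most of the work: it shows that if $(GNS(\varphi), \Omega)$ represents $\varphi$, then $(\overline{GNS(\varphi)}, \Omega)$ represents $\overline{\varphi}$, where $\Omega$ makes literal sense in both because $V$ and $\overline{V}$ share the same underlying set.

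First I would verify that $\overline{GNS(\varphi)}$ is a genuine $\ast$-module over $\overline{A}$. By Theorem \ref{conjugationtheorem} conjugation is symmetric monoidal and carries the representation $A \to \underline{End}(GNS(\varphi))$ to a $\ast$-representation $\overline{A} \to \underline{End}(\overline{GNS(\varphi)})$ -- this is precisely the computation performed just before the preceding proposition. It then remains only to check nondegeneracy of the conjugated Hermitian form. From the explicit formula $\langle v, w\rangle_{\overline{H}} = \langle w, v\rangle_{H}$, a vector $v$ lies in the radical of $\overline{GNS(\varphi)}$ iff $\langle w, v\rangle_{GNS(\varphi)} = 0$ for all $w$, i.e.\ iff $v$ lies in the radical of $GNS(\varphi)$, which is trivial. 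Hence $\overline{GNS(\varphi)}$ is nondegenerate, and so a legitimate $\ast$-module.

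Next I would check cyclicity of $\Omega$ in $\overline{GNS(\varphi)}$. Because conjugation leaves the underlying sets and set-level module actions untouched, the submodule $\overline{A}\cdot\Omega \subseteq \overline{GNS(\varphi)}$ coincides as a set with $A\cdot\Omega \subseteq GNS(\varphi)$, which is all of $GNS(\varphi)$ by cyclicity. Thus $\Omega$ is a cyclic vector for $\overline{GNS(\varphi)}$. Combining these three observations, $(\overline{GNS(\varphi)}, \Omega)$ is a cyclic $\ast$-module representing $\overline{\varphi}$, and Theorem \ref{representabilityconditions}(1) furnishes the canonical identification $GNS(\overline{\varphi}) = \overline{GNS(\varphi)}$.

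The only genuinely delicate point -- and the one I expect to require the most care -- is the bookkeeping forced by $\overline{\mathbb{C}} \neq \mathbb{C}$: one must remember that $\overline{\varphi}$ is $\varphi$ post-composed with the \emph{linear} complex-conjugation map $\overline{\mathbb{C}} \to \mathbb{C}$, and that the conjugated Hermitian form is corrected by this same map, exactly as in the discussion preceding the proposition. Once these type-matchings are respected, no analytic or computational input is needed; the argument is purely formal, reducing entirely to the universal characterization of the GNS space and the transport of module, form, and cyclicity data through conjugation.
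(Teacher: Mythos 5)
Your proof is correct and takes essentially the same route as the paper: both exhibit $(\overline{GNS(\varphi)}, \Omega)$ as a cyclic $\ast$-module representing $\overline{\varphi}$ via the preceding proposition and then obtain the identification from the uniqueness clause of theorem \ref{representabilityconditions}(1). The only difference is cosmetic -- where the paper cites theorem \ref{conjugationtheorem} to assert that conjugation preserves all categorically expressible structure (module structure, nondegeneracy, cyclicity), you verify those preservation facts by direct hand computation.
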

\begin{proof}
Theorem \ref{conjugationtheorem} says that all categorically expressible algebra is preserved by conjugation. So homomorphisms, cyclicity of modules and maps, and the like are preserved. Hence this is immediate by theorem \ref{representabilityconditions}(1) and the preceding proposition.
\end{proof}

The following theorem sets up the proper definition of antilinear processes. Let $\overline{(-)}$ denote conjugation appropriate to the objects it's applied to (theorem \ref{conjugationtheorem} gives meaning to all legitimate instances of this operation). Using this we can state the relation between conjugation and the $GNS$ representation.
\begin{theorem}
The following diagram of symmetric monoidal functors fibered over $\overline{(-)} : \ast\mathbf{Alg} \rightarrow \ast\mathbf{Alg}$ commutes
\begin{center}
\begin{tikzpicture}
\matrix (m) [matrix of math nodes, row sep = 1.5 cm, column sep = 2 cm, text height = 1.5ex, text depth = .25ex]{
\mathbf{Phys}^{op} & \ast\mathbf{Mod} \\
\mathbf{Phys}^{op} & \ast\mathbf{Mod} \\
};

\path[->] (m-1-1) edge node[auto] {$GNS$} (m-1-2)
		  (m-1-1) edge node[auto,swap] {$\overline{(-)}$} (m-2-1)
		  (m-2-1) edge node[auto] {$GNS$} (m-2-2)
		  (m-1-2) edge node[auto] {$\overline{(-)}$} (m-2-2);
\end{tikzpicture}
\end{center}
\end{theorem}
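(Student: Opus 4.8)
The plan is to reduce everything to the fact, emphasized in theorem \ref{conjugationtheorem}, that conjugation is the identity on underlying sets and on underlying set-functions: for a $\mathbb{C}$-linear $g : V \to W$ the conjugate $\overline{g} : \overline{V} \to \overline{W}$ is literally the same function of sets (it is linear for the twisted scalar multiplications precisely because $g$ was linear for the original ones). Once this is kept firmly in mind, the square is almost tautological, and the only genuine work is type-bookkeeping.

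First I would dispatch the object level. For a state $\varphi$ on $A$, traversing the square down-then-right yields $GNS(\overline{\varphi})$, while right-then-down yields $\overline{GNS(\varphi)}$; these agree by corollary \ref{conjugaterepresentability}. I would also record the compatible identification of cyclic vectors: since conjugation fixes underlying sets, the distinguished vector $\Omega_{\overline{\varphi}}$ of $GNS(\overline{\varphi})$ is literally $\Omega_{\varphi}$. This is what makes the two composite functors land on objects that are \emph{equal}, not merely isomorphic, so that on-the-nose commutativity is even meaningful.

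Next the morphism level. Let $f$ underlie a morphism $\varphi \to \psi$ of $\mathbf{Phys}$, writing the underlying algebra homomorphism also as $f$. On the one hand $\overline{GNS(f)}$ is, as a set-function, just $GNS(f)$ itself, namely $[x] \mapsto [f(x)]$. On the other hand $GNS(\overline{f})$ is computed from the conjugate homomorphism $\overline{f}$, which as a set-function is again $f$; by the explicit formula for $GNS$ on admissible maps it sends $[x] \mapsto [\overline{f}(x)] = [f(x)]$. So the two are the same function of sets between the (equal) spaces identified above, hence equal as morphisms of $\ast\mathbf{Mod}$. The one real verification here is that $\overline{f}$ is a \emph{bona fide} morphism of $\mathbf{Phys}$, i.e.\ that it is admissible for $\overline{\varphi}$ and satisfies $\overline{f}^{\,\ast}\overline{\varphi} = \overline{\psi}$; both follow by applying $\overline{(-)}$ to the corresponding facts for $f$, since admissibility and the state-preservation identity are expressed purely through Hermitian radicals and nondegeneracy, which theorem \ref{conjugationtheorem} transports.

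Finally, the structural assertions -- that each composite is symmetric monoidal and that the square is fibered over $\overline{(-)} : \ast\mathbf{Alg} \to \ast\mathbf{Alg}$ -- are inherited from the corresponding properties of $GNS$ (theorem \ref{gnsconstruction}) and of conjugation (theorem \ref{conjugationtheorem}), together with corollary \ref{gnsmonoidalproperty} for the monoidal comparison; these are again ``categorically expressible'' and so survive conjugation. I expect the main obstacle to be neither computation nor functoriality but the type discipline already flagged in the text: one must insert the correction $\sigma : \overline{\mathbb{C}} \to \mathbb{C}$ in exactly the right places when interpreting $\overline{\varphi}$ as a state on $\overline{A}$ and $\overline{H}$ as a Hermitian space, since conflating the conjugation \emph{functor} with the complex-conjugation \emph{map} $\sigma$ would make source and target types silently drift apart and the putative equalities cease to type-check.
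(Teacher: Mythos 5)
Your proposal is correct, but it takes a genuinely different route from the paper. The paper's own proof is a one-liner by universal property: theorem \ref{representabilityconditions}(1) makes the cyclic representing module unique, corollary \ref{conjugaterepresentability} identifies $GNS(\overline{\varphi})$ with $\overline{GNS(\varphi)}$, and corollary \ref{coherencedestroyer} guarantees there is \emph{at most one} cyclic map over a given algebra homomorphism -- so the two composites around the square, both being cyclic maps over $\overline{f}$ between the same cyclic modules, must coincide, and the same uniqueness clause disposes of all monoidal coherence diagrams at once. You instead unwind everything to the level of underlying sets, exploiting the convention (stated after the definition of $\overline{V}$) that conjugation fixes underlying sets and set-functions: you check $\Omega_{\overline{\varphi}} = \Omega_{\varphi}$ and that $GNS(\overline{f})$ and $\overline{GNS(f)}$ are literally the same function $[x] \mapsto [f(x)]$. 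This buys strictness -- commutativity on the nose rather than up to canonical isomorphism -- and it correctly surfaces the one verification the abstract argument silently absorbs, namely that $\overline{f}$ is again a morphism of $\mathbf{Phys}$ (admissible for $\overline{\varphi}$, with $\overline{f}^{\,\ast}\overline{\varphi} = \overline{\psi}$), which you rightly transport through theorem \ref{conjugationtheorem}. What the paper's route buys in exchange is brevity and portability: it never invokes ``same underlying set,'' so it survives contexts (such as the internalization of section \ref{sins}) where that convention is unavailable, and your closing appeal to ``categorically expressible structure surviving conjugation'' for the monoidal and fibered assertions is really the informal shadow of corollary \ref{coherencedestroyer}, which the paper cites explicitly to make that step airtight.
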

\begin{proof}
This, similarly to theorem \ref{gnsconstruction}, follows immediately from theorem \ref{representabilityconditions}(1) and corollaries \ref{conjugaterepresentability} and \ref{coherencedestroyer}.
\end{proof}
\begin{definition}
An antilinear process $\varphi \rightarrow \psi$ in $\mathbf{Phys}$ is defined as a map $\overline{\varphi} \rightarrow \psi$.
\end{definition}
By the preceding theorem, such processes are represented by antilinear isometries, as expected. Note that $\mathcal{O}(\overline{\varphi}) = \overline{\mathcal{O}(\varphi)}$, so that the observables are formally changed by conjugation.

\subsection{Normalization}\label{normalizationsection}
In this section we mitigate the oddity that states can satisfy $\varphi(1) \neq 1$. If a state satisfies $\varphi(1) = \lambda \neq 1$, we will call it $\lambda$-normalized, and just normalized otherwise. $0$-normalized states will be called isotropic. We omit the proofs in this section, since they are all trivial. Restating the results below for positive states is left to the reader.

We first analyze the states on the initial $\ast$-algebra, $\mathbb{C}$.
\begin{proposition}
All linear maps $\varphi : \mathbb{C} \rightarrow \mathbb{C}$ are representable.
\end{proposition}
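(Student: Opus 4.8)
The plan is to reduce the statement to the characterization of representability already in hand, namely Theorem~\ref{representabilityconditions}, which identifies the representable states with the $\ast$-linear functionals. Since $\mathbb{C}$ is the initial $\ast$-algebra, I would first pin down its $\ast$-operation: any conjugate-linear involutive anti-homomorphism fixing the unit is forced to be ordinary complex conjugation, so $z^{\ast} = \bar{z}$ on $\mathbb{C}$. With this identified, the whole proposition reduces to a one-line check on $\varphi(z) = \lambda z$, where $\lambda = \varphi(1)$, feeding directly into the $\ast$-linearity criterion of Theorem~\ref{representabilityconditions}(3).

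Rather than stop at the abstract criterion, I would also exhibit the representation explicitly, since the author advertises these proofs as trivial and a concrete model is illuminating. Following the GNS construction used to prove Theorem~\ref{representabilityconditions}, I would equip the regular module $\mathbb{C}$ with the induced form $\langle a, b \rangle_{\varphi} = \varphi(b^{\ast} a)$ of equation~(\ref{representationequation}), pass to the quotient $\mathbb{C}/\mathbb{C}^{\perp}$, and take $\Omega = [1]$ as the cyclic vector. When $\varphi(1) \neq 0$ the radical is trivial and $GNS(\varphi)$ is the one-dimensional Hermitian space $\mathbb{C}$; when $\varphi(1) = 0$ the functional vanishes identically, $\mathbb{C}^{\perp} = \mathbb{C}$, and $GNS(\varphi)$ is the zero module. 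In either case $\varphi(a) = \langle a\Omega, \Omega\rangle$ holds by construction, so this zero-or-one-dimensional model represents $\varphi$.

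The only genuine point to watch — the ``main obstacle'', such as it is — is that the induced form $\langle -, - \rangle_{\varphi}$ is really Hermitian, equivalently that the diagonal value $\langle \Omega, \Omega\rangle = \varphi(1)$ interacts correctly with the involution. This is precisely the $\ast$-linearity condition that Theorem~\ref{representabilityconditions} demands, and it is what governs the signature of the resulting form. Once the $\ast$-structure on $\mathbb{C}$ has been correctly identified as conjugation, this check is immediate, and uniqueness of the representing module up to a unique cyclic isometry then comes for free from Theorem~\ref{representabilityconditions}(1) together with Theorem~\ref{cyclicmodules}. I would therefore present the argument in three beats: identify $\ast$ on $\mathbb{C}$, invoke the representability criterion, and display the explicit rank-at-most-one model.
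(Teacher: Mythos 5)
You have the right skeleton---identify $z^{\ast} = \bar{z}$ as the unique $\ast$-structure on $\mathbb{C}$, then reduce to the criterion of theorem \ref{representabilityconditions}(3)---and this is surely the reduction the paper has in mind when it declares the proofs of this section trivial and omits them. But the step you call ``immediate'' is exactly where the argument breaks. Writing $\varphi(z) = \lambda z$ with $\lambda = \varphi(1)$, the $\ast$-linearity condition $\varphi(z^{\ast}) = \overline{\varphi(z)}$ reads $\lambda \bar{z} = \bar{\lambda}\bar{z}$ for all $z$, which holds if and only if $\lambda \in \mathbb{R}$. Correspondingly, the form in your explicit model, $\langle a, b \rangle_{\varphi} = \varphi(b^{\ast}a) = \lambda a \bar{b}$, is conjugate-symmetric only for real $\lambda$; for $\lambda \notin \mathbb{R}$ it is merely sesquilinear, so your rank-at-most-one space is not a Hermitian $\ast$-module in the paper's sense. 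More bluntly, for \emph{any} representing vector one has $\varphi(1) = \langle m, m \rangle$, which conjugate symmetry forces to be real (this is the remark after theorem \ref{representabilityconditions} that $\|m\|^{2} = \varphi(1)$), so no representation with $\varphi(1) \notin \mathbb{R}$ exists at all. Your three-beat argument therefore proves representability exactly for the functionals with real normalization (including $\lambda \leq 0$, where the form is indefinite or the module zero) and \emph{refutes}, rather than proves, the unrestricted statement you were given.

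In fairness, the tension is not of your making: since $1^{\ast} = 1$, theorem \ref{representabilityconditions} already forces $\varphi(1) = \overline{\varphi(1)}$ for every representable state on every algebra, so the proposition as printed, and the later components $\mathbf{Phys}_{\lambda}$ with $\lambda \notin \mathbb{R}$, can only survive if one relaxes conjugate symmetry of the Hermitian form---a relaxation the paper itself gestures at in the remark following theorem \ref{probabilisticinterpretation}, where complex measures are said to require replacing Hilbert spaces by quadratic complex spaces. The honest version of your proof is thus: identify $\ast$ on $\mathbb{C}$, observe that criterion \ref{representabilityconditions}(3) holds precisely when $\varphi(1) \in \mathbb{R}$, and then either restrict the claim to real normalization or pass to non-conjugate-symmetric sesquilinear forms, where your explicit model does represent every linear functional. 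Detecting this failure at the ``only genuine point to watch''---which you correctly located and then incorrectly waved through---was the one substantive demand of the exercise.
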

Note that the zero state is representable for all $\ast$-algebras, not just $\mathbb{C}$.
\begin{definition}
$I_{\lambda}$ is the unique state on $\mathbb{C}$ such that $\varphi(1) = \lambda$.
\end{definition}
Note that $I_{1}$ is the monoidal unit.
\begin{lemma}
$I_{\lambda} \otimes I_{\mu} = I_{\lambda \mu}$
\end{lemma}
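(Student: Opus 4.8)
The plan is to reduce everything to evaluation on the unit, exploiting that $\mathbb{C}$ is one-dimensional over itself. First I would recall that a $\mathbb{C}$-linear map $\mathbb{C} \to \mathbb{C}$ is completely determined by its value on $1$, so that $I_{\lambda}$ is well-defined and given explicitly by $I_{\lambda}(z) = \lambda z$. In particular, the uniqueness clause in the definition of $I_{\lambda}$ is precisely the assertion that \emph{any} state on $\mathbb{C}$ sending $1$ to $\lambda\mu$ must equal $I_{\lambda\mu}$; this is the hook on which the whole argument hangs.

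Next I would make the canonical identification $\mathbb{C} \otimes \mathbb{C} \cong \mathbb{C}$ explicit. Since unlabelled tensor products are taken over $\mathbb{C}$ (per the conventions of section \ref{preliminaries}), this is an isomorphism of $\ast$-algebras carrying $1 \otimes 1$ to $1$. Under this identification the tensor product state $I_{\lambda} \otimes I_{\mu}$ — defined via the lax monoidal structure of $\mathcal{S}_{r}$ from theorem \ref{srismonoidal} by the formula $(\varphi \otimes \psi)(a \otimes b) = \varphi(a)\psi(b)$ — becomes an ordinary state on $\mathbb{C}$, hence a candidate for some $I_{\nu}$.

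The computation is then immediate: evaluating on the unit gives
\[
(I_{\lambda} \otimes I_{\mu})(1 \otimes 1) = I_{\lambda}(1)\, I_{\mu}(1) = \lambda \mu.
\]
Thus $I_{\lambda} \otimes I_{\mu}$ is a state on $\mathbb{C}$ taking the value $\lambda\mu$ at $1$, and by the uniqueness recorded in the definition of $I_{\lambda\mu}$ it must coincide with $I_{\lambda\mu}$, as claimed.

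There is no genuine obstacle here; the only point demanding even slight care is keeping track of the identification $\mathbb{C} \otimes \mathbb{C} = \mathbb{C}$ and checking that it preserves units, so that evaluation at $1 \otimes 1$ agrees with evaluation at $1$. Everything else is nothing more than the defining multiplicativity of the tensor product of linear functionals. This is exactly why the author can responsibly relegate the proof to the blanket remark that the results of this subsection are ``all trivial.''
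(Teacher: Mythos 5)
Your proof is correct and is precisely the ``trivial'' argument the paper has in mind when it omits all proofs in the normalization section: evaluate at the unit, use that states on $\mathbb{C}$ are determined by their value at $1$, and track the unit-preserving identification $\mathbb{C} \otimes \mathbb{C} \cong \mathbb{C}$. Nothing to add.
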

There are no processes going between states of different normalizations.
\begin{lemma}
If $f : \varphi \rightarrow \psi$ is a morphism in $\mathbf{Phys}$, then $\varphi(1) = \psi(1)$.
\end{lemma}
We can now understand the various roles played by non-normalized states. Let $\mathbf{Phys}_{\lambda}$ be the full subcategory of $\mathbf{Phys}$ containing the $\lambda$-normalized states. 
\begin{theorem}\label{normalizationtheorem}\hspace{1em}
\begin{enumerate}
\item $\mathbf{Phys}$ is the disjoint union of the $\mathbf{Phys}_{\lambda}$:
\begin{displaymath}
\mathbf{Phys} = \coprod_{\lambda \in \mathbb{C}} \mathbf{Phys}_{\lambda}
\end{displaymath}
\item The monoidal structure on $\mathbf{Phys}$ restricts to
\begin{displaymath}
\otimes : \mathbf{Phys}_{\lambda} \times \mathbf{Phys}_{\mu} \longrightarrow \mathbf{Phys}_{\lambda \mu}.
\end{displaymath}
\item For $\lambda \neq 0$, $I_{\lambda} \otimes (-) : \mathbf{Phys}_{\mu} \rightarrow \mathbf{Phys}_{\lambda\mu}$ is an equivalence.
\item $I_{\lambda}$ is terminal in $\mathbf{Phys}_{\lambda}$.
\item $I_{0}\otimes (-)$ maps every state to a zero state.
\end{enumerate}
\end{theorem}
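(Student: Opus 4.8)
The plan is to handle the five parts as a short cascade: parts 1, 2 and 5 are bookkeeping on top of the two preceding lemmas, part 3 is invertibility of an object, and part 4 --- terminality --- is the only claim needing care.

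For part 1 I would use the universal property of the coproduct in $\mathbf{Cat}$. Every object $(A,\varphi)$ lies in $\mathbf{Phys}_{\lambda}$ for the single value $\lambda = \varphi(1)$, so the objects partition correctly; and the preceding lemma, that $f:\varphi\to\psi$ forces $\varphi(1)=\psi(1)$, says exactly that no morphism connects distinct components. Hence the functor sending $(A,\varphi)$ to its component is an isomorphism of categories onto $\coprod_{\lambda}\mathbf{Phys}_{\lambda}$. Part 2 is the one-line computation $(\varphi\otimes\psi)(1_{A\otimes B}) = \varphi(1_A)\psi(1_B) = \lambda\mu$, reading off the defining identity $\langle-\rangle_{\varphi\otimes\psi} = \langle-\rangle_{\varphi}\otimes\langle-\rangle_{\psi}$ at the unit $1_A\otimes 1_B$; this shows $\otimes$ lands in $\mathbf{Phys}_{\lambda\mu}$. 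Part 5 is the same computation with one factor equal to $I_0$: since $(I_0\otimes\varphi)(z\otimes a) = I_0(z)\varphi(a)=0$, the product is the zero state on $\mathbb{C}\otimes A\cong A$.

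For part 3, the lemma $I_{\lambda}\otimes I_{\mu}=I_{\lambda\mu}$ together with $I_1$ being the monoidal unit makes $I_{\lambda}$ an invertible object for $\lambda\neq 0$, with tensor-inverse $I_{1/\lambda}$. Tensoring with an invertible object is always an equivalence: the associator and unitors furnish natural isomorphisms $I_{1/\lambda}\otimes(I_{\lambda}\otimes(-))\cong(I_{1/\lambda}\otimes I_{\lambda})\otimes(-) = I_1\otimes(-)\cong\mathrm{id}$, and symmetrically for the other composite. These functors restrict along the component decomposition by part 2 --- and are functorial on $\mathbf{Phys}$ because admissibility is closed under $\otimes$ (proposition \ref{admissibilitypropertiestwo}) --- so $I_{\lambda}\otimes(-):\mathbf{Phys}_{\mu}\to\mathbf{Phys}_{\lambda\mu}$ is an equivalence with quasi-inverse $I_{1/\lambda}\otimes(-)$.

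Part 4 is where I expect the real work. Since $\mathbb{C}=\mathcal{O}(I_{\lambda})$ is the initial $\ast$-algebra, for each $\varphi\in\mathbf{Phys}_{\lambda}$ there is exactly one $\ast$-homomorphism $u:\mathbb{C}\to\mathcal{O}(\varphi)$, the unit, and it automatically satisfies $u^{\ast}\varphi = I_{\varphi(1)}=I_{\lambda}$; this pins down uniqueness. The genuine obstacle is admissibility of $u$ (definition \ref{admissibilitydefinition}), since $\mathbf{Phys}$ retains only admissible maps. The relevant radical is that of the form $\langle z,w\rangle = \lambda\bar{w}z$ induced by $I_{\lambda}$ on $\mathbb{C}$. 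For $\lambda\neq 0$ this form is nondegenerate, its radical is $\{0\}$, and admissibility holds vacuously, so $I_{\lambda}$ is terminal. The case $\lambda = 0$ is degenerate: the radical is all of $\mathbb{C}$, so admissibility of $u$ would force $1\in\mathcal{O}(\varphi)^{\perp}$, i.e.\ $\varphi=0$. I would therefore read the $\lambda=0$ statement either in $\mathbf{Phys}_r$, where no admissibility is imposed and terminality of $I_0$ follows from initiality alone, or under the positivity hypothesis, where $\varphi(1)=\|\Omega\|^2=0$ already forces $\varphi=0$ and $\mathbf{Phys}_0$ collapses to the zero states; this is the reading under which part 4 holds as stated.
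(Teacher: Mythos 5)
Your proposal is correct, and there is nothing in the paper to compare it against line by line: section \ref{normalizationsection} explicitly omits all proofs as ``trivial'', so you have supplied what the paper withholds. Parts 1, 2, 3 and 5 are the intended arguments: the lemma $\varphi(1)=\psi(1)$ for morphisms gives the coproduct decomposition, evaluation of $\varphi\otimes\psi$ at $1\otimes 1$ gives parts 2 and 5, and part 3 is the standard fact that tensoring with an invertible object ($I_{\lambda}\otimes I_{1/\lambda}=I_{1}$, with $I_{1}$ the unit) is an equivalence, restricted along part 2; your remark that the coherence isomorphisms and $\mathrm{id}\otimes f$ stay inside $\mathbf{Phys}$ is exactly the point covered by definition \ref{definitionofphysa} and proposition \ref{admissibilitypropertiestwo}. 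The real value of your write-up is part 4, where you catch a subtlety the paper's bare statement elides. The unique candidate morphism $\varphi\to I_{\lambda}$ is the unit map $u:\mathbb{C}\to\mathcal{O}(\varphi)$ (initiality of $\mathbb{C}$ plus $u^{\ast}\varphi=I_{\varphi(1)}$), and since $\mathbf{Phys}=\mathbf{Phys}_{a}$ retains only admissible maps, terminality hinges on $\mathbb{C}^{\perp}\subseteq u^{-1}(\mathcal{O}(\varphi)^{\perp})$ for the form $\langle z,w\rangle=\lambda\bar{w}z$. For $\lambda\neq 0$ this is vacuous, as you say; for $\lambda=0$ it forces $1\in\mathcal{O}(\varphi)^{\perp}$, hence $\varphi(y^{\ast})=0$ for all $y$ and so $\varphi=0$ by $\ast$-linearity. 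Since $\mathbf{Phys}_{0}$ does contain nonzero objects -- take $\varphi(a)=\langle am,m\rangle$ for an isotropic vector $m$ in an indefinite $\ast$-module, e.g.\ $M_{2}(\mathbb{C})$ acting on $\mathbb{C}^{2}$ with a signature $(1,1)$ form -- $I_{0}$ is terminal in $(\mathbf{Phys}_{r})_{0}$ but \emph{not} in $\mathbf{Phys}_{0}$ as literally stated. Your two proposed repairs are the right ones: read part 4 in $\mathbf{Phys}_{r}$, or under positivity, where $\varphi(1)=0$ forces $\varphi=0$ via lemma \ref{radicalisnullspace} (as $1\in A^{\perp}$) and the claim holds vacuously apart from zero states. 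This restriction is consistent with the paper's own description of $\mathbf{Phys}_{0}$ as a ``mysterious monoidal ideal'', but it is a genuine caveat to the theorem as printed, not merely to your proof.
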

Let $\mathbb{C}$ be the multiplicative monoid of complex numbers, considered as a discrete monoidal category.
\begin{corollary}\label{normalizationcorollary}
The functor $\mathbf{Phys} \rightarrow \mathbb{C}$ given by $\varphi \mapsto \varphi(1)$ is a symmetric monoidal fibration, trivial over $\mathbb{C}^{\ast} \subseteq \mathbb{C}$.
\end{corollary}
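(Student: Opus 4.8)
The plan is to read off every clause from Theorem \ref{normalizationtheorem} together with the two preceding lemmas, the discreteness of the base $\mathbb{C}$ collapsing almost all of the work. First I would dispose of functoriality: since $\mathbb{C}$ (as a discrete category) has only identity morphisms, the assignment $N(\varphi) = \varphi(1)$ determines a functor $N : \mathbf{Phys} \to \mathbb{C}$ as soon as every morphism $f : \varphi \to \psi$ satisfies $\varphi(1) = \psi(1)$ --- which is exactly the lemma that morphisms preserve the value at the unit. Then $N(f) = \mathrm{id}_{\varphi(1)}$, and functoriality is automatic. For the strong symmetric monoidal structure I would simply compute $N(\varphi \otimes \psi) = (\varphi \otimes \psi)(1 \otimes 1) = \varphi(1)\psi(1) = N(\varphi)\,N(\psi)$ from the independence condition defining $\varphi \otimes \psi$, together with $N(I_1) = I_1(1) = 1$. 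These are strict equalities, so all coherence cells are identities and $N$ is strict symmetric monoidal.

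Next I would observe that any functor into a discrete category is a Grothendieck fibration. All morphisms of $\mathbf{Phys}$ lie over identities by the lemma above, and an identity arrow $\mathrm{id}_\varphi$ is a cartesian lift of $\mathrm{id}_{\varphi(1)}$, since a morphism over an identity is cartesian precisely when it is invertible in its fibre. Concretely this is just the decomposition $\mathbf{Phys} = \coprod_{\lambda} \mathbf{Phys}_\lambda$ of Theorem \ref{normalizationtheorem}(1), exhibiting $N$ as the disjoint union of its fibres, the cartesian morphisms being exactly the isomorphisms internal to the individual $\mathbf{Phys}_\lambda$. To upgrade this to a \emph{symmetric monoidal} fibration (in the sense of \cite{vistoli}) I must check that $\otimes$ respects the fibration, which splits into two observations already in hand: Theorem \ref{normalizationtheorem}(2) says $\otimes$ sends $\mathbf{Phys}_\lambda \times \mathbf{Phys}_\mu$ into $\mathbf{Phys}_{\lambda\mu}$, matching multiplication in $\mathbb{C}$, and since cartesian arrows are fibrewise isomorphisms and the tensor of isomorphisms is an isomorphism, the tensor of cartesian arrows is cartesian.

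Finally, for triviality over $\mathbb{C}^{\ast}$ I would restrict to $N^{-1}(\mathbb{C}^{\ast}) = \coprod_{\lambda \neq 0} \mathbf{Phys}_\lambda$ and produce an equivalence of symmetric monoidal fibrations with the projection $\mathbb{C}^{\ast} \times \mathbf{Phys}_1 \to \mathbb{C}^{\ast}$. The trivializing functor is $T(\lambda, \varphi) = I_\lambda \otimes \varphi$: it lands in the correct fibre by Theorem \ref{normalizationtheorem}(2), it is a fibrewise equivalence by Theorem \ref{normalizationtheorem}(3) (each $I_\lambda \otimes (-) : \mathbf{Phys}_1 \to \mathbf{Phys}_\lambda$ being an equivalence for $\lambda \neq 0$), and it is symmetric monoidal because $I_\lambda \otimes I_\mu = I_{\lambda\mu}$ together with the symmetry of $\otimes$ furnishes a coherent isomorphism $(I_\lambda \otimes \varphi) \otimes (I_\mu \otimes \psi) \cong I_{\lambda\mu} \otimes (\varphi \otimes \psi)$. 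A fibrewise equivalence over a discrete base is an equivalence over that base, so $N|_{\mathbb{C}^{\ast}}$ is trivial.

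The only genuinely delicate point --- and where I would spend care --- is pinning down what ``symmetric monoidal fibration, trivial over $\mathbb{C}^{\ast}$'' should mean, since the discreteness of the base trivializes the Grothendieck condition and what remains is really the assertion that $\mathbf{Phys}$ is a $\mathbb{C}$-graded symmetric monoidal category whose restriction to the group $\mathbb{C}^{\ast}$ of invertible degrees is a split (product) grading. Once that reading is fixed, every clause is a direct citation of Theorem \ref{normalizationtheorem}; the one computation worth writing out is the monoidal coherence of the trivialization $T$, which rests on the \emph{strict} identity $I_\lambda \otimes I_\mu = I_{\lambda\mu}$ rather than a mere isomorphism.
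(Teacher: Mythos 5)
Your proof is correct and takes essentially the same approach the paper intends: the paper omits the argument as trivial, but the trivialization it records immediately after the corollary, $(\lambda, \varphi) \mapsto I_{\lambda} \otimes \varphi$, is exactly your functor $T$, and each clause you verify is the intended direct citation of Theorem \ref{normalizationtheorem} and the two preceding lemmas. Your extra care about cartesian arrows over a discrete base and the strictness of $I_{\lambda} \otimes I_{\mu} = I_{\lambda\mu}$ is sound and merely makes explicit what the paper leaves unsaid.
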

Thus $\mathbf{Phys}$ is monoidally equivalent to $\mathbf{Phys}_{0} + \mathbb{C}^{\ast} \times \mathbf{Phys}_{1}$, where $\mathbb{C}^{\ast}$ is the discrete monoidal category of nonzero complex numbers. The equivalence is given by the inclusion of $\mathbf{Phys}_{0}$ on the first term, and by $(\lambda, \varphi) \mapsto I_{\lambda} \otimes \varphi$ on the second.

We are left with only two interesting subcategories of $\mathbf{Phys}$: the monoidal subcategory $\mathbf{Phys}_{1}$, of normalized states, and the mysterious monoidal ideal $\mathbf{Phys}_{0}$, of isotropic states.

\subsection{Examples}
\subsubsection*{Commutative $C^{\ast}$-algebras and Positive States}
By the Riesz-Markov theorem any state $\varphi : C(X) \rightarrow \mathbb{C}$ is a Radon measure $\mu$ on $X$, with the identification being given by $\varphi(f) = \int_{X} f(x) \, d\mu(x)$. The Hermitian form $\langle -, - \rangle_{\varphi}$ is then given by
\begin{displaymath}
\langle f, g \rangle_{\varphi} = \int_{X} f(x) \bar{g}(x) \, d\mu(x),
\end{displaymath}
which is clearly the standard $L^{2}$ inner product, as long as $\varphi$ is positive. It is thus easy to see that $GNS(\varphi) \subseteq L^{2}(\mu)$ is the standard image of $C(X)$ in $L^{2}$. By Lusin's theorem the completion of $GNS(\varphi)$ is the whole of $L^{2}(\mu)$.

$GNS(\varphi)$ continues to be dense in $L^{2}(\mu)$ as long as we assume that $X$ is locally compact and $\sigma$-compact. If $\mu$ is not Radon, then we must assume that $X$ is metrizable. In general $GNS(\varphi)$ is the norm closure of $C_{0}(X)$ in $L^{2}(\mu)$. This norm closure can omit the constant functions, even when they are square-integrable with respect to $\mu$.

Now let $f : X \rightarrow Y$ be a continuous map, and set $\nu = f_{\ast} \mu$. What is $GNS(f) : GNS(\psi) \rightarrow GNS(\varphi)$, with $\psi = f^{\ast} \varphi$? By its explicit construction we see that it is simply pullback $f^{\ast} : L^{2}(\nu) \rightarrow L^{2}(\mu)$. The isometricity of $GNS(f)$ comes down to the adjunction formula:
\begin{displaymath}
\int_{X} f^{\ast} g \, d\mu = \int_{X} g \circ f \, d\mu = \int_{Y} g \, df_{\ast}\mu = \int_{Y} g \, d\nu.
\end{displaymath}
In this example $GNS_{c}(f)$ can be understood as integration along the fibers of $f$, or as the pushforward of measures having $\mu$-densities in $L^{2}$.
\subsubsection*{Endomorphisms of a pre-Hilbert Space}
Let $V$ be a pre-Hilbert space. Any $v \in V$ determines a state $\varphi_{v} : \underline{End}(V) \rightarrow \mathbb{C}$ by the formula
\begin{displaymath}
\varphi_{v}(f) = \langle f(v), v \rangle_{V}.
\end{displaymath}
Clearly $v \in V$ represents $\varphi_{v}$. If $v$ is nonzero, then $(V, v)$ is a cyclic $\ast$-module for $\underline{End}(V)$ by proposition \ref{hilbertproposition}. So by the uniqueness clause in theorem \ref{representabilityconditions} we have $GNS(\varphi_{v}) = V$. We have already encountered this example in the proof of corollary \ref{stinespringcorollary}.

It is worth recalling remark \ref{boundednessremark} here: if $V$ is a Hilbert space, then by uniform boundedness the adjointable maps $\underline{End}(V)$ are exactly the bounded ones.
\section{Recovering Traditional Physics, part I}\label{physics1}
We now start recovering the classical formalism of physics. In this section we consider only the notions which do not require the use of differential calculus. This shortcoming can be remedied by internalization (cf.~section \ref{sins}).
\subsection{Lifting the Schr{\"o}dinger Picture}\label{pictureequivalence}
We have hitherto been working firmly in the Heisenberg picture, using algebras and their homomorphisms to represent physics. This is the more fundamental picture, due to classical mechanics. Here take the first steps toward recovering the Schr{\"o}dinger picture. We can already attach morphisms of Hilbert spaces to homomorphisms of algebras, through the GNS functor. We now investigate how much of this can be reversed.

Let $H$ be a faithful pre-Hilbert $\ast$-module over $A$, and let $U : H \rightarrow H^{\prime}$ be an adjointable isometric linear map to some other pre-Hilbert space. Set $B = UAU^{\ast} \subseteq \underline{End}(H^{\prime})$. Then the map $f : A \rightarrow B$ given by
\begin{displaymath}
f(a) = UaU^{\ast},
\end{displaymath}
is a homomorphism of $\ast$-algebras. The map $f$ is well-defined by the faithfulness of $H$. Note that $B$ is a $\ast$-algebra, but not a subalgebra of $\underline{End}(H^{\prime})$ unless $U$ is unitary.

Recall that every vector $\psi \in H$ defines a state $s(\psi) \in \mathcal{S}_{r}(A)$ by the formula
\begin{displaymath}
s(\psi)(a) = \langle a\psi, \psi \rangle_{H}.
\end{displaymath}
In the following theorem we abuse notation, and write $\psi$ for both the vector and the state it represents. This will not cause confusion, since one can recover the proper meaning by analyzing the types of our expressions.
\begin{theorem}[Lifting of the Schr{\"o}dinger Picture]\label{schrodingerpicture}
In the situation above, for any state $\psi \in H$ we have $f : U\psi \rightarrow \psi$ in $\mathbf{Phys}$, and $GNS(f) = U\vert_{GNS(\psi)}$, i.e.~the following diagram commutes:
\begin{center}
\begin{tikzpicture}
\node (a) at (0,0) {$GNS(\psi)$};
\node (b) at (5, 0) {$GNS(U\psi)$};
\node (c) at (0,-1.6) {$H$};
\node (d) at (5,-1.6) {$H^{\prime}$};
\path[->] (a) edge node[auto] {$GNS(f)$} (b)
		  (c) edge node[auto] {$U$} (d);
		  
\path[{Hooks[right]}->] (a) edge (c)
					  (b) edge (d);
\end{tikzpicture}
\end{center}
\end{theorem}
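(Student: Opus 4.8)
The plan is to first verify that $f\colon U\psi \to \psi$ is a genuine morphism of $\mathbf{Phys}$, and then to identify $GNS(f)$ with the restriction of $U$ by invoking the uniqueness of cyclic representations. Both states in play are positive: for $a \in A$ we have $\psi(a^{\ast}a) = \langle a\psi, a\psi\rangle_{H} \geq 0$ because $H$ is pre-Hilbert, and the same computation in $H'$ shows $U\psi$ is positive. Hence all the pre-Hilbert hypotheses hold, and by proposition \ref{admissibilitypropertiestwo}(3) admissibility will come for free once $f$ is known to be a morphism of $\mathbf{Phys}_{r}$.

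The one computation I would isolate at the outset is that an adjointable isometry satisfies $U^{\ast}U = \mathrm{id}_{H}$: from $\langle U^{\ast}Uv, w\rangle = \langle Uv, Uw\rangle = \langle v, w\rangle$ and nondegeneracy of the form on $H$. Using this, the morphism condition $(U\psi)\circ f = \psi$ follows from the direct calculation
\[
(U\psi)(f(a)) = \langle U a U^{\ast} U\psi,\, U\psi\rangle_{H'} = \langle U a\psi,\, U\psi\rangle_{H'} = \langle a\psi,\, \psi\rangle_{H} = \psi(a),
\]
where the middle equality is $U^{\ast}U = \mathrm{id}$ and the last is isometry of $U$. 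Thus $f$ is a morphism of $\mathbf{Phys}_{r}$, hence of $\mathbf{Phys}$.

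Next I would make the GNS spaces concrete. By theorem \ref{positiveuniversality} the cyclic $\ast$-submodule generated by a representing vector is already the GNS space, so I may take $GNS(\psi) = A\psi \subseteq H$ with cyclic vector $\psi$, and $GNS(U\psi) = B(U\psi) \subseteq H'$ with cyclic vector $U\psi$. The crux is then the module-map identity
\[
U(av) = U a U^{\ast} U v = (U a U^{\ast})(Uv) = f(a)\,Uv \qquad (v \in H),
\]
again using $U^{\ast}U = \mathrm{id}$. This shows $U$ carries $A\psi$ into $B(U\psi)$ and that the restriction $U|_{GNS(\psi)}\colon GNS(\psi)\to GNS(U\psi)$ is $\mathbb{C}$-linear, isometric (as $U$ is), a homomorphism over $f$, and sends the cyclic vector $\psi$ to the cyclic vector $U\psi$.

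Finally I would close by uniqueness. Both $GNS(f)$ and $U|_{GNS(\psi)}$ are cyclic isometric module homomorphisms $GNS(\psi)\to GNS(U\psi)$ lying over $f$, so by the uniqueness clause of theorem \ref{representabilityconditions}(1) (equivalently corollary \ref{coherencedestroyer}) they coincide, which is precisely the commutativity of the displayed square. The main obstacle is purely bookkeeping: correctly extracting $U^{\ast}U = \mathrm{id}$ from ``adjointable isometry'' and verifying the identity $U(av) = f(a)Uv$; once these are in hand everything else is formal and rests on the universal properties already established.
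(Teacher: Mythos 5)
Your proof is correct and follows essentially the same route as the paper's: identify $GNS(\psi) = A\psi$ and $GNS(U\psi) = UA\psi$ via theorem \ref{positiveuniversality}, verify $f^{\ast}(U\psi) = \psi$ by the isometry computation, and conclude $GNS(f) = U\vert_{GNS(\psi)}$ from the uniqueness of cyclic maps over $f$ (corollary \ref{coherencedestroyer}). The only difference is that you spell out steps the paper leaves implicit, such as extracting $U^{\ast}U = \mathrm{id}_{H}$ from adjointability plus nondegeneracy and checking the module identity $U(av) = f(a)Uv$, which is a harmless (indeed helpful) expansion of the same argument.
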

\begin{proof}
By the pre-Hilbert condition and theorem \ref{positiveuniversality}, $GNS(\psi) = A\psi \subseteq H$, with $\psi$ seen as a state on $A$, and $GNS(U\psi) = UAU^{\ast}U\psi = UA\psi$, with $U\psi$ seen as a state on $B$. Finally, by the construction of $f$, $U$ restricts to a cyclic morphism $GNS(\psi) \rightarrow GNS(U\psi)$ over $f$. So if $GNS(f)$ maps $\psi$ to $U\psi$, we will be done, invoking corollary \ref{coherencedestroyer}. But this is obvious, since by the isometricity of $U$ we have
\begin{displaymath}
f^{\ast}(U\psi)(a) = \langle f(a) U\psi, U\psi \rangle_{H^{\prime}} = \langle UaU^{\ast}U\psi, U\psi \rangle_{H^{\prime}} = \langle a\psi, \psi \rangle_{H} = \psi(a).
\end{displaymath}
$GNS(f)$ must then map $\psi$ to $U\psi$, by its construction for positive states in section \ref{constructionforpositivestates}.
\end{proof}
Perhaps the following is the more natural statement.
\begin{corollary}\label{schrodingerpicturecorollary}
If $H$ and $H^{\prime}$ are Hilbert spaces, and $U$ is unitary, then setting $F(b) = U^{\ast} b U$, for $b$ in some given $B$, yields $F : \psi  \rightarrow U\psi$ in $\mathbf{Phys}$, and $GNS_{c}(F) = U\vert_{GNS(\psi)}$, i.e.~the following diagram commutes (we take $A$ to be $U^{\ast}BU$):
\begin{center}
\begin{tikzpicture}
\node (a) at (0,0) {$GNS(\psi)$};
\node (b) at (5, 0) {$GNS(U\psi)$};
\node (c) at (0,-1.6) {$H$};
\node (d) at (5,-1.6) {$H^{\prime}$};
\path[->] (a) edge node[auto] {$GNS_{c}(F)$} (b)
		  (c) edge node[auto] {$U$} (d);
		  
\path[{Hooks[right]}->] (a) edge (c)
					  (b) edge (d);
\end{tikzpicture}
\end{center}
\end{corollary}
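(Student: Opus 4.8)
The plan is to obtain this statement as a formal consequence of Theorem~\ref{schrodingerpicture}, by recognizing $F$ as the inverse of the homomorphism $f$ produced there and transporting everything through functoriality of $GNS$. First I would arrange the hypotheses of that theorem. Given $B \subseteq \underline{End}(H^{\prime})$ and a unitary $U : H \to H^{\prime}$, set $A = U^{\ast} B U$. Then $A$ acts on $H$ by $a \cdot h = U^{\ast} b U h$ (where $a = U^{\ast} b U$), and this action is faithful because $U$ is bijective and $B$ already acts faithfully on $H^{\prime}$; thus $H$ is a faithful pre-Hilbert $A$-module. For this $A$ the homomorphism of the theorem is $f : A \to B$, $f(a) = U a U^{\ast}$, and $U A U^{\ast} = B$ since $U U^{\ast} = 1$. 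Theorem~\ref{schrodingerpicture} therefore gives $f : U\psi \to \psi$ in $\mathbf{Phys}$ together with $GNS(f) = U\vert_{GNS(\psi)}$.

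Next I would identify $F$ as the inverse of $f$ in $\mathbf{Phys}$. Since $U$ is unitary, $f$ and $F$ are mutually inverse $\ast$-isomorphisms: $f(F(b)) = U U^{\ast} b U U^{\ast} = b$ and $F(f(a)) = a$. Both $\psi$ and $U\psi$ are positive (being vector states on Hilbert spaces), so every algebra map between them is admissible by Proposition~\ref{admissibilitypropertiestwo}(3); consequently $f$ is an isomorphism in $\mathbf{Phys}$ and $F$ is its inverse, the state-compatibility $F^{\ast}\psi = U\psi$ following automatically. One may also check the latter directly: $(F^{\ast}\psi)(b) = \langle U^{\ast} b U \psi, \psi\rangle_{H} = \langle b U\psi, U\psi\rangle_{H^{\prime}} = (U\psi)(b)$, using that $U$ is an isometry. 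Functoriality (Theorem~\ref{gnsconstruction}) then forces $GNS(F) = GNS(f)^{-1}$.

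Finally I would compute $GNS_{c}(F)$. The map $GNS(f) = U\vert_{GNS(\psi)}$ is an isometric bijection $GNS(\psi) \to GNS(U\psi)$ which, $U$ being unitary, completes to a unitary isomorphism; in particular $GNS(F) = GNS(f)^{-1}$ is invertible. By Corollary~\ref{dinaturalitycorollary} this yields $GNS_{c}(F) = GNS(F)^{-1} = GNS(f) = U\vert_{GNS(\psi)}$, which is precisely the asserted identity and makes the displayed square commute. Equivalently, arguing straight from the definition of $GNS_{c}$ as completion-plus-adjoint, $GNS_{c}(F) = GNS(F)^{\ast} = (GNS(f)^{-1})^{\ast} = GNS(f)$, the last equality holding because the adjoint of a unitary equals its inverse.

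I expect no genuine difficulty here; the only point demanding care is the variance bookkeeping — confirming that $F$ points the right way in $\mathbf{Phys}$ (as the inverse of $f$, not $f$ itself), and observing that the adjoint appearing in the definition of $GNS_{c}$ coincides with the inverse precisely because $U$, and hence $GNS(f)$, is unitary. Everything else is immediate from the machinery already assembled.
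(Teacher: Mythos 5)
Your proof is correct, but it runs the paper's one-line argument in reverse: the paper simply observes that Theorem \ref{schrodingerpicture} applies \emph{directly to $U^{\ast}$} (taking $H^{\prime}$ as the faithful $B$-module, $U^{\ast} : H^{\prime} \rightarrow H$ as the isometry, and $U\psi$ as the state), so that the theorem's homomorphism $b \mapsto U^{\ast} b U$ \emph{is} $F$, yielding $F : \psi \rightarrow U\psi$ in $\mathbf{Phys}$ and $GNS(F) = U^{\ast}\vert_{GNS(U\psi)}$ in a single stroke; the claim then follows from the definition of $GNS_{c}$ as completion-plus-adjoint. You instead instantiate the theorem at $U$, obtaining $f : U\psi \rightarrow \psi$ with $GNS(f) = U\vert_{GNS(\psi)}$, and then transport through the inverse. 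This costs you extra verifications that the paper's instantiation gets for free: that $F$ is actually a morphism of $\mathbf{Phys}$ (state compatibility plus admissibility via Proposition \ref{admissibilitypropertiestwo}(3)) and that functoriality (Theorem \ref{gnsconstruction}) forces $GNS(F) = GNS(f)^{-1}$. What your route buys in exchange is worth noting: it makes explicit that $f$ and $F$ are mutually inverse in $\mathbf{Phys}$ and that $GNS(f)$ is genuinely unitary (so adjoint $=$ inverse), tying the corollary cleanly to Corollary \ref{dinaturalitycorollary}. One small citation quibble: the identity $GNS_{c}(F) = GNS(F)^{-1}$ that you attribute to Corollary \ref{dinaturalitycorollary} appears inside that corollary's \emph{proof} rather than in its statement, but your fallback computation $GNS_{c}(F) = GNS(F)^{\ast} = (GNS(f)^{-1})^{\ast} = GNS(f)$ settles the point straight from the definition. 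Your variance bookkeeping is right throughout, and both routes rest on the same implicit hypothesis, which you correctly surface, that $B$ acts faithfully on $H^{\prime}$.
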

\begin{proof}
Theorem \ref{schrodingerpicture} is applicable to $U^{\ast}$, and gives $GNS(F) = U^{\ast}$. The claim follows by the definition of $GNS_{c}$.
\end{proof}
This is the primary reason for considering the $GNS_{c}$ construction. Its formal properties are in all other respects inferior to those of the $GNS$ functor, since it internalizes poorly, and the analog of theorem \ref{dinaturality} requires invertibility, as seen in corollary \ref{dinaturalitycorollary}.
\begin{remark}\label{coisometryremark}
It is tempting to change the hypothesis in the corollary to ``$U$ is a coisometry'', but this cannot be done due to normalization -- one cannot lift maps connecting vectors (i.e.~states) of different normalizations, by the results of section \ref{normalizationsection}. We will address this issue in section \ref{nonunitarygnssection}.
\end{remark}

We leave the reader wondering about the naturality and uniqueness of the lift constructed in theorem \ref{schrodingerpicture}.
\begin{problem}
Let $\mathbf{Phys}_{fa} \subseteq \mathbf{Phys}_{p}$ be the category of faithful states, with morphisms $f$ such that $GNS(f)$ is adjointable. Is the composite
\begin{displaymath}
\mathbf{Phys}_{fa} \xrightarrow{GNS} \ast\mathbf{Mod} \longrightarrow \textnormal{pre-}\mathbf{Hilb},
\end{displaymath}
where the last arrow is the forgetful functor, an opfibration? The category $\textnormal{pre-}\mathbf{Hilb}$ is the category of pre-Hilbert spaces and adjointable isometric maps.
\end{problem}
In other words: is $f$ in theorem \ref{schrodingerpicture} uniquely determined, and $B$ its minimal codomain? This is obvious if we restrict our attention to unitary maps.
%opfibration problem:
%we have: f(a)U = Ua
%but we need f(a) = UaU*
\subsection{Probability, Wave Functions, and Eigenvalues}
Let $\mathbf{Prob}_{L}$ be the category of probability spaces, and measurable, probability preserving maps between them. We will denote such spaces by $(X, \mu)$, where $\mu$ is the probability measure on $X$.

Let $L^{\infty} : \mathbf{Prob}_{L} \rightarrow \mathbf{Phys}_{p}$ assign to each space $(X, \mu)$ the $\ast$-algebra $L^{\infty}(\mu)$, with $L^{\infty}(f) : L^{\infty}(\nu) \rightarrow L^{\infty}(\mu)$, for $f : (X, \mu) \rightarrow (Y, \nu)$, being given by the pullback of functions along $f$. The algebra $L^{\infty}(\mu)$ is equipped with the expectation value state $\mathbb{E}_{\mu} : L^{\infty}(\mu) \rightarrow \mathbb{C}$, given by
\begin{displaymath}
\mathbb{E}_{\mu}(f) = \int_{X} f(x) \, d\mu(x).
\end{displaymath}
$L^{\infty}$ is clearly lax monoidal.

Next, let $LL^{2} : \mathbf{Prob}_{L}^{op} \rightarrow \ast\mathbf{Mod}$ assign to each probability space $(X, \mu)$ the image of $L^{\infty}(\mu)$ in $L^{2}(\mu)$. This functor is also easily seen to be lax monoidal.

Similarly, let $\mathbf{Prob}_{C}$ be the category of compact Radon probability spaces, and continuous probability preserving maps between them. Let $C : \mathbf{Prob}_{C} \rightarrow \mathbf{Phys}_{p}$ be the functor which assigns to each space $X$ the $\ast$-algebra $C(X)$ of complex-valued continuous functions on $X$, with $C(f)$, for $f : X \rightarrow Y$, being again given by pullback of functions along $f$. As before, $C(X)$ is equipped with the expectation value state.

Finally, let $CL^{2}: \mathbf{Prob}_{C}^{op} \rightarrow \ast \mathbf{Mod}$ assign to each Radon space $(X, \mu)$ the image of $C(X)$ in $L^{2}(\mu)$. Like before, this functor is lax monoidal.

\paragraph{}The probabilistic interpretation of quantum theory is based upon theorems of the following form.
\begin{theorem}\label{probabilisticinterpretation}
The following diagrams commute up to natural monoidal isomorphisms, fibered over $\ast \mathbf{Alg}$:
\begin{center}
\begin{tikzpicture}
\matrix (m) [matrix of math nodes, nodes in empty cells,  row sep = 1cm, column sep = 3cm, text height = 1.5ex, text depth = .25ex]{
\mathbf{Prob}_{L}^{op} \\
& \ast \mathbf{Mod} \\
\mathbf{Phys}^{op} \\
};

\path[->] (m-1-1) edge node[auto] {$LL^{2}$} (m-2-2)
		  (m-1-1) edge node[auto, swap] {$(L^{\infty})^{op}$} (m-3-1)
		  (m-3-1) edge node[auto, swap] {$GNS$} (m-2-2);
\end{tikzpicture}
\end{center}
\begin{center}
\begin{tikzpicture}
\matrix (m) [matrix of math nodes, nodes in empty cells,  row sep = 1cm, column sep = 3cm, text height = 1.5ex, text depth = .25ex]{
\mathbf{Prob}_{C}^{op} \\
& \ast \mathbf{Mod} \\
\mathbf{Phys}^{op} \\
};

\path[->] (m-1-1) edge node[auto] {$CL^{2}$} (m-2-2)
		  (m-1-1) edge node[auto, swap] {$C^{op}$} (m-3-1)
		  (m-3-1) edge node[auto, swap] {$GNS$} (m-2-2);
\end{tikzpicture}
\end{center}
\end{theorem}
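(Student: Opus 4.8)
The plan is to verify each triangle separately on objects and on morphisms, and then to observe that the resulting comparison respects the monoidal and fibered structure. The single workhorse throughout is the uniqueness clause of Theorem \ref{representabilityconditions}(1): a representable state has a cyclic $\ast$-module representing it, unique up to a \emph{unique} cyclic isometry. Consequently it suffices, for each probability space, to exhibit a concrete cyclic representation of the expectation state and to recognize it as the claimed $L^2$-image; all comparison maps are then forced, and all coherence is automatic by Corollary \ref{coherencedestroyer}.

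First, the object level. Fix $(X,\mu)$ and set $\varphi = \mathbb{E}_\mu$ on $A = C(X)$ (respectively $A = L^\infty(\mu)$). The Hermitian form induced by $\varphi$ via equation \ref{representationequation} is
\begin{displaymath}
\langle f, g\rangle_\varphi = \varphi(g^\ast f) = \int_X \bar g\, f \, d\mu,
\end{displaymath}
which is exactly the $L^2(\mu)$ inner product restricted to $A$. Since $\varphi$ is positive, Lemma \ref{radicalisnullspace} identifies $A^\perp$ with the functions satisfying $\int |f|^2\,d\mu = 0$, i.e.\ those vanishing $\mu$-a.e. Hence $GNS(\varphi) = A/A^\perp$ is precisely the image of $A$ in $L^2(\mu)$, namely $CL^2(X)$ (resp.\ $LL^2(X)$), with cyclic vector $\Omega = [1]$ representing $\varphi$ because $\langle f\cdot 1, 1\rangle = \int f\,d\mu = \varphi(f)$. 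Uniqueness in Theorem \ref{representabilityconditions}(1) supplies a canonical isometric identification over the identity of $A$, which is what ``fibered over $\ast\mathbf{Alg}$'' requires.

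Next, the morphism level. For $f : (X,\mu) \to (Y,\nu)$ with $\nu = f_\ast\mu$, the induced algebra map is pullback of functions $x \mapsto x\circ f$; it is a map of positive states, hence admissible by Proposition \ref{admissibilitypropertiestwo}(3). Its GNS representation is computed by the explicit formula $[x] \mapsto [x\circ f]$, i.e.\ pullback of $L^2$-classes, and its isometry is the change-of-variables identity $\int_X (x\circ f)\,d\mu = \int_Y x\,d\nu$ already recorded among the examples. This is by definition the action of $LL^2(f)$ (resp.\ $CL^2(f)$), so the two composites agree on morphisms; naturality of the object isomorphisms then follows from Corollary \ref{coherencedestroyer}, since every map in sight is cyclic and a cyclic map over a fixed homomorphism is unique. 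Monoidality is similar: $GNS$ is strong monoidal by Corollary \ref{gnsmonoidalproperty}, the isomorphism $L^2(\mu\otimes\nu) \cong L^2(\mu)\otimes L^2(\nu)$ carries $1\otimes 1$ to the cyclic vector of the product state, and the two structure isomorphisms coincide by the same uniqueness argument.

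I expect the only genuinely delicate point to be the $L^\infty$ (measurable) case: one must check that $GNS(\mathbb{E}_\mu)$ is the \emph{image} of $L^\infty(\mu)$ in $L^2(\mu)$ and not all of $L^2(\mu)$, and that no completion is implicitly invoked, since the target $\ast\mathbf{Mod}$ consists of pre-Hilbert modules. Because $GNS(\varphi)$ is literally $A/A^\perp$, this is exactly what the construction delivers, so the worry dissolves. The continuous case is genuinely trivial, as sketched in the introduction; everything else is bookkeeping enforced by the uniqueness of cyclic representations.
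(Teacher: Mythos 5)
Your proof is correct and takes essentially the same route as the paper: exhibit the image of the algebra in $L^{2}(\mu)$ as a cyclic $\ast$-module in which the constant function $1$ represents $\mathbb{E}_{\mu}$, then let the uniqueness clause of theorem \ref{representabilityconditions}(1) together with corollary \ref{coherencedestroyer} force the identification, its naturality, and all coherence; your added explicitness (computing $A^{\perp}$ via lemma \ref{radicalisnullspace}, and the morphism-level formula via admissibility, proposition \ref{admissibilitypropertiestwo}(3)) just spells out what the paper leaves implicit. One small correction to the monoidality step: in this algebraic, pre-Hilbert setting there is no isomorphism $L^{2}(\mu\otimes\nu)\cong L^{2}(\mu)\otimes L^{2}(\nu)$ --- since $C(X)\otimes C(Y)$ is a proper subalgebra of $C(X\times Y)$ (likewise for $L^{\infty}$), the functors $LL^{2}$ and $CL^{2}$ are only \emph{lax} monoidal, with structure map the inclusion of the image of the algebraic tensor product --- so what must be verified is that your object identifications commute with these lax structure maps; this follows from corollary \ref{coherencedestroyer} exactly as you argue, because the structure maps are cyclic (they send $[1]\otimes[1]$ to $[1]$) over a common algebra homomorphism, which is also how the paper's remark says the monoidal structures on $LL^{2}$ and $CL^{2}$ can be produced within the proof itself.
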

\begin{proof}
The $LL^{2}$ and $CL^{2}$ functors take values in cyclic modules and cyclic maps. In both cases the constant function $1$ represents the expectation value:
\begin{displaymath}
\langle f 1, 1 \rangle_{L^{2}} = \int_{X} f(x) \, d\mu(x) = \mathbb{E}_{\mu}(f).
\end{displaymath}
Thus, by theorem \ref{representabilityconditions} and corollary \ref{coherencedestroyer}, $LL^{2}$ and $CL^{2}$ coincide with $GNS$ up to unique isomorphism, which then must be natural by cyclicity.
\end{proof}
\begin{remark}\hspace{1em}
\begin{itemize}
\item The monoidal structures on $LL^{2}$ and $CL^{2}$ can also be constructed as part of the proof of the above theorem.
\item We can also use the algebras $L = \bigcap_{p \geq 1} L^{p}$ to represent probability measures. This is usually bigger than $L^{\infty}$ due to, for example, Gaussian random variables, and is not a Banach space in general. The resulting GNS space is not the $L^{2}$ space of the probability measure, and, for general reasons, $L$ cannot act on in by bounded operators.
\item The theorem remains true if we replace probability measures by finite signed measures. Complex measures, on the other hand, cannot be accommodated. One would need to replace Hilbert spaces by quadratic complex spaces.
\end{itemize}
\end{remark}

It is important to understand that the above theorem is only one of a huge family of theorems. The category of probability spaces can be replaced by any number of similar categories, and we have only given diagrams for the two most important cases. The proof always come down to the same simple argument: the $L^{2}$ space contains an obvious representation of the state in question.

This diversity is the result of our liberal approach. $\mathbf{Phys}_{p}$ contains, inadvertently in some sense, various categories of structured $\ast$-algebras, such as $C^{\ast}$-algebras, von Neumann algebras, and $\ast$-algebras of purely algebraic origin. The reader wishing to distinguish them must merely consider a variant of the construction of $\mathbf{Phys}$, suiting the specific application.

\subsubsection{Eigenvalue-Eigenvector Link}
Here is a prototypical application of theorems of this sort. Let $a \in \mathcal{O}(\varphi)$ be a normal observable of some positive state $\varphi$. Normality means that $[a, a^{\ast}] = 0$ or, equivalently, that the $\ast$-algebra generated by $a$ in $\mathcal{O}(\varphi)$ is commutative. One imagines this algebra, denoted by $\langle a \rangle$, to be the algebra of functions on some probability space, with the probability measure given by the restriction of $\varphi$ to $\langle a \rangle$. This gives an object $P_{\varphi}(a)$ in $\mathbf{Phys}_{p}$.

Typically $P_{\varphi}(a)$ can be completed into some algebra in the image of $L^{\infty}$ or $C$. One then has the following theorem.

\begin{theorem}[Eigenvalue-Eigenvector Link]\label{eelink}
Suppose that the canonical map $\varphi \rightarrow P_{\varphi}(a)$, induced by the inclusion $\langle a \rangle \subseteq \mathcal{O}(\varphi)$, admits a factorization
\begin{displaymath}
\varphi \xrightarrow{R} (C(X), \mathbb{E}_{\mu}) \longrightarrow P_{\varphi}(a),
\end{displaymath}
or
\begin{displaymath}
\varphi \xrightarrow{R} (L^{\infty}(X, \mu), \mathbb{E}_{\mu}) \longrightarrow P_{\varphi}(a),
\end{displaymath}
with the second arrow being over an inclusion $\langle a \rangle \subseteq C(X)$ or $\langle a \rangle \subseteq L^{\infty}(X, \mu)$.

Then $a$ is canonically a random variable on $X$ and the following are equivalent for any $\lambda \in \mathbb{C}$:
\begin{enumerate}
\item $a\Omega_{\varphi} = \lambda \Omega_{\varphi}$
\item $a = \lambda$ almost everywhere on $X$
\item $\mathbb{P}(a = \lambda) = 1$
\end{enumerate}
\end{theorem}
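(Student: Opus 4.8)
The plan is to feed the assumed factorization into Theorem \ref{probabilisticinterpretation} and let functoriality do the rest; in effect the factorization hypothesis stands in for the spectral theorem, which we have declined to use. Both cases ($C(X)$ and $L^{\infty}(X,\mu)$) can be handled uniformly, since $CL^{2}$ and $LL^{2}$ are both covered by that theorem. First I would apply the $GNS$ functor to the first arrow $R : \varphi \to (C(X), \mathbb{E}_{\mu})$ of the factorization, producing a morphism of $\ast$-modules
\[
GNS(R) : GNS(C(X), \mathbb{E}_{\mu}) \longrightarrow GNS(\varphi)
\]
lying over the algebra map $\mathcal{O}(R) : C(X) \to \mathcal{O}(\varphi)$. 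By Theorem \ref{probabilisticinterpretation} the source is canonically $CL^{2}(X)$, the image of $C(X)$ in $L^{2}(\mu)$, with cyclic vector the constant function $1$; and since every morphism produced by the $GNS$ functor is cyclic (cf.~the proof of Theorem \ref{gnsconstruction}), $GNS(R)$ sends $1 \mapsto \Omega_{\varphi}$.

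Next I would pin down the identification of $a$ with a random variable. The second arrow of the factorization realizes $\langle a \rangle$ inside $C(X)$, turning $a$ into a genuine continuous function on $X$ -- this is exactly the sense in which ``$a$ is canonically a random variable''. Because the composite $\langle a \rangle \hookrightarrow C(X) \xrightarrow{\mathcal{O}(R)} \mathcal{O}(\varphi)$ is the original inclusion, $\mathcal{O}(R)$ carries the function $a$ back to the element $a \in \mathcal{O}(\varphi)$. The defining property of a module homomorphism over $\mathcal{O}(R)$ then yields the one identity the whole proof rests on,
\[
GNS(R)(a \cdot 1) = \mathcal{O}(R)(a)\, GNS(R)(1) = a\,\Omega_{\varphi},
\]
and likewise $GNS(R)(\lambda \cdot 1) = \lambda\,\Omega_{\varphi}$, where $a \cdot 1$ is simply the function $a$ viewed in $L^{2}(\mu)$.

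To finish, I would chain the equivalences. By the two identities, $a\Omega_{\varphi} = \lambda\Omega_{\varphi}$ holds iff $GNS(R)\big((a-\lambda)\cdot 1\big) = 0$. Since $\varphi$ is positive, both $GNS(\varphi)$ and $CL^{2}(X)$ are pre-Hilbert and $GNS(R)$, being isometric, is injective; hence this vanishing is equivalent to $(a-\lambda)\cdot 1 = 0$ in $L^{2}(\mu)$, i.e.~$a = \lambda$ $\mu$-almost everywhere. This gives $(1) \Leftrightarrow (2)$, and $(2) \Leftrightarrow (3)$ is the definitional restatement $\mu(a \neq \lambda) = 0 \Leftrightarrow \mathbb{P}(a = \lambda) = 1$. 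The argument is almost entirely formal once Theorem \ref{probabilisticinterpretation} is in hand; the only genuinely delicate points -- and thus where I would focus care -- are the two notational identifications that must be kept straight (that $1$ really is the $GNS$ cyclic vector, so $GNS(R)(1) = \Omega_{\varphi}$, and that $\mathcal{O}(R)$ returns the element $a$ when fed the function $a$), together with the injectivity of $GNS(R)$, which leans essentially on positivity of $\varphi$. For indefinite forms injectivity would fail, which is presumably why this theorem is stated in the positive setting.
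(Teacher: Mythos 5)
Your proposal is correct and follows essentially the same route as the paper's proof: compute $GNS(R)$ via theorem \ref{probabilisticinterpretation}, use the fact that it is a cyclic module homomorphism over $\mathcal{O}(R)$ (which restricts to the inclusion $\langle a \rangle \subseteq \mathcal{O}(\varphi)$) to translate $a\Omega_{\varphi} = \lambda\Omega_{\varphi}$ into $a \cdot 1 = \lambda \cdot 1$ in $L^{2}(\mu)$, and finish by basic measure theory. One small correction to your closing aside: injectivity of $GNS(R)$ follows from isometry together with nondegeneracy of the GNS form, which holds by construction even for indefinite states, so positivity is really needed not for injectivity but to guarantee that $R$ is admissible, i.e.~that $GNS(R)$ exists at all (proposition \ref{admissibilitypropertiestwo}).
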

\begin{proof}
The equivalence $2 \Leftrightarrow 3$ is obvious. To see the equivalence $1\Leftrightarrow2$ compute $GNS(R)$:
\begin{displaymath}
CL^{2}(X, \mu) \longrightarrow GNS(\varphi)
\end{displaymath}
or
\begin{displaymath}
LL^{2}(X, \mu) \longrightarrow GNS(\varphi).
\end{displaymath}
These are morphisms of cyclic modules representing $\varphi$ for $a$. Thus $a\Omega_{\varphi} = \lambda \Omega_{\varphi}$ is equivalent to $a \cdot 1 = \lambda \cdot 1$ in $CL^{2}$ or $LL^{2}$, where $1$ is the constant function on $X$. But this last condition is equivalent to $a = \lambda$ a.e.~by basic measure theory.
\end{proof}
\begin{remark}
The statement of this theorem is slightly awkward, again, due to our liberal inclusion of any kind of $\ast$-algebra in our categories. In the setting of pure $C^{\ast}$-algebras on can give a much sharper statement, using the full $L^{2}$ space and not requiring a given factorization (since it can always be constructed by spectral theory).
\end{remark}

\subsubsection*{Digression: $GNS^{p}$ and the massless 2d quantum scalar field}
Theorem \ref{probabilisticinterpretation} suggests that the GNS construction is the noncommutative analogue of the $L^{2}$ space. It is well known that the massless quantum scalar field in 2 dimensions cannot be defined in the same manner as in higher dimensions \cite[$\mathsection$1.5]{witten}. One wonders whether the field ``really does not exist'' or, as Witten's constructions suggest, is merely located outside the ``$L^{2}$-realm''. This leads to the following problem.
\begin{problem}\label{2dquantumscalarproblem}
Define the $p$-analog of the GNS construction, such that for Radon measures on compact Hausdorff spaces we have $GNS^{p}(\mu) = L^{p}(\mu)$. Define the massless 2d quantum scalar field in some $GNS^{0}$ space.
\end{problem}
The theory of noncommutative $L^{p}$ spaces for von Neumann algebras is well established \cite{pisierxu} (somewhat less so for $p=0$), and may be relevant here. But the assumption of traciality is problematic.
\subsubsection{Generalized Eigenvalue-Eigenvector Link}
The eigenvalue-eigenvector link can be derived in considerably greater generality, by substituting for Gelfand duality the duality between algebras and affine schemes. No real measure theory is needed -- we will only need to deal with analogues of Dirac delta measures.

Let $\varphi$ be a state with algebra of observables $A = \mathcal{O}(\varphi)$. Let $a \in A$ be a normal element. Theorems \ref{probabilisticinterpretation} and \ref{eelink} say that the number $\varphi(a)$ is to be interpreted as the expectation value, in the sense of probability theory, of $a$ in the state $\varphi$.

The $\ast$-algebra generated by $a$, $B = \mathbb{C}[a,a^{\ast}]$ is commutative. We will denote its inclusion in $A$ by $i: B \hookrightarrow A$.

Passing to the geometric picture, we obtain an affine scheme $X = Spec(B)$ over $\mathbb{C}$, with chosen real form $X_{\mathbb{R}}$. By the adjunction $\Gamma \dashv Spec$, between global sections and the spectrum functor, the global sections of the structure sheaf $\mathcal{O}_{X}$ correspond to complex scheme maps $X \rightarrow \mathbb{A}^{1}_{\mathbb{C}}$. In addition, for $X = Spec(B)$, we have $\mathcal{O}_{X}(X) = B$. Thus $a \in A$ is a complex-valued function on $X$, and we may talk about its values at the points of $X$.

Since we are in the algebraic category, we will have to deal with the fact that the type of value $a$ has depends on the point it is evaluated on: the value of $a$ at $x \in X$ is an element of the residue field $\mathcal{O}_{X,x}/m_{x}$, which is an extension of $\mathbb{C}$. For this reason, we restrict our attention to the $\mathbb{C}$-points of $X$, for which this extension is trivial.

We can now formalize the statement that self-adjoint observables are real-valued.
\begin{proposition}
Self adjoint elements $x \in B$ determine maps $X_{\mathbb{R}} \rightarrow \mathbb{A}^{1}_{\mathbb{R}}$.
\end{proposition}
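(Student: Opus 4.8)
The plan is to recognize this statement as essentially a repackaging of Corollary \ref{commutativestaralgebras}, run through the $\Gamma \dashv Spec$ adjunction over $\mathbb{R}$ rather than over $\mathbb{C}$. The whole content is that ``passing to self-adjoint parts'' and ``passing to the real form'' are the same operation, after which ``self-adjoint element'' and ``real-valued function on $X_{\mathbb{R}}$'' become literally the same datum.

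First I would pin down what $X_{\mathbb{R}}$ is. Since $a$ is normal, $B = \mathbb{C}[a,a^{\ast}]$ is commutative, so Corollary \ref{commutativestaralgebras} equips it with a chosen real form: the $\mathbb{R}$-subalgebra $B_{sa}$ of self-adjoint elements, with $B = B_{sa} \otimes_{\mathbb{R}} \mathbb{C}$. Accordingly $X_{\mathbb{R}} = Spec(B_{sa})$, and $X = Spec(B)$ is its base change to $\mathbb{C}$. The one point worth checking carefully is that $B_{sa}$ really is closed under multiplication: for self-adjoint $x, y$ one has $(xy)^{\ast} = y^{\ast} x^{\ast} = yx$, which equals $xy$ only because $B$ is commutative. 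This is exactly where the normality hypothesis on $a$ does its work, and it is the only place where anything could go wrong.

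Next I would apply the affine adjunction over $\mathbb{R}$. Writing $\mathbb{A}^{1}_{\mathbb{R}} = Spec(\mathbb{R}[t])$, maps of affine $\mathbb{R}$-schemes $X_{\mathbb{R}} \to \mathbb{A}^{1}_{\mathbb{R}}$ are in bijection with $\mathbb{R}$-algebra homomorphisms $\mathbb{R}[t] \to \mathcal{O}_{X_{\mathbb{R}}}(X_{\mathbb{R}}) = B_{sa}$, and each such homomorphism is determined by the image of $t$, i.e.~by an arbitrary element of $B_{sa}$.

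Finally I would conclude: an element $x \in B$ is self-adjoint iff $x \in B_{sa}$, and by the previous step each such $x$ is precisely the datum of a map $X_{\mathbb{R}} \to \mathbb{A}^{1}_{\mathbb{R}}$, sending $t \mapsto x$. This gives the desired assignment, indeed a natural bijection between self-adjoint elements and maps to the real affine line. I do not anticipate any genuine obstacle beyond the commutativity check above; everything else is formal adjunction bookkeeping.
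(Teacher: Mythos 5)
Your proof is correct and is exactly the argument the paper has in mind: its entire proof reads ``this is just an algebraic geometry consequence of corollary \ref{commutativestaralgebras},'' and your expansion --- identifying $X_{\mathbb{R}} = Spec(B_{sa})$ via the real form and running $\Gamma \dashv Spec$ over $\mathbb{R}$ --- is precisely that consequence spelled out. Your explicit check that $B_{sa}$ is closed under multiplication (using commutativity of $B$, hence normality of $a$) is a worthwhile detail the paper leaves implicit.
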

\begin{proof}
This is just an algebraic geometry consequence of corollary \ref{commutativestaralgebras}.
\end{proof}
This means, in addition, that self-adjoint observables are determined by their values on the real part of $X$, i.e.~$X_{\mathbb{R}}$. Their ``analytic continuation'' to $X$ is automatic.

The state $\varphi$ restricts from $A$ to $B$, giving us a measure-like structure on $X$:
\begin{displaymath}
\mathcal{O}_{X}(X) = B \xrightarrow{i^{\ast}\varphi} \mathbb{C}.
\end{displaymath}
We will abuse terminology, and call linear maps $\mathcal{O}_{X}(X) \rightarrow \mathbb{C}$ measures on $X$. We are interested in measures supported by single points on $X$ -- the ``Dirac delta measures''.
\begin{definition}
Let $X$ be a scheme over $\mathbb{C}$, and $x \in X$ a $\mathbb{C}$-point. The Dirac delta at $x$, denoted $\delta_{x}$, is the localization (i.e.~evaluation) map $\mathcal{O}_{X}(X) \rightarrow \mathcal{O}_{X,x}/m_{x} = \mathbb{C}$.
\end{definition}

Regular functions separate points on affine $X$, and so we have the following lemma.
\begin{lemma}\label{uniquenessofdeltasupport}
If $\delta_{x} = \delta_{y}$ on an affine scheme $X$ over $\mathbb{C}$, then $x = y$.
\end{lemma}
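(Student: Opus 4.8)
The plan is to identify $\delta_x$ with the $\mathbb{C}$-algebra homomorphism $B \to \mathbb{C}$ that the point $x$ determines, and then to invoke injectivity of the functor of points. Recall that a $\mathbb{C}$-point $x \in X$ is by definition a morphism $Spec(\mathbb{C}) \to X = Spec(B)$. Under the adjunction $\Gamma \dashv Spec$ used above, such morphisms correspond bijectively to $\mathbb{C}$-algebra homomorphisms $B = \mathcal{O}_X(X) \to \mathbb{C}$. The homomorphism attached to $x$ is precisely the localization map $\mathcal{O}_X(X) \to \mathcal{O}_{X,x}/m_x = \mathbb{C}$, that is, it is exactly $\delta_x$. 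In particular, the kernel of $\delta_x$ is the maximal ideal $m_x$ cutting out $x$.

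With this identification in hand the lemma is immediate. The assignment $x \mapsto \delta_x$ is the inverse of the adjunction bijection, hence injective: if $\delta_x = \delta_y$, then $m_x = \ker \delta_x = \ker \delta_y = m_y$, and since a point of an affine scheme is determined by its corresponding prime ideal, we conclude $x = y$.

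If one prefers to avoid the adjunction altogether, the same conclusion drops out of the separation property stated just before the lemma. Suppose $x \neq y$; then $m_x \neq m_y$, and as both are maximal, neither contains the other, so we may pick $b \in m_x \setminus m_y$. Then $\delta_x(b) = 0$ while $\delta_y(b) \neq 0$, so $\delta_x \neq \delta_y$, which is the contrapositive of the claim. There is no genuine obstacle here: the content of the lemma is exactly the fact that regular functions separate $\mathbb{C}$-points, which is built into the affineness of $X$ through $\Gamma \dashv Spec$. The only point deserving care is the bookkeeping that the value $\delta_x(b)$ really lands in $\mathbb{C}$ and not in some larger residue field $\mathcal{O}_{X,x}/m_x$, which is precisely what our restriction to $\mathbb{C}$-points guarantees.
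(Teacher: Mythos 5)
Your proposal is correct and follows essentially the same route as the paper's proof: both arguments observe that $\delta_{x}$ is a $\mathbb{C}$-algebra homomorphism whose kernel is the maximal ideal corresponding to $x$, so $\delta_{x} = \delta_{y}$ forces the same maximal ideal and hence the same $\mathbb{C}$-point. Your extra contrapositive argument and the bookkeeping remark about residue fields are fine but add nothing beyond the paper's one-line proof.
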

\begin{proof}
The Dirac delta measures are ring homomorphisms, so when they are equal, they determine the same maximal ideal in $\mathcal{O}_{X}(X)$, and hence the same $\mathbb{C}$-point of $X$.
\end{proof}
The lemma fails for projective varieties, since then $\mathcal{O}_{X}(X) = \mathbb{C}$.

Measures naturally push forward under maps of spaces, and the same is true in our setting.
\begin{definition}
Let $\varphi$ be a measure on $X$, and $f : X \rightarrow Y$ a map of schemes over $\mathbb{C}$. Then $f_{\ast}\varphi$ defined by
\begin{displaymath}
\mathcal{O}_{Y}(Y) \xrightarrow{f^{\ast}} \mathcal{O}_{X}(X) \xrightarrow{\varphi} \mathbb{C},
\end{displaymath}
is a measure on $Y$.
\end{definition}
Since $X = Spec(B)$ is the ``space of possible values'', or ``possible (pure) states'' of $a \in B = \mathbb{C}[a, a^{\ast}]$, the following principle is an algebraic reformulation of the condition $\mathbb{P}(a = \lambda) = 1$.
\begin{principle}[Definition of ``having a definite value'']
The observable $a \in A$ has value $\lambda \in \mathbb{C}$ in the state $\varphi : A \rightarrow \mathbb{C}$ if
\begin{displaymath}
i^{\ast}\varphi = \delta_{x},
\end{displaymath}
for some $\mathbb{C}$-point $x \in X$ satisfying $a(x) = \lambda$, where $a : X \rightarrow \mathbb{A}^{1}_{\mathbb{C}}$ is the map constructed above, and $\lambda \in \mathbb{A}^{1}_{\mathbb{C}}$ is the $\mathbb{C}$-point corresponding to $\lambda \in \mathbb{C}$.
\end{principle}
\begin{remark}
In the setting of probability spaces, the above definition is easily seen to be equivalent to \ref{eelink}(2-3).
\end{remark}
We can make the definition more concrete by pushing forward to $\mathbb{A}^{1}_{\mathbb{C}}$:
\begin{proposition}
The observable $a$ has value $\lambda$ in $\varphi$ iff $a_{\ast}i^{\ast}\varphi = \delta_{\lambda}$.
\end{proposition}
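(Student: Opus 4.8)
The engine of the proof is the elementary fact that pushing a Dirac delta forward along a map of affine schemes produces the Dirac delta at the image point. If $f : X \to Y$ is a map over $\mathbb{C}$ and $x$ is a $\mathbb{C}$-point of $X$, then for $g \in \mathcal{O}_{Y}(Y)$ we have $(f_{\ast}\delta_{x})(g) = \delta_{x}(f^{\ast}g) = (f^{\ast}g)(x) = g(f(x))$, since $\delta_{x}$ and $f^{\ast}$ are both ring homomorphisms and evaluation commutes with pullback. Hence
\begin{displaymath}
f_{\ast}\delta_{x} = \delta_{f(x)}.
\end{displaymath}
I would record this as a one-line lemma first, as both directions rest on it.

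For the forward direction, suppose $a$ has value $\lambda$ in $\varphi$. By the definition of having a definite value, $i^{\ast}\varphi = \delta_{x}$ for a $\mathbb{C}$-point $x \in X$ with $a(x) = \lambda$. Applying the lemma to $a : X \to \mathbb{A}^{1}_{\mathbb{C}}$ gives $a_{\ast}i^{\ast}\varphi = a_{\ast}\delta_{x} = \delta_{a(x)} = \delta_{\lambda}$, as claimed; this half is immediate.

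The backward direction is the crux. Suppose $a_{\ast}i^{\ast}\varphi = \delta_{\lambda}$. The honest target is to show $i^{\ast}\varphi$ is itself a point mass, say $i^{\ast}\varphi = \delta_{x}$; once this is known, the lemma gives $\delta_{a(x)} = a_{\ast}\delta_{x} = \delta_{\lambda}$, and Lemma \ref{uniquenessofdeltasupport} applied on $\mathbb{A}^{1}_{\mathbb{C}}$ forces $a(x) = \lambda$, finishing the argument. All the difficulty is therefore concentrated in \emph{recovering delta-ness of $i^{\ast}\varphi$ from the single pushed-forward datum}. When $a$ is self-adjoint this is clean: then $B = \mathbb{C}[a]$, the comorphism $a^{\ast} : \mathbb{C}[t] \to B$, $t \mapsto a$, is surjective, so precomposition with it is injective on linear functionals $B \to \mathbb{C}$; thus $a_{\ast}$ reflects equalities, the multiplicativity of $\delta_{\lambda}$ transports back to $i^{\ast}\varphi$, and $\lambda$ necessarily lies in $X \subseteq \mathbb{A}^{1}_{\mathbb{C}}$, exhibiting $i^{\ast}\varphi$ as evaluation there.

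The main obstacle is the genuinely non-self-adjoint normal case, where $B = \mathbb{C}[a, a^{\ast}]$ is strictly larger than $\mathbb{C}[a]$ and the map $a : X \to \mathbb{A}^{1}_{\mathbb{C}}$ discards the $a^{\ast}$-coordinate: $a_{\ast}i^{\ast}\varphi$ records only the holomorphic moments $\varphi(a^{n})$ and is blind to $\varphi((a^{\ast})^{m}a^{n})$. Consequently $a_{\ast}i^{\ast}\varphi = \delta_{\lambda}$ does not by itself pin $i^{\ast}\varphi$ to a point — two functionals sharing all moments $\varphi(a^{n}) = \lambda^{n}$ (e.g. the evaluation at $\lambda$ versus the average over a circle centred at $\lambda$) have the same pushforward while only one is a delta. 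To make the backward implication correct I would either read the statement with $a$ self-adjoint, or carry the standing hypothesis that $a$ already possesses a definite value (so that $i^{\ast}\varphi = \delta_{x}$ is given and only its value is in question, settled by Lemma \ref{uniquenessofdeltasupport}), or replace the single map $a$ by the joint map $(a, a^{\ast}) : X \to \mathbb{A}^{1}_{\mathbb{C}} \times \mathbb{A}^{1}_{\mathbb{C}}$, whose comorphism is surjective and which therefore does reflect delta-ness. I would flag this scope issue explicitly rather than bury it.
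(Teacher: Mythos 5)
Your forward direction coincides with the paper's, via the same one-line computation $a_{\ast}\delta_{x} = \delta_{a(x)}$. For the converse, the paper takes exactly the route you reconstructed, and glosses over exactly the step you refuse to gloss over: it asserts that $a_{\ast}i^{\ast}\varphi = \delta_{\lambda}$ makes $i^{\ast}\varphi : B \rightarrow \mathbb{C}$ a ring homomorphism ``by explicit inspection on all elements of $B$ (recall that $\varphi$ is $\ast$-linear)'', then reads off the $\mathbb{C}$-point $x$ and finishes with lemma \ref{uniquenessofdeltasupport}, just as you do. Your objection to that step is correct: the hypothesis fixes only the moments $\varphi(a^{n}) = \lambda^{n}$, $\ast$-linearity adds $\varphi((a^{\ast})^{m}) = \overline{\lambda}^{m}$, and the mixed moments $\varphi((a^{\ast})^{m}a^{n})$ --- which are what multiplicativity on $B = \mathbb{C}[a, a^{\ast}]$ actually requires --- are left unconstrained. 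Your circle-average counterexample survives every hypothesis of the paper's setting: take $B = \mathbb{C}[z, \bar{z}]$ with $z^{\ast} = \bar{z}$ and $\varphi$ the average over the unit circle centred at $\lambda$ (or, even more simply, the Haar state on $\mathbb{C}[z, z^{-1}]$ with $z^{\ast} = z^{-1}$ and $\lambda = 0$). This $\varphi$ is positive, hence $\ast$-linear and representable; $z$ is normal; the mean-value property gives $a_{\ast}i^{\ast}\varphi = \delta_{\lambda}$; yet $\varphi(\bar{z}z) = |\lambda|^{2} + 1 \neq |\lambda|^{2}$, so $i^{\ast}\varphi$ is not multiplicative and $a$ has no definite value. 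The gap you flag is therefore real, and it sits in the paper's proof, not in yours.

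Your repairs are also the right ones. In the self-adjoint case your argument is complete, and you correctly isolate surjectivity of the comorphism $\mathbb{C}[t] \rightarrow \mathbb{C}[a]$ as the true engine (well-definedness of $p(a) \mapsto p(\lambda)$ is precisely injectivity of precomposition along a surjection). The same engine validates your third fix in full normal generality: the joint map $(a, a^{\ast}) : X \rightarrow \mathbb{A}^{1}_{\mathbb{C}} \times \mathbb{A}^{1}_{\mathbb{C}}$ has surjective comorphism by the very definition of $B$, so $(a, a^{\ast})_{\ast}i^{\ast}\varphi = \delta_{(\lambda, \overline{\lambda})}$ does reflect delta-ness, and the proposition becomes true as an ``iff'' with that substitution. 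One remark worth adding: the paper's subsequent theorem \ref{generalizedeigenvalueeigenvectorlink} is not damaged by your counterexample, since there the eigenvector hypothesis supplies exactly the missing mixed moments; it is only this proposition's unaided converse that fails as stated.
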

\begin{proof}
If $a(x) = \lambda$ then $a_{\ast} i^{\ast}\varphi = a_{\ast}\delta_{x} = \delta_{a(x)} = \delta_{\lambda}$. Conversely, if $a_{\ast}i^{\ast}\varphi = \delta_{\lambda}$, then $i^{\ast}\varphi : B \rightarrow \mathbb{C}$ is a ring homomorphism, by explicit inspection on all elements of $B$ (recall that $\varphi$ is $\ast$-linear), and so represents a $\mathbb{C}$-point $x \in X = Spec(B)$. Then $i^{\ast}\varphi = \delta_{x}$ by our definition of the Dirac delta. Finally $a(x) = \lambda$ by lemma \ref{uniquenessofdeltasupport}.
\end{proof}
We can now generalize the eigenvalue-eigenvector link to our entire setting.
\begin{theorem}[Generalized Eigenvalue-Eigenvector Link]\label{generalizedeigenvalueeigenvectorlink}\hspace{1em}
\begin{enumerate}
\item[a)] If any (hence every) cyclic vector representing $\varphi$ is a $\lambda$-eigenvector of $a$, then the observable $a$ has value $\lambda$ in $\varphi$.
\item[b)] If $i$ is admissible for $\varphi$, and the observable $a$ has value $\lambda$ in $\varphi$, then any (hence every) cyclic vector representing $\varphi$ is a $\lambda$-eigenvector of $a$.
\end{enumerate}
\end{theorem}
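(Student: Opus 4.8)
The plan is to translate both conditions into membership in two Hermitian radicals and then compare them. Write $B = \mathbb{C}[a,a^{\ast}] \subseteq A$, let $\psi = i^{\ast}\varphi$ be the restricted state, and let $A^{\perp}$, $B^{\perp}$ be the radicals of the forms $\langle x,y\rangle_{\varphi} = \varphi(y^{\ast}x)$ on $A$ and $\langle x,y\rangle_{\psi} = \psi(y^{\ast}x)$ on $B$. Since $GNS(\varphi) = A/A^{\perp}$ with cyclic vector $\Omega = [1]$ and $a\Omega = [a]$, the eigenrelation $a\Omega = \lambda\Omega$ is literally $a - \lambda \in A^{\perp}$; the ``(hence every)'' clause is automatic because any two representing cyclic modules are identified by a unique cyclic isometry (theorem \ref{representabilityconditions}(1)), which is a module map and so preserves $am = \lambda m$. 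By the preceding proposition, ``$a$ has value $\lambda$'' is equivalent to $a_{\ast}\psi = \delta_{\lambda}$, i.e.~to $\psi(p(a)) = p(\lambda)$ for all $p \in \mathbb{C}[t]$. Thus the whole theorem reduces to comparing $a - \lambda \in A^{\perp}$ with $a - \lambda \in B^{\perp}$.

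Two preliminary facts carry the argument. First, $B^{\perp}$ is an ideal of $B$ on which $\psi$ vanishes: multiplication by any $b \in B$ is adjointable (adjoint = multiplication by $b^{\ast}$), so $B^{\perp}$ is a submodule by lemma \ref{isotropiclemma}, hence an ideal since $B$ is commutative, and $\psi(c) = \langle c,1\rangle_{\psi} = 0$ for $c \in B^{\perp}$. Second, I record the two inclusions relating the radicals. The ``free'' one, $A^{\perp} \cap B \subseteq B^{\perp}$, holds always: restricting the test elements defining $A^{\perp}$ to $B$ already forces membership in $B^{\perp}$. The reverse one, $B^{\perp} \subseteq A^{\perp}$ (identifying $B$ with $i(B)$), is by definition \ref{admissibilitydefinition}, cf.~proposition \ref{admissibilityproperties}, exactly the admissibility of $i$ for $\varphi$.

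For (a) no admissibility is needed. Assuming $a - \lambda \in A^{\perp}$, the free inclusion gives $a - \lambda \in B^{\perp}$. For any $p$ write $p(t) - p(\lambda) = (t-\lambda)q(t)$, so $p(a) - p(\lambda) = (a-\lambda)q(a) \in B^{\perp}$ because $B^{\perp}$ is an ideal; applying $\psi$, which kills $B^{\perp}$ and satisfies $\psi(1)=1$, yields $\psi(p(a)) = p(\lambda)$. Hence $a_{\ast}\psi = \delta_{\lambda}$ and, by the preceding proposition, $a$ has value $\lambda$.

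For (b) admissibility is precisely what bridges the gap, and this is the crux. Here ``$a$ has value $\lambda$'' unpacks (via the proposition, whose proof shows $\psi$ is then a character) to $\psi = \delta_{x}$ for a $\mathbb{C}$-point $x$ with $a(x) = \lambda$. For a character the form degenerates to $\langle c,c'\rangle_{\psi} = \overline{\delta_{x}(c')}\,\delta_{x}(c)$, so $B^{\perp} = \ker\delta_{x} = m_{x}$, and $\delta_{x}(a-\lambda) = a(x)-\lambda = 0$ places $a-\lambda \in B^{\perp}$. Admissibility now upgrades this to $a-\lambda \in A^{\perp}$, i.e.~$a\Omega = \lambda\Omega$. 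The main obstacle, and the reason the hypotheses are asymmetric, is exactly this upgrade: the eigenvector equation lives in the full space $A/A^{\perp}$, whereas a definite value is detected inside the commutative corner $B$, and passing from the small radical $B^{\perp}$ back to the large one $A^{\perp}$ is false in general, being rescued only by admissibility. Equivalently, admissibility is what makes the cyclic isometry $GNS(R)\colon GNS(\psi) = B/m_{x} \hookrightarrow GNS(\varphi)$ exist (proposition \ref{admissibilityproperties}), and, as in the proof of theorem \ref{eelink}, it carries the one-dimensional $GNS(\psi)$ onto the line $\mathbb{C}\Omega$ and transports the eigenrelation. A final caveat: the identity $\psi(1) = \varphi(1) = 1$ is used in (a), so the statement should be read for normalized states.
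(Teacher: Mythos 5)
Your proof is correct and is in substance the paper's own argument recast in the language of radicals. For (b) the paper uses admissibility to produce the module map $GNS(i) : GNS(i^{\ast}\varphi) \rightarrow GNS(\varphi)$, computes $GNS(i^{\ast}\varphi) = L^{2}(\delta_{x}) = \mathbb{C}$ with $a$ acting by the scalar $a(x) = \lambda$, and transports the eigenrelation along $GNS(i)$ --- exactly the mechanism you express as $a - \lambda \in B^{\perp} \subseteq A^{\perp}$, the two formulations being interchangeable by proposition \ref{admissibilityproperties}, as you yourself note. The only genuine divergence is in (a): the paper shows directly that the cyclic module representing $i^{\ast}\varphi$ is one-dimensional and that $i^{\ast}\varphi$ is a ring homomorphism (which requires also checking $a^{\ast} \equiv \overline{\lambda}$ modulo $B^{\perp}$, using commutativity of $B$ to move $a - \lambda$ past test elements), and then pins down $a(x) = \lambda$ via lemma \ref{uniquenessofdeltasupport}; you instead compute only the pushforward $a_{\ast}\psi = \delta_{\lambda}$ via the factorization $p(t) - p(\lambda) = (t-\lambda)q(t)$ together with the ideal property of $B^{\perp}$, and outsource the character property to the proposition immediately preceding the theorem. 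That buys a shorter argument that never touches $a^{\ast}$, at the cost of leaning on that proposition's rather terse proof. Your closing caveat is a good catch: both your step $\psi(1) = 1$ and the paper's ``ring homomorphism'' step silently assume normalization, and (a) genuinely fails without it --- for a $\mu$-normalized state one gets $a_{\ast}\psi = \mu\,\delta_{\lambda} \neq \delta_{\lambda}$ (already visible for $A = \mathbb{C}$, $\varphi = \mu\cdot\mathrm{id}$, $a = \lambda$) --- so the theorem should indeed be read for normalized states, consonant with section \ref{normalizationsection}.
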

\begin{proof}
Any cyclic vector $\Omega$ representing $\varphi$ is part of the unique cyclic module representing $\varphi$, so we may use whichever representation we like.

If $\Omega$ is an $\lambda$-eigenvector of $\Omega$, then the unique cyclic $\ast$-module representing $i^\ast \varphi$ is one-dimensional, and one again finds that $i^{\ast}\varphi$ is a ring homomorphism, by explicit computation, giving $i^{\ast}\varphi = \delta_{x}$, for some $\mathbb{C}$-point $x \in X$. And again, $\lambda$ is the only possible value of $a(x)$, by lemma \ref{uniquenessofdeltasupport}.

If $i^{\ast}\varphi = \delta_{x}$, then $GNS(i^{\ast}\varphi) = L^{2}(\delta_{x}) = \mathbb{C}$, with $B$ acting by evaluation (localization). In particular $a$ acts as multiplication by $a(x) = \lambda$, by assumption. By admissibility we have the map of $\ast$-modules over $i$:
\begin{displaymath}
GNS(i^{\ast}\varphi) \xrightarrow{GNS(i)} GNS(\varphi).
\end{displaymath}
Denoting by $\Omega$ and $\Omega^{\prime}$ the cyclic vectors of $GNS(\varphi)$ and $GNS(i^{\ast}\varphi)$, respectively, we have
\begin{displaymath}
a\Omega = a GNS(i)(\Omega^{\prime}) = GNS(i)(a \Omega^{\prime}) = GNS(i)(\lambda \Omega^{\prime}) = \lambda GNS(i)(\Omega^{\prime}) = \lambda \Omega.
\end{displaymath}
\end{proof}
\begin{corollary}
Let $\varphi$ be a positive state. Then $a$ has value $\lambda$ in $\varphi$ iff any vector in a pre-Hilbert module representing $\varphi$ is a $\lambda$-eigenvector of $a$.
\end{corollary}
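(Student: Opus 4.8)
The plan is to deduce this from the Generalized Eigenvalue-Eigenvector Link (theorem \ref{generalizedeigenvalueeigenvectorlink}) by discharging its admissibility hypothesis and by upgrading its statement about the canonical cyclic vector $\Omega_{\varphi}$ to a statement about an arbitrary representing vector. Positivity of $\varphi$ will be used at exactly these two points.

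First I would check that the inclusion $i : \langle a \rangle \hookrightarrow \mathcal{O}(\varphi)$ is always admissible when $\varphi$ is positive. Indeed $i$ is a $\ast$-homomorphism, so $i^{\ast}\varphi$ is again a positive state (the pullback of a positive state along a $\ast$-homomorphism is positive, as in the proof of theorem \ref{spismonoidal}), whence both endpoints of $i$ are positive and proposition \ref{admissibilitypropertiestwo}(3) grants admissibility for free. This removes the hypothesis ``$i$ is admissible for $\varphi$'' from clause (b) of theorem \ref{generalizedeigenvalueeigenvectorlink}, so in the positive case both implications there hold unconditionally, yielding the equivalence
\begin{displaymath}
a \text{ has value } \lambda \text{ in } \varphi \iff a\Omega_{\varphi} = \lambda \Omega_{\varphi}.
\end{displaymath}

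Next I would bridge from $\Omega_{\varphi}$ to an arbitrary representing vector. Let $M$ be a pre-Hilbert $\ast$-module and $v \in M$ a vector representing $\varphi$. By theorem \ref{positiveuniversality}, positivity makes $GNS(\varphi)$ initial among pre-Hilbert modules representing $\varphi$, so there is a unique isometric module homomorphism $u : GNS(\varphi) \to M$ over $\mathrm{id}_{\mathcal{O}(\varphi)}$ with $u(\Omega_{\varphi}) = v$; concretely it identifies $GNS(\varphi)$ with the cyclic submodule $\mathcal{O}(\varphi)\,v \subseteq M$, which is nondegenerate precisely because the form is positive (cf.~remark \ref{submoduletrouble}). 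Since $u$ intertwines the $\mathcal{O}(\varphi)$-action, $a\Omega_{\varphi} = \lambda\Omega_{\varphi}$ forces $av = u(a\Omega_{\varphi}) = \lambda v$; conversely, being an isometry out of a nondegenerate space $u$ is injective, so $av = \lambda v$ (which holds inside $\mathcal{O}(\varphi)\,v$) pulls back to $a\Omega_{\varphi} = \lambda\Omega_{\varphi}$. Thus $v$ is a $\lambda$-eigenvector of $a$ if and only if $\Omega_{\varphi}$ is, uniformly in $(M,v)$ --- this is the ``hence every'' phenomenon.

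Combining the two steps gives the corollary: $a$ has value $\lambda$ in $\varphi$ iff $\Omega_{\varphi}$ is a $\lambda$-eigenvector iff every representing vector in every pre-Hilbert module is a $\lambda$-eigenvector. The only real work --- and the only place positivity is essential --- is the double use just described: positivity both trivializes admissibility and guarantees that $\mathcal{O}(\varphi)\,v$ is a genuine $\ast$-submodule carrying an initial GNS object, so that the intertwining isometry $u$ exists and is injective. I expect the injectivity-and-intertwining bookkeeping for $u$ to be the main (though mild) obstacle; everything else is a direct appeal to the cited results.
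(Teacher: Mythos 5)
Your proposal is correct and is essentially the paper's own argument: the paper's one-line proof cites exactly the ingredients you spell out, namely discharging the admissibility hypothesis of theorem \ref{generalizedeigenvalueeigenvectorlink} via positivity (your appeal to proposition \ref{admissibilitypropertiestwo}(3) is interchangeable with the paper's citation of proposition \ref{admissibilityproperties}) and then using the initiality of theorem \ref{positiveuniversality} to transport the eigenvector property between $\Omega_{\varphi}$ and an arbitrary representing vector via the unique isometric, hence injective, intertwiner. No gaps.
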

\begin{proof}
This follows from theorem \ref{positiveuniversality}, proposition \ref{admissibilityproperties}, and the preceding theorem.
\end{proof}
\subsection{Symmetries and Group Representations}\label{symmetries}
Let $G$ be any symmetry groupoid. The equivariant $GNS$ construction is the categorical exponential
\begin{displaymath}
(\mathbf{Phys}^{op})^{G} \xrightarrow{GNS^{G}} \ast \mathbf{Mod}^{G},
\end{displaymath}
where $(\mathbf{Phys}^{op})^{G}$ is the category of functors $G \rightarrow \mathbf{Phys}^{op}$, and similarly for $\ast \mathbf{Mod}^{G}$. Note that, in general $(\mathcal{C}^{op})^{\mathcal{D}} = (\mathcal{C}^{\mathcal{D}^{op}})^{op}$.

The covariant construction does not require fussing about with opposites:
\begin{displaymath}
\mathbf{Phys}^{G} \xrightarrow{GNS_{c}^{G}} \ast\mathbf{Mod}_{adj}^{G}.
\end{displaymath}
These constructions include symmetry groups (seen as one element groupoids) acting on single states, groupoids of symmetries between different states, and even general categories. We will use all of them below.

For the record, we state:
\begin{theorem}\label{unitaryrepresentations}
Let $G$ be a group, and let $\varphi \in \mathbf{Phys}_{p}^{G}$ be a $G$-symmetric, positive state. Then the covariant GNS construction, $GNS_{c}(\varphi)$, is a unitary representation of $G$.
\end{theorem}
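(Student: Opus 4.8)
The plan is to unwind the categorical definitions and let functoriality do all the work, exactly as the surrounding text advertises. First I would recall that, since $G$ is a group regarded as a one-object groupoid, a functor $\varphi \in \mathbf{Phys}_{p}^{G}$ is precisely a choice of positive state $\varphi$ together with, for each $g \in G$, a morphism $g : \varphi \to \varphi$ in $\mathbf{Phys}_{p}$, subject to $gh$ mapping to the composite and $e$ to $\mathrm{id}_{\varphi}$. Because every arrow of a groupoid is invertible, each such $g$ is an automorphism of $\varphi$ in $\mathbf{Phys}_{p}$.

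Next I would apply the covariant construction. By theorem \ref{covariantgnsproperty}, $GNS_{c} : \mathbf{Phys}_{p} \to \ast\mathbf{Mod}_{adj}$ is a symmetric monoidal functor, so post-composing $\varphi : G \to \mathbf{Phys}_{p}$ with it yields a functor $G \to \ast\mathbf{Mod}_{adj}$. On the single object this selects the Hilbert space $GNS_{c}(\varphi)$ -- the completion of the pre-Hilbert GNS module of $\varphi$ -- and on arrows it produces operators $GNS_{c}(g) : GNS_{c}(\varphi) \to GNS_{c}(\varphi)$. Functoriality immediately gives $GNS_{c}(gh) = GNS_{c}(g)\,GNS_{c}(h)$ and $GNS_{c}(e) = \mathrm{id}$, so $g \mapsto GNS_{c}(g)$ is already a monoid homomorphism. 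It remains only to identify these operators as unitaries.

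The one substantive point is unitarity. For this I would recall that every morphism of $\ast$-modules is isometric, so $GNS(g) : GNS(\varphi) \to GNS(\varphi)$ is an isometry; and since $g$ is invertible with $GNS(g)^{-1} = GNS(g^{-1})$ by the functoriality of $GNS$ (theorem \ref{gnsconstruction}), the map $GNS(g)$ is a bijective isometry of the pre-Hilbert space $GNS(\varphi)$, hence unitary. A unitary on a pre-Hilbert space extends uniquely to a unitary on the Hilbert completion, and by definition $GNS_{c}(g)$ is its adjoint, which for a unitary coincides with the inverse and is therefore again unitary. Positivity of $\varphi$ is precisely what makes $GNS(\varphi)$ pre-Hilbert in the first place (theorem \ref{positiveuniversality}), without which the word ``unitary'' would be meaningless; this is the only place the positivity hypothesis is used.

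Assembling the pieces, $g \mapsto GNS_{c}(g)$ is a homomorphism from $G$ into the unitary group of the Hilbert space $GNS_{c}(\varphi)$, i.e.~a unitary representation, as claimed. I do not expect any real obstacle -- the result is, as the text says, essentially trivial once $GNS_{c}$ is known to be a functor -- but the single place where genuine care is warranted is the interchange of adjoints with completion: one must confirm that the adjoint defining $GNS_{c}(g)$ is the honest Hilbert-space adjoint of the extension of $GNS(g)$, so that invertibility of $GNS(g)$ upgrades to unitarity of $GNS_{c}(g)$ rather than to mere boundedness of a coisometry.
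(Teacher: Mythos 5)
Your proposal is correct and follows exactly the paper's route: the paper's entire proof is the one-line observation that $GNS_{c}\circ\varphi : G \to \ast\mathbf{Mod}_{adj}$ ``simply is, among other things, a unitary representation of $G$.'' You have merely filled in the details the paper leaves implicit -- that functoriality gives the homomorphism law, that positivity (via theorem \ref{positiveuniversality}) makes $GNS(\varphi)$ pre-Hilbert, and that invertibility of $g$ upgrades the isometry $GNS(g)$ to a unitary surviving completion and adjunction -- all of which is sound.
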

\begin{proof}
Pedantically speaking, one should write $GNS_{c} \circ \varphi : G \rightarrow \ast \mathbf{Mod}_{adj}$. This object simply is, among other things, a unitary representation of $G$.
\end{proof}

Such theorems can be multiplied at will. For example:
\begin{theorem}
The equivariant $GNS$ constructions are naturally symmetric monoidal.
\end{theorem}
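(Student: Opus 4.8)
The plan is to treat this as a purely formal consequence of the fact that the exponential $(-)^{G}=\mathbf{Cat}(G,-)$ is a product-preserving $2$-functor, combined with the already-established monoidality of the two base functors. First I would make explicit that the symmetric monoidal structure on any functor category $\mathcal{C}^{G}$ is the \emph{pointwise} one, obtained by transport: since $(-)^{G}$ is a right adjoint it preserves products, so there is a canonical isomorphism $(\mathcal{C}\times\mathcal{C})^{G}\cong\mathcal{C}^{G}\times\mathcal{C}^{G}$, and applying $(-)^{G}$ to the multiplication $m:\mathcal{C}\times\mathcal{C}\to\mathcal{C}$, the unit $u:1\to\mathcal{C}$, and the associator, unitor and symmetry $2$-cells of $\mathcal{C}$ equips $\mathcal{C}^{G}$ with a symmetric monoidal structure. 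This is precisely the rule $(X\otimes Y)(g)=X(g)\otimes Y(g)$ with unit the constant functor at $I$, and all constraints verified pointwise.

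The second step is to recall that a strong symmetric monoidal functor $F:\mathcal{C}\to\mathcal{D}$ is exactly a morphism of symmetric pseudomonoids in $(\mathbf{Cat},\times)$: its coherence maps $\mu$ and $\epsilon$ are the structure $2$-cells witnessing compatibility with $m$ and $u$. Applying the product-preserving $2$-functor $(-)^{G}$ to this data yields $F^{G}:\mathcal{C}^{G}\to\mathcal{D}^{G}$ together with coherence cells obtained by postcomposing $\mu$ and $\epsilon$; because $(-)^{G}$ is a $2$-functor, every coherence axiom—being an equation of $2$-cells—is preserved, so $F^{G}$ is again symmetric monoidal. Instantiating $F=GNS$, strong symmetric monoidal by theorem \ref{gnsconstruction}, and $F=GNS_{c}$, symmetric monoidal by theorem \ref{covariantgnsproperty}, gives both assertions. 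The opposite-category bookkeeping is absorbed by noting that $\mathcal{C}^{op}$ is symmetric monoidal whenever $\mathcal{C}$ is and that $(\mathcal{C}^{op})^{G}=(\mathcal{C}^{G^{op}})^{op}$ with $G^{op}\cong G$ for a groupoid, so the $\mathbf{Phys}^{op}$ side needs no separate argument. Naturality in $G$—that $GNS^{(-)}$ commutes with the restriction functors $u^{*}:\mathcal{C}^{G'}\to\mathcal{C}^{G}$ along $u:G\to G'$—follows from the same source, since postcomposition by $GNS$ commutes with every precomposition $u^{*}$.

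I do not anticipate a genuine obstacle. The only thing demanding actual verification is that the pointwise coherence maps land in the functor category, i.e.\ that $\mu_{X(g),Y(g)}$ is natural in $g\in G$; this is immediate from the naturality of $\mu$ as a transformation of functors $\mathcal{C}\times\mathcal{C}\to\mathcal{D}$, evaluated on the arrows $X(g)\to X(g')$ and $Y(g)\to Y(g')$. The only mild care needed is in naming the monoidal unit of $\mathcal{C}^{G}$ and in keeping the two variance conventions straight, both routine. I would therefore present the proof as essentially a one-line appeal to the preservation of symmetric pseudomonoids and their morphisms by the cotensor $(-)^{G}$, relegating the pointwise naturality check to a parenthetical remark—matching the introduction's slogan that passing to the equivariant construction is ``just composition'' with a $G$ in the exponent.
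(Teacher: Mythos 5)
Your proposal is correct and follows essentially the same route as the paper, which likewise argues that $(-)^{G}$, being a right 2-adjoint (in particular product-preserving), preserves algebraic structure in $\mathbf{Cat}$ and hence lifts to a 2-functor $\mathbf{SymMonCat}\rightarrow\mathbf{SymMonCat}$, whose values on $GNS$ and $GNS_{c}$ give the desired structures. Your version merely spells out the details the paper leaves implicit (pseudomonoid transport, pointwise coherence, the $(\mathcal{C}^{op})^{G}=(\mathcal{C}^{G^{op}})^{op}$ bookkeeping), all of which are consistent with the paper's argument.
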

\begin{proof}
Let $\mathcal{C}$ be any category. Then $(-)^{\mathcal{C}} : \mathbf{Cat} \rightarrow \mathbf{Cat}$ is a right 2-adjoint, and hence preserves any algebraic structures in $\mathbf{Cat}$. This includes symmetric monoidal categories, and so any 2-functor of the form $(-)^{G}$ lifts to symmetric monoidal categories
\begin{displaymath}
(-)^{G} : \mathbf{SymMonCat} \longrightarrow \mathbf{SymMonCat}.
\end{displaymath}
Its values on $GNS$ and $GNS_{c}$ are the natural structures we are looking for.
\end{proof}

Clearly $G \mapsto \mathbf{Phys}^{G}$ is a functor $\mathbf{Gpd}^{op} \rightarrow \mathbf{SymMonCat}$, likewise $G \mapsto \ast \mathbf{Mod}_{adj}^{G}$. Using the Grothendieck construction we obtain the monoidal fibrations
\begin{align*}
\mathbf{Phys}^{S} &= \int \mathbf{Phys}^{(-)} \\
\ast \mathbf{Mod}_{adj}^{S} &= \int \ast \mathbf{Mod}_{adj}^{(-)},
\end{align*}
of states with some arbitrary symmetry groupoid, and of $\ast$-modules with some $G$-action. The covariant GNS construction becomes a morphism of monoidal fibrations:
\begin{center}
\begin{tikzpicture}
\matrix (m) [matrix of math nodes, nodes in empty cells, row sep = 1 cm, column sep = 1 cm, text height = 1.5ex, text depth = .25ex] {
\mathbf{Phys}^{S} && \ast \mathbf{Mod}_{adj}^{S} \\
& \mathbf{Gpd} \\
};

\path[->] (m-1-1) edge node[auto] {$GNS_{c}^{S}$} (m-1-3)
		  (m-1-1) edge (m-2-2)
		  (m-1-3) edge (m-2-2);
\end{tikzpicture}
\end{center}
This structure allows a systematic investigation of how symmetries restrict and extend for states, observables, and representations.
\subsubsection{Time Reversal}\label{timereversal}
Time reversal provides an excellent excuse for the usage of groupoids of symmetries. Let $M$ be linear Minkowski space. Since we are reversing time, we assume $M$ is time oriented. Traditionally time reversal is an element of the Lorentz group $O(M)$, but this makes applying the formalism of section \ref{antiunitaryprocesses} impossible. Instead we split the Lorentz group into pieces.

Let $\widetilde{O}$ be the following groupoid. Its objects are $M$ and $\overline{M}$, which is $M$ with reversed time orientation. The morphisms are just the orthochronous isometries. Clearly, the maps $\overline{M} \rightarrow M$ are simply the time reversing Lorentz transformations. The full group $O(M)$ is divided into pieces in $\widetilde{O}$.
\begin{remark}
The construction $O \mapsto \widetilde{O}$ can be made systematic, and should be seen as a nonlinear/noncommutative variant of the globular Dold-Kan correspondence.
\end{remark}
Clearly, a state with $O(M)$ symmetry, as traditionally understood, is just a functor
\begin{displaymath}
F : \widetilde{O} \longrightarrow \mathbf{Phys},
\end{displaymath}
such that $F(\overline{M}) = \overline{F(M)}$. This makes time reversing Lorentz transformations into antilinear processes in a natural manner.

We can make this last condition less arbitrary by being more arbitrary with the construction. Let
\begin{displaymath}
1 \longrightarrow O^{+} \longrightarrow O \longrightarrow \mathbb{Z}_{2} \longrightarrow 1
\end{displaymath}
be the exact sequence where $O \rightarrow \mathbb{Z}_{2}$ maps Lorentz transformations to $-1$ if they reverse time, and $1$ if not. We can construct a splitting of this sequence by choosing coordinates on $M$ and sending $-1 \in \mathbb{Z}_{2}$ to the map $(t, x, y, z) \mapsto (-t, x, y, z)$\footnote{This choice is not optimal in odd spacetime dimensions, where the semidirect product can be chosen direct \cite[section 5.5]{pinphysics}.}. This results in a group homomorphism
\begin{displaymath}
h : \mathbb{Z}_{2} \rightarrow Aut(O^{+}),
\end{displaymath}
which classifies the above extension. This turns $\widetilde{O}$ into a $\mathbb{Z}_{2}$-equivariant groupoid, with the generator acting by $M \mapsto \overline{M}$ on objects and by $h$ on arrows.

By theorem \ref{conjugationtheorem} $\mathbf{Phys}$ is already $\mathbb{Z}_{2}$-equivariant, with the generator acting by conjugation. An $O(M)$ symmetry with distinguished time reversal can be defined as a strictly $\mathbb{Z}_{2}$-equivariant functor
\begin{displaymath}
\widetilde{O} \longrightarrow \mathbf{Phys}.
\end{displaymath}
The distinguished time reversal amounts to picking a $\mathbb{Z}_{2}$-fixed point in $\mathbf{Phys}$, i.e.~a specific isomorphism $\overline{\varphi} \rightarrow \varphi$.

\subsubsection{Inhomogeneous Time}\label{categoriesoftime}
The above discussion of symmetries includes time evolution only if it is homogeneous. Then we consider functors
\begin{displaymath}
\mathbb{R} \longrightarrow \mathbf{Phys},
\end{displaymath}
where $\mathbb{R}$ is the additive group of real numbers, considered as a one object groupoid.

Inhomogeneous time evolution can be modeled as well, by considering appropriate ``categories of time''.
\begin{definition}\hspace{1em}
\begin{enumerate}
\item The category of \emph{homogeneous time}, $\mathbf{Time}_{h}$ is the one object groupoid corresponding to the additive group of the real numbers.
\item The category of \emph{inhomogeneous time}, $\mathbf{Time}$ is the pair groupoid corresponding to $\mathbb{R}$.
\item The category of \emph{thermodynamical time}, $\mathbf{Time}_{th}$ is the poset of the real numbers, considered as a category.
\item The category of \emph{restricted thermodynamical time} $\mathbf{Time}_{th}^{t_{0}}$ is the poset of real numbers $\geq t_{0}$, considered as a category.
\end{enumerate}
\end{definition}
The relationships between these categories of time are summarized by the diagram of functors
\begin{center}
\begin{tikzpicture}
\node (a) at (-1,0) {$\mathbf{Time}_{th}^{t_{0}} \subseteq \mathbf{Time}_{th}$};
\node (b) at (3,0) {$\mathbf{Time}$};
\node (c) at (6,0) {$\mathbf{Time}_{h},$};

\path[->] (a) edge node[auto] {$i$} (b)
		  (b) edge node[auto] {$p$} (c);
\end{tikzpicture}
\end{center}
where $i$ is the obvious inclusion, and is actually a localization of $\mathbf{Time}_{th}$, inverting all arrows. The functor $p$ collapses the distinct time objects into one, and maps the unique morphism $t \rightarrow t^{\prime}$ to $t^{\prime} - t$.

The categories of states equipped with various notions of time evolution correspond to the exponentials
\begin{displaymath}
\mathbf{Phys}^{\mathbf{Time}},
\end{displaymath}
with $\mathbf{Time}$ carrying an appropriate subscript.

Homogeneous time determines a single state $\varphi$ and an additive group of automorphisms $U(t) : \varphi \rightarrow \varphi$, $t \in \mathbb{R}$.

Inhomogeneous time determines a state $\varphi(t)$ for every time $t\in\mathbb{R}$, and invertible maps $U(t, t^{\prime}) : \varphi(t) \rightarrow \varphi(t^{\prime})$ subject to $U(t,t) = id$ and
\begin{displaymath}
U(t^{\prime}, t'')U(t, t^{\prime}) = U(t, t'').
\end{displaymath}
Thermodynamical time is similar to inhomogeneous time, but $U(t, t^{\prime})$ is only given for $t \leq t^{\prime}$, and need not be invertible. Restricted time simply restricts $t$ to $t \geq t_{0}$ for objects and morphisms.

In section \ref{nonunitarygnssection} we will construct a statistical version of $\mathbf{Phys}$, called $\mathbf{Phys}_{M}$. In that category thermodynamical time can truly come into its own, with
\begin{displaymath}
\mathbf{Phys}_{M}^{\mathbf{Time}_{th}^{t_{0}}}
\end{displaymath}
generalizing the notion of a quantum dynamical semigroup (cf.~\cite{holevo}).
\subsection{Composite Systems}\label{compositesystems}
In this section we assume states are normalized, working exclusively with $\mathbf{Phys}_{1}$. The monoidal structure on normalized states can be characterized as the most general notion of composite satisfying the following axioms.
\begin{axioms}[Axioms for Composite Systems]
A state $\varphi \boxtimes \psi$ will be called a \emph{composite of $\varphi$ and $\psi$} if we are given the following structure and properties:
\begin{enumerate}
\item \emph{Composition}: there are morphisms
\begin{align*}
p_{\varphi} : \varphi \boxtimes \psi &\longrightarrow \varphi \\
p_{\psi} :\varphi \boxtimes \psi &\longrightarrow \psi,
\end{align*}
in $\mathbf{Phys}_{1}$, meaning that $\varphi \boxtimes \psi$ contains a copy of both $\varphi$ and $\psi$.
\item \emph{Noninteraction}: these copies do not affect each other, meaning:
\begin{displaymath}
\varphi \boxtimes \psi(p_{\varphi}(a)p_{\psi}(b)) = \varphi(a) \psi(b),
\end{displaymath}
for all $a \in \mathcal{O}(\varphi)$ and $b \in \mathcal{O}(\psi)$.
\item \emph{Probabilistic Independence}: these copies are independent, in the sense of noncommutative probability theory. For any $a \in \mathcal{O}(\varphi)$ and $b \in \mathcal{O}(\psi)$ we have
\begin{displaymath}
[p_{\varphi}(a), p_{\psi}(b)] = 0,
\end{displaymath}
in $\mathcal{O}(\varphi \boxtimes \psi)$. Here $[x, y]$ denotes the commutator of $x$ and $y$.
\end{enumerate}
\end{axioms}
\begin{remark}\hspace{1em}
\begin{itemize}
\item These axioms are not completely independent. Composition implies all instances of noninteraction in which $a=1$ or $b=1$.
\item Without the normalization assumption the composition axiom can't be satisfied. By theorem \ref{normalizationtheorem}(1-2) if either $\varphi$ or $\psi$ is not normalized then one of $p_{\varphi}$ or $p_{\psi}$ cannot exist. If neither is normalized then neither can exist.
\end{itemize}
\end{remark}
\begin{theorem}\label{compositesystemcharacterization}
$\varphi \otimes \psi$ is initial among the composites of $\varphi$ and $\psi$.
\end{theorem}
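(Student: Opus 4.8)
The plan is to reduce the whole statement to the universal property of the tensor product of $\ast$-algebras, Theorem \ref{tensorproductuniversalproperty}, by reading the three axioms for composite systems as precisely the data that theorem consumes. Throughout, write $A = \mathcal{O}(\varphi)$ and $B = \mathcal{O}(\psi)$, so that $\mathcal{O}(\varphi \otimes \psi) = A \otimes B$.

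First I would check that $\varphi \otimes \psi$ is a composite at all. The canonical inclusions $\iota_A : A \to A \otimes B$, $a \mapsto a \otimes 1$, and $\iota_B : B \to A \otimes B$, $b \mapsto 1 \otimes b$, are unital $\ast$-homomorphisms; since $\iota_A^{\ast}(\varphi \otimes \psi)(a) = \varphi(a)\psi(1) = \varphi(a)$ and symmetrically for $\iota_B$ (using $\varphi(1) = \psi(1) = 1$ in $\mathbf{Phys}_1$), they define projections $p_\varphi, p_\psi$ in $\mathbf{Phys}_1$. The Noninteraction identity is then the defining formula $(\varphi \otimes \psi)(a \otimes b) = \varphi(a)\psi(b)$, and Probabilistic Independence is the triviality $[a \otimes 1, 1 \otimes b] = 0$ in $A \otimes B$. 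This is exactly the step that collapses without normalization, cf.\ the remark preceding the theorem.

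Next I would establish the universal property. Let $\chi = \varphi \boxtimes \psi$ be an arbitrary composite on $C = \mathcal{O}(\chi)$, and set $\alpha = \mathcal{O}(p_\varphi) : A \to C$ and $\beta = \mathcal{O}(p_\psi) : B \to C$. Probabilistic Independence says exactly that the images of $\alpha$ and $\beta$ commute in $C$, so Theorem \ref{tensorproductuniversalproperty} supplies a unique $\ast$-homomorphism $\gamma : A \otimes B \to C$ with $\gamma \iota_A = \alpha$ and $\gamma \iota_B = \beta$. It remains to see that $\gamma$ is the observable part of a morphism of composites. Compatibility with the projections holds by construction, while state-compatibility is exactly where Noninteraction enters: $\chi(\gamma(a \otimes b)) = \chi(\alpha(a)\beta(b)) = \varphi(a)\psi(b) = (\varphi \otimes \psi)(a \otimes b)$, so $\gamma^{\ast}\chi = \varphi \otimes \psi$ on simple tensors, hence everywhere by linearity. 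Thus $\gamma$ underlies a morphism in $\mathbf{Phys}_1$, and uniqueness is inherited verbatim from the uniqueness clause of Theorem \ref{tensorproductuniversalproperty}. The universal property of $\varphi \otimes \psi$ among composites is then precisely the initiality of $A \otimes B$ furnished by that theorem.

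The main obstacle is conceptual rather than computational: correctly matching each axiom to the part of the universal property it is responsible for — Probabilistic Independence guarantees the \emph{existence} of $\gamma$, Noninteraction guarantees that $\gamma$ \emph{respects the states}, and the Composition axiom together with the normalization hypothesis guarantees that every arrow in sight genuinely lives in $\mathbf{Phys}_1$, so that the comparison map cannot drift into a different normalization class, cf.\ Theorem \ref{normalizationtheorem}. Once this dictionary is fixed, each verification is a one-line calculation and the entire mathematical content sits inside Theorem \ref{tensorproductuniversalproperty}.
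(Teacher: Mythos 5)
Your proposal is correct and follows essentially the same route as the paper's own proof: both reduce initiality to Theorem \ref{tensorproductuniversalproperty}, with Probabilistic Independence supplying the unique factorization $\gamma : \mathcal{O}(\varphi)\otimes\mathcal{O}(\psi)\to\mathcal{O}(\varphi\boxtimes\psi)$ and Noninteraction forcing $\gamma^{\ast}(\varphi\boxtimes\psi)=\varphi\otimes\psi$. Your write-up merely makes explicit two things the paper leaves implicit --- the verification that $\varphi\otimes\psi$ is itself a composite (where normalization enters via $\psi(1)=1$) and the extension of the state identity from simple tensors by linearity --- which is a faithful elaboration rather than a different argument.
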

\begin{proof}
$\varphi \otimes \psi$ clearly satisfies requirements 1-3. That it is initial follows from theorem \ref{tensorproductuniversalproperty}: by probabilistic independence the maps $\mathcal{O}(p_{\varphi})$ and $\mathcal{O}(p_{\psi})$ factor uniquely through $\mathcal{O}(\varphi) \otimes \mathcal{O}(\psi)$, and by noninteraction the pullback of $\varphi \boxtimes \psi$ along this factorization must be $\varphi \otimes \psi$. This gives a unique structure preserving map $\varphi \otimes \psi \rightarrow \varphi \boxtimes \psi$ in $\mathbf{Phys}_{1}$.
\end{proof}
\begin{corollary}
The initial composite satisfies the following additional axioms:
\begin{enumerate}
\item[4.] \emph{Process Covariance}: $\otimes$ is a functor:
\begin{displaymath}
\otimes : \mathbf{Phys}_{1} \times \mathbf{Phys}_{1} \longrightarrow \mathbf{Phys}_{1}.
\end{displaymath}
This means that processes can be composed, in addition to states.
\item[5.] \emph{Naturality of Composition}: the components $p_{\varphi}$ and $p_{\psi}$ form natural transformations $\otimes \rightarrow \pi_{i}$, where $\pi_{i}$ is the projection
\begin{displaymath}
\pi_{i} : \mathbf{Phys}_{1} \times \mathbf{Phys}_{1} \longrightarrow \mathbf{Phys}_{1}.
\end{displaymath}
This means that the initial composite is uniform, and not dependent on the details of any states.
\item[6.] \emph{No Further Relations}: $\otimes$ is initial in the category of functors with the structures and properties above.
\end{enumerate}
\end{corollary}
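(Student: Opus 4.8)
The plan is to dispatch axioms 4 and 5 as bookkeeping and to concentrate on axiom 6, which carries all the content. For \textbf{process covariance (4)} there is nothing genuinely new: $\otimes$ is a functor $\mathbf{Phys}_1 \times \mathbf{Phys}_1 \to \mathbf{Phys}_1$ because $\mathbf{Phys}_r$ is symmetric monoidal and, by Theorem~\ref{normalizationtheorem}(2), the product restricts along $1\cdot 1 = 1$ to the normalized states. So axiom 4 is a citation of the normalization theorem. For \textbf{naturality of composition (5)} I would verify directly that the projection $p_\varphi\colon \varphi\otimes\psi \to \varphi$, which on observables is the unit insertion $a \mapsto a\otimes 1$, is natural in $(\varphi,\psi)$. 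For a morphism $(f,g)\colon (\varphi,\psi)\to(\varphi',\psi')$ the square for $p\colon \otimes \Rightarrow \pi_1$ demands $f\circ p_{\varphi,\psi} = p_{\varphi',\psi'}\circ(f\otimes g)$; applying $\mathcal{O}$ turns both sides into the algebra map $a'\mapsto \mathcal{O}(f)(a')\otimes 1$, the only input being that $\mathcal{O}(g)$ preserves the unit. The identical computation with the second tensor factor handles $p_\psi\colon \otimes \Rightarrow \pi_2$. These two checks also certify that $(\otimes, p_\varphi, p_\psi)$ is itself a legitimate object of the functor category appearing in axiom 6.

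For \textbf{no further relations (6)} I would form the category $\mathcal{K}$ whose objects are functors $B\colon \mathbf{Phys}_1\times\mathbf{Phys}_1 \to \mathbf{Phys}_1$ equipped with natural transformations $P^1\colon B\Rightarrow\pi_1$ and $P^2\colon B\Rightarrow\pi_2$ presenting each value $B(\varphi,\psi)$ as a composite of $\varphi$ and $\psi$ in the sense of the composite-system axioms, the morphisms being natural transformations commuting with the $P^i$. The comparison with $(\otimes,p_\varphi,p_\psi)$ is then built pointwise: Theorem~\ref{compositesystemcharacterization} supplies, for each pair $(\varphi,\psi)$, the unique structure-preserving morphism relating $\varphi\otimes\psi$ and $B(\varphi,\psi)$, and I would take these as the components of the desired $2$-cell. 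Uniqueness of the resulting morphism in $\mathcal{K}$ is then automatic and pointwise, since any competitor restricts at each pair to a projection-compatible morphism and hence coincides with the component dictated by Theorem~\ref{compositesystemcharacterization}; this is exactly the pattern of corollary~\ref{coherencedestroyer}, where uniqueness propagates every coherence constraint for free.

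The one genuine step, and the step I expect to be the main obstacle, is showing that these pointwise comparison morphisms are \emph{natural}. Given $(f,g)$, the two legs of the naturality square are parallel morphisms whose composites with the two canonical projections of the product already agree: on one side by naturality of the $P^i$, on the other by the naturality of the $p_i$ established in (5), both sides collapsing to $f\circ P^1_{\varphi,\psi}$ and $g\circ P^2_{\varphi,\psi}$. The difficulty is that this square compares composites of \emph{different} pairs $(\varphi,\psi)$ and $(\varphi',\psi')$, so the merely pointwise universal property of Theorem~\ref{compositesystemcharacterization} does not by itself force the legs to agree. To close the gap I would invoke the fact that the projections of the product are jointly monic, which is precisely the content of Theorem~\ref{tensorproductuniversalproperty}: $\mathcal{O}(\varphi)\otimes\mathcal{O}(\psi)$ is generated as an algebra by the images of the two unit insertions, so that a morphism having the product as codomain is pinned down by its two legs. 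The delicate point of the whole argument is getting the variance right — recalling that $\mathcal{O}$ is contravariant and that, as noncommutative spaces, the tensor product behaves as a product with projections flowing toward it — so that the product appears as the common codomain of the two legs and Theorem~\ref{tensorproductuniversalproperty} actually bites. Once that is arranged, joint-monicity upgrades the pointwise comparison to a natural one, and initiality of $\otimes$ in $\mathcal{K}$ follows.
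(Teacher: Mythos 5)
Your proposal is correct, and it is considerably more explicit than the paper's own proof, which reads in its entirety: ``These are obvious, with point 6 being a weakening of theorem \ref{compositesystemcharacterization}.'' Your handling of axioms 4 and 5 matches what the paper leaves implicit (restriction of the monoidal structure via theorem \ref{normalizationtheorem}(2), and the one-line unit-insertion computation, which needs only unitality of $\mathcal{O}(g)$). Where you add genuine content is axiom 6: you rightly observe that initiality in the functor category is \emph{not} literally a weakening of the pointwise theorem --- one must additionally check that the pointwise comparison morphisms assemble into a natural transformation --- and your joint-monicity argument closes exactly this gap: since $\mathcal{O}(\varphi')\otimes\mathcal{O}(\psi')$ is generated by the images of the two unit insertions (the uniqueness clause of theorem \ref{tensorproductuniversalproperty}), the projections out of $\varphi'\otimes\psi'$ are jointly monic in $\mathbf{Phys}_1$, and the two legs of the naturality square, having the tensor composite as common codomain, coincide after composition with them. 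You were also right to flag the variance as the delicate point, and you resolved it correctly: under the paper's convention that a morphism $(A,\varphi)\to(B,\psi)$ is an algebra map $B\to A$, the canonical structure-preserving morphism produced in the proof of theorem \ref{compositesystemcharacterization} in fact runs from an arbitrary composite $\varphi\boxtimes\psi$ \emph{to} $\varphi\otimes\psi$ --- precisely the orientation your naturality square needs --- so the paper's label ``initial'' carries a variance slip that your setup silently corrects. One remark for comparison: there is an equivalent route to naturality that uses only the pointwise theorem, and may be what the author had in mind in calling point 6 a ``weakening.'' Given $(f,g)\colon(\varphi,\psi)\to(\varphi',\psi')$, the maps $Q^1=f\circ P^1_{\varphi,\psi}$ and $Q^2=g\circ P^2_{\varphi,\psi}$ exhibit $B(\varphi,\psi)$ as a composite of $\varphi'$ and $\psi'$ (noninteraction and independence pull back along $\mathcal{O}(f)$ and $\mathcal{O}(g)$: one computes $\chi(Q^1(a')Q^2(b'))=\varphi'(a')\psi'(b')$), and both legs of your square are then structure-preserving morphisms from this single composite to the tensor composite of $(\varphi',\psi')$, hence equal by the uniqueness clause of theorem \ref{compositesystemcharacterization}. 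This is the same mathematical substance as your argument, since that uniqueness clause is itself proved from theorem \ref{tensorproductuniversalproperty}; your version simply invokes the generation statement directly rather than through the theorem.
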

\begin{proof}
These are obvious, with point 6 being a weakening of theorem \ref{compositesystemcharacterization}.
\end{proof}
Theorem \ref{compositesystemcharacterization} and corollary \ref{normalizationcorollary} characterize the monoidal product on $\mathbf{Phys}$ for all non-isotropic states. The composites of isotropic states remain mysterious.

\section{Statistical Physics and Non-Unitary Processes}\label{statisticalphysics}
%Non-unitary formalism is not compatible with unitary formalism: unit A -> A^1 is not an iso ( = unitization adjunction is not a reflection?)
% %BUT! Pure non-unitary formalism doesn't make sense!?! We need the damn units to track the state :<.

%Classical measurement problem?
%Pomiar nie zachowuje miary probabilistycznej -- nie moze byc unitarny
%Jak moze *stan swiata* zalamac sie z powodu pomiaru?
%
%How does geometry play with Markov kernels? Vector bundles, pullback, sheaves etc.?
\subsection{Non-unitary GNS}\label{nonunitarygnssection}
To define noncommutative Markov processes, we must extend the notion of admissibility.
\begin{definition}\label{linearadmissibilitydefinition}
A $\ast$-linear map $\Phi : A \rightarrow B$ between $\ast$-algebras is admissible for a state $\varphi \in \mathcal{S}_{r}(B)$ if $A^{\perp} \subseteq \Phi^{-1}(B^{\perp})$.
\end{definition}
Here $A^{\perp}$ is computed for the Hermitian form induced by $\psi = \Phi^{\ast}\varphi$. Note that $\psi$ is representable by theorem \ref{representabilityconditions}(3). Unlike before, the inclusion $\Phi^{-1}(B^{\perp}) \subseteq A^{\perp}$ is no longer trivial, since $\Phi$ is not multiplicative.

Just like in section \ref{representationofadmissiblemaps} we define the linear map
\begin{displaymath}
GNS_{M}(\Phi) : GNS(\psi) = A/A^{\perp} \rightarrow B/B^{\perp} = GNS(\varphi)
\end{displaymath}
by the formula $[x] \mapsto [\Phi(x)]$. No analogue of proposition \ref{admissibilityproperties} is available, and the formula looks like an arbitrary choice.

$GNS_{M}(\Phi)$ is no longer isometric or cyclic, but it can be computed in interesting cases, due to the following proposition.
\begin{proposition}\label{overf}
$GNS_{M}(\Phi) : GNS(\psi) \rightarrow GNS(\varphi)$ satisfies the following identity:
\begin{displaymath}
GNS_{M}(\Phi)(a\Omega_{\psi}) = \Phi(a) \Omega_{\varphi},
\end{displaymath}
for all $a \in A$.
\end{proposition}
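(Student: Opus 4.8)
The plan is to unwind the definitions of the two sides and observe that they coincide on the nose. Recall from the construction in section \ref{representationofadmissiblemaps} that $GNS(\psi) = A/A^{\perp}$ carries the left $A$-module structure in which $a$ acts by $a \cdot [x] = [ax]$, with cyclic vector $\Omega_{\psi} = [1]$; likewise $GNS(\varphi) = B/B^{\perp}$ is the left $B$-module with $\Omega_{\varphi} = [1]$ and $b \cdot [y] = [by]$ (this is the module structure recorded in theorem \ref{representabilityconditions} and theorem \ref{cyclicmodules}).

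First I would compute the left-hand side. Since $a\Omega_{\psi} = a \cdot [1] = [a]$, the defining formula $[x] \mapsto [\Phi(x)]$ for $GNS_{M}(\Phi)$ gives
\begin{displaymath}
GNS_{M}(\Phi)(a\Omega_{\psi}) = GNS_{M}(\Phi)([a]) = [\Phi(a)].
\end{displaymath}
Next I would compute the right-hand side. The element $\Phi(a) \in B$ acts on $GNS(\varphi)$ by left multiplication, so $\Phi(a)\Omega_{\varphi} = \Phi(a) \cdot [1] = [\Phi(a)]$. Both expressions therefore equal the single class $[\Phi(a)] \in B/B^{\perp}$, and the asserted identity follows.

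There is no genuine obstacle here; the only point requiring care is one that is already secured by the hypotheses, namely that the assignment $[x] \mapsto [\Phi(x)]$ is \emph{well-defined}. This is exactly the admissibility condition $A^{\perp} \subseteq \Phi^{-1}(B^{\perp})$ of definition \ref{linearadmissibilitydefinition}, which guarantees that $x \in A^{\perp}$ forces $\Phi(x) \in B^{\perp}$, so that the formula descends to the quotients. Granted well-definedness, the proposition reduces to the one-line computation above. Its real content is conceptual rather than technical: it shows that the seemingly arbitrary formula for $GNS_{M}(\Phi)$ is characterized by the way it intertwines the cyclic vectors with the action of $\Phi$, which is precisely what makes $GNS_{M}(\Phi)$ computable in the concrete examples that follow.
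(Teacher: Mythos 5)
Your proof is correct and is essentially the paper's own argument: the paper's proof is the one-line computation $GNS_{M}(\Phi)(a\Omega_{\psi}) = GNS_{M}(\Phi)([a]) = [\Phi(a)] = \Phi(a)\Omega_{\varphi}$, which is exactly your unwinding of the definitions. Your added remark that well-definedness of $[x] \mapsto [\Phi(x)]$ is precisely the admissibility condition of definition \ref{linearadmissibilitydefinition} is accurate and consistent with how the paper sets up $GNS_{M}$ just before the proposition.
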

\begin{proof}
$GNS_{M}(\Phi)(a \Omega_{\psi}) = GNS_{M}(\Phi)([a]) = [\Phi(a)] = \Phi(a) \Omega_{\varphi}$.
\end{proof}
Note that this property looks like ``being a linear map over $\Phi$'', but it applies only to the cyclic vector. We do not, in general, have $GNS_{M}(\Phi)(av) = \Phi(a)GNS_{M}(\Phi)(v)$ for arbitrary $v \in GNS(\psi)$ and $a \in A$. Note also that this proposition applies to $a = 1$, showing that $GNS_{M}(\Phi)$ is cyclic iff it's unital.

An analogue of proposition \ref{admissibilitypropertiestwo} is available.
\begin{proposition}\label{linearadmissibilityproperties} \hspace{1em}
\begin{enumerate}
\item The composite of admissible maps is admissible
\item The tensor product of admissible maps is admissible
\item Completely positive maps between positive states are admissible
\end{enumerate}
\end{proposition}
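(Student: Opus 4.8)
The plan is to treat the three clauses by the same radical-chasing method as proposition \ref{admissibilitypropertiestwo}, but now phrased entirely through the inclusion $A^{\perp} \subseteq \Phi^{-1}(B^{\perp})$ of definition \ref{linearadmissibilitydefinition}, since the cyclic-submodule criterion of proposition \ref{admissibilityproperties} is unavailable for non-multiplicative maps. Clause 1 is a one-line chase. Given $\ast$-linear maps $\Phi : A \to B$ and $\Psi : B \to C$ and a state $\chi \in \mathcal{S}_{r}(C)$, suppose $\Psi$ is admissible for $\chi$ and $\Phi$ is admissible for $\Psi^{\ast}\chi$. The composite is $\ast$-linear, and its pulled-back state is $(\Psi\Phi)^{\ast}\chi = \Phi^{\ast}(\Psi^{\ast}\chi)$, so the radical $A^{\perp}$ appearing in the composite's admissibility is precisely the one controlled by admissibility of $\Phi$ for $\Psi^{\ast}\chi$. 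Hence $a \in A^{\perp}$ gives $\Phi(a) \in B^{\perp}$, and then $\Psi(\Phi(a)) \in C^{\perp}$, i.e.\ $a \in (\Psi\Phi)^{-1}(C^{\perp})$.

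For clause 2, let $\Phi : A \to B$ be admissible for $\varphi$ and $\Phi' : A' \to B'$ admissible for $\varphi'$, and write $\psi = \Phi^{\ast}\varphi$, $\psi' = (\Phi')^{\ast}\varphi'$. The pulled-back state of $\Phi \otimes \Phi'$ is $\psi \otimes \psi'$, and a direct check on simple tensors shows that the Hermitian form induced by $\psi \otimes \psi'$ on $A \otimes A'$ is the tensor product of the forms induced by $\psi$ and $\psi'$. Lemma \ref{flatnesslemma} then computes the radical as
\[(A \otimes A')^{\perp} = A^{\perp} \otimes A' + A \otimes (A')^{\perp},\]
and likewise for $B \otimes B'$. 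Since $\Phi \otimes \Phi'$ carries $a \otimes a'$ to $\Phi(a) \otimes \Phi'(a')$, admissibility of the two factors sends the spanning summands $A^{\perp} \otimes A'$ and $A \otimes (A')^{\perp}$ into $B^{\perp} \otimes B'$ and $B \otimes (B')^{\perp}$ respectively, both inside $(B \otimes B')^{\perp}$; extending by linearity yields admissibility of $\Phi \otimes \Phi'$. This replaces the appeal to proposition \ref{admissibilityproperties}(5) in the homomorphism case by the flatness lemma applied directly to radicals.

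Clause 3 is the substantive one, and I would reduce it to the Stinespring factorization of theorem \ref{stinepring}. Let $\Phi : A \to B$ be completely positive and $\varphi$ a positive state on $B$. Then $\psi = \Phi^{\ast}\varphi$ is positive, since $\Phi$ is in particular positive, so $\psi(a^{\ast}a) = \varphi(\Phi(a^{\ast}a)) \geq 0$; Lemma \ref{radicalisnullspace} therefore identifies $A^{\perp} = \{a : \psi(a^{\ast}a) = 0\}$ and $B^{\perp} = \{b : \varphi(b^{\ast}b) = 0\}$. Applying theorem \ref{stinepring} to $\Phi$ and $\varphi$ produces a pre-Hilbert $A$-module $H$ with representation $\pi$ and an adjointable $V : GNS(\varphi) \to H$ satisfying $V^{\ast}\pi V = i\Phi$, where $i$ is the GNS representation of $B$; set $\eta = V\Omega_{\varphi}$. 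Using $\pi(a^{\ast}a) = \pi(a)^{\ast}\pi(a)$, the factorization gives
\[\psi(a^{\ast}a) = \varphi(\Phi(a^{\ast}a)) = \langle \pi(a)\eta, \pi(a)\eta \rangle_{H} = \|\pi(a)\eta\|_{H}^{2},\]
so, $H$ being pre-Hilbert, $a \in A^{\perp}$ holds iff $\pi(a)\eta = 0$. On the other hand $\varphi(\Phi(a)^{\ast}\Phi(a)) = \|i\Phi(a)\Omega_{\varphi}\|^{2} = \|V^{\ast}\pi(a)\eta\|^{2}$, which vanishes as soon as $\pi(a)\eta = 0$. Hence $a \in A^{\perp}$ forces $\Phi(a) \in B^{\perp}$, which is exactly $A^{\perp} \subseteq \Phi^{-1}(B^{\perp})$.

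The main obstacle is this third clause: lacking multiplicativity, one cannot propagate degeneracy through $\Phi$ by the elementary computation used for homomorphisms, and the non-unital, topology-free setting forbids a direct Kadison-Schwarz norm estimate. The Stinespring dilation of theorem \ref{stinepring} is precisely the device that supplies the needed Cauchy-Schwarz phenomenon, after which the implication collapses to the triviality $\pi(a)\eta = 0 \Rightarrow V^{\ast}\pi(a)\eta = 0$.
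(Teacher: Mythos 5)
Your proposal is correct, and its skeleton matches the paper's: a radical chase for composites, the tensor-radical formula for products, and Stinespring plus lemma \ref{radicalisnullspace} for complete positivity. The differences are in execution, and they are worth recording. For clause 2 the paper works with preimages, writing $(\Phi\otimes\Psi)^{-1}$ of a sum of subspaces as a sum of preimages -- an equality that is dubious as literally stated (only the containment actually needed holds) -- whereas your forward-image argument on the spanning summands $A^{\perp}\otimes A'$ and $A\otimes (A')^{\perp}$ sidesteps this entirely; both versions rest on the same radical formula, which the paper extracts from $GNS(\varphi\otimes\psi)=GNS(\varphi)\otimes GNS(\psi)$ and you take directly from lemma \ref{flatnesslemma}. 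For clause 3 the paper routes through corollary \ref{stinespringcorollary}: it reduces to admissibility of $i\Phi$ using the identification $GNS(i)=\mathrm{id}$ via proposition \ref{hilbertproposition}, factors $i\Phi=\Psi\circ\pi$ with $\Psi=V^{\ast}(-)V$, disposes of $\pi$ by proposition \ref{admissibilitypropertiestwo}(3), and then checks $\Psi$ on vectorial states with lemma \ref{radicalisnullspace}, invoking clause 1 to assemble the pieces. Your direct computation -- $\psi(a^{\ast}a)=\|\pi(a)\eta\|^{2}$ and $\varphi(\Phi(a)^{\ast}\Phi(a))=\|V^{\ast}\pi(a)\eta\|^{2}$ with $\eta=V\Omega_{\varphi}$ -- collapses all of this to the single implication $\pi(a)\eta=0\Rightarrow V^{\ast}\pi(a)\eta=0$, trading the paper's categorical bookkeeping for two norm identities; it is shorter and keeps every radical explicitly in view, while the paper's version buys the structural observation that $i\Phi$ factors through admissible maps, reusing the machinery of $\mathbf{Phys}_{M}$ rather than computing afresh. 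One stray remark: the paper's algebras are unital throughout, so ``non-unital'' in your closing comment is inaccurate, though nothing in your argument depends on it.
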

\begin{proof}
We deal with the complications of not being a homomorphism on a case-by-case basis.
\paragraph{Ad 1.} This is still obvious, as before.
\paragraph{Ad 2.} Since $GNS(\varphi \otimes \psi) = GNS(\varphi) \otimes GNS(\psi)$ we have
\begin{displaymath}
(A \otimes B)^{\perp} = \ker (A \otimes B \longrightarrow GNS(\varphi \otimes \psi)) = A^{\perp} \otimes B + A \otimes B^{\perp}.
\end{displaymath}
Now let $\Phi : C \rightarrow A$ and $\Psi : D \rightarrow B$ be admissible for states $\varphi \in \mathcal{S}_{r}(A)$ and $\psi \in \mathcal{S}_{r}(B)$, respectively. Then $\Phi \otimes \Psi$ is clearly $\ast$-linear, and
\begin{align*}
(\Phi \otimes \Psi)^{-1}(A \otimes B)^{\perp} &= (\Phi \otimes \Psi)^{-1}(A^{\perp} \otimes B + A \otimes B^{\perp}) \\
&= (\Phi \otimes \Psi)^{-1}(A^{\perp} \otimes B) + (\Phi \otimes \Psi)^{-1}(A \otimes B^{\perp}) \\
&= \Phi^{-1}(A^{\perp}) \otimes D + C \otimes \Psi^{-1}(B^{\perp}) \\
&\supseteq C^{\perp} \otimes D + C \otimes D^{\perp} \\
&= (C\otimes D)^{\perp},
\end{align*}
where we use admissibility of $\Phi$ and $\Psi$ in the penultimate step.

Note that here we heavily rely on linear algebra over fields, especially the flatness of any vector space.
\paragraph{Ad 3.} We use point 1 together with the Stinespring factorization \ref{stinespringcorollary}. Any completely positive map $\Phi : A \rightarrow B$ fits into a commutative square as follows:
\begin{center}
\begin{tikzpicture}
\matrix (m)[matrix of math nodes, row sep = 1.5 cm, column sep = 2.5cm, text height = 1.5ex, text depth = .25ex]{
\underline{End}(H) & \underline{End}(L) \\
A & B \\
};

\path[->] (m-2-1) edge node[auto] {$\pi$} (m-1-1)
		  (m-2-1) edge node[auto] {$\Phi$} (m-2-2)
		  (m-1-1) edge node[auto] {$V^{\ast} - V$} (m-1-2)
		  (m-2-2) edge node[auto] {$i$} (m-1-2);
\end{tikzpicture}
\end{center}
Here $L$ is any pre-Hilbert $B$-module, $H$ is some pre-Hilbert $A$-module depending on $L$, $V : L \rightarrow H$ is an adjointable linear map, and $\pi$ is a homomorphism of $\ast$-algebras. The reader may wish to review the construction of these objects, given before theorem \ref{stinepring}.

Now let $\varphi : B \rightarrow \mathbb{C}$ be a positive state, and set $L = GNS(\varphi)$. Then $GNS(i)$ is, as a function of sets, the identity on $GNS(\varphi)$, by proposition \ref{hilbertproposition}. Thus it suffices to show that $i\Phi$ is admissible. But $\pi$ is admissible, so by point 1 we only need to check that $\Psi = V^{\ast}(-)V : \underline{End}(H) \rightarrow \underline{End}(L)$ is admissible.

For this we use lemma \ref{radicalisnullspace}. To do so, we must show that $\Psi$ preserves positive vectorial states $\varphi_{v} : \underline{End}(L) \rightarrow \mathbb{C}$, i.e.~those given by
\begin{displaymath}
\varphi_{v}(f) = \langle fv, v \rangle_{L}.
\end{displaymath}
We compute
\begin{displaymath}
\Psi^{\ast}\varphi_{v}(f) = \langle V^{\ast} f Vv, v \rangle_{L} = \langle f Vv, Vv \rangle_{H},
\end{displaymath}
which is non-negative, since $H$ is a pre-Hilbert space.

Next we set $v = \Omega_{\varphi} \in L$, and check the admissibility of $\Psi$ for $\varphi_{v}$ using lemma \ref{radicalisnullspace}. We see that
\begin{displaymath}
\underline{End}(H)^{\perp} = \{f : \Psi^{\ast}\varphi_{v}(f^{\ast}f) = 0\},
\end{displaymath}
which is exactly those $f \in \underline{End}(H)$ for which $fVv = 0$. On the other hand
\begin{displaymath}
\underline{End}(L)^{\perp} = \{g : \varphi_{v}(g^{\ast} g) = 0 \},
\end{displaymath}
which is those $g \in \underline{End}(L)$ for which $gv = 0$. Thus if $f \in \underline{End}(H)^{\perp}$ then $\Psi(f) = V^{\ast}f V \in \underline{End}(L)^{\perp}$, which means $\Psi$ is admissible for $\varphi_{v}$. 
\end{proof}

Now let $\ast\mathbf{Alg}_{M}$ be the category of $\ast$-linear maps between $\ast$-algebras, and let
\begin{displaymath}
\mathcal{S}_{M} : \ast\mathbf{Alg}_{M}^{op} \longrightarrow \mathbf{Set}
\end{displaymath}
be the functor assigning to every algebra its set of representable states. This is well-defined by theorem \ref{representabilityconditions}(3).
\begin{definition}
$\mathbf{Phys}_{M}$ is the subcategory of $1\downarrow \mathcal{S}_{M}$ spanned by the admissible morphisms.
\end{definition}
This is well-defined by proposition \ref{linearadmissibilityproperties}, which also implies the next theorem.
\begin{theorem}
$\mathbf{Phys}_{M}$ is a symmetric monoidal category.
\end{theorem}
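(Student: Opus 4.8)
The plan is to mimic the construction of $\mathbf{Phys}_r$ as a comma category, replacing $\ast$-homomorphisms by arbitrary $\ast$-linear maps throughout, and then to cut down to the admissible subcategory using Proposition \ref{linearadmissibilityproperties}. The substantive content — that admissibility survives composition and tensoring — is already contained in that proposition, so the remaining work is formal bookkeeping.

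First I would record that $\ast\mathbf{Alg}_M$ is itself symmetric monoidal under $\otimes$. The only points beyond the corresponding statement for $\ast\mathbf{Alg}$ are that the class of $\ast$-linear maps is closed under $\otimes$ and contains the coherence isomorphisms; both are immediate, since $(\Phi \otimes \Psi)((a\otimes b)^*) = \Phi(a)^* \otimes \Psi(b)^* = ((\Phi\otimes\Psi)(a\otimes b))^*$, and the associators, unitors and braidings are $\ast$-isomorphisms, hence $\ast$-linear. Next I would promote $\mathcal{S}_M$ to a symmetric lax monoidal functor exactly as in Theorem \ref{srismonoidal}, using the same structure maps $(\varphi,\psi) \mapsto \varphi\otimes\psi$ and $\ast \mapsto I_1$. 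The one thing meriting attention is that the naturality squares must now be checked against arbitrary $\ast$-linear $\Phi,\Psi$ rather than homomorphisms; but the identity $(\Phi\otimes\Psi)^*(\varphi\otimes\psi) = \Phi^*\varphi \otimes \Psi^*\psi$ follows by evaluation on simple tensors and never invokes multiplicativity. That representability — equivalently $\ast$-linearity by Theorem \ref{representabilityconditions}(3) — is preserved under such pullbacks is likewise immediate. Hence, by the same purely formal comma-category argument used for $\mathbf{Phys}_r$, the full comma category $1\downarrow\mathcal{S}_M$ is symmetric monoidal, with $(A,\varphi)\otimes(B,\psi) = (A\otimes B, \varphi\otimes\psi)$ and unit $(\mathbb{C}, I_1)$.

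Finally I would restrict to the admissible morphisms. Identities are admissible trivially, since $A^{\perp}\subseteq \mathrm{id}^{-1}(A^{\perp})$, and Proposition \ref{linearadmissibilityproperties}(1) supplies closure under composition, so $\mathbf{Phys}_M$ is a genuine subcategory. Proposition \ref{linearadmissibilityproperties}(2) gives closure of the admissible class under $\otimes$, so the tensor product of $1\downarrow\mathcal{S}_M$ restricts to $\mathbf{Phys}_M$. The remaining gap to close is that the structural isomorphisms themselves lie in $\mathbf{Phys}_M$: each corresponds to a $\ast$-isomorphism $f$ intertwining the relevant states, so $\langle a,a'\rangle_{\psi} = \langle f(a), f(a')\rangle_{\varphi}$ makes $f$ a bijective isometry of the induced Hermitian forms, whence $f^{-1}(B^{\perp}) = A^{\perp}$ and $f$ is admissible.

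With the coherence data admissible and both tensor and composition preserving admissibility, the symmetric monoidal structure descends to $\mathbf{Phys}_M$. I expect the only genuine subtlety to be the last check on the coherence isomorphisms, since admissibility of a plain homomorphism is \emph{not} automatic for indefinite states (it fails for general maps, cf.~remark \ref{reasonforadmissibility}); the point is precisely that here the structural maps are \emph{invertible} isometries. Everything else is either formal or already furnished by Proposition \ref{linearadmissibilityproperties}.
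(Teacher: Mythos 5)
Your proposal is correct and takes essentially the same route as the paper, which also obtains the theorem from the formal symmetric monoidal structure on the comma category $1 \downarrow \mathcal{S}_{M}$ (mirroring theorem \ref{srismonoidal}) together with the closure properties of proposition \ref{linearadmissibilityproperties}(1--2). Your explicit check that the structural isomorphisms are admissible --- being bijective isometries of the induced Hermitian forms, so that $f^{-1}(B^{\perp}) = A^{\perp}$ --- is a detail the paper leaves implicit (it inherits these maps from $\mathbf{Phys} \subseteq \mathbf{Phys}_{M}$, where they already live), and your argument supplies exactly the justification that inheritance rests on.
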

Before stating that $GNS_{M}$ is symmetric monoidal, we must determine its codomain. For now, we declare it to be $\mathbf{Herm}$, the category of nondegenerate Hermitian vector spaces and all linear maps between them. This category is symmetric monoidal by lemma \ref{flatnesslemma}.

This choice neglects a lot of structure, such as the module structure on $GNS(\varphi)$, and the property described in proposition \ref{overf}. Because of this we cannot say that $GNS_{M}$ is fibered over $\ast\mathbf{Alg}_{M}$.
\begin{theorem}\label{nonunitarygns}
The constructions
\begin{align*}
\varphi &\longmapsto GNS(\varphi) \\
\Phi &\longmapsto GNS_{M}(\Phi),
\end{align*}
for objects $\varphi \in \mathbf{Phys}_{M}$ and morphisms $\Phi : \varphi \rightarrow \psi$ in $\mathbf{Phys}_{M}$, are part of a strong symmetric monoidal functor
\begin{displaymath}
GNS_{M} : \mathbf{Phys}_{M} \longrightarrow \mathbf{Herm}.
\end{displaymath}
\end{theorem}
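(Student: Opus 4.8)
The plan is to verify the functor and monoidal-coherence axioms directly from the defining formula $GNS_M(\Phi)([x]) = [\Phi(x)]$, because the slick universal-property arguments behind Theorem \ref{gnsconstruction} are no longer available: by the remark following Proposition \ref{overf}, $GNS_M(\Phi)$ is in general neither cyclic nor a map of modules, so Corollary \ref{coherencedestroyer} cannot be invoked to pin down maps or to discharge coherence automatically. I therefore expect the genuine content to be the hand-verification of naturality of the monoidal comparison for non-multiplicative maps; everything else is bookkeeping made routine by the explicit formula.

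First I would settle well-definedness and functoriality. The rule $[x]\mapsto[\Phi(x)]$ descends to $A/A^\perp \to B/B^\perp$ exactly because $\Phi$ is admissible, $A^\perp \subseteq \Phi^{-1}(B^\perp)$ (Definition \ref{linearadmissibilitydefinition}), and it is $\mathbb{C}$-linear since $\Phi$ is. Identities go to identities, and for composable admissible maps $\Phi:A\to B$, $\Psi:B\to C$ — whose composite is again admissible by Proposition \ref{linearadmissibilityproperties}(1), hence a morphism of $\mathbf{Phys}_M$ — one has $GNS_M(\Psi\Phi)([x]) = [\Psi\Phi(x)] = GNS_M(\Psi)\bigl(GNS_M(\Phi)([x])\bigr)$. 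Since morphisms of $\mathbf{Herm}$ are arbitrary linear maps, there is no isometry condition to check, and functoriality follows.

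Next comes the monoidal structure. On objects the comparison is handed to us as an equality by Corollary \ref{gnsmonoidalproperty}, $GNS(\varphi\otimes\psi)=GNS(\varphi)\otimes GNS(\psi)$ via $[a\otimes b]\mapsto[a]\otimes[b]$, with unit $GNS(I_1)=\mathbb{C}$; in particular the comparison maps are isomorphisms, giving the ``strong'' clause. The crux is then the naturality of this identification against admissible $\Phi,\Psi$ (whose tensor is admissible by Proposition \ref{linearadmissibilityproperties}(2)), which I would establish by evaluating both composites on the spanning classes $[a\otimes b]$:
\begin{align*}
GNS_M(\Phi\otimes\Psi)([a\otimes b]) &= [\Phi(a)\otimes\Psi(b)],\\
\bigl(GNS_M(\Phi)\otimes GNS_M(\Psi)\bigr)([a]\otimes[b]) &= [\Phi(a)]\otimes[\Psi(b)].
\end{align*}
These match under the identification above, and as the $[a\otimes b]$ span $GNS(\varphi\otimes\psi)$, the two linear maps are equal. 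It is essential to argue on simple tensors rather than on cyclic vectors, precisely because $GNS_M$ of a general map is not determined by its value on $\Omega$.

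Finally I would handle symmetry and the pentagon/triangle coherences by reduction. All structural morphisms of $\mathbf{Phys}_M$ (associators, unitors, the symmetry) are the reassociation and swap $\ast$-algebra isomorphisms; being isometries of the tensor Hermitian forms they preserve radicals, hence are admissible homomorphisms and lie in $\mathbf{Phys}\subseteq\mathbf{Phys}_M$. On homomorphisms the formula $[x]\mapsto[\Phi(x)]$ agrees with the $GNS(f)$ of Section \ref{representationofadmissiblemaps}, so $GNS_M|_{\mathbf{Phys}} = U\circ GNS$ for the strong symmetric monoidal forgetful functor $U:\ast\mathbf{Mod}\to\mathbf{Herm}$. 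Consequently each coherence diagram for $GNS_M$ is literally one for $U\circ GNS$, which commutes by Theorem \ref{gnsconstruction}; together with the naturality checked above this yields the strong symmetric monoidal structure. As anticipated, the one place requiring real care is exactly this loss of the cyclicity crutch, which forces the monoidal naturality to be checked by explicit computation on simple tensors.
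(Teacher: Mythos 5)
Your proposal is correct and follows essentially the same route as the paper's proof: functoriality from the explicit formula (equivalently, Proposition \ref{overf} plus cyclicity), the monoidal comparison inherited unchanged because $\mathbf{Phys}_{M}$ has the same objects as $\mathbf{Phys}$, naturality verified by hand on the spanning classes $[a\otimes b] = a\otimes b\,\Omega_{\varphi\otimes\psi}$, and coherence discharged by observing that all structural morphisms lie in $\mathbf{Phys}$, where Theorem \ref{gnsconstruction} already applies. Your explicit remarks that the structural isomorphisms are admissible and that $GNS_{M}$ restricted to $\mathbf{Phys}$ is $U\circ GNS$ merely spell out what the paper leaves implicit.
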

\begin{proof}
For any state $\varphi$ the module $GNS(\varphi)$ is cyclic, and so we can use proposition \ref{overf} for computations. That $GNS_{M}$ is a functor is then obvious.

Note that $\mathbf{Phys}_{M}$ has the same objects as $\mathbf{Phys}$ (literally). It just has more morphisms. Thus the monoidal structure is already there, and we merely have to check that our transformation
\begin{equation}\label{monoidalnaturality}
GNS_{M}(\varphi) \otimes GNS_{M}(\psi) \longrightarrow GNS_{M}(\varphi \otimes \psi)
\end{equation}
remains natural. The coherence conditions don't involve maps outside of $\mathbf{Phys}$, and so are still automatically satisfied.

The isomorphism (\ref{monoidalnaturality}), constructed abstractly in theorem \ref{gnsconstruction}, is easily computed by cyclicity. It's the map
\begin{displaymath}
a \Omega_{\varphi} \otimes b \Omega_{\psi} \longmapsto a\otimes b \Omega_{\varphi \otimes \psi},
\end{displaymath}
where $a \in \mathcal{O}(\varphi)$ and $b \in \mathcal{O}(\psi)$ are acting on the appropriate cyclic vectors.

Now consider $\Phi : \varphi^{\prime} \rightarrow \varphi$ and $\Psi : \psi^{\prime} \rightarrow \psi$ in $\mathbf{Phys}_{M}$ and compute:
\begin{align*}
GNS_{M} (\Phi \otimes \Psi) (a \otimes b \Omega_{\varphi \otimes \psi}) &= \Phi \otimes \Psi (a \otimes b) \Omega_{\varphi^{\prime} \otimes \psi^{\prime}} \\
&= \Phi(a) \otimes \Psi(b) \Omega_{\varphi^{\prime} \otimes \psi^{\prime}} \\
&\mapsto \Phi(a) \Omega_{\varphi^{\prime}} \otimes \Psi(b) \Omega_{\psi^{\prime}} \\
&= GNS_{M}(\Phi)(a \Omega_{\varphi^{\prime}}) \otimes GNS_{M}(b \Omega_{\psi^{\prime}}) \\
&= GNS_{M}(\Phi) \otimes GNS_{M}(\Psi)(a \Omega_{\varphi^{\prime}} \otimes b \Omega_{\psi^{\prime}}),
\end{align*}
where we first use proposition \ref{overf}, and check naturality for the inverse of (\ref{monoidalnaturality}).
\end{proof}

\subsection{The Covariant Representation}
A new problem arises when trying to take the adjoint of $GNS_{M}$. The maps $GNS_{M}(\Phi)$ are not isometric, and so are not guaranteed to have an adjoints upon passing to Hilbert completions. We deal with this in a manner similar to what we suggested after theorem \ref{covariantgnsproperty}.

Let $\textnormal{pre}\mathbf{Hilb} \subseteq \mathbf{Herm}$ be the monoidal subcategory of pre-Hilbert spaces and bounded maps between them. Define
\begin{displaymath}
\mathbf{Phys}_{M, pb} = GNS_{M}^{-1}(\textnormal{pre}\mathbf{Hilb}),
\end{displaymath}
giving a monoidal subcategory of $\mathbf{Phys}_{M}$ spanned by the positive states and processes with bounded $GNS$ representations between them. This category contains $\mathbf{Phys}_{p}$, and is therefore already quite rich. We will see in section \ref{wavefunctioncollapsesection} that it is a proper extension of $\mathbf{Phys}_{p}$.

\begin{definition}
The covariant $GNS_{M}$ construction, $GNS_{M,c}$ is defined as the composite
\begin{center}
\begin{tikzpicture}
\node (a) at (0,0) {$\mathbf{Phys}_{M,pb}$};
\node (b) at (4,0) {$\textnormal{pre}\mathbf{Hilb}^{op}$};
\node (c) at (10,0) {$\mathbf{Hilb}$};

\path[->] (a) edge node[auto] {$GNS_{M}^{op}$} (b)
		  (b) edge node[auto] {completion + adjoint} (c);
\end{tikzpicture}
\end{center}
\end{definition}
By the definition of the tensor product of Hilbert spaces, we have the following theorem.
\begin{theorem}
$GNS_{M,c} : \mathbf{Phys}_{M, pb} \longrightarrow \mathbf{Hilb}$ is a symmetric monoidal functor.
\end{theorem}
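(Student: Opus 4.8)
The plan is to proceed exactly as in theorem \ref{covariantgnsproperty}: by construction $GNS_{M,c}$ is a composite, so it suffices to check that each factor is symmetric monoidal and then invoke that symmetric monoidal functors compose. There are two factors, namely $GNS_M^{op} : \mathbf{Phys}_{M,pb} \to \textnormal{pre}\mathbf{Hilb}^{op}$ and the ``completion $+$ adjoint'' functor $\textnormal{pre}\mathbf{Hilb}^{op} \to \mathbf{Hilb}$, the latter itself factoring as completion followed by the adjoint functor $\mathbf{Hilb}^{op} \to \mathbf{Hilb}$.

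For the first factor, theorem \ref{nonunitarygns} gives that $GNS_M$ is strong symmetric monoidal. Since $\mathbf{Phys}_{M,pb} = GNS_M^{-1}(\textnormal{pre}\mathbf{Hilb})$ is a monoidal subcategory and $\textnormal{pre}\mathbf{Hilb} \subseteq \mathbf{Herm}$ is closed under the Hermitian tensor product (as in corollary \ref{tensorproductsarepositive}), the restriction of $GNS_M$ lands in $\textnormal{pre}\mathbf{Hilb}$ and is still strong symmetric monoidal; passing to opposite categories preserves this structure. For the second factor I would treat completion and adjunction separately. Completion $\textnormal{pre}\mathbf{Hilb} \to \mathbf{Hilb}$ is symmetric monoidal precisely because the Hilbert tensor product is, by definition, the completion of the algebraic tensor product: the coherence isomorphisms $\widehat{V \otimes W} \cong \widehat{V} \mathbin{\widehat{\otimes}} \widehat{W}$ and $\widehat{\mathbb{C}} \cong \mathbb{C}$ are the defining data, and the associativity, unit, and symmetry constraints descend to the completions because bounded maps extend uniquely and continuously. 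The adjoint functor $\mathbf{Hilb}^{op} \to \mathbf{Hilb}$, identity on objects with $f \mapsto f^{\ast}$, is symmetric monoidal since $(f \mathbin{\widehat{\otimes}} g)^{\ast} = f^{\ast} \mathbin{\widehat{\otimes}} g^{\ast}$, $\mathrm{id}^{\ast} = \mathrm{id}$, and the unit and symmetry isomorphisms are unitary, hence equal to their own inverse-adjoints.

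Composing these, $GNS_{M,c}$ is symmetric monoidal. The only step that is not purely formal is the monoidality of completion flagged by the theorem's appeal to ``the definition of the tensor product of Hilbert spaces'', and even this is essentially a tautology once one recalls how $\widehat{\otimes}$ is constructed. The one genuine point requiring care --- and the reason the domain was cut down to $\mathbf{Phys}_{M,pb}$ in the first place --- is that the maps $GNS_M(\Phi)$ must be bounded, so that they admit continuous extensions to the completions and so that those extensions admit adjoints in $\mathbf{Hilb}$; this boundedness holds by the very definition $\mathbf{Phys}_{M,pb} = GNS_M^{-1}(\textnormal{pre}\mathbf{Hilb})$, so no obstacle remains.
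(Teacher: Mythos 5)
Your proposal is correct and follows essentially the same route as the paper, which proves the theorem in one line by observing that $GNS_{M,c}$ is by definition a composite of symmetric monoidal functors, with the monoidality of completion being immediate from the definition of the Hilbert space tensor product as the completion of the algebraic one. Your expansion --- restricting theorem \ref{nonunitarygns} to $\mathbf{Phys}_{M,pb}$, splitting off completion and adjoint as separate factors, and flagging that boundedness (built into the definition of $\mathbf{Phys}_{M,pb}$) is what guarantees continuous extensions and adjoints --- just makes explicit the checks the paper leaves implicit.
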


\subsection{Gelfand Duals of Markov Processes}\label{gelfandduals}
In this section we extend Gelfand duality to Markov processes and completely positive maps. We follow \cite{stochasticgelfand}, albeit with more pedestrian notation.

The Gelfand dual of a completely positive unital map is a Markov process in Radon measures. To see this consider a (completely) positive map
\begin{displaymath}
\Phi : C(Y) \longrightarrow C(X),
\end{displaymath}
and compute
\begin{displaymath}
\Phi(f)(x) = \int_{X} \Phi(f) \, d\delta_{x} = \int_{Y} f \, d\Phi^{\ast}(\delta_{x}),
\end{displaymath}
where $\delta_{x}$ is the Dirac delta measure at $x$ (i.e.~evaluation at $x$). This shows that $\Phi$ is the dual of the Markov process given by
\begin{align*}
X & \longrightarrow M(Y) \\
x & \longmapsto \Phi^{\ast}(\delta_{x}),
\end{align*}
where $M(Y)$ is the space of Radon probability measures on $Y$.

Conversely, given a Markov process $F : X \rightarrow M(Y)$ we obtain a completely positive map
\begin{align*}
C(Y) & \longrightarrow C(X) \\
f & \longmapsto (x \mapsto \int_{Y} f \, dF(\delta_{x})).
\end{align*}
These identifications clearly generalize Gelfand duality, and are compatible with composition. To see the second claim, recall that multiplication in $M$ is given in components $m_{X} : M(M(X)) \rightarrow M(X)$ by
\begin{displaymath}
\int_{X} f \, d(m_{X}(\lambda)) = \int_{M(X)}\int_{X} f(x) \, d\nu(x) \, d\lambda(\nu).
\end{displaymath}
The composition of two Markov processes $F : X \rightarrow M(Y), G : Y \rightarrow M(Z)$ is given by
\begin{displaymath}
X \xrightarrow{F} M(Y) \xrightarrow{M(G)} M(M(Z)) \xrightarrow{m_{Z}} M(Z).
\end{displaymath}
Now consider two positive maps $\Psi : C(Z) \rightarrow C(Y), \Phi: C(Y) \rightarrow C(X)$, with duals $G, F$ respectively. The dual of their composite is
\begin{displaymath}
x \mapsto \Psi^{\ast}(\Phi^{\ast}(\delta_{x})) = \Psi^{\ast}(F(x)) = m_{Z}G_{\ast}(F(x)) = G(F(x)).
\end{displaymath}
To see the penultimate equality consider any Radon measure $\mu$ in place of $F(x)$, and compute:
\begin{displaymath}
\int_{Z} f \, d(m_{Z}G_{\ast}\mu) = \int_{M(Z)} \int_{Z} f(z) \, d\nu(z) \, d(G_{\ast}\mu)(\nu) = \int_{Y} \int_{Z} f(z) \, dG(y)(z) \, d\mu(y),
\end{displaymath}
demonstrating that $\Psi^{\ast}(\mu) = m_{Z}(G_{\ast} \mu)$. The last equality uses the well known adjunction formula: $\int g^{\ast} f \, d\mu = \int f \circ g \, d\mu = \int f \, dg_{\ast}\mu$.

The Radon measure monad is lax monoidal, with the monoidal structure given by
\begin{align*}
M(X) \times M(Y) &\longrightarrow M(X \times Y) \\
(\mu, \nu) &\longmapsto \mu \otimes \nu \\
1 & \longrightarrow M(1) \\
\ast &\longmapsto \delta_{1}
\end{align*}
One easily verifies that the composition and unit on $M$ are monoidal transformations. Because of this, for completely formal reasons \cite{zawadowski}, the Kleisli category $\mathbf{CptHaus}_{M}$ for $M$ is monoidal, with the monoidal product given by
\begin{displaymath}
(X \xrightarrow{F} M(Z)) \otimes (Y \xrightarrow{G} M(T)) = X \times Y \xrightarrow{F \times G} M(Z) \times M(T) \longrightarrow M(Z \times T),
\end{displaymath}
where the last arrow is the monoidal product on $M$.

The identification of completely positive maps with Markov processes is monoidal. Given $\Phi : C(T) \rightarrow C(Y), \Psi: C(Z) \rightarrow C(X)$, with duals $F, G$ respectively, the dual of $\Phi \otimes \Psi : C(Z) \otimes C(T) \rightarrow C(X) \otimes C(Y)$ is, under the identification $C(X) \otimes C(Y) \simeq C(X \times Y)$,  $F \otimes G$. To see this note that under the isomorphism $C(X) \otimes C(Y) \simeq C(X \times Y)$ the measure $\delta_{(x,y)}$, for $(x,y) \in X \times Y$, corresponds to the functional $\delta_{x} \otimes \delta_{y}$ on $C(X) \otimes C(Y)$, and compute
\begin{displaymath}
(\Phi \otimes \Psi)^{\ast}(\delta_{(x,y)}) \simeq (\Phi \otimes \Psi)^{\ast} (\delta_{x} \otimes \delta_{y}) = \Phi^{\ast}(\delta_{x}) \otimes \Psi^{\ast}(\delta_{y}) = F(x) \otimes G(y) = F\otimes G(x,y),
\end{displaymath}
demonstrating that the dual of $\Phi \otimes \Psi$ is $F \otimes G$.

These computations demonstrate the following theorem.
\begin{theorem}[Theorem 5.1 in \cite{stochasticgelfand} ]\label{gelfanddualityextensiontheorem}
Gelfand duality extends to a monoidal equivalence
\begin{displaymath}
\mathbf{CptHaus}_{M} = \{\textnormal{commutative } C^{\ast}\textnormal{-algebras with positive unital maps}\}^{op},
\end{displaymath}
where $\mathbf{CptHaus}_{M}$ is the Kleisli category of the Radon probability measure monad, i.e.~the category of Markov processes in $\mathbf{CptHaus}$.
\end{theorem}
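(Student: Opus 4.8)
The plan is to assemble the explicit computations preceding the statement into a genuine equivalence of categories, treating the object-level and morphism-level correspondences separately and then verifying compatibility with composition, identities, and the monoidal structure. First I would fix the correspondence on objects, which is just ordinary Gelfand duality: a commutative $C^\ast$-algebra is $C(X)$ for a compact Hausdorff space $X$, unique up to homeomorphism, namely its Gelfand spectrum. The content is entirely in the morphisms. I would define the two assignments already written down in the section: from a positive unital map $\Phi : C(Y) \to C(X)$ produce the function $F_\Phi : X \to M(Y)$, $x \mapsto \Phi^\ast \delta_x$, and from a Markov process $F : X \to M(Y)$ produce the map $\Phi_F : C(Y) \to C(X)$ defined by $\Phi_F(f)(x) = \int_Y f \, dF(x)$. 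Here $M(Y)$ carries the weak-$\ast$ topology, under which it is compact Hausdorff by Banach--Alaoglu together with the Riesz representation theorem identifying it with the space of Radon probability measures.

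Next I would check that each assignment is well-defined. For $\Phi_F$: continuity of $F$ into $M(Y)$ means $x \mapsto \int_Y f \, dF(x)$ is continuous for every $f$, so $\Phi_F(f) \in C(X)$; positivity of $\Phi_F$ is immediate since each $F(x)$ is a positive measure, and $\Phi_F(1)(x) = F(x)(Y) = 1$ gives unitality. For $F_\Phi$: weak-$\ast$ continuity of $x \mapsto \Phi^\ast \delta_x$ holds because, for each $f$, the pairing $x \mapsto \int_Y f \, d\Phi^\ast\delta_x = \Phi(f)(x)$ is the continuous function $\Phi(f)$, while positivity and unitality of $\Phi$ make each $\Phi^\ast\delta_x$ a probability measure. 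That the two assignments are mutually inverse is exactly the displayed identity $\int_Y f\, d\Phi^\ast\delta_x = \Phi(f)(x)$ in one direction, and in the other direction follows from Riesz uniqueness: the equalities $\int_Y f \, d\Phi_F^\ast\delta_x = \Phi_F(f)(x) = \int_Y f\, dF(x)$ for all $f$ force $\Phi_F^\ast\delta_x = F(x)$.

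Functoriality and monoidality were effectively established by the computations in the body, and I would invoke them directly. The penultimate displayed equality, $\Psi^\ast\Phi^\ast\delta_x = m_Z G_\ast F(x) = G(F(x))$, shows the correspondence carries composition of positive maps to Kleisli composition (with identities handled by the monad unit $x \mapsto \delta_x$), and the final computation $(\Phi \otimes \Psi)^\ast \delta_{(x,y)} = F(x) \otimes G(y)$ shows it is strong monoidal for the product structures fixed earlier. The one point demanding genuine care---and the place I expect the main obstacle---is reconciling ``positive'' with ``completely positive'': the correspondence produces, a priori, positive unital maps, and to match the completely positive maps used elsewhere in the paper one must invoke the classical fact (Stinespring) that any positive map out of a commutative $C^\ast$-algebra is automatically completely positive, so that for maps between commutative $C^\ast$-algebras the two notions coincide. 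Everything else reduces to the weak-$\ast$ continuity bookkeeping and the Riesz--Markov identification.
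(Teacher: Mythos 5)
Your proposal is correct and takes essentially the same route as the paper: the paper's proof is exactly the chain of computations in its section on Gelfand duals of Markov processes --- the two assignments $\Phi \mapsto (x \mapsto \Phi^{\ast}\delta_{x})$ and $F \mapsto \Phi_{F}$, the Kleisli-composition check via the monad multiplication $m_{Z}$, and the tensor computation $(\Phi \otimes \Psi)^{\ast}\delta_{(x,y)} = F(x) \otimes G(y)$ --- which you merely reorganize, supplying the weak-$\ast$ continuity bookkeeping and the Riesz-uniqueness argument for mutual inverses that the paper leaves implicit. Your closing observation that positivity and complete positivity coincide for maps between commutative $C^{\ast}$-algebras is likewise consonant with the paper, which signals the same point by writing ``(completely) positive'' and freely interchanging the two notions.
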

Equivalently $\mathbf{CptHaus}_{M}$ is the category of Markov processes with Radon measure kernels between compact Hausdorff spaces.
\begin{remark}
We have restricted ourselves to probability measures, since only then is $M(X)$ a compact Hausdorff space. Finite measures give a locally compact Hausdorff space, and require working with locally compact spaces from the beginning. Since we are focusing on unital algebras, we will not pursue this generalization here.
\end{remark}
\begin{corollary}\label{gelfanddualityextensioncorollary}
The category of compact Radon probability spaces and Markov processes between them is monoidally equivalent to the category of states on commutative $C^{\ast}$-algebras and positive unital maps between them.
\end{corollary}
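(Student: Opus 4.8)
The plan is to derive the corollary directly from the monoidal equivalence of Theorem \ref{gelfanddualityextensiontheorem} by applying the coslice-under-the-monoidal-unit construction on both sides. Recall from the introduction that the category of compact Radon probability spaces and measure-preserving Markov processes is precisely the coslice $1/\mathbf{CptHaus}_M$, where $1$ is the one-point space: an object is a Markov process $1 \to X$, i.e. a map $1 \to M(X)$ selecting a Radon probability measure $\mu$ on $X$, and a morphism to $(Y,\nu)$ is a Markov process $F : X \to Y$ whose Kleisli composite with $\mu$ equals $\nu$. Unwinding that composite as $1 \xrightarrow{\mu} M(X) \xrightarrow{M(F)} M(M(Y)) \xrightarrow{m_Y} M(Y)$ recovers exactly the pushforward condition $F_*\mu = \nu$, which is measure preservation.

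First I would identify the target category of the corollary with the matching coslice on the algebraic side. Writing $\mathcal{D}$ for the category of commutative $C^{\ast}$-algebras and positive unital maps, the objects of $\mathbb{C}/\mathcal{D}^{op}$ are morphisms $\mathbb{C} \to A$ in $\mathcal{D}^{op}$, i.e. positive unital maps $A \to \mathbb{C}$, which are exactly the states on $A$; a morphism is an arrow $\Phi : B \to A$ in $\mathcal{D}$ with $\varphi \circ \Phi = \psi$, i.e. the state-preservation condition $\Phi^{\ast}\varphi = \psi$. Thus $\mathbb{C}/\mathcal{D}^{op}$ is, verbatim, ``the category of states on commutative $C^{\ast}$-algebras and positive unital maps between them.''

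Next I would invoke Theorem \ref{gelfanddualityextensiontheorem}, which supplies a monoidal equivalence $\mathbf{CptHaus}_M \simeq \mathcal{D}^{op}$. The one-point space is the monoidal unit of $\mathbf{CptHaus}_M$ (it is even terminal, since $M(1)$ is a singleton), and under the equivalence it corresponds to $\mathbb{C}$, the monoidal unit of $\mathcal{D}^{op}$. Since the coslice under a monoidal unit is itself monoidal, with $(X,\mu) \otimes (Y,\nu) = (X \otimes Y,\ \mu \otimes \nu)$ assembled from the lax structure on $M$, and since any monoidal equivalence sending unit to unit induces a monoidal equivalence of these coslices, I obtain the desired monoidal equivalence $1/\mathbf{CptHaus}_M \simeq \mathbb{C}/\mathcal{D}^{op}$.

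For concreteness the comparison acts by Riesz--Markov: $(X,\mu)$ maps to $(C(X),\varphi_\mu)$ with $\varphi_\mu(f) = \int_X f \, d\mu$, and a measure-preserving $F$ maps to its Gelfand dual $\Phi$ from the theorem; the required identity $\Phi^{\ast}\varphi_\mu = \varphi_\nu$ is then exactly the relation $\int_X \Phi(g)\,d\mu = \int_Y g\,dF_*\mu$ already established in the proof of Theorem \ref{gelfanddualityextensiontheorem}. The only point demanding care --- and where I expect the modest real work to sit --- is matching the Kleisli-composite definition of a coslice morphism with the naive notion of a measure-preserving kernel, and confirming that the product measure corresponds to the tensor-product state $\varphi \otimes \psi$; both reduce to transporting the already-verified monoidal structures through the equivalence rather than to any new computation.
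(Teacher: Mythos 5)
Your proposal is correct and follows essentially the same route as the paper, which identifies the first category as the coslice $1/\mathbf{CptHaus}_{M}$ and the second as the slice of positive unital maps over $\mathbb{C}$ (your $\mathbb{C}/\mathcal{D}^{op}$ is the same thing, written on the opposite side) and transports one to the other through the monoidal equivalence of theorem \ref{gelfanddualityextensiontheorem}. The only difference is that you spell out the formal steps the paper compresses into ``clearly dual'' --- the unit-to-unit correspondence $1 \leftrightarrow \mathbb{C}$, the induced monoidal structure on the coslices, and the unwinding of the Kleisli composite to the pushforward condition --- all of which are accurate.
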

\begin{proof}
The first category is the coslice $1/\mathbf{CptHaus}_{M}$ and the second is the slice $\{C^{\ast}$-algebras and positive maps between them$\}/\mathbb{C}$. They are clearly dual to each other, through the above monoidal equivalence.
\end{proof}

\subsection{Quantum Markov Processes}
The discussion above allows us to generalize the relationship between the GNS construction and probability theory (theorem \ref{probabilisticinterpretation}) to the case of Markov processes. We begin by extending the functor $CL^{2}$ to our new setting.

Let $F : X \rightarrow Y$ be a Markov process between probability spaces. Then by corollary \ref{gelfanddualityextensioncorollary} we obtain a completely positive unital map
\begin{displaymath}
C(F) : C(Y) \longrightarrow C(X),
\end{displaymath}
which furthermore preserves the expectation values on $C(X)$ and $C(Y)$.

We define
\begin{displaymath}
CL^{2}(F) : L^{2}(Y) \longrightarrow L^{2}(X)
\end{displaymath}
by the formula
\begin{equation}\label{cl2}
CL^{2}(F)(f)(x) = \int_{Y} f \, dF(\delta_{x}),
\end{equation}
with the right hand side seen as an element of $L^{2}(X)$. One easily sees that this is well defined, and monoidal. Indeed, the formula (\ref{cl2}) is just the composite of pulling back by the Gelfand dual of $C(F)$ with the projection to the $GNS$ space. As such it is immediately obvious that $CL^{2}(F)$ is given by the same formula defining $GNS_{M}$, leading us to the following theorem.

\begin{theorem}\label{completeprobabilisticcompatibility}
The following prism of symmetric monoidal functors commutes up to natural monoidal isomorphism
\begin{center}
\begin{tikzpicture}[overline/.style={preaction={draw=white, -, line width=6pt}}]
\matrix (m) [matrix of math nodes, nodes in empty cells, row sep = 1.5cm, column sep = 2.5cm, text height = 1.5ex, text depth = .25ex]{
& \mathbf{Prob}_{C}^{op}\\
\mathbf{Phys}_{p}^{op}  && \ast \mathbf{Mod} \\
& 1/\mathbf{CptHaus}_{M}^{op}\\
\mathbf{Phys}_{M}^{op} && \mathbf{Herm} \\
};

\path[->] (m-2-1) edge  (m-4-1)
		  (m-2-3) edge node[auto] {$U$} (m-4-3)
		  (m-4-1) edge node[auto] {$GNS_{M}$} (m-4-3)
		  (m-1-2) edge node[auto] {$CL^{2}$} (m-2-3)
		  (m-1-2) edge (m-3-2)
		  (m-2-1) edge[overline] node[auto, preaction = {fill=white}] {$GNS$} (m-2-3)
		  (m-1-2) edge node[auto,swap] {$C^{op}$} (m-2-1)
		  (m-3-2) edge node[auto] {$CL^{2}$} (m-4-3)
		  (m-3-2) edge node[auto,swap] {$C^{op}$} (m-4-1);
\end{tikzpicture}
\end{center}
where $U$ is the obvious forgetful functor, and the unlabeled arrows are inclusions.
\end{theorem}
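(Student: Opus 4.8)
The plan is to prove commutativity of the prism one face at a time. A triangular prism has five faces: the two triangular faces (the top triangle on $\mathbf{Prob}_C^{op}$, $\mathbf{Phys}_p^{op}$, $\ast\mathbf{Mod}$, and the bottom triangle on $1/\mathbf{CptHaus}_M^{op}$, $\mathbf{Phys}_M^{op}$, $\mathbf{Herm}$) and three rectangular faces (the $C^{op}$-rectangle, the $CL^2$-rectangle, and the $GNS$/$GNS_M$-rectangle). First I would establish that each rectangular face commutes strictly, and that each triangular face commutes up to a unique monoidal natural isomorphism; the strictness of the rectangles then forces the two triangular isomorphisms to agree under restriction, yielding a single coherent monoidal natural isomorphism for the whole prism with no further coherence law to check.

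For the top triangle there is nothing to prove: restricted to positive states, where $C$ takes its values, it is exactly the second diagram of theorem \ref{probabilisticinterpretation}, which already supplies the natural monoidal isomorphism $GNS \circ C^{op} \cong CL^2$. For the bottom triangle I would repeat that argument in the stochastic setting. Given a Markov process $F$, its Gelfand dual $C(F)$ is a completely positive unital map between the relevant expectation states (theorem \ref{gelfanddualityextensiontheorem} and corollary \ref{gelfanddualityextensioncorollary}), hence admissible by proposition \ref{linearadmissibilityproperties}(3), so $GNS_M$ applies to it. Proposition \ref{overf} evaluates $GNS_M(C(F))$ on the cyclic vector $\Omega = 1 \in L^2$, and the resulting expression is precisely the defining formula (\ref{cl2}) for $CL^2(F)$; by cyclicity the two linear maps coincide, and the comparison isomorphism is unique by theorem \ref{representabilityconditions}. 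Thus the bottom triangle commutes up to a unique monoidal natural isomorphism of the same kind.

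It remains to check the three rectangular faces, each of which I expect to commute on the nose. The $C^{op}$-rectangle expresses that ordinary Gelfand duality is the restriction of its stochastic extension: a continuous probability-preserving map is a particular Markov process, and the induced $\ast$-homomorphism is the induced completely positive unital map, so the two $C^{op}$ functors agree along the vertical inclusions. The $GNS$/$GNS_M$-rectangle is the crux: on $\mathbf{Phys}_p$ a morphism is a $\ast$-homomorphism $f$, and both $GNS(f)$ and $GNS_M(f)$ are by definition the map $[x] \mapsto [f(x)]$, while $U$ merely forgets the $\ast$-module structure and the isometry condition; hence $U \circ GNS = GNS_M \circ \iota$ strictly, the equality on objects being the literal identification $GNS_M(\varphi) = GNS(\varphi)$ of underlying Hermitian spaces. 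The $CL^2$-rectangle commutes because, by inspection of (\ref{cl2}), the extended $CL^2$ of a continuous map is the image in $L^2$ of the pullback of functions, i.e.\ exactly the continuous $CL^2$ followed by $U$.

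The main obstacle is not any single computation but the simultaneous bookkeeping of monoidality and naturality across all five faces. However, this reduces to facts already in hand: the monoidality of $GNS_M$ (theorem \ref{nonunitarygns}), the monoidality of the Gelfand extension (theorem \ref{gelfanddualityextensiontheorem}), and the monoidal naturality supplied by theorem \ref{probabilisticinterpretation}. Since the three rectangles commute strictly and monoidally, the monoidal isomorphism of the bottom triangle restricts along the vertical inclusions to that of the top triangle, so the two triangular isomorphisms are compatible and paste to a single monoidal natural isomorphism for the prism, as claimed.
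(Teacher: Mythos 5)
Your proposal is correct and takes essentially the same route as the paper's own proof: the identical face-by-face decomposition, with the top triangle given by theorem \ref{probabilisticinterpretation}, the $C^{op}$ square by corollary \ref{gelfanddualityextensioncorollary}, the front square strict because $GNS$ and $GNS_{M}$ share the formula $[x] \mapsto [f(x)]$ (remark \ref{uniquenessremark}), and the bottom triangle by identifying $GNS_{M}(C(F))$ with the defining formula (\ref{cl2}) for $CL^{2}(F)$ via proposition \ref{overf} and cyclicity. The only immaterial difference is that you verify the $CL^{2}$ rectangle directly by inspection of (\ref{cl2}) while the paper deduces it from the bottom triangle together with theorem \ref{probabilisticinterpretation}, and, like the paper, you defer the detailed monoidal coherence checks.
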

\begin{proof}
The top triangle commutes by theorem \ref{probabilisticinterpretation}. The back left square commutes by corollary \ref{gelfanddualityextensioncorollary}. The front square commutes by the definitions of $GNS$ and $GNS_{M}$ (the formula for $GNS$ is a necessary consequence of proposition \ref{admissibilityproperties}, see remark \ref{uniquenessremark}). The bottom triangle commutes by theorem \ref{probabilisticinterpretation} and the explicit constructions of $GNS_{M}$ and $CL^{2}$, as mentioned above. The back right square then commutes theorem \ref{probabilisticinterpretation}, and the commutativity of the bottom triangle.

These isomorphisms are given by easily computed explicit formulas. We leave checking their coherence to the reader.
\end{proof}
\begin{remark}
We have omitted the $L^{\infty}$ version of this theorem. It would require the duality of section \ref{gelfandduals} for von Neumann algebras. Such a generalization should not present any serious difficulty -- for compact Hausdorff spaces and Radon measures, the $L^{2}$space is (topologically) cyclic for both $C(X)$ and $L^{\infty}$-algebras.
\end{remark}
This theorem raises more questions than it answers.
\begin{problem}\hspace{1em}
\begin{enumerate}
\item What does the Stinespring factorization theorem mean for ordinary (commutative) Markov processes?
\item What does the Kleisli category structure on ordinary Markov processes mean for completely positive maps?
\item Is $\mathbf{Phys}_{M}$ a Kleisli category for some monad on $\mathbf{Phys}$?
\end{enumerate}
\end{problem}
\paragraph{Example.}
Let $\textnormal{Time} =\mathbb{R}_{\geq 0}$ be the the order of the nonnegative real numbers, considered as a category. The functor category
\begin{displaymath}
\mathbf{Phys}_{M}^{\textnormal{Time}},
\end{displaymath}
represents a vast generalization of the category of quantum dynamical semigroups (cf.~\cite{holevo}). The preceding theorem shows that this notion completely subsumes the notion of a Markov semigroup, defined as semigroups of maps in $1/\mathbf{CptHaus}_{M}$.

\subsection{Conditioning}\label{wavefunctioncollapsesection}
Since $GNS$ maps for admissible morphisms are not usually cyclic, the proper reformulation of theorem \ref{schrodingerpicture} is not obvious. Corollary \ref{stinespringcorollary} makes this superfluous to a certain extent, giving a definite form to the most interesting maps under investigation -- the completely positive ones. Here we will simply note some obvious examples, which exhibit enough cyclicity for computation. The reader should think of these computations as extending corollary \ref{schrodingerpicturecorollary} to coisometries.

\subsubsection{State Vector Collapse}
Consider an inclusion $i : H \rightarrow H^{\prime}$ of Hilbert spaces. Its adjoint $i^{\ast}$ is the orthogonal projection $H^{\prime} \rightarrow H$. Both $\Phi = i^{\ast}(-)i$ and $\Psi = i (-) i^{\ast}$ are completely positive maps between $\underline{End}(H)$ and $\underline{End}(H^{\prime})$, with $\Phi \circ \Psi = 1_{\underline{End}(H)}$, and $\Psi \circ \Phi = ii^{\ast} (-)ii^{\ast}$ being a conditioning operator by the self-adjoint projection $P = ii^{\ast}$. By proposition \ref{hilbertproposition} both $H$ and $H^{\prime}$ are cyclic for their endomorphism algebras, and any nonzero vector is cyclic. By remark \ref{boundednessremark} $\underline{End}(H)$ is just the usual algebra of all bounded operators on $H$, and $H$ is algebraically cyclic over it -- no closure required.

Now let $v \in H$, and $\varphi_{v} : \underline{End}(H) \rightarrow \mathbb{C}$ be the state given by $\varphi_{v}(a) = \langle av, v \rangle_{H}$. By the above $GNS(\varphi_v) = H$, with cyclic vector $\Omega_{\varphi_{v}} = v$. Next let $\psi = \Phi^{\ast} \varphi_{v}$, and compute
\begin{displaymath}
\Phi^{\ast}\varphi_{v}(a) = \varphi_{v}(\Phi(a)) = \langle i^{\ast} aiv, v\rangle_{H} = \langle aiv, iv\rangle_{H^{\prime}},
\end{displaymath}
to see that $\psi = \varphi_{iv}$. Thus $GNS_{M}(\Phi) : H \rightarrow H^{\prime}$. By proposition \ref{overf}, it acts as
\begin{displaymath}
a iv \longmapsto \Phi(a) v = i^{\ast}aiv,
\end{displaymath}
for $a \in \underline{End}(H^{\prime})$, and $v$ kept fixed, and so $GNS_{M}(\Phi) = i^{\ast}$. Similarly, $\Psi^{\ast} \varphi_{w} = \varphi_{i^{\ast}w}$, for $w \in H^{\prime}$, and $GNS_{M}(\Psi) = i$, acting as 
\begin{displaymath}
ai^{\ast}w \longmapsto \Psi(a)i^{\ast}w = iai^{\ast}ii^{\ast}w = iai^{\ast}w,
\end{displaymath}
for $a \in \underline{End}(H)$.

This gives $GNS_{M}(\Psi) \circ GNS_{M}(\Phi) = GNS_{M}(\Phi \circ \Psi) = ii^{\ast} = P$, the orthogonal projection onto $H$. Note that, since $GNS_{M}$ is contravariant, the morphism ``$\Phi \circ \Psi$'', as an arrow of $\mathbf{Phys}_{M}$, corresponds to the algebra homomorphism $\Psi \circ \Phi$. The covariant $GNS_{M}$ functor yields $GNS_{M,c}(\Phi) = i$, and $GNS_{M,c}(\Psi) = i^{\ast}$. This gives the expected identity $GNS_{M,c}(\Phi \circ \Psi) = P$. This is another indication of the physical naturality of the $GNS_{c}$ construction.

We can also compute the effect of conditioning on an arbitrary $\ast$-algebra $A$. Let $P \in A$ be a self-adjoint projection, let $\varphi$ be a positive state on $A$, let $\Phi : A \rightarrow A$ be $\Phi(a) = PaP$, and set $\psi = \Phi^{\ast} \varphi$. Clearly, $\Phi$ is completely positive. Note that
\begin{displaymath}
\psi(a) = \varphi(\Phi(a)) = \varphi(PaP) = \langle aP\Omega_{\varphi}, P\Omega_{\varphi} \rangle_{GNS(\varphi)},
\end{displaymath}
and hence $\psi$ is represented by $P\Omega_{\varphi} \in GNS(\varphi)$. Thus, by theorem \ref{positiveuniversality}, $GNS(\psi) = AP\Omega_{\varphi} \subseteq GNS(\varphi)$, 

As above, $GNS_{M}(\Phi) : GNS(\psi) \rightarrow GNS(\varphi)$ acts as
\begin{displaymath}
aP\Omega_{\varphi} = a\Omega_{\psi} \longmapsto PaP\Omega_{\varphi} = Pa\Omega_{\psi},
\end{displaymath}
which means it is the composite
\begin{displaymath}
GNS(\psi) \hookrightarrow GNS(\varphi) \xrightarrow{P} GNS(\varphi).
\end{displaymath}
Note that $GNS(\psi)$ is not, in general, contained in the image of $P$, so this is a nontrivial map. If $A = End(H)$, and $\varphi = \varphi_{v}$ for some $v \in H$, we would have $GNS(\psi) = H = GNS(\varphi)$, as long as $\psi$ is nonzero. For general $A$, $GNS(\psi)$ may be a proper subspace of $GNS(\varphi)$. Returning to the current situation, $GNS_{M,c}(\Phi)$ is given by the action of $P^{\ast} = P$ followed by the orthogonal projection onto (the Hilbert completion of) $GNS(\psi)$.

I submit to the reader that these computations provide a reasonable mathematical interpretation of the notion of ``state vector collapse'', with $P$ given by a suitable spectral projection. What we have shown is that it is an unnormalized conditioning operation. The lack of normalization is not a problem -- rescaling is a Markov process in our setting.

\subsubsection{Scattering}
We can define scattering processes, such as annihilation $e^{+} + e^{-} \rightarrow \gamma + \gamma$, directly as Markov processes, for any well-defined scattering matrix.

Let $\alpha$ and $\beta$ denote two types of particles (possibly composite). The scattering process $\alpha \rightarrow \beta$ is given by the composite
\begin{displaymath}
H_{\alpha} \xrightarrow{i_{\alpha}} \mathcal{F}(H) \xrightarrow{S} \mathcal{F}(H) \xrightarrow{p_{\beta}} H_{\beta},
\end{displaymath}
where $H_{\alpha}, H_{\beta} \subseteq \mathcal{F}(H)$ are the Hilbert spaces of states of the $\alpha$ and $\beta$ particles, $\mathcal{F}$ is the Fock space functor, $H$ is an arbitrary Hilbert space (usually a uniform mixture of elementary particles), $S$ is a unitary operator (called the scattering matrix), and the $i$ and $p$ maps are inclusions and projections, respectively. Since all these maps are inclusions, projections, or are unitary, this composite defines an arrow $S_{\alpha \beta} : \alpha \rightarrow \beta$ in $\mathbf{Phys}_{M}$ over the completely positive map $(p_{\beta}Si)^{\ast} (-) p_{\beta}Si : \underline{End}(H_{\beta}) \rightarrow \underline{End}(H_{\alpha})$, such that $GNS_{M,c}(S_{\alpha \beta})$ is the composite displayed above.

Note that in decomposing $S$ into its matrix elements we lose the full algebra of observables on Fock space, and must restrict to the observables preserving the $\alpha$ and $\beta$ particles.

\subsubsection{Corollary: The ``Penrose Problem''}
We finish this section by indulging in wild quantum gravity speculation. Below we unapologetically ignore the specific content of Penrose's ideas \cite{penrose}, and gratuitously appropriate his name nonetheless.

The critical point I wish to communicate here is that the basic idea of gravity collapsing the state of a system \emph{could be right}. What's more, we are in a position to look for its mathematical realization. We formulate the search as follows.
\begin{problem}[``Penrose Problem'']\label{penroseproblem}
Which bordisms can be monoidally represented by conditioning maps?
\end{problem}
More formally let $\mathbf{Bord}$ be some category of structured bordisms, such as timelike Lorentzian bordisms. Are there any symmetric monoidal functors
\begin{displaymath}
\mathbf{Bord} \longrightarrow \mathbf{Phys}_{M},
\end{displaymath}
which map a bordism to a conditioning process? Clearly, such bordisms cannot be invertible. But, with the extra structure afforded by a metric, there are plenty of such morphisms, even for topologically trivial bordisms. Expanding and collapsing spacetimes are both obvious examples. Dualizing the the TQFT wisdom that
\begin{quote}
(\ldots) \emph{the absence of topology change implies unitary time evolution}.
\begin{flushright}
John Baez, \cite{quandaries} (emphasis original)
\end{flushright}
\end{quote}
we can say that
\begin{quote}
\emph{The presence of a dynamical metric allows non-unitary time evolution.}
\end{quote}
We even know that this allowance is utilized by quantum field theory in curved spacetime \cite{wald}.

At the physical level of rigor, we can formulate our question as follows: dynamical spacetime appears to represent a flux of information. Can this information be used to condition states evolving in that spacetime? Can there be a gravity-induced outflux beyond what is required by the canonical commutation relations?

The answer to the second question appears to be yes -- consider Hawking radiation. I consider it to be an exact result in an approximate theory, hence worthy of mathematical consideration. By hand-waving CPT arguments \cite{hawking} we expect influxes to be possible as well.

It would be interesting to investigate this obviously information theoretic aspect of bordism representations to Verlinde's ideas on entropic gravity \cite{verlinde}.

\subsection{Remarks on Measurement and Interpretation}\label{interpretation}
Having constructed state vector collapse as a legitimate dynamical object, it is only natural to return to the problem of interpreting quantum theory. In this section we return to the axiomatic postulates of the introduction, treating states and processes synthetically. This determines abstract categories called $\mathbf{Phys}$ and $\mathbf{Phys}_{M}$, which should not be confused with their specific models constructed earlier. We proceed through a series of remarks.
\begin{enumerate}
\item There is no recognized measurement problem in classical mechanics. This is only possible due to assigning probability a purely epistemic role, claiming it to be a quantification of our ignorance (and exclusively ignorance).

\item This makes mixed states completely fictitious. If there is a classical system whose (mixed) states do not obey Choquet theory, then this claim would be invalidated. Mixed states would need to be treated as independently existing entities. As far as ontology is concerned, the probabilistic combination $\frac{1}{2} \varphi + \frac{1}{2} \psi$ is just as problematic as any complex superposition.

\item Quantum theory makes such epistemic dodging impossible. Due to the commutation relations, $[p,q] = i \hbar$, the states required by epistemic interpretations do not exist. There is no probability space on which $p$ and $q$ are both scalar variables.

\item Hidden variable theories push back on the epistemic front, postulating an unobservable (even in principle \cite[2.5-2.6]{durr}) exact state. This alters the mathematical formalism, and will not be discussed here.

\item No two interpretations can disagree on the statistics of measurement, since that would lead to empirical differences. What role is left? It seems that it is \emph{exclusively} the probabilistic aspect of quantum theory that is problematic. To challenge this claim one would need to produce a non-epistemic interpretation of statistical mechanics which does not extend to quantum theory.

\item If the observable functor, $\mathcal{O}$, is not faithful, then measurement statistics can fail to distinguish two distinct processes. From a realist perspective, this gives rise to \emph{essentially non-quantitative} ``laws'' of physics, and deeply muddies the problem of measurement. We give three examples of increasing sophistication.

\paragraph{Failure of Gelfand Duality.}
Consider a non-Hausdorff space $X$, seen as a space of states of some system. Then the observables $X \rightarrow \mathbb{R}$ factor through the Hausdorffization, which collapses certain states in $X$. Since continuous maps $X \rightarrow Y$ serve (by analogy) as physical processes, we see that observables can miss differences among them.
\paragraph{Random Processes in $\mathbf{Phys}_{M}$.}
The category $\mathbf{Phys}_{M}$ can be constructed inside the topos of presheaves on probability spaces (cf.~section \ref{sins}). Then there are stochastic (i.e.~internal) functors
\begin{displaymath}
(\cdot \longrightarrow \cdot) \rightarrow \mathbf{Phys}_{M},
\end{displaymath}
which are empirically indistinguishable. In particular, one cannot tell if collapse actually occurs during measurement or not.

To get an approximate idea of how stochastic functors work, consider random variables taking values in the arrows of $\mathbf{Phys}_{M}$, without fixed domains and codomains. This approximation is unfortunately not technically viable, since there is no natural $\sigma$-algebra on the arrows of $\mathbf{Phys}_{M}$, or even on the hom-sets.
\paragraph{Gauge Theories.}
If our discussion in appendix \ref{appendixa} is on the right track, then $\mathbf{Phys}$ for gauge theories should look something like the 2-category $\mathbf{Gpd}$, of all groupoids. Then $\mathcal{O}$ is simply $\mathbf{Gpd}(-, \mathbb{R})$, with $\mathbb{R}$ discrete. This functor is obviously not faithful.

This unfaithfulness seems to have the effect of necessitating the consideration of ghosts, despite the fact that they ``have no physical significance''.

\item These examples suggest that there is plenty of purely mathematical ambiguity to go around, even before any serious interpretation is required. In particular, the notion of measurement, classical or quantum, is sorely lacking in conceptual development and mathematical structure.

\item The standard form of a measurement, given by a process
\begin{displaymath}
\textnormal{System} \otimes \textnormal{Apparatus} \longrightarrow \textnormal{System} \otimes \textnormal{Apparatus},
\end{displaymath}
is inadequate in two ways:
\begin{enumerate}
\item If the system is either the universe, or the apparatus, then the form above is simply wrong. There is nothing outside the universe, and self-measurement does not involve two copies of oneself. Yet we measure the universe and ourselves regularly. What happens? Why would measurement be a distinguished type of physical process? If it's not distinguished from mere time evolution, how would \emph{we} distinguish it? Such a distinction would be a prima facie formal object, with direct impact on our empirical pronouncements -- a truly miraculous entity.

\item As we have seen in section \ref{compositesystems}, $\otimes$ is the \emph{noninteracting} composite. Spatial compositions in laboratories are not of this kind. In particular, the composites have significantly fewer possible states: two bricks and $\textnormal{brick} \otimes \textnormal{brick}$ differ because of fermionic statistics. The ultimate significance of this is unclear to me.
\end{enumerate}

\item It would be interesting to consider complete interpretations as fully formal structures, taking the form of phenomenological reduction functors
\begin{displaymath}
\mathbf{Phys}_{M} \longrightarrow \mathbf{Pheno},
\end{displaymath}
taking values in phenomenological categories, constructed out of phenomena -- the direct objects of experience, which do not require any additional interpretation. Every person does seem to have such a metacategory (cf.~\cite[I.1]{maclane}) at hand. Can it be made a mathematical object? Is there a mathematical theory of subjectivity?
\item If probability is ontologically traumatic, then it's exit could be even more so. The imaginary Planck constant $i\hbar$ is a parameter controlling degree of noncommutativity among observables. To similarly introduce a parameter \cancel{a}, controlling associativity, would wipe out our access to probabilistic structures. Observables would have expectation values, but not distributions! Frequentism would somehow necessarily fail (magic!), and the existence of conserved quantities could depend on the choice of observables.

\item The preceding is a general phenomenon. Whenever we have a functor of categories of spaces $F: \mathcal{C}\rightarrow \mathcal{D}$, which is not a morphism of sites, the geometries of $X$ and $F(X)$ can differ greatly. This applies in particular to noncommutative and nonassociative geometries. For a riveting discussion of how the geometry of the affine line depends on commutativity see \cite{madore}.
\end{enumerate}

\section{Sins of Omission}\label{sins}
Two items are conspicuously missing from the preceding work. They are differential geometry and classical mechanics. Their inclusion, via internalization in some topos $E$, will now be briefly sketched. The full details will appear in forthcoming work. What follows is an outline which may interest experts. Before that, some remarks on the difficulties still to be addressed.

Internal topology and integration/measure theory seem to require approaches radically differing from classical mathematics. Because of this the covariant representation, $GNS_{c}$, is missing. Without completeness, or something like it, the adjoints required by $GNS_{c}$ are not guaranteed to exist. The Markov representation $GNS_{M}$ should not pose difficulties, but the whole probabilistic framework is missing, because of the lack of integration theory.

The primary difficulty of (locally) internalizing the $GNS$ functor is an ample supply of nondegenerate Hermitian forms, closed under the tensor product. These would be supplied by lemma \ref{flatnesslemma}, if not for the fact that fields are a useless concept in a topos. Over general rings, tensor products of bilinear forms seem to invoke essentially all possible homological complications. Even in the Cahiers topos, one would need to verify the flatness of Hilbert spaces (what a concept!) to prove the existence of interesting infinite dimensional examples. General convenient vector spaces are not bornologically flat, so there is no reason to expect flatness after embedding in the Cahiers topos.

The way forward seems to require developing the homological algebra of $C^{\infty}$-rings, or at least the $C^{\infty}$-analogues of relative affine schemes, coherence (for rings), and maps locally of finite type. Then the $C^{\infty}$-finitely generated examples would be interesting. In well adapted models of synthetic differential geometry the $C^{\infty}$-structure on the ring object $\mathbf{R}$ is visible internally, since $C^{\infty}(\mathbb{R}) = E(\mathbf{R}, \mathbf{R})$. The $C^{\infty}$-rings in $E$ are then the models of an internal algebraic theory, a notion which is well understood \cite[D5.3]{johnstone}.

The intended application of these constructions is setting the stage for the construction of the moduli space of vacua. The constructions below can be understood as endowing $\mathbf{Phys}$ with a smooth structure, giving rise to a ``space of all theories''. Naive attempts to construct a ``subspace of vacua'' within that space are met with stiff technical resistance. For a discussion of these issues we refer to appendix \ref{appendixb}.

\subsection{Internalizing $GNS$}
Let $E$ be a model of synthetic differential geometry \cite{sdg}, with ring $\mathbf{R}$. Choose a quadratic extension $\mathbf{R} \rightarrow \mathbf{C}$, to be treated as an analogue of the usual extension $\mathbb{C}/\mathbb{R}$. In particular, we demand an involution $\overline{(-)}$ of $\mathbf{C}$, such that $x \overline{x}$ lies in $\mathbf{R}$ for all $x \in \mathbf{C}$, and is positive if $\mathbf{R}$ happens to be ordered (we worked specifically to be able to omit any positivity requirement). In well adapted models $\mathbf{R}$ is typically an $\mathbb{R}$-algebra (i.e.~a $\Delta^{\ast}\mathbb{R}$-algebra), and we may set $\mathbf{C} = \mathbf{R} \otimes_{\mathbb{R}} \mathbb{C}$.

The construction of $GNS : \mathbf{Phys}^{op} \rightarrow \ast\mathbf{Mod}$ from this data is very simple, and can be carried out internally to $E$. The swiftest method is appealing to stack semantics \cite{shulmansemantics}. The procedure has two steps. First one writes down the formula defining the $GNS$ functor -- including the domain and codomain -- over $\mathbf{Set}$. This is not trivial, since there are many such formulas whose meanings diverge in other toposes, and the intentionally correct one must be chosen. This formula is in essence a procedure for constructing a morphism of $\mathbf{SymMonCat} = \mathbf{SymMonCat}(\mathbf{Set})$ (we use large sets on the right).

Stack semantics allows the same procedure over $E$. Naively one would expect the result to be in $\mathbf{SymMonCat}(E)$, or its locally internal analogue. But the internal logic of $E$ may have certain opinions that do not match reality. There could be an internal functor $F : \mathbf{C} \rightarrow \mathbf{D}$ such that
\begin{displaymath}
\vdash_{E} \textnormal{``}F \textnormal{ is an equivalence''},
\end{displaymath}
meaning the internal logic of $E$ says that $F$ is an equivalence, but $F$ is not \emph{actually} an equivalence. The inverses may exist locally in $E$, but fail to assemble into a globally defined object. To fix this discrepancy, and gain the flexibility of freely using internal equivalences we simply add the missing equivalences. This means localization.
\begin{theorem}\label{localizationtheorem}
Let $E$ be a small topos. Then there is a 2-adjunction $F \dashv U$,
\begin{center}
\begin{tikzpicture}
\node (b) at (6,0) {$\mathbf{Stacks}(E)$};
\node (a) at (0,0) {$\mathbf{Cat}(E)$};

\path[->] (a) edge[bend left = 10] node[auto] {$F$} (b)
		  (b) edge[bend left = 10] node[auto] {$U$} (a);
\end{tikzpicture}
\end{center}
which exhibits small stacks over $E$ as the reflective 2-localization of internal categories in $E$ at the \emph{local equivalences} -- the internal functors which $E$ asserts to be equivalences.
\end{theorem}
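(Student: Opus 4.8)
The plan is to identify $\mathbf{Stacks}(E)$ with the full sub-$2$-category of $\mathbf{Cat}(E)$ on the objects that are \emph{local} with respect to the class $W$ of local equivalences, and then to exhibit this inclusion as reflective with a reflector inverting precisely $W$. Writing $C_0$, $C_1$ for the objects of objects and of arrows, recall that an internal functor $w \colon \mathbb{C} \to \mathbb{D}$ is a local equivalence when it is internally fully faithful (the hom-square over $C_0 \times C_0 \to D_0 \times D_0$ is cartesian) and locally essentially surjective (the projection to $D_0$ from the object of tuples $(c,d,\theta)$ with $\theta \colon w(c) \xrightarrow{\sim} d$ is a cover in $E$). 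First I would record the pivotal lemma that a local equivalence between \emph{stacks} is an honest internal equivalence: because a stack sees locally-defined isomorphisms globally, the local quasi-inverse and its descent data assemble into a genuine global quasi-inverse. Granting this, an internal category $\mathbb{S}$ is a stack if and only if $\mathbf{Cat}(E)(w,\mathbb{S})$ is an equivalence of hom-categories for every $w \in W$; this rephrases the descent definition of a stack (for the canonical topology on $E$) as $W$-locality, and is the bridge between the two vocabularies.

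Next I would build the reflector $F$ as the classical stack completion of \cite{bunge}. The route is to externalize, stackify, and return: the $2$-functor sending $\mathbb{C}$ to its associated $E$-indexed category (equivalently, its fibration over $E$) is $2$-fully faithful into $\mathbf{Fib}(E)$; on that larger $2$-category the associated-stack construction $(-)^{+}$ is the already-known reflective $2$-localization onto indexed stacks for the canonical topology. The content is then a representability statement: the stackification of the externalization of a \emph{small} internal category is again the externalization of a small internal category, which I would take as the definition of $F\mathbb{C}$. The smallness hypothesis on $E$ is exactly what feeds this step — the descent amalgamations along covers are set-indexed, so the glued object of objects and the glued object of arrows remain internal to $E$ and small. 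By construction the unit $\eta_{\mathbb{C}} \colon \mathbb{C} \to U F \mathbb{C}$ is a local equivalence.

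With $F$ in hand I would assemble the $2$-adjunction. That $U$ is $2$-fully faithful (so that the sub-$2$-category is reflective) follows from the local-objects characterization together with the pivotal lemma, since for stacks the comparison hom-categories are then equivalences. To see that $F$ inverts precisely $W$: a reflective localization inverts $w$ iff $F(w)$ is an equivalence; applying $F$ to the naturality square of $\eta$ and using that $\eta$ is always a local equivalence, the lemma on equivalences-between-stacks together with $2$-out-of-$3$ for $W$ shows that $F(w)$ is an equivalence if and only if $w \in W$. Finally, the universal property of the reflective $2$-localization at $W$ is formal once the reflection is established: any $2$-functor out of $\mathbf{Cat}(E)$ sending $W$ to equivalences must in particular invert every $\eta_{\mathbb{C}}$, and therefore factors essentially uniquely through $U$.

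I expect the main obstacle to be the representability step of the second paragraph: proving that stackifying the externalization of a small internal category yields something again representable by a small internal category, rather than a genuinely large indexed stack. This is where the topos-theoretic work lives — one must produce the glued object of objects as a quotient of descent data along canonical covers and verify that the induced hom-object is nondegenerate and that composition descends. Matching the descent definition of stack with $W$-locality (the bridge lemma of the first paragraph) is the other delicate point, as is the purely $2$-dimensional bookkeeping: everything must be read up to coherent equivalence, with pseudonatural transformations and modifications in place of strict naturality, so that ``reflective $2$-localization'' is understood in the bicategorical sense throughout.
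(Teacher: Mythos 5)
Your overall skeleton coincides with the paper's: the reflector is the Grothendieck construction (externalization) followed by stackification, the unit is a local equivalence, and the universal property of the reflective 2-localization is then formal. This is exactly what remark \ref{localizationremark} records, with the well-definedness of the two composites delegated to the proof of lemma 4 in chapter 5 of \cite{awodey}. Your preliminary lemmas are also sound and standard: the correct internal formulation of local equivalence, the ``pivotal lemma'' that a local equivalence between stacks is an honest equivalence, and the identification of stacks with the $W$-local objects.

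The genuine gap is the representability step, and your proposed justification for it does not work. You claim that because $E$ is small, ``the descent amalgamations along covers are set-indexed, so the glued object of objects and the glued object of arrows remain internal to $E$.'' But internality is not a matter of how the amalgamations are indexed: gluing all descent data along all canonical covers is not a construction performable inside an elementary topos, which has only finite colimits, and smallness of $E$ supplies no new colimits. The assertion that the stack completion of an internal category is again (equivalent to the externalization of) an internal category is precisely the axiom of stack completions; in the Grothendieck case it is a substantive theorem resting on the existence of a small generating family (this is the circle of results around \cite{bunge}), not a consequence of smallness in your sense. The actual role of the smallness hypothesis in the theorem is different: it makes the \emph{fibers} of the stackified fibration small (descent data over each object form a set), which is what lets $F$ land in small stacks at all. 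The paper's formulation is engineered to avoid your representability step entirely: it keeps $\mathbf{Stacks}(E)$ as fibered categories rather than as a full sub-2-category of $\mathbf{Cat}(E)$, and defines $U$ as splitting followed by sheafification, so the counit $FU(S) \to S$ is an equivalence only after re-stackifying --- no claim that a stack $S$ is itself an externalization is ever needed. If you want to retain your ``local objects'' presentation, you must either prove the axiom of stack completions in this setting or work with strong stacks in the sense of Joyal and Tierney; in either case the one-sentence smallness argument must be replaced by real work, as you half-suspected when you flagged this step as the main obstacle.
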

\begin{remark}\label{localizationremark}\hspace{1em}
\begin{itemize}
\item $F$ is the Grothendieck construction followed by stackification, and $U$ is splitting followed by sheafification. That this makes sense follows from the proof of lemma 4 in \cite[chapter 5]{awodey}.
\item If there are enough points, then the local equivalences are the stalk-wise equivalences.
\item This theorem extends to any $\kappa$-ary superextensive site, linking stacks over $\mathcal{C}$ and internal categories in $Sh(\mathcal{C})$. One wonders whether superextensivity is required.
\end{itemize}
\end{remark}
Following this philosophy we take the defining formula for the $GNS$ functor, and replace any instance of $\mathbf{Set}$ with ``the stack of objects of $E$'', better known as the codomain fibration $E^{\cdot \rightarrow \cdot} \longrightarrow E$ (call it $\mathbb{E}$). To our horror, we realize that the result is not quite right.

What should replace the category of sets is what I will call $\mathbb{E}_{lc}$ -- the ``stack of locally constant objects'' of $E$. It's the full substack of $\mathbb{E}$ generated by the global sections. One way to construct it is as the stackification of a presheaf of categores on $E$ whose objects are always the objects of $E$, and whose morphisms at stage $X$ between $A$ and $B$ are given by $E/X(\pi^{\ast}A , \pi^{\ast}B)$, where $\pi : X \rightarrow 1$.

This has the effect of working with families of objects which are locally trivial. $\mathbb{E}_{lc}(X)$ consists of those families in $E/X = \mathbb{E}(X)$ which become trivial over some covering of $X$. They are glued from product families via a cocycle. The inclusion $\mathbb{E}_{lc} \subseteq \mathbb{E}$ is fully faithful, so we do not lose any of the morphisms.

\begin{remark}\label{wildphysics}
If we use the full stack of objects then physical oddities can occur. In particular the existence of constants of nature can depend on the value of other constants of nature! Think of the residue fields in the base of a non-trivial family of schemes. Classical physics also becomes ``richer'' (or ``infested with junk''), encompassing exotic structures other than Poisson algebras.
\end{remark}

Ultimately, the result is a morphism of monoidal stacks over $E$. For aesthetic reasons we may wish to push the entire setting into internal categories in some colossal topos. ``Internal categories of physical processes'' sounds much more elegant than ``stacks of processes''.

Over well-adapted models the result includes at least the finite dimensional $C^{\ast}$-algebras, and their full moduli theory. The Cahiers topos includes all the convenient vector spaces \cite{kockreyes} as $\mathbf{R}$-modules, and so one hopes for a lot more, but they cannot be used to construct examples until we prove them to be $\mathbf{R}$-flat. I do not expect all convenient vector spaces to be flat, and if Hilbert spaces are not flat, then very few interesting examples will exist.

In this manner have arrived in a paradisal world, where everything is smooth. Both functors and families of objects and morphisms can be differentiated, and these two modes of differentiation lead to the traditional differential equations of quantum theory (Heisenberg and Schr{\"o}dinger) and to classical mechanics, respectively. We give only examples.

\subsection{Infinitesimal Symmetries}
Consider a $G$-equivariant $\ast$-algebra, i.e.~a functor $A : G \rightarrow \ast\mathbf{Alg}$. We treat $G$ as a one object groupoid, and hence, by the Grothendieck construction, as a prestack over $E$ (its stackification  consists of $G$-torsors \cite{bunge}, so we keep prestacks around for simplicity). In this picture, $A$ is a morphism of prestacks, and, unwinding the definitions, we see that $A$ amounts to a traditionally defined equivariant object in a fibration (cf.~\cite{vistoli}). Below we write $A$ for both the functor and the image in $\ast\mathbf{Alg}$ of the single object of $G$, a particular $\ast$-algebra in $E$.

Let $D = \{x \in \mathbf{R} : x^2 = 0\} \subseteq \mathbf{R}$ be the first order infinitesimals. In the synthetic setting differentiation is reduced to composing with $D$. Since we are working with prestacks, this amounts to evaluation, by the ``Yoneda lemma for fibrations'' \cite{streicher}. Evaluating $A(D)$, we find the following: $G(D) = TG$ is just the tangent bundle of $G$, and the rest of the structure amounts to a homomorphism
\begin{displaymath}
TG \longrightarrow End(A)(D) = End_{D}(A \times D),
\end{displaymath}
where the codomain is the endomorphisms of $A$ over $D$, that is commuting diagrams
\begin{center}
\begin{tikzpicture}
\matrix (m) [matrix of math nodes, nodes in empty cells, row sep = 1.5cm, column sep = 1.5cm, text height = 1.5ex, text depth = .25ex]{
A \times D && A \times D \\
& D \\
};
\path[->] (m-1-1) edge node[auto] {$f$} (m-1-3)
		  (m-1-1) edge node[auto, swap] {$\pi$} (m-2-2)
		  (m-1-3) edge node[auto] {$\pi$} (m-2-2);
\end{tikzpicture}
\end{center}
where $f$ is a $\ast$-algebra homomorphism, and the $\pi$ are projections to $D$. The Kock-Lawvere axiom shows that this data amounts to a $\ast$-derivation $A \rightarrow A$, recovering the usual the notion of infinitesimal symmetry. In particular we obtain a morphism of Lie algebras
\begin{displaymath}
Lie(G) \longrightarrow \ast Der(A).
\end{displaymath}
Since all we are really doing is composition, we can compose everything with the $GNS$ functor.

\begin{theorem}\label{infinitesimalgns}
Let $X \in Lie(G)$ act as the inner derivation $[Q, -]$ on $A$, for some $Q \in A$, and let $\varphi$ be a $G$-equivariant state over $A$. Then $GNS(X)$ acts on $GNS(\varphi)$, and
\begin{displaymath}
GNS(X) = Q \textnormal{ iff } Q\Omega = 0
\end{displaymath}
\end{theorem}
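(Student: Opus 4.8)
The plan is to realize $GNS(X)$ as the first-order part of the one-parameter family of isometries $GNS(g)$ obtained by differentiating the equivariant structure, and then to differentiate the two defining relations of that family: the module-morphism identity and the invariance of the cyclic vector. First I would fix a tangent $g \colon D \to G$ representing $X$ and evaluate the equivariant $GNS$ construction at $D$, exactly as in the passage constructing $Lie(G) \to \ast Der(\mathcal{O}(\varphi))$ just before the statement. By the Kock--Lawvere axiom this produces a genuine linear operator $GNS(X)$ on the algebraic space $GNS(\varphi)$, which equals $\mathcal{O}(\varphi)\Omega$ by cyclicity; this is all that is meant by ``$GNS(X)$ acts on $GNS(\varphi)$''. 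No completion enters, since $GNS$ (unlike $GNS_c$) stays at the purely algebraic level, so the operator is defined on all of $GNS(\varphi)$ without fuss about domains.

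The two inputs are then immediate. Since $\varphi$ is $G$-equivariant, each $GNS(g)$ fixes the representing vector, $GNS(g)\Omega = \Omega$; differentiating at the base point gives $GNS(X)\Omega = 0$. Next I would differentiate the fundamental compatibility $GNS(g)(av) = \mathcal{O}(g)(a)\,GNS(g)(v)$, the statement that $GNS(g)$ is a module homomorphism over $\mathcal{O}(g)$ (cf.\ the introduction and the fibration of \ref{gnsconstruction}). Because both $GNS(g)$ and $\mathcal{O}(g)$ reduce to the identity at the base point, the Kock--Lawvere expansion yields the Leibniz rule
\[
GNS(X)(av) = [Q,a]\,v + a\,GNS(X)(v),
\]
where I use the hypothesis that $X$ acts on $\mathcal{O}(\varphi)$ as the inner derivation $[Q,-]$.

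Combining the two, I would set $v = \Omega$ and invoke $GNS(X)\Omega = 0$ to obtain $GNS(X)(a\Omega) = [Q,a]\Omega = Qa\Omega - aQ\Omega$ for every $a \in \mathcal{O}(\varphi)$. Since $\Omega$ is cyclic, every element of $GNS(\varphi)$ has the form $a\Omega$, so $GNS(X) = Q$ as operators precisely when $Qa\Omega - aQ\Omega = Q(a\Omega)$ for all $a$, that is, when $aQ\Omega = 0$ for all $a \in \mathcal{O}(\varphi)$. Taking $a = 1$ gives $Q\Omega = 0$, and conversely $Q\Omega = 0$ forces $aQ\Omega = 0$ for all $a$. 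This is exactly the claimed equivalence.

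The algebra is routine; the real work sits in the differentiation. The main obstacle is making that step rigorous inside the stack $\mathbb{E}_{lc}$: one must verify that evaluation at $D$ of the equivariant $GNS$ morphism genuinely respects the module action and the invariance of $\Omega$, so that the two functorial identities differentiate to the Leibniz rule and to $GNS(X)\Omega = 0$ as written. Concretely this amounts to the product rule, which is precisely what the Kock--Lawvere axiom supplies once the relations are phrased as equalities of morphisms out of $\mathcal{O}(\varphi) \times D$; I would spend most of the proof confirming that these evaluations are well typed and that the first-order coefficients are indeed the derivation $[Q,-]$ and the operator $GNS(X)$. A secondary point to keep honest is the sign convention: the hypothesis pins the induced derivation to be $[Q,-]$ rather than its negative, and one should check that this is the convention under which the invariance term $a\,GNS(X)\Omega$ drops out, leaving the clean obstruction $aQ\Omega = 0$.
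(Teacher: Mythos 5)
Your proposal is correct and is exactly the argument the paper intends: the paper leaves the proof implicit (``all we are really doing is composition''), and your route --- evaluate the composed equivariant morphism at $D$, use Kock--Lawvere to extract $GNS(X)$, differentiate $GNS(g)\Omega = \Omega$ and the module identity $GNS(g)(av) = \mathcal{O}(g)(a)\,GNS(g)(v)$ to get $GNS(X)\Omega = 0$ and the Leibniz rule, then conclude by cyclicity that $GNS(X) = Q$ iff $aQ\Omega = 0$ for all $a$, i.e.\ iff $Q\Omega = 0$ --- is precisely that construction spelled out. Your closing caveats (well-typedness of the evaluation at $D$ inside $\mathbb{E}_{lc}$, and the sign convention, which the paper itself flags by calling $Lie(G) \rightarrow \ast Der(\mathcal{O}(\varphi))$ an antihomomorphism) are the right ones and do not affect the equivalence.
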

Thus infinitesimal generators coincide in the Heisenberg and Schr{\"o}dinger pictures only if the representing vector is invariant under the \emph{chosen} generator. This invariance can always be sabotaged, since $Z(A)$ always includes $\mathbf{C}$. Choices matter, and in this case are classified by Hochschild cohomology $HH^{0}(A) = Z(A)$. This is the second indication -- after lemma \ref{flatnesslemma} (see remark \ref{derivedremark}) -- that we should pass to a derived (i.e.~higher categorical) formalism.

\begin{remark}
Morally speaking, theorem \ref{infinitesimalgns} shows that $GNS_{c}$ would, had we enough modules isomorphic to their duals at our disposal, map the Heisenberg equation to the Schr{\"o}dinger equation. This infinitesimal result would complete the equivalence of these pictures, as it is traditionally understood.
\end{remark}

This discussion can be extended to groupoids. For simplicity, let's consider the pair groupoid for the affine line $\mathbb{A}^{1}$, which is just the base ring $\mathbf{R}$ as an object. The objects are $\mathbb{A}^{1}$ itself, and there is a unique morphism $t \rightarrow t^{\prime}$ for any two points, which we will identify with translation by $t^{\prime} - t$. We will call this groupoid $P(\mathbb{A}^{1})$.

Differentiating, we see that $P(\mathbb{A}^{1})(D)$ has as objects tangent vectors to the objects of $P(\mathbb{A}^{1})$. This means tangent vectors to  $\mathbb{A}^{1}$, which are naturally just vectors in $\mathbb{A}^{1}$. The specific object (point) to which these vectors are attached is determined by restriction $1 \rightarrow D \rightarrow \mathbb{A}^{1}$. The morphisms of these ``infinitesimal families of objects'' are again tangent vectors, with the unique morphism $v \rightarrow w$ identified with the translation by $w-v$.

All this data maps to $\ast\mathbf{Alg}(D)$, determining infinitesimal families of $\ast$-algebras $A_{v}$, for $v \in \mathbb{A}^{1}$, and isomorphisms $(w-v) : A_{v} \rightarrow A_{w}$ of $\ast$-algebras over $D$. Since everything is $\mathbf{R}$-linear, this is determined completely by any nontrivial map $v : A_{0} \rightarrow A_{v}$, which is a derivation along a deformation of $A$. If the deformation is trivial, i.e.~time acts on observables but not their algebra, we get a time-dependent family of derivations of $A$, just as expected.

The possibility of deformation arises since we allowed infinitesimal movement of the algebra itself, not just of its elements. The very notion of multiplication moved, along with a movement of the elements. This leads to a discussion of the classical limit.

\subsection{The Classical Limit}
Consider the affine line $\mathbb{A}^{1}$ as a discrete category in $E$. It again defines a stack over $E$, and we define $\hbar$-families of $\ast$-algebras to be functors $\mathbb{A}^{1} \rightarrow \ast\mathbf{Alg}$. The classical limit of such a family is its restriction to infinitesimal $\hbar$. Thus we are led to study maps
\begin{displaymath}
D \longrightarrow \ast\mathbf{Alg}.
\end{displaymath}
Because $\mathbb{E}_{lc}$ is full, maps $X \rightarrow \ast\mathbf{Alg}$ are those $\ast$-algebras in $E/X$ which become trivial -- as objects, but not algebras! -- over some covering of $X$. Maps retain arbitrary dependence on the fibers. In well adapted models this construction includes vector bundles over manifolds equipped with not-locally-trivial $\ast$-algebra structures.

Since $D$ is amazingly tiny in the sense of Lawvere \cite[appendix 4]{sdg}, maps $D \rightarrow \ast\mathbf{Alg}$ are simply $\ast$-algebra structures on $\pi : A \times D \rightarrow D$ in $E/D$, which extend the given structure on $A$ (thought of as sitting in the fiber over $0 \in D$). The Kock-Lawvere axiom shows that these are exactly the $\ast$-Hochschild cocycles on $A$.

The monoidal structure on $\ast\mathbf{Alg}$ restricts to a product of Hochschild cocycles, which includes the traditional product of Poisson structures. Classical and quantum composite systems are thus fully compatible.

In this way we include a very general version of deformation quantization. In particular the quantization of singular phase spaces can utilize symmetric Hochschild cocycles, in addition to the antisymmetric ones (which correspond to Poisson brackets). This may have bearing on the quantization of principal connections with isotropy and spacetimes with isometries (cf.~remark \ref{symmetrichochschild} in appendix \ref{appendixa}).

\subsection{Compatibility}
All of these considerations are functorial. In particular, the inhomogeneous Heisenberg, Schr{\"o}dinger, and Hamilton equations, as well as a ``classical Schr{\"o}dinger equation'' all derive from a single object in 
\begin{displaymath}
\mathbf{Phys}^{\mathbb{A}^{1} \times P(\mathbb{A}^{1})},
\end{displaymath}
where the first factor controls the value of $\hbar$ and the second is the pair groupoid of $\mathbb{A}^{1}$, representing inhomogeneous time evolution.

\appendix
\section{On The Notion of Gauge Theory}\label{appendixa}
The ideas presented here are not really new, but deserve being intensely stressed, for they deeply challenge any claim to understanding the general notion of gauge theory, especially quantum gauge theory.

These ideas are present implicitly or explicitly in the thinking of several authors, including Freed and Deligne \cite[$\mathsection$4.2]{freedfield}, Schreiber and Schulman \cite{schreiber-shulman} (among many), as well as Benini, Schenkel and Szabo \cite{benini, schenkel}, and very likely many others.

\subsection{The Problem}
Consider a classical theory with space of states $X$, carrying an action of a group $G$. Is $G$ a group of gauge equivalences, or an ordinary symmetry group? In the physicists' practice the distinction is always clear. However, there does not seem to be a mathematical criterion for establishing such a distinction. Yang-Mills theory, for example, together with the claim that the connection field is an empirically measurable observable, appears to be a perfectly fine mathematical structure. It is simply not a gauge theory, and does not have a well posed first order initial value problem.

In general, phase space-based approaches to gauge invariance are doomed. Despite appearances, gauge theories are not a special class of constrained systems. Gauge symmetry is not a property inconveniencing the construction of phase space, but a structure, and attempts to infer it from anything else cannot succeed.

Locality is also not a very promising candidate, since it requires saying ``spacetime'' and ``Cauchy surface''. Whatever string theory turns out to be, it will probably be out of luck with this kind of definition. And we definitely want it to make the list! Thus we seek a more conceptual understanding of gauge symmetry.

That is a rather tall order, since we are faced with the following dumbfounding claims:
\begin{itemize}
\item Diffeomorphisms in General Relativity are gauge equivalences.
\item But: isometries are actual symmetries, not just gauge equivalences (think of the Poincar\'e group, and Killing vector fields in general).
\item The automorphisms of a principal bundle are gauge equivalences.
\item But: the fiberwise constant automorphisms of a trivial principal bundle are actual symmetries (how else would electric charge be conserved?).
\end{itemize}
As the reader can see, there's a lot of backtracking going on. It gets worse. Consider a nontrivial principal $G$-bundle $P$ over a spacetime $M$. Then we have an exact sequence
\begin{displaymath}
1 \longrightarrow \mathscr{G}_{P} \longrightarrow Aut(P) \longrightarrow Diff(M),
\end{displaymath}
where $\mathscr{G}_{P}$ is the group of $M$-automorphisms of $P$, and $Aut(P)$ consists of all the $G$-automorphisms of $P$. The last map is typically not onto, and does not split over its image (which consists of the maps $f$ such that $f^{\ast}P \simeq P$).

Considering the above, one would like to say things such as ``gauge theory is isometry invariant''. For example, it is said that ``Yang-Mills theory is Lorentz invariant''. But this is problematic in two respects. First, the relevant symmetry group is $Aut(P)$, not $Diff(M)$, so isometries are not even in a position to act on our fields. This can be fixed by considering all principal bundles instead of just $P$, or just the trivial ones.

The second problem is deeper: let $\phi \in Aut(P)$ map to an isometry of $M$ in the sequence above. Then, since our sequence does not split, we have an automorphism acting on our fields, which has a ``symmetry part'', but does not have a ``gauge part''. And it certainly could have a ``gauge part'', since $\mathscr{G}_{P}$ is included in $Aut(P)$. It appears that gauge equivalences and actual symmetries cannot, in general, be neatly separated. This is especially true if indiscriminate symmetry gauging is allowed (cf.~\cite[$\mathsection$2.8]{freedfield}).

It is therefore difficult to accept the claim, commonly made in the community \cite{seiberg}, that gauge equivalences are ``redundancies in the description'' or ``do-nothing transformations'', and that they have no physical significance. In the presence of gauge equivalences, without further constraints, one cannot simply pass to a reduced phase space. The Aharonov-Bohm effect and Dijkgraaf-Witten theory cannot be understood, indeed cannot exist, if we simply divide out the gauge equivalences. It is also clear that gauge transformations are not plain symmetries. Pushing a metric around by diffeomorphisms certainly does not alter the state of a system in any physically relevant way.

What are we to make of this situation? I tentatively propose the following definition, which includes all Yang-Mills theories, General Relativity, as well as a multitude of (limits of) string theories among ``theories with gauge equivalences''.
\begin{definition}\label{definitionofgaugesymmetry}
A classical theory with gauge equivalences is a theory whose space of states has an additional structure\footnote{This excludes the natural structure of $\omega$-groupoid that the space of states possesses in virtue of being a space.} of a $k$-groupoid.
\end{definition}
The case $k=0$ is trivial, requiring no additional structure, and so one should really speak of $k$-gauge theories, including non-examples as the degenerate case. For $k>1$ we allow weak groupoids. In the examples below we have $k=1$, but since the Kalb-Ramond field in string theory is a connection on a principal 2-bundle, one expects stringy examples with $k>1$.
\subsubsection*{Example 1: General Relativity}
The state space of general relativity is the groupoid of all Lorentzian manifolds and their isometries. We use the notion of ``state space'' loosely. We want to preserve the ability to couple the theory to other fields, and so we disregard the equations of motion. Of course, Einstein spacetimes form a subgroupoid.

\subsubsection*{Example 2: Yang-Mills Theory}
The state space of Yang-Mills theory, with structure group $G$ on a spacetime $M$, is the groupoid of $G$-principal bundles with $G$-connection and connection-preserving isomorphisms between them. This example is slightly ambiguous, since it is not clear whether to include all bundle morphisms or just the equivariant ones. This choice affects, for example, color charge conservation on topologically nontrivial spacetimes (as we see below).

\subsubsection*{Example 3: In General?}
Let $A$ be an algebra of observables with a group $G$ of symmetries acting on it. Then the groupoid of states is the action groupoid $\mathcal{S}(A)//G$ (also known as the weak quotient), where $\mathcal{S}(A)$ is the space of states, with the action of $G$ given by the fact that $\mathcal{S}$ is a functor. Unlike the previous examples $\mathcal{S}(A)$ includes mixed states, causing further complications.

\begin{remark}[The Problem of Emergence]
There are multiple contexts in which gauge symmetry is emergent. Definition \ref{definitionofgaugesymmetry} would then dictate the discontinuous change in $\dim X$, the categorical dimension of $X$, classifying any perturbation removing emergent gauge symmetry as a singular perturbation.
\end{remark}

\subsubsection*{Digression on The Geometry of Groupoids}
To really work with definition \ref{definitionofgaugesymmetry}, one must define the notion of a smooth map into $X$, which should be understood as a smooth family of objects and morphisms of $X$. This leads immediately to the notion of a stack, since stacks are a higher localization of internal categories (including groupoids). We will not make this precise here, but will merely assert that a groupoid with a localizable notion of morphism into it (from a space) automatically defines a stack, with the original groupoid being the global sections of that stack. An idea of how this works can be extracted from theorem \ref{localizationtheorem} and remark \ref{localizationremark}.

This allows us to speak of the geometry of $X$, in particular the sheaves on $X$, $Sh(X)$. This topos naturally contains all the geometric invariants of $X$ which can be defined as sheaves on the site of spaces. This means that objects like the differential forms, $\Omega^{\ast}_{X}$, are canonically defined. If infinitesimals are available, then other constructions, such as tangent vectors and vector fields can be defined. These can't be pulled back from the site of spaces. In such cases one can prove the usual relation $\Omega^{1}(X) = \Gamma(T^{\ast}X)$, which is not usually taken as a definition for stacks.

Most importantly gauge invariance is inherently baked in to the formalism. In examples 1 and 2 there are smooth action functionals
\begin{displaymath}
S : X \longrightarrow \mathbb{R},
\end{displaymath}
whose differentials $dS \in \Omega^{1}(X)$ are legitimate 1-forms. The stacks of solutions are the substacks $\{x \in X : dS(x) = 0 \}\subseteq X$. For the bare Einstein-Hilbert action the global sections of the solution stack are simply the groupoid of Einstein manifolds and their isometries.

This notion of solution is automatically gauge invariant since it is really a 2-pullback in a 2-category. The mystery of why ``imposing gauge invariance'' -- a colimit construction -- commutes with imposing the equations of motion -- a limit construction -- is resolved. Gauge invariance is encoded in the 2-cells of a 2-category, and the equations of motion are a 2-categorical construction.

This kind of stacky geometry, including measure theory, will be explored in depth in upcoming work \cite{stackygr}. The treatment of noncompact spacetimes requires delicate analysis.
\subsection{In Pursuit of Proper Language}
As I have already stressed, the ideas behind definition \ref{definitionofgaugesymmetry} are not new. I would like to build on the idea of that definition, and give gauge theories a distinguished structural place among all theories, and clarifying the notion of ``theory'' in general. I begin with the following distinction:
\begin{definition}\label{definitionofsymmetry}
Let $x \in X$ be a state in a classical theory with gauge equivalences. Then:
\begin{itemize}
\item The \emph{symmetries of $x$} are by definition the groupoid 
\begin{displaymath}
Aut_{X}(x) = X(x,x)
\end{displaymath}
of self-equivalences of $x$.
\item The \emph{gauge equivalences} are maps $x \rightarrow y$ in $X$.
\end{itemize}
\end{definition}
\subsubsection*{Example 1: General Relativity}
Diffeomorphisms $f : M \rightarrow M^{\prime}$ are exhibited among Lorentzian manifolds by maps of the form $(M, g) \rightarrow (M^{\prime}, f_{\ast}g)$. The automorphisms of $(M, g)$ are therefore exactly the isometries.

\subsubsection*{Example 2: Yang-Mills Theory}
Analogously to gravity, the gauge transformations act by pushforward, and are counted as gauge equivalences iff they change the connection. The isotropy group of a connection is not counted among the equivalences! It consists of genuine symmetries according to our definition.

This leads to our first real problem: one must decide if the bundle morphisms in this example are to be equivariant. The decision may be obvious, but consider the following question: do we want QCD to enjoy color charge conservation on topologically nontrivial manifolds? If so, then the equivariant maps are too little -- one must also include the right $G$-action as a symmetry, since this seems to be the only way to include ``constant gauge transformations'' as symmetries on general spacetimes. Without this global symmetry, features such as the Higgs mechanism would fail.

\subsubsection*{Example 3: In General?}
Here we come to the crux of the matter. We must confront the effects of definition \ref{definitionofsymmetry} on the notion of symmetry in ordinary theories (those with $k = 0$). They are quite curious: for non-gauge theories the definition dictates that time-invariant states, such as vacua, would have time translation symmetry, but that same ``symmetry'' would act as a mere gauge transformation on non-ivariant (e.g.~excited) states.

The symmetries of a lagrangian field theory \cite[$\mathsection$2.6]{freedfield} would likewise be classified as gauge, except at their fixed points. In this respect, either definition \ref{definitionofsymmetry} or the construction of this example is problematic. Perhaps this is just a linguistic deficiency, or a historical lack of appreciation for groupoids, as opposed to groups.

\begin{remark}\label{symmetrichochschild}
This discussion suggests that perturbative quantization should take the automorphism group into account. From a geometric perspective states with automorphisms are singular points in the space of states, in the sense that $G$-fixed points in some $G$-space $X$ are usually singular points in the quotient space $X/G$.

Since working with definition \ref{definitionofgaugesymmetry} amounts to replacing $X/G$ with the action groupoid (weak quotient), or more properly its stackification, the quotient stack $[X/G]$, this suggests that all points with nontrivial automorphisms should be considered singular in any groupoid.

The opportunity for special treatment of these states is clearly visible in the general formalism of deformation quantization -- singularities allow the appearance of nontrivial symmetric Hochschild cocycles. All isotropic (reducible) connections are such singularities, and the space of connections is full of them \cite{fuchs}.
\end{remark}

\subsubsection*{The Necessity of Higher Categories}
At this point, the reader would be right to protest in confusion. What prevents us from setting $x = y$ in definition \ref{definitionofsymmetry}, and completely confusing the supposed distinction? Insisting that $x \neq y$ in the second case is tenable, but goes against the philosophy of category theory. It seems that a decisive discussion of these matters requires the systematic use of higher-categorical formalism. Such a formalism is currently only available in the form of homotopy type theory \cite{hott}. There we find the general notion of an identity type, and a distinction between definitional equality and propositional equality. The Atiyah-Singer index theorem is an example of a propositional equality -- two differently constructed numbers are proven to be equal. By contrast renaming variables is an example of a definitional equality -- such equalities have no mathematical content, and their use in deductive reasoning is limited to bookkeeping. In this perspective gauge equivalences arise from propositional equalities, and symmetries from propositional equalities between definitionally equal states.

\subsection{Dependent Fields}
In constructing a field theory, the specification of spacetime and any additional structure on it (like orientation, spin-structure, etc.) is prior to the introduction of any other fields. This is so because the spacetime determines what fields can be introduced. Fields are dependent on spacetime. There can be multiple levels of dependency: spinor fields depend on the metric field and the orientation, which in turn depend on spacetime. In gauge theory this dependence is subtler: the sections of a bundle $\Gamma(P[V])$, associated to a principal bundle $P$, are the equivariant maps $P \rightarrow V$. These depend on $P$ for the specification of their domain.

So it seems that we must introduce a general notion of a dependent field -- a field definable only in the presence of other fields, and parametrically dependent upon them. Again, one can frame this using type-theoretic language: if fields are understood as types, then dependent fields are dependent types. The prime ``field'' would be spacetime. After that one can introduce general tensorial fields, such as the metric. After the introduction of a metric, and an orientation, spinor fields become available (giving a possibly empty space of fields, if there is no spin structure on spacetime). Dependently on spacetime, one can introduce principal bundle ``fields'' (thought of as maps to the classifying stack $BG$), then, dependently on those, connection fields and fields associated to representations of the structure group. This leads to the following definition.

\begin{definition}
Let $X$ the the groupoid of states of a classical theory with gauge equivalences. A dependent field for this theory is a functor $F : \Phi \rightarrow X$.
\end{definition}
Note the strange direction! It is critical to what follows.

\subsubsection*{Example 1: Scalar Fields}
Let $X$ be the groupoid of spacetimes and isometries. Then $C^{\infty} : X^{op} \rightarrow \mathbf{Set}$, which assigns to each spacetime $M$ its ring of smooth functions $C^{\infty}(M, \mathbb{R})$ determines a dependent field through the Grothendieck construction,
\begin{displaymath}
\Phi = \int C^{\infty},
\end{displaymath}
which means that $\Phi$ is the category of pairs $(M, \phi)$, with $\phi$ a scalar field on $M$, with the obvious projection to $X$. $F : \Phi \rightarrow X$ is simply forgets $\phi$.

A similar construction encompasses all ordinary natural fields, such as tensor fields.

\subsubsection*{Example 2: Bundles}
Extraordinary fields include bundles. Let $G$ be a group, with $BG$ its classifying stack, the dependent field
\begin{displaymath}
X \downarrow BG,
\end{displaymath}
is simply the category of principal $G$-bundles over spacetimes. The comma category is constructed from the forgetful functor $X \hookrightarrow \mathbf{Spc}$ into spaces and the single object inclusion $\{BG\} \hookrightarrow \mathbf{Spc}$. Here $F$ is again the projection $X \downarrow BG \rightarrow X$.

Bundles with additional structure, such as a connection, can easily be included here. This allows us to add spin structures to manifolds as ``fields'' and dependently on that, spinor fields, extending example 1.

\subsection{The Pathology of Dependent Symmetry}
In this picture symmetries also become dependent. The sequence
\begin{displaymath}
1 \longrightarrow \mathscr{G}_{P} \longrightarrow Aut(P) \longrightarrow Diff(M),
\end{displaymath}
displays the nontrivial dependence of gauge symmetries on diffeomorphisms. The fibers $Aut_{f}(P)$, for $f \in Diff(M)$, can be empty or not, and are glued together in a nontrivial manner, owing to the non-splitness of the sequence.

The fact that the sequence is not exact at $Diff(M)$ is precisely the statement that the dependent field $F : \mathbf{Spc} \downarrow BG \rightarrow \mathbf{Spc}$ mapping bundles to their underlying spaces is not full on automorphisms.

The lack of fullness has real consequences, and is typically considered pathological. The treatment of the energy-momentum tensor in \cite[$\mathsection$2.9]{freedfield} is plagued by it. The definition of ``weak diffeomorphism invariance'' given there looks very awkward, but is natural in our setting: it is exactly fullness of $F$ on the $D$-points, where $D$ is the object of first order infinitesimals. Such points are also called $Spec(\mathbb{R}[\varepsilon])$-points, a traditional paraphrase of the notion in algebraic geometry.

Flat connections are then the first order functorial splittings of dependent fields, and since they are not guaranteed to exist, arbitrary connections, which are ``functors not preserving composition'' are recommended in ibid.~instead. Regardless of this effort, the application of Noether's theorem is frustrated \cite[2.183-2.190]{freedfield}.

Noninfinitesimal splittings correspond to strictly equivariant groupoids, by the nonlinear Dold-Kan correspondence, or the ``layer-cake philosophy'' \cite{baezcohomology}. An example of this correspondence can be seen in our discussion of time reversal in section \ref{timereversal}. Such a situation should be called ``removable'' or trivial dependency. In such cases we can make contact with the formalism of section \ref{symmetries} simply by lifting all symmetries to the domain of the dependent field, treating it as a new state space, and applying proposition \ref{gaugeobservables} below.

\subsection{The Problem of Dynamics}
The problem of ``frozen time'' is well known in General Relativity. Here it strikes us in general form, in the guise of a question.
\begin{problem}
What is a morphism of gauge theories?
\end{problem}
It seems that functors have already been exhausted by dependent fields. In addition, spacetimes with time translation symmetry, the critical example of a dynamical process, are already completely internal to the space of states. It looks like there is simply no need, or even scope for morphisms $F : X \rightarrow Y$ implementing dynamical changes in the states of $X$.

Consider, however, the category $X^{\ast}$ Lorentzian manifolds with chosen timelike tangent vectors and their isomorphisms. The vector is an initial condition for a massive particle in spacetime. Then ``free fall for $t$ seconds'' \emph{does} define a functor
\begin{center}
\begin{tikzpicture}
\node (a) at (0,0) {$X^{\ast}$};
\node (b) at (-3,0) {$X^{\ast}$};
\node (c) at (-1.5,-1.5) {$X$};

\path[->] (b) edge node[auto] {$t$} (a)
		  (a) edge (c)
		  (b) edge (c);
\end{tikzpicture}
\end{center}
over the groupoid $X$ of spacetimes. It seems that physical processes can still occur between dependent fields.

\subsection{What's an Observable?}
Consider a functor $F : X \rightarrow Y$ on the space of states. We will think of it as a $Y$-valued observable.
\begin{proposition}\label{gaugeobservables}\hspace{1em}
\begin{enumerate}
\item If $x\rightarrow y \in X$, then $F(x) \rightarrow F(y) \in Y$.

\emph{This means that $F$ is gauge invariant.}
\item $F(x)$ is a representation of the symmetries of $x$.

\emph{So we retain group theory.}
\end{enumerate}
\end{proposition}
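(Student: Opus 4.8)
The plan is to observe that both parts are nothing more than the axioms of a functor, unwound, together with the identification of the symmetry group as a one-object subcategory. No machinery beyond Definition \ref{definitionofsymmetry} is needed, so the argument will be very short.

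First I would dispatch part 1. By Definition \ref{definitionofsymmetry} a gauge equivalence is precisely a morphism $g : x \to y$ in $X$. Since $F : X \to Y$ is a functor, it carries $g$ to a morphism $F(g) : F(x) \to F(y)$ in $Y$, which is the asserted arrow. Because $X$ is a groupoid, $g$ is invertible, and functoriality gives $F(g^{-1}) = F(g)^{-1}$; hence $F(x)$ and $F(y)$ are \emph{canonically isomorphic} in $Y$. This is the precise content of the marginal remark that $F$ is ``gauge invariant'': an observable valued in $Y$ cannot separate gauge-equivalent states by anything finer than an isomorphism in its codomain.

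Next I would prove part 2. The symmetries of $x$ are, by Definition \ref{definitionofsymmetry}, the group $Aut_X(x) = X(x,x)$, which I regard as the full subcategory of $X$ spanned by the single object $x$ — a one-object groupoid in the sense used throughout the paper. The inclusion of this one-object groupoid into $X$ is a functor, and its composite with $F$ is a functor out of a one-object groupoid landing at $F(x)$. Such a functor is exactly a group homomorphism
\[
Aut_X(x) \longrightarrow Aut_Y(F(x)),
\]
the identities $F(\mathrm{id}_x) = \mathrm{id}_{F(x)}$ and $F(g \circ h) = F(g) \circ F(h)$ being precisely the two defining properties of a group action. This is what it means for $F(x)$ to carry a representation of the symmetries of $x$.

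I expect no genuine obstacle; the content is purely definitional, which is exactly the point the proposition is making. The only step demanding any care is fixing the meaning of ``representation'' in the target: when $Y = \ast\mathbf{Mod}_{adj}$ the automorphisms of $F(x)$ are unitaries, so one recovers a unitary representation exactly as in Theorem \ref{unitaryrepresentations}, whereas for a general $Y$ one simply reads ``representation'' as ``object equipped with an action of the group through its automorphisms''. For a $k$-groupoid $X$ with $k > 1$ the identical argument applies, with $Aut_X(x)$ now an automorphism $(k{-}1)$-group and $F(x)$ acquiring a higher representation.
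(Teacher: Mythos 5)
Your proof is correct and matches the paper's implicit argument exactly: the paper states Proposition \ref{gaugeobservables} without proof precisely because, as you observe, both parts are definitional unwindings of functoriality — morphisms go to morphisms, and the restriction of $F$ to the one-object groupoid $Aut_X(x)$ is a homomorphism $Aut_X(x) \to Aut_Y(F(x))$. Your added remarks on invertibility of $F(g)$ in a groupoid and on the unitary case $Y = \ast\mathbf{Mod}_{adj}$ (cf.\ Theorem \ref{unitaryrepresentations}) are consistent with the paper's usage.
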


A detailed analysis of what this definition means for General Relativity will be presented in \cite{stackygr}.
\subsubsection*{Relation with the Traditional Treatment}
The contemporary treatment of observables for gauge theories deviates slightly from this idea of observable, with the observables on $X$ being given by the groupoid cohomology $H^{\ast}(X,E)$, where $E$ is a representation of $X$. Such an object is a smooth sheaf of vector spaces on $X$, which is just a morphism $X \rightarrow \mathbf{Vect}$, to the stack of vector spaces (traditionally called the classifying stack of vector bundles). %This includes the normal definition, but is it equivalent?

In Yang-Mills theory one usually takes $X$ to be the stack of families of principal $G$-bundles (which is not the classifying stack $BG$ \cite{benini}), and $E = \mathbb{R}$, the trivial representation (i.e.~the representation induced from the constant sheaf $\mathbb{R}$ over the trivial groupoid).

The first group, $H^{0}(X,E)$, is the invariant sections of $E$ over $X$. Therefore the group $H^{0}(X, \mathbb{R})$ does consist of functors $X \rightarrow \mathbb{R}$, with $\mathbb{R}$ considered discrete. A variant of our idea is included in contemporary thinking. In any case, these are exactly $\mathbb{R}$-valued functions on the isomorphism classes of $X$, and so exactly what we would expect a scalar observable to be in both Yang-Mills theory and General Relativity.

The higher groups are more mysterious, encoding ghost fields (which are global in this perspective \cite{schenkel}). Their physical significance has been questioned, but at the very least they control possible gauge invariant couplings of the theory to other fields. If one imagines a ``space of all theories'', then the ghost fields would be crucial in determining the theory's ultimate location in that space.

In the functorial perspective ghosts become invisible, as they should be. Their role in the notion of observable amounts to describing maps $X \rightarrow B^{n}E$, where $B$ is the delooping functor. Applying the Yoneda lemma, we see that such maps are part of the ``observable functor'' $Hom(X,-)$, necessary in reconstructing $X$ from the structure of its observables.

The physical significance of delooping, while obscure, can be illuminated somewhat. The ghost observables $H^{2}(X,E)$ control the extensions of $X$ by $E$, and hence, in our terminology, (some class of) dependent fields. Thus, our preceding remarks were correct -- the role of ghost fields, at least in part, is to control the possible couplings between the base theory and its dependent fields\footnote{A similar understanding of ghost fields was also communicated to me by Alexander Schenkel.}.

\subsection{Quantization}\label{gaugequantization}
Let $F \dashv U$ be an adjoint equivalence of categories
\begin{center}
\begin{tikzpicture}
\node (b) at (6,0) {$\mathcal{D}^{op}$};
\node (a) at (0,0) {$\mathcal{C}$};

\path[->] (a.north east) edge[bend left = 10] node[auto] {$F$} (b.north west)
		  (b.south west) edge[bend left = 10] node[auto] {$U$} (a.south east);
\end{tikzpicture}
\end{center}
The reader should think of it as a space-algebra duality, such as Gelfand duality, or the duality between affine schemes and commutative rings. We will treat $\mathcal{C}$ as spaces, and $\mathcal{D}$ as algebras.

Equivalences of categories preserve all limits and colimits, and the theories of categories and groupoids are finite limit theories (typed equational theories). This means that the constructions $\mathcal{C} \mapsto \mathbf{Cat}(\mathcal{C})$ and $\mathcal{C} \mapsto \mathbf{Grpd}(\mathcal{C})$, of internal categories and groupoids, are functorial with respect to finite limit preserving functors. In particular $F \dashv U$ induces equivalences $\mathbf{Cat}(F) \dashv \mathbf{Cat}(U)$ and $\mathbf{Grpd}(F) \dashv \mathbf{Grpd}(U)$ bewteen internal structures in $\mathcal{C}$ and $\mathcal{D}$.
\begin{center}
\begin{tikzpicture}
\node (b) at (6,0) {$\mathbf{Cat}(\mathcal{D}^{op})$};
\node (a) at (0,0) {$\mathbf{Cat}(\mathcal{C})$};
\node (y) at (6,-3) {$\mathbf{Grpd}(\mathcal{D}^{op})$};
\node (x) at (0,-3) {$\mathbf{Grpd}(\mathcal{C})$};

\path[->] (a) edge[bend left = 10] node[auto] {$\mathbf{Cat}(F)$} (b)
		  (b) edge[bend left = 10] node[auto] {$\mathbf{Cat}(U)$} (a)
		  (x) edge[bend left = 10] node[auto] {$\mathbf{Grpd}(F)$} (y)
		  (y) edge[bend left = 10] node[auto] {$\mathbf{Grpd}(U)$} (x);
\end{tikzpicture}
\end{center}
Writing out the diagrammatic definitions of categories and groupoids, we see that they correspond to coalgebroids and Hopf algebroids, respectively. Note that the arrow reversal also applies to morphisms -- the internal functors -- and also to internal natural transformations.

Since we have declared that the state spaces of gauge theories are essentially (higher) internal groupoids, and we are asking for a theory of deformation quantization of such structures, we are naturally led to consider deformations of Hopf algebroids. Such a theory has been formulated (e.g.~\cite{xu}), but its relation to the traditional BV-BRST approach remains to be understood.
\begin{problem}
Are Hopf algebroid deformations equivalent to the BV-BRST formalism?
\end{problem}
If we want to quantize stacks of states, that is take into account the geometry of a given groupoid, we must be mindful of theorem \ref{localizationtheorem}. The proper structures are a 2-localization of the category of Hopf algebroids, by the class of internal morphisms dual to the local equivalences. Localization can have drastic effects on how things look: a group $G$, seen as a one object groupoid \emph{is} the category of $G$-torsors when seen as a stack \cite{bunge}.
\begin{problem}
What's a noncommutative stack?
\end{problem}
In other words we are interested in computing the localization of Hopf algebroids by morphisms dual to the local equivalences in purely algebraic terms, and subsequently allowing everything to be noncommutative.

\section{Chasing The Moduli Theory of Vacua}\label{appendixb}
In this appendix we work abstractly, over some base topos $E$ of ``spaces'', with an \emph{ordered} ring $R$. $GNS$ is then a stack morphism over $E$, as sketched in section \ref{sins}. We write $\mathcal{O}_{X}$ for the pullback of $R$ along the geometric morphism $E/X \rightarrow E/1 = E$ induced by the unique map $X \rightarrow 1$ in $E$.
\subsection{The Stack of Vacua}
It appears that most, if not all ``path integral arguments'' and ``duality theorems'' are at their cores simply isomorphisms of vacuum states of certain theories. Thus we wish to study the notion of a vacuum. For this reason one of the central, long term aims of the studying the category $\mathbf{Phys}$, is the construction of the stack of vacua
\begin{displaymath}
\mathbf{Vac} \longrightarrow \mathbf{Phys}^{\mathbf{Time}}.
\end{displaymath}
Here $\mathbf{Time}$ is the groupoid of homogeneous time. Vacua are to be understood as ``minimal energy states''. This is deliberately ambiguous, due to the problems below.

The wording assumes that every time evolution has a Hamiltonian, to be able to define ``energy''. More importantly, \textbf{the notion of vacuum state is predicated upon the notion of time evolution}. In theories of emergent spacetime the concept of vacuum seems ambiguous. Mere stability -- as in the string landscape -- seems insufficient, as there are theories with time-invariant states which are not vacua, and yet there is no \emph{global} generator of time evolution whose expectation value we could wish to minimize. Clearly, there is conceptual work to be done here.

The category $\mathbf{Vac}$ should be thought of as ``the space of all vacua'', with the projection
\begin{displaymath}
\pi : \mathbf{Vac} \longrightarrow (\ast\mathbf{Alg}^{op})^{\mathbf{Time}} \longrightarrow \ast\mathbf{Alg}^{op}
\end{displaymath}
giving us the observables (at any time, every time, or some specific time -- this usually doesn't matter) to which a given vacuum belongs. This structure does not, at least ``morally'', contain the string landscape, as explained in the introduction.

\begin{remark}
We have not formally required the purity of our states. This has the effect that $\mathbf{Vac}$ will include classical mixed states of minimal energy, which are not usually considered vacua. For an illuminating discussion see \cite[$\mathsection$1.1]{witten}. The technical problems discussed below make this objection temporarily moot.
\end{remark}

The study of $\mathbf{Vac}$ is the study of how quantum vacua behave in families. The following issues stand out as extremely interesting.
\subsection*{Is $\mathbf{Vac}$ a Stack?}
This might seem obvious. But there are caveats, which I believe should influence the form of the definition. First, we assumed the existence of a Hamiltonian pointwise, that is for states over the point in the category of spaces. As every time evolution is a homology class of Hamiltonians, the existence of a Hamiltonian for an entire family is a homological problem. We may -- and will -- simply demand specifying a solution in the definition.

The other problem is more concerning. The notion of state we have been using until now appears to be too generous for $\mathbf{Vac}$ to form a stack. It arises -- again -- from the nonuniqueness of the Hamiltonian. The following situation may arise: there could be a family of candidate vacua $\varphi : X \rightarrow \mathbf{Phys}^{\mathbf{Time}}$, and a central observable $f \in \pi(\varphi)$, whose expectation value $\langle f \rangle \in \mathcal{O}_{X}$ changes sign, as a function on $X$ ($f$ is a section of $\mathcal{O}_{X}$ which is just a map $X \rightarrow R$ in $E$).

In this situation the very notion of ``vacuum'' does not make sense. Since we can add $f$ to any Hamiltonian for this family, the possible values of energy $\langle H + af \rangle = \langle H \rangle + a\langle f \rangle$, for $a \geq 0$, are not linearly ordered in $\mathcal{O}_{X}$, and minimality does not make sense. Consequently, defining $\mathbf{Vac}$ by demanding minimal energy for every generalized element, or ``generalized vacuum'', $\varphi$ does not seem to make sense.

More abstractly, we can explain this by noting that $\mathcal{O}_{X}$ will in general be only \emph{partially} ordered, even if $R$ is linearly ordered. Energy will always carry a free $\mathcal{O}_{X}$-action, since $\mathcal{O}_{X}$ sits in the center of every $\ast$-algebra over $X$, and energy is always a torsor over the center of the algebra. Such torsors are also partially ordered, but, unlike for linearly ordered $R$, minimality in partially ordered torsors can always be ruined by the phenomenon described above.

There are several ways to proceed. One would be to demand something similar to ``$\neg\neg$minimality'' internally to $E$, in the sense of asserting that it's not true that there are lower energy states than the one under consideration. Another would be pointwise (or maybe even stalkwise) minimality. One imagines sections of $\mathbf{Phys}$ which are vacuum states pointwise. For the time being, we leave this issue unresolved.

\subsection*{Is $\pi$ a Representable Morphism of Stacks?}
Here we assume that $\mathbf{Vac}$ is a stack. Consider a space $X \in E$ and any pullback square
\begin{center}
\begin{tikzpicture}
\matrix (m) [matrix of math nodes, nodes in empty cells, row sep = 1cm, column sep = 1cm, text height = 1.5ex, text depth = .25ex]{
Y & \mathbf{Vac} \\
X & \ast \mathbf{Alg}^{op} \\
};

\path[->] (m-2-1) edge (m-2-2)
		  (m-1-2) edge node[auto] {$\pi$} (m-2-2)
		  (m-1-1) edge (m-1-2)
		  (m-1-1) edge (m-2-1);
		  
\end{tikzpicture}
\end{center}
in the category of stacks over $E$. Representability means that $Y$ is actually a \emph{space} (i.e.~an object of $E$), and not some arbitrary stack. Thus for any family of algebras of observables parameterized by a space, there is only a space of vacua above them. Otherwise there would be a category of vacua, with physical processes between them. One would expect this to happen only with ``false'' vacua, and not the real ones.

\subsection*{What Geometric Properties does $\pi$ have?}
This includes the paradigmatic geometric questions one may ask of any map. For $\pi$ such questions encode pressing physical problems. For example:
\begin{itemize}
\item Existence of vacua: for what maps does $\pi$ have the lifting property? That is, for which diagrams below can the dashed arrow be found?
\begin{center}
\begin{tikzpicture}
\matrix (m) [matrix of math nodes, nodes in empty cells, row sep = 1cm, column sep = 1cm, text height = 1.5ex, text depth = .25ex]{
& \mathbf{Vac} \\
X & \ast \mathbf{Alg}^{op} \\
};

\path[->] (m-2-1) edge (m-2-2)
		  (m-1-2) edge node[auto] {$\pi$} (m-2-2)
		  (m-2-1) edge[dashed] (m-1-2);
\end{tikzpicture}
\end{center}

Lifting over the point means existence of a vacuum state. More general lifting properties mean the existence of families of vacua. One is clearly tempted to study the homotopy theoretical properties of $\pi$.
\item Uniqueness of the vacuum: where is $\pi$ an isomorphism, locally on $\ast\mathbf{Alg}^{op}$?
\item Isolated vacua: where is $\pi$ an isomorphism, locally on $\mathbf{Vac}$?
\item Discreteness of vacua: where is $\pi$ a covering projection?
\item Existence for first order deformations: where is $\pi$ a submersion?
\item Uniqueness for first order deformations: where is $\pi$ an immersion?
\item Existence of families of moduli spaces of vacua: where is $\pi$ flat?
\item Locally universal families: where is $\pi$ locally trivial?
\end{itemize}
All of them are extremely useful (and many have been assumed!) in path integral-type arguments. For example we have:
\begin{conjecture}[Witten's Theorem]\label{wittenstheorem}
The $\hbar$-family of vacua of 4d $N=2$ super Yang-Mills theory is trivial, with fiber $\mathbb{C}$.
\end{conjecture}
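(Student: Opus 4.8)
The plan is to phrase conjecture \ref{wittenstheorem} entirely inside the stacky formalism of section \ref{sins} and appendix \ref{appendixb}. Following the classical-limit discussion, an $\hbar$-family of theories is a map $\mathbb{A}^1 \to \mathbf{Phys}^{\mathbf{Time}}$ with $\mathbb{A}^1$ the categorically discrete affine line, and its family of vacua is the pullback of $\pi : \mathbf{Vac} \to \mathbf{Phys}^{\mathbf{Time}}$ along it. ``Trivial with fiber $\mathbb{C}$'' then means exactly that this pullback is the constant family over $\mathbb{A}^1$ picking out $\mathbb{C}$ at every value of $\hbar$. First I would have to construct the object itself --- $N=2$ super Yang-Mills as such a family --- which, the theory being a gauge theory, forces the groupoid-of-states picture of appendix \ref{appendixa}: one works over the stack $X$ of $SU(2)$-bundles with connection, equips it with the $N=2$ action $S : X \to \mathbb{R}$, and takes $\mathcal{O}(\varphi)$ to be the gauge-invariant observables of a minimal-energy state $\varphi$.

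The geometric core of the argument is to identify the vacuum locus with a spectrum. For $SU(2)$ the gauge-invariant chiral ring is commutative and generated by the single Casimir $u = \mathrm{Tr}\,\phi^2$, which is a normal (indeed central) element of $\mathcal{O}(\varphi)$. By the generalized eigenvalue-eigenvector link (theorem \ref{generalizedeigenvalueeigenvectorlink}) a vacuum assigns $u$ a definite value, hence a $\mathbb{C}$-point of $\mathrm{Spec}\,\mathbb{C}[u] = \mathbb{C}$; conversely every such point is realized, so the fiber of $\pi$ is the Coulomb branch $\mathbb{C}$. This reduces triviality to one clean assertion: that the central subalgebra $\mathbb{C}[u] \subseteq \mathcal{O}(\varphi)$, together with the restricted state $\varphi|_{\mathbb{C}[u]}$, is constant along the $\hbar$-direction of the family.

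The hard part will be that constancy, which is precisely the content of holomorphy and the $N=2$ non-renormalization theorems. In the internal language of section \ref{sins} I would try to show that the $\hbar$-family of centers is a \emph{trivial} $\ast$-Hochschild cocycle, so that varying $\hbar$ moves the elements of the algebra but not the multiplication on the chiral ring, leaving $u$ and its spectrum fixed. Seiberg-Witten theory is the physical statement that the quantum Coulomb branch stays $\mathbb{C}$ while only the special K\"ahler metric --- an $\hbar$-dependent family of elliptic curves --- varies; recasting this as triviality of the $\mathbf{Vac}$-pullback is the genuine mathematical problem, and establishing it from first principles rather than importing it from physics amounts to a rigorous proof of the relevant non-renormalization theorem.

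The main obstacle, however, lies upstream of all three steps. Building $N=2$ super Yang-Mills as an object of $\mathbf{Phys}^{\mathbf{Time}}$ at all presupposes a resolution of the gauge problem of appendix \ref{appendixa}, and the very phrase ``the $\hbar$-family of vacua'' presupposes that $\mathbf{Vac}$ is a stack and that $\pi$ is representable --- both left open in appendix \ref{appendixb}, and both threatened by the sign-of-$\langle f\rangle$ pathology that spoils the linear ordering of energy over a general base. So this is, in all honesty, a chart of the terrain rather than a route across it.
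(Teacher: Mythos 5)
First, a point of calibration: the paper contains no proof of this statement. It is stated as a conjecture, quite deliberately --- the introduction sets the ambition as being able to ``express (much less prove)'' it, and appendix \ref{appendixb} leaves unsettled even the prerequisites for the statement's well-formedness: whether $\mathbf{Vac}$ is a stack, whether $\pi$ is representable, and the partial-order pathology of energy over a general base. Your reading of ``trivial with fiber $\mathbb{C}$'' as triviality of the pullback of $\pi : \mathbf{Vac} \rightarrow \mathbf{Phys}^{\mathbf{Time}}$ along an $\hbar$-family is exactly the semantics the paper's discussion of lifting properties and local triviality of $\pi$ intends, and you are candid that you have charted terrain rather than crossed it. Judged as a proof, however, the proposal has a concrete failing step beyond the obstacles you flag yourself.

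The failure is the step ``a vacuum assigns $u$ a definite value, hence a $\mathbb{C}$-point of $\mathrm{Spec}\,\mathbb{C}[u]$.'' Theorem \ref{generalizedeigenvalueeigenvectorlink} is an \emph{equivalence} between the eigenvector condition on the representing vector and the restricted state being a Dirac delta; it nowhere asserts that minimal-energy states satisfy either condition for central $u$. That inference ordinarily comes from purity plus cluster decomposition, and the paper explicitly declines to impose purity: the remark in appendix \ref{appendixb} notes that $\mathbf{Vac}$ will contain mixed minimal-energy states. For such states $\varphi\vert_{\mathbb{C}[u]}$ is a general probability measure on the $u$-plane, not a delta, so your identification of the fiber of $\pi$ with $\mathbb{C}$ collapses --- the fiber is closer to the measures on $\mathbb{C}$ unless purity or a superselection-sector decomposition is added to the definition of $\mathbf{Vac}$, a modification the paper has not made. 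Separately, even granting the fiberwise identification, reducing triviality of the family to constancy of $(\mathbb{C}[u], \varphi\vert_{\mathbb{C}[u]})$ silently assumes the map from vacua to $\mathbb{C}$-points of the center is an isomorphism fiber by fiber (uniqueness of the vacuum at fixed $u$), which is precisely one of the open geometric properties of $\pi$ catalogued in appendix \ref{appendixb}; and the constancy itself is, as you admit, the non-renormalization theorem imported from physics rather than derived --- establishing triviality of the $\hbar$-cocycle on the chiral ring inside the internal formalism would moreover require the flatness technology that section \ref{sins} warns is not yet available. So: right framing, genuinely not a proof --- which, to be fair, is also exactly where the paper stands.
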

This result is the starting point for Witten's reformulation of Donaldson theory \cite[Lectures 17-19]{witten}.

At points where $\pi$ lacks most of the good properties listed above, the vacua run amok. To control this chaos one must also investigate the singular behavior of $\pi$.
\begin{problem}
What kinds of singularities does $\pi$ have? Where can they occur?
\end{problem}
Since catastrophes in the sense of Thom can actually happen in physics, one expects the answer to be ``all of them, essentially everywhere''. For example, branch points represent bifurcations of vacua under a variation of parameters, something that can happen even in the classical limit \cite[$\mathsection$1.1]{witten}. This second question is thus equally important -- ``where be dragons?'', so to speak. Must we retreat to the holomorphic heaven of supersymmetry, or is there life in the hills and valleys of broken symmetry?

\subsection{Speculation on the Nature of the Path Integral}
Let $F$ be some observable, and write
\begin{displaymath}
\langle F \rangle = \int F e^{\frac{i}{\hbar}S}\, D\Phi
\end{displaymath}
for its vacuum expectation value in path integral form. Compute formally
\begin{equation}\label{hconnection}
\frac{\partial}{\partial\hbar} \langle F \rangle = -\frac{i}{\hbar^{2}} \langle F \rangle
\end{equation}
Now consider the question:
\begin{quote}
Why does the quantization of classical systems typically depend only on these systems?
\end{quote}
You may think that the implicit claim is outrageous, and patently false, but I don't observe physicists in the wild arguing about quantization ambiguities. In our best theorem on deformation quantization \cite{kontsevich} the result is also essentially unique.

Why is the Poisson structure -- a tangent vector in the space of algebras of observables -- sufficient to determine the observables for $\hbar \approx 10^{{-34}}$? That's a small number, but not infinitesimal. Why are there no ``$\hbar$-phase transitions'' in which the observables radically change their nature?

Naively, one would have to expect a ``quantization vector field'', which controls changes in $\hbar$ at positive values, in addition to the Poisson structure, which controls things at $\hbar = 0$. This is what equation \ref{hconnection} provides.
\begin{conjecture}
The path integral is a connection, in the sense of differential geometry, on a $\hbar$-family of vacua. Its content for states in the vacuum sector is summarized by equation \ref{hconnection}.
\end{conjecture}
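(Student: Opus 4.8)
The plan is first to strip away the undefined object --- the measure $D\Phi$ --- and reconstruct the whole statement intrinsically on the $\hbar$-family of GNS spaces, where everything is already well-defined by the internalization of section \ref{sins}. Concretely, an $\hbar$-family of vacua is a section $\hbar \mapsto \varphi_\hbar$ of $\mathbf{Vac} \to \mathbf{Phys}^{\mathbf{Time}}$ pulled back along the discrete affine line $\mathbb{A}^1$ (the $\hbar$-axis), and applying the internalized $GNS$ functor yields a bundle $\mathcal{V} \to \mathbb{A}^1$ whose fibre over $\hbar$ is $GNS(\varphi_\hbar)$, with the representing vector $\Omega_\hbar$ as a distinguished section. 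The target statement then splits into two tasks: exhibit a connection on $\mathcal{V}$, and identify the path-integral weight $e^{\frac{i}{\hbar}S}$ with its parallel transport.

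Second, I would read equation \ref{hconnection} as the infinitesimal form of this connection. Differentiating the weight formally gives $\frac{\partial}{\partial\hbar} e^{\frac{i}{\hbar}S} = -\frac{i}{\hbar^2} S \, e^{\frac{i}{\hbar}S}$, so the natural connection one-form is $\omega_\hbar = -\frac{i}{\hbar^2} S \, d\hbar$, with $S$ the central action observable generating the Heisenberg dynamics of $\mathbf{Time}$. Evaluating the $\hbar$-dependent derivation $a \mapsto -\frac{i}{\hbar^2}[S,a]$ at the first-order infinitesimals $D$ and invoking theorem \ref{infinitesimalgns} lifts it to an operator $GNS(X_\hbar)$ on the fibre $GNS(\varphi_\hbar)$. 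I would then define the covariant derivative
\begin{displaymath}
\nabla_{\partial_\hbar} = \partial_\hbar - GNS(X_\hbar),
\end{displaymath}
verify the Leibniz rule against the module action using proposition \ref{overf} and the dinaturality of section \ref{dinaturality}, and thereby certify $\nabla$ as a genuine connection in the synthetic sense --- a $D$-indexed parallel transport $\mathcal{V}_\hbar \to \mathcal{V}_{\hbar+d}$.

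Third, restricting to the vacuum sector, the content is that $\Omega_\hbar$ is the parallel section selected by $\nabla$. Contracting with an observable $F$ via $s_{\varphi_\hbar}(\Omega_\hbar)(F) = \langle F\Omega_\hbar, \Omega_\hbar\rangle = \langle F\rangle$, the scalar part of covariant constancy reproduces equation \ref{hconnection} exactly: its solution $\langle F\rangle_\hbar \propto e^{\frac{i}{\hbar}}$ is parallel transport for the abelian form $-\frac{i}{\hbar^2}\,d\hbar$, whose holonomy is the $F$-independent factor of the weight. Reassembling this with the operator-valued part $GNS(X_\hbar)$ that carries the $S$-dependence, integrating $\nabla$ recovers $e^{\frac{i}{\hbar}S}$ as the full holonomy --- which is precisely the path-integral weight. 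This closes the identification ``the path integral is a connection,'' with equation \ref{hconnection} as its vacuum-sector specialization.

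The hard part will be existence and smoothness of $\mathcal{V}$ itself, together with the legitimacy of the hypothesis of theorem \ref{infinitesimalgns}. The bundle $\mathcal{V}$ need not be flat as an $\mathcal{O}_{\mathbb{A}^1}$-module --- exactly the obstruction flagged in remark \ref{derivedremark} and throughout section \ref{sins}, rooted in the failure of lemma \ref{flatnesslemma} over general rings --- so the connection may exist only locally or after passing to a derived replacement. Worse, the action $S$ is generically not an honest element $Q \in \mathcal{O}(\varphi)$ but only an affiliated generator, so the inner-derivation hypothesis $X = [Q,-]$ of theorem \ref{infinitesimalgns} can fail; repairing this appears to demand the higher-categorical formalism the paper repeatedly anticipates, in which $GNS(X)$ is defined only up to the central choice classified by $HH^0$. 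I therefore expect the connection to be canonical not as an operator but as a class modulo this central ambiguity --- which is itself the natural home for the quantization data equation \ref{hconnection} is meant to encode.
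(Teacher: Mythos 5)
There is nothing in the paper for your proposal to match: the statement is posed as a \emph{conjecture} in appendix \ref{appendixb}, and the paper's entire supporting material is the one-line formal differentiation yielding equation (\ref{hconnection}) plus surrounding heuristics. The paper deliberately does not prove it, because the objects it quantifies over are not yet constructed there --- whether $\mathbf{Vac}$ is even a stack is left explicitly unresolved (energy is only partially ordered over a general base, so ``minimality'' is not well-posed for generalized vacua), the measure $D\Phi$ is never defined, and the existence of the family you call $\mathcal{V}$ needs the flatness that lemma \ref{flatnesslemma} supplies only over fields, as remark \ref{derivedremark} and section \ref{sins} stress. Your proposal is best read as a research program fleshing out the paper's intent, and as such it is a sensible and even faithful one; but judged as a proof it has failures beyond those you flag yourself.

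Two are decisive. First, circularity: you extract the connection one-form $\omega_{\hbar} = -\frac{i}{\hbar^{2}} S \, d\hbar$ by formally differentiating the weight $e^{\frac{i}{\hbar}S}$, and then conclude by ``integrating $\nabla$'' to recover $e^{\frac{i}{\hbar}S}$ as holonomy --- the path integral is assumed in order to define the connection whose holonomy is then declared to be the path integral. A non-circular argument would have to construct $\nabla$ intrinsically from the $\hbar$-family (say, from the Hochschild cocycle data of section \ref{sins}) and \emph{derive} the weight; you have not done this. Second, the vacuum-sector identification does not go through as stated: the scalar equation (\ref{hconnection}) integrates to the $F$-independent factor $e^{i/\hbar}$ (your own formula $\langle F\rangle_{\hbar} \propto e^{\frac{i}{\hbar}}$ shows the $S$-dependence has been discarded), and reinstating it via $GNS(X_{\hbar})$ requires $S$ to act as an inner derivation with a chosen generator --- the hypothesis of theorem \ref{infinitesimalgns}, which you concede fails generically. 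Worse, even in the inner case theorem \ref{infinitesimalgns} says $GNS(X) = Q$ iff $Q\Omega = 0$, so the lift to the fibre depends on the central choice classified by $HH^{0}$, and ``covariant constancy of $\Omega_{\hbar}$'' is not a well-posed condition until that choice is fixed; your closing concession that $\nabla$ exists only as a class modulo the center gives up exactly the second half of the conjecture. (Note also that the honest formal derivative is $-\frac{i}{\hbar^{2}}\langle FS \rangle$, an insertion, not $-\frac{i}{\hbar^{2}}\langle F \rangle$; this abuse is already present in the paper's display, but your argument must decide which version it integrates, since your third step uses the $S$-free abelian form while your one-form carries $S$. Finally, the Leibniz-rule check you propose via proposition \ref{overf} cannot work as stated: that proposition controls $GNS_{M}(\Phi)$ only on vectors of the form $a\Omega$, i.e.\ on the cyclic vector's orbit presentation, not on the module action generally.)
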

What about the observables? We know a priori how to differentiate those, and the results should coincide.
\begin{conjecture}
The $\hbar$-derivative of the operator product expansion defines an associative deformation of the OPE algebra.
\end{conjecture}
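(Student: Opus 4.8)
The plan is to realize the conjecture inside the synthetic framework of section \ref{sins}, where $\hbar$-families and their derivatives are handled by the Kock-Lawvere axiom. First I would package the OPE data as a single morphism $\mathbb{A}^{1} \to \ast\mathbf{Alg}$ in the topos $E$, assigning to each value of $\hbar$ the algebra of local observables of the vacuum equipped with its OPE product $m_{\hbar}$. This is precisely the $\hbar$-family of observables alluded to in the remark preceding the conjecture (``we know a priori how to differentiate those''), and it is the observable-side counterpart of the path-integral connection of the previous conjecture.

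Next I would differentiate by restricting this family along $D \hookrightarrow \mathbb{A}^{1}$, the first-order infinitesimals. Exactly as in the classical-limit discussion, the amazing tininess of $D$ together with the Kock-Lawvere axiom identifies a map $D \to \ast\mathbf{Alg}$ extending a given algebra $(A, m_{0})$ with a $\ast$-Hochschild cocycle $m_{1}$ on $A$, and by construction $m_{1} = \partial_{\hbar} m_{\hbar}|_{\hbar = 0}$ is the $\hbar$-derivative of the OPE. The point I would emphasize is that the cocycle condition is not an extra hypothesis but a tautology here: because the family takes values in genuine algebras, associativity of $m_{0} + \varepsilon m_{1}$ holds identically over $D$, and unwinding this identity over $\mathbb{R}[\varepsilon]/(\varepsilon^{2})$ is precisely the Hochschild 2-cocycle equation $\delta m_{1} = 0$. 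Thus the derivative automatically defines a first-order ``associative deformation''.

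To upgrade this to a genuine all-orders deformation I would use that the family over the whole affine line $\mathbb{A}^{1}$ --- not merely over $D$ --- is associative at every $\hbar$; its Taylor expansion about $\hbar = 0$ then supplies the higher coefficients $m_{2}, m_{3}, \ldots$, and associativity order by order, including the vanishing of the Gerstenhaber obstruction $[m_{1}, m_{1}]$, is inherited from the associativity of the integrated family. Compatibility with the preceding conjecture --- ``the results should coincide'' --- would be the statement that this observable-side derivation agrees, under $GNS$, with the state-side connection of equation \ref{hconnection}; the infinitesimal bridge is theorem \ref{infinitesimalgns}, which relates inner derivations of the algebra to their action on the GNS representation. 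Splitting $m_{1}$ into symmetric and antisymmetric parts recovers the Poisson bracket as the antisymmetric part, and the near-uniqueness of the resulting deformation is governed by Kontsevich's theorem \cite{kontsevich}, matching the uniqueness of quantization discussed above equation \ref{hconnection}.

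The hard part --- and the reason the statement remains a conjecture --- is the very first step: exhibiting the OPE as an honest morphism $\mathbb{A}^{1} \to \ast\mathbf{Alg}$ internal to $E$. Three obstructions compound. Products of local operators need not exist as algebra elements (the ``$L^{1}$ digression'' of the introduction), so the OPE is at best a position-dependent structure rather than a single associative product, and one must first decide the correct target category --- something factorization- or vertex-algebra-like --- in which ``associative deformation'' even carries meaning. Moreover the synthetic machinery demands an ample supply of nondegenerate Hermitian forms closed under tensor product, and lemma \ref{flatnesslemma} fails internally (cf.~remark \ref{derivedremark}), so smooth $\hbar$-families with the requisite flatness may simply fail to exist until one develops the homological algebra of $C^{\infty}$-rings sketched in section \ref{sins}. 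Until those foundations are in place the differentiation step above is formally clean but has no guaranteed examples to act upon.
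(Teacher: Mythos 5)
There is no proof in the paper to compare against: this statement appears in the speculative appendix on the path integral as an open conjecture, supported only by the formal computation (\ref{hconnection}) and the remark that observables can ``a priori'' be differentiated. So the question is whether your proposal closes the conjecture, and it does not --- your own final paragraph concedes this. The middle of your argument is genuinely sound and is essentially the paper's own classical-limit machinery from section \ref{sins}: given an honest family $\mathbb{A}^{1} \to \ast\mathbf{Alg}$ in $E$, restriction along $D \hookrightarrow \mathbb{A}^{1}$ plus the Kock-Lawvere axiom identifies the derivative with a $\ast$-Hochschild cocycle, and first-order associativity over $\mathbb{R}[\varepsilon]/(\varepsilon^{2})$ \emph{is} the 2-cocycle equation, tautologically. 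But that machinery acts on a hypothesis you never discharge: that the OPE of a quantum field theory is a morphism $\mathbb{A}^{1} \to \ast\mathbf{Alg}$ at all. The OPE is not an associative product on a fixed underlying object --- its structure ``constants'' are position-dependent and singular at coincident points, and the paper's ``$L^{1}$ digression'' argues that products of local operators need not even inhabit the algebra carrying the state. Until the correct target category (something factorization- or vertex-algebra-like) is fixed and ``Hochschild cocycle'' is re-founded there, the differentiation step is vacuous: formally clean, with nothing to apply to. You identify this obstruction correctly, which is to your credit, but identifying the gap is not the same as bridging it; what you have is a reduction of the conjecture to the internalization problem, which is exactly the state in which the paper leaves it.

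Two further points deserve flagging. First, your all-orders upgrade overreaches even granting the family: extracting $m_{2}, m_{3}, \ldots$ by Taylor expansion requires the family to be trivial \emph{as an object} to all orders (a local-constancy statement in $\mathbb{E}_{lc}$, i.e.\ a flatness condition), and the paper explicitly warns (remark \ref{derivedremark} and the flatness discussion of section \ref{sins}) that the relevant tensor-flatness is unavailable internally; in any case the conjecture asks only that the first derivative define an associative deformation, so the claim about the vanishing of the Gerstenhaber obstruction $[m_{1}, m_{1}]$ is unneeded and unsupported. Second, your appeal to theorem \ref{infinitesimalgns} as the bridge to the state-side connection of (\ref{hconnection}) is suggestive but inert: that theorem concerns inner derivations acting on a fixed $GNS$ space, whereas (\ref{hconnection}) differentiates a family of vacua across varying algebras, so the ``results should coincide'' compatibility remains, like the conjecture itself, a statement of intent rather than an argument.
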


\paragraph{}
At face value, this would contradict the common expectation of extended field theory, that the path integral is essentially tied to locality and gluing conditions. If spacetime is emergent in any capacity, then either this is not true, or emergent spacetime exceeds the expressive capacity of quantum theory\footnote{I do not grant claims of emergence unless a significant portion of General Relativity emerges as well, dynamically, with a range of geometries, time included. This is because GR is part of our concept of spacetime.}. Looking at string theory, I find my self leaning toward the latter.

Regardless of that, in the perspective developed here the 1-dimensional gluing law should only be expected in cases with specified time evolution. Pursuing this analogy to higher dimensions leads to considering functors
\begin{displaymath}
M \longrightarrow \mathbf{Vac},
\end{displaymath}
where $M$ is a bordism, playing the role of spacetime, generalizing time evolution
\begin{displaymath}
\mathbb{R} \longrightarrow \mathbf{Vac}.
\end{displaymath}
More broadly we can consider functors into $\mathbf{Phys}$.

To make this analogy precise, and to make contact with the formalism of extended local field theory, we would need to investigate functors
\begin{displaymath}
n\mathbf{Bord}(M) \longrightarrow n\mathbf{Phys},
\end{displaymath}
from bordisms in a spacetime $M$ to some $n$-categorical version of $\mathbf{Phys}$. Our discussion of gauge theories in appendix \ref{appendixa} certainly suggests that $\mathbf{Phys}$ should be a higher category. An $n$-categorical GNS construction would then provide a link from such functors to ordinary extended local field theories, defined as representations of structured bordism categories on ``$n$-Hilbert spaces'', whatever they turn out to be.

The higher category typically expected to take center stage is
\begin{displaymath}
\mathbf{SymMonCat}(n\mathbf{Bord}, n\mathbf{Phys}),
\end{displaymath}
which I would interpret as the category of all ``universal'', spacetime independent extended local theories. This seems to be the only way of explaining the otherwise bizarre constructions of \cite{freedquant}. More confusingly, one may attempt to make sense of theories over the point.

This discussion also resonates with the idea of  ``generalized physical theories'' defined as objects of the slice 2-category $\mathbf{SymMonCat}/\mathbf{Phys}$. There is a much larger and conceptually sensible framework to be discovered here. In particular, the distinction between what we call a state in this paper and the notion of a whole theory is not clear.

\begin{thebibliography}{GLPS06}
\bibitem[Aw97]{awodey} S.~Awodey, \textit{Logic in Topoi: Functorial Semantics for Higher-Order Logic}, Ph.D. dissertation, University of Chicago, 1997. Available at:

\href{https://www.andrew.cmu.edu/user/awodey/thesis/thesis.ps.gz}{https://www.andrew.cmu.edu/user/awodey/thesis/thesis.ps.gz}
\bibitem[Ba06]{quandaries} J.~C.~Baez, \textit{Quantum Quandaries: A Category-Theoretic Perspective}, in \textit{Structural Foundations of Quantum Gravity}, D.~P.~Rickles, S.~R.~D.~French and J.~Saatsi (eds), Oxford University Press, 2006. Available as \href{https://arxiv.org/abs/quant-ph/0404040}{arXiv:quant-ph/0404040}.
\bibitem[BSh06]{baezcohomology} J.~C.~Baez, M.~Shulman, \textit{Lectures on $n$-Categories and Cohomology}, available as \href{https://arxiv.org/abs/math/0608420}{arXiv:math/0608420 [math.CT]}.
\bibitem[BSt09]{rosetta} J.~C.~Baez, M.~Stay, \textit{Physics, Topology, Logic and Computation: A Rosetta Stone}, in \textit{New Structures for Physics}, Bob Coecke ed., Lecture Notes in Physics 813, Springer, Berlin, 2011, pp.~95-174. Available as \href{https://arxiv.org/abs/0903.0340}{arXiv:0903.0340 [quant-ph]}.
\bibitem[BShSh]{benini} M.~Benini, A.~Schenkel, U.~Schreiber, \textit{The stack of Yang-Mills fields on Lorentzian manifolds}, available as \href{https://arxiv.org/abs/1704.01378}{arXiv:1704.01378 [math-ph]}.
\bibitem[BShSz]{schenkel} M.~Benini, A.~Schenkel, , \textit{Homotopy Colimits and Global Observables in Abelian Gauge Theory}, Letters in Mathematical Physics 105(9), 2015, pp.~1193–1222. Available as \href{https://arxiv.org/abs/1503.08839}{arXiv:1503.08839 [math-ph]}.
\bibitem[BDGK]{pinphysics} M.~Berg, C.~DeWitt-Morette, S.~Gwo, E.~Kramer, \textit{The Pin Group in Physics: C, P and T},  Rev.~Math.~Phys.~13, 2001, 953. Available as \href{https://arxiv.org/abs/math-ph/0012006}{arXiv:math-ph/0012006}.
\bibitem[Be99]{bernstein} J.~Bernstein, \textit{Notes on Supersymmetry}, in \textit{Quantum Fields and Strings: A Course for Mathematicians}, P.~Deligne et.~al. (eds), AMS, Providence RI, 1999.
\bibitem[BMRS]{brodzki} J.~Brodzki, V.~Mathai, J.~Rosenberg, R.~J.~Szabo, \textit{D-branes, RR-fields and Duality on Noncommutative Manifolds}, Commun.~Math.~Phys.~(277), 2008, pp.~643-706.
\bibitem[BH11]{bunge} M.~Bunge, C.~Hermida, \textit{Pseudomonadicity and 2-Stack Completions}, CRM Proc.~Lecture Notes 53, AMS, Providence RI, 2011, pp.~29–54.
\bibitem[DF99]{freedfield} P. Deligne, D. S. Freed, \textit{Classical Field Theory}, in \textit{Quantum Fields and Strings: A Course for Mathematicians}, P.~Deligne et.~al. (eds), AMS, Providence RI, 1999.
\bibitem[DGZ04]{durr} D.~D{\"u}rr, S.~Goldstein, N.~Zanghi, \textit{Quantum Equilibrium and the Role of Operators as Observables in Quantum Theory}, Journal of Statistical Physics 116(1–4), 2004, pp.~959–1055. Available as \href{arXiv:quant-ph/0308038}{https://arxiv.org/abs/quant-ph/0308038}. 
\bibitem[Fr94]{freedquant} D.~S.~Freed, \textit{Higher Algebraic Structues and Quantization}, Commun. Math. Phys. 159 (1994), 343-398, available as \href{https://arxiv.org/abs/hep-th/9212115}{arXiv:hep-th/9212115}.
\bibitem[FSS94]{fuchs} J.~Fuchs, M.~G.~Schmidt, C.~Schweigert \textit{On the configuration space of gauge theories}, Nuclear Physics B 426(1), 1994, pp.~107-128.
\bibitem[FJ15]{stochasticgelfand} R.~Furber, B.~Jacobs, \textit{From Kleisli Categories to Commutative $C^{\ast}$-algebras: Probabilistic Gelfand Duality}, Logical Methods in Computer Science 11(1:5) (2015), pp.~1-28.
\bibitem[GLPS]{ladyman} B.~Groisman, J.~Ladyman, S.~Presnell, A.~J.~Short, \textit{The connection between logical and thermodynamic irreversibility}, Studies in History and Philosophy of Science Part B: Studies in History and Philosophy of Modern Physics 38(1), 2007, pp.~58-79. 
\bibitem[Gr03]{grubl} G.~Gr{\"u}bl, \textit{The quantum measurement problem enhanced}, Physics Letters A 316(3-4), 2003, pp.~153–158. Available as \href{https://arxiv.org/abs/quant-ph/0202101}{arXiv:quant-ph/0202101}.
\bibitem[Ha12]{probabilityphilosophy} A.~Hájek, \textit{Interpretations of Probability}, The Stanford Encyclopedia of Philosophy (Winter 2012 Edition), Edward N. Zalta (ed.). Available at:

\href{https://plato.stanford.edu/archives/win2012/entries/probability-interpret/}{https://plato.stanford.edu/archives/win2012/entries/probability-interpret/}
\bibitem[Ha14]{hawking} S.~W.~Hawking, \textit{Information Preservation and Weather Forecasting for Black Holes}, preprint. Available as \href{https://arxiv.org/abs/1401.5761}{arXiv:1401.5761 [hep-th]}.
\bibitem[Ho01]{holevo} S.~Holevo, \textit{Statistical Structure of Quantum Theory}, Lecture Notes in Physics Monographs 67, Springer-Verlag, Berlin, 2001.
\bibitem[J02]{johnstone} P.~T.~Johnstone, \textit{Sketches of an Elephant: A Topos Theory Compendium}, Oxford Logic Guides 43 \& 44, Clarendon Press, Oxford University Press 2002.
\bibitem[KR86]{kockreyes} A.~Kock, G.~E.~Reyes, \textit{Corrigendum and addenda to: Convenient vector spaces embed into the Cahiers topos}, Cahiers de Topologie et Géométrie Différentielle Catégoriques 27(1), 1986, pp.~3-17.
\bibitem[K03]{kontsevich} M.~Kontsevich, \textit{Deformation Quantization of Poisson Manifolds}, Letters in Mathematical Physics 66(3), 2003, pp.~157–216. Available as \href{https://arxiv.org/abs/q-alg/9709040}{arXiv:q-alg/9709040}.
\bibitem[Mad]{madore} D.~Madore, \textit{A Few Reflections on Noncommutative Algebraic Geometry/The Quest for the Holy Scheme}, versions 7 and 9. Available at:

\href{http://www.madore.org/~david/math/}{http://www.madore.org/$\sim$david/math/}.
\bibitem[CWM]{maclane} S.~Mac Lane, \textit{Categories for the Working Mathematician}, Graduate Texts in Mathematics 5, Springer-Verlag, New York, 1998.
\bibitem[MR91]{sdg} I.~Moerdijk, G.~E.~Reyes, \textit{Models for Smooth Infinitesimal Analysis}, Springer-Verlag, New York, 1991.
\bibitem[Mo14]{moore} G.~W.~Moore, \textit{Physical Mathematics and the Future}, talk at Strings 2014. Notes available at

\href{http://www.physics.rutgers.edu/~gmoore/PhysicalMathematicsAndFuture.pdf}{http://www.physics.rutgers.edu/$\sim$gmoore/PhysicalMathematicsAndFuture.pdf}
\bibitem[Se15]{seiberg} N.~Seiberg, \textit{Symmetries Then and Now}, presentation at the 40\textsuperscript{th} Anniversary Conference -- Laboratoire de Physique Théorique, 2015.

Slides: \href{https://www.lpt.ens.fr/IMG/pdf/seiberg.pdf}{https://www.lpt.ens.fr/IMG/pdf/seiberg.pdf}
\bibitem[Sh10]{shulmansemantics} M.~Shulman, \textit{Stack semantics and the comparison of material and structural set theories}. Available as \href{https://arxiv.org/abs/1004.3802}{arXiv:1004.3802 [math.CT]}.
\bibitem[StPr]{stacks-project} Authors of the Stacks Project, \textit{Stacks Project}, 2017.

Website: \href{http://stacks.math.columbia.edu}{http://stacks.math.columbia.edu}.
\bibitem[Oh01]{stringfieldtheory} K.~Ohmori, \textit{A Review on Tachyon Condensation in Open String Field Theories}. Available as \href{https://arxiv.org/abs/hep-th/0102085}{arXiv:hep-th/0102085}.
\bibitem[Pe95]{penrose} R.~Penrose, \textit{On Gravity's Role in Quantum State Reduction}, General Relativity and Gravitation 28(5), 1996, pp.~581–600.
\bibitem[Po05]{polchinski} J.~Polchinski, \textit{String Theory}, Cambridge Monographs on Mathematical Physics, Cambridge University Press, New York, 2005.
\bibitem[PX03]{pisierxu} G.~Pisier, Q.~Xu, \textit{Noncommutative $L^{p}$-Spaces}, in \textit{Handbook of the Geometry of Banach Spaces}, W.~B.~Johnson, J.~Lindenstrauss (eds), Elsevier, Amsterdam, 2003.
\bibitem[Ro16]{roberts} B.~W.~Roberts, \textit{Three Myths About Time Reversal in Quantum Theory}, available as \href{https://arxiv.org/abs/1607.07388}{arXiv:1607.07388 [physics.hist-ph]}.
\bibitem[SS14]{schreiber-shulman} U.~Schreiber, M.~Shulman, \textit{Quantum Gauge Field Theory in Cohesive Homotopy Type Theory}, EPTCS 158, 2014, pp.~109-126. Available as \href{https://arxiv.org/abs/1408.0054}{arXiv:1408.0054 [math-ph]}
\bibitem[St08]{streicher} T.~Streicher, \textit{Fibered Categories \`{a} la Jean B\'{e}nabou}. Available at \\ \href{http://www.mathematik.tu-darmstadt.de/~streicher/FIBR/FibLec.pdf}{http://www.mathematik.tu-darmstadt.de/$\sim$streicher/FIBR/FibLec.pdf}.
\bibitem[Sz]{stackygr} S.~Szawiel, \textit{Categorical General Relativity}, manuscript in preparation.
\bibitem[Ta12]{tao} T.~Tao, \textit{Topics in Random Matrix Theory}, Graduate Studies in Mathematics 132, AMS, Providence RI, 2012.
\bibitem[HoTT]{hott} The Univalent Foundations Program, \textit{Homotopy Type Theory: Univalent Foundations of Mathematics}, Intitute for Advanced Study, 2013.

Website: \href{https://homotopytypetheory.org/book}{https://homotopytypetheory.org/book}
\bibitem[Ve17]{verlinde} E.~P.~Verlinde, \textit{Emergent Gravity and the Dark Universe}, SciPost Phys. 2(3), 2017.
\bibitem[Vi08]{vistoli} A.~Vistoli, \textit{Notes on Grothendieck topologies, fibered categories and descent theory}. Available at \href{http://homepage.sns.it/vistoli/descent.pdf}{http://homepage.sns.it/vistoli/descent.pdf}
\bibitem[Wa94]{wald} R.~M.~Wald, \textit{Quantum Field Theory in Curved Spacetime and Black Hole Thermodynamics}, Chicago Lectures in Physics, Chicago University Press, Chicago, 1994.
\bibitem[We05]{weinberg} S.~Weinberg, \textit{The Quantum Theory of Fields}, volumes 1-3, Cambridge University Press, Cambridge, 2005.
\bibitem[Wi99b]{witten} E.~Witten, \textit{Dynamics of Quantum Field Theory}, in \textit{Quantum Fields and Strings: A Course for Mathematicians}, P.~Deligne et.~al., (eds), AMS, Providence RI, 1999.
\bibitem[Wi99a]{wittenpert} E.~Witten, \textit{Perturbative Quantum Field Theory}, in \textit{Quantum Fields and Strings: A Course for Mathematicians}, P.~Deligne et.~al. (eds), AMS, Providence RI, 1999.
\bibitem[Wi08]{wildramification} E.~Witten, \textit{Gauge Theory and Wild Ramification}, Analysis and Applications 6(4), 2008, pp.~429-501. Available as \href{https://arxiv.org/abs/0710.0631}{arXiv:0710.0631 [hep-th]}.
\bibitem[Xu01]{xu} P.~Xu, \textit{Quantum Groupoids}, Communications in Mathematical Physics 216(3), 2001, pp.~539–581. Available as \href{https://arxiv.org/abs/math/9905192}{ 	arXiv:math/9905192 [math.QA]}.
\bibitem[Za12]{zawadowski} M.~Zawadowski, \textit{The formal theory of monoidal monads}, JPAA 216(8-9), 2012, pp.~1932-1942. Available as \href{https://arxiv.org/abs/1012.0547}{arXiv:1012.0547 [math.CT]}.
\end{thebibliography}
\end{document}